\newif\ifanon\anontrue
\newif\iflong\longtrue
\newif\ifproof\prooffalse
\newif\ificml\icmlfalse
\newif\ifspaa\spaatrue
\newif\ifready%
\definecolor{mypurple}{HTML}{f875aa}
\definecolor{mypink}{HTML}{c6ffc1}
\definecolor{myred}{HTML}{3edbf0}
\definecolor{myorange}{HTML}{77acf1}
\definecolor{myblue}{HTML}{04009a}
\definecolor{mydarkpurple}{HTML}{E75480}
\definecolor{mydarkpink}{HTML}{22bc22}
\definecolor{mydarkred}{HTML}{8B0000}
\definecolor{mydarkorange}{HTML}{FF8C00}
\definecolor{mydarkblue}{HTML}{04009a}
\definecolor{mybluegreen}{HTML}{30BFBF}
\pgfplotsset{
    compat=1.3,
    legend image code/.code={
        \draw [#1] (0cm,-0.1cm) rectangle (0.6cm,0.1cm);
    },
}
\setlist{noitemsep,topsep=0pt,parsep=0pt,partopsep=0pt}
\def\thm@space@setup{\thm@preskip=2pt
\thm@postskip=2pt}
\theoremstyle{plain}
\newtheorem{theorem}{Theorem}[section]
\newtheorem{invariant}{Invariant}
\newtheorem{lemma}[theorem]{Lemma}
\newtheorem{corollary}[theorem]{Corollary}
\newtheorem{definition}[theorem]{Definition}
\newtheorem{observation}[theorem]{Observation}
\newcommand{\defn}[1]{\textbf{\textit{#1}}}
\algrenewcommand\algorithmicindent{1em}%
\newcommand{\tO}{\widetilde{O}}
\newcommand{\kc}{$k$-core\xspace}
\DeclarePairedDelimiter\ceil{\lceil}{\rceil}
\DeclarePairedDelimiter\floor{\lfloor}{\rfloor}
\DeclareMathOperator{\poly}{poly}
\definecolor{mygreen}{RGB}{20,140,80}
\definecolor{linkcolor}{RGB}{0,0,230}
\definecolor{mylightgray}{RGB}{230,230,230}
\definecolor{verylightgray}{RGB}{245,245,245}
\newcommand{\etal}[0]{et al.\xspace}
\newcounter{myalgctr}
\newtcolorbox{OuterBox}[1][]{%
    breakable,
    enhanced,
    frame hidden,
    interior hidden,
    left=-5pt,
    right=-5pt,
    top=-5pt,
    float=p,
    boxsep=0pt,
    arc=0pt
#1}%
\newtcolorbox{InnerBox}[1][]{%
    enforce breakable,
    enhanced,
    colback=gray,
    colframe=white,
#1}%
\newenvironment{tbox}{
\vspace{0.2cm}
\begin{tcolorbox}[width=\columnwidth,
                  enhanced,
                  boxsep=2pt,
                  left=1pt,
                  right=1pt,
                  top=4pt,
                  boxrule=1pt,
                  arc=0pt,
                  colback=white,
                  colframe=black,
	              breakable
                  ]%
}{
\end{tcolorbox}
}
\newcommand{\tboxhrule}[0]{\vspace{0.1cm} {\color{black} \hrule} \vspace{0.2cm}}
\newenvironment{titledtbox}[1]{\begin{tbox}#1 \tboxhrule}{\end{tbox}}
\newcommand{\batch}{\mathcal{B}}
\newcommand{\tbl}{L}
\newcommand{\core}{k}
\newcommand{\greater}{U}
\newcommand{\upl}{up*}
\newcommand{\hd}{up-degree\xspace}
\newcommand{\ld}{\upl-degree\xspace}
\newcommand{\dl}{\mathsf{dl}\xspace}
\newcommand{\level}{\ell\xspace}
\newcommand{\eps}{\varepsilon}
\newcommand{\up}{\mathsf{up}}
\newcommand{\down}{\mathsf{up^*}}
\newcommand{\pred}{<}
\newcommand{\topmost}{T}
\newcommand{\group}{g}
\newcommand{\grp}{\group}
\newcommand{\gn}{\group n}
\newcommand{\coeff}{\left(2 + \add/\lambda\right)}
\newcommand{\prevind}{p}
\newcommand{\ind}{i}
\newcommand{\add}{3}
\newcommand{\kest}{\hat{\core}}
\newcommand{\neighbor}{N}
\newcommand{\arb}{\alpha}
\newcommand{\whp}{w.h.p.\xspace}
\newcommand{\mix}{\textbf{Mix}\xspace}
\newcommand{\expins}{\textbf{Ins}\xspace}
\newcommand{\expdel}{\textbf{Del}\xspace}
\newcommand{\hua}{Hua\xspace}
\newcommand{\sun}{Sun\xspace}
\newcommand{\zhang}{Zhang\xspace}
\newcommand{\ekcore}{ExactKCore\xspace}
\newcommand{\akcore}{ApproxKCore\xspace}
\newcommand{\plds}{PLDS\xspace}
\newcommand{\pldsopt}{PLDSOpt\xspace}
\newcommand{\lds}{LDS\xspace}
\newcommand{\dblp}{\textit{dblp}\xspace}
\newcommand{\lj}{\textit{livejournal}\xspace}
\newcommand{\wiki}{\textit{wiki}\xspace}
\newcommand{\tweet}{\textit{twitter}\xspace}
\newcommand{\frie}{\textit{friendster}\xspace}
\newcommand{\ctr}{\textit{ctr}\xspace}
\newcommand{\usa}{\textit{usa}\xspace}
\newcommand{\ytb}{\textit{youtube}\xspace}
\newcommand{\outnbr}{N^{+}}
\newcommand{\batchsize}{|\batch|}
\newcommand{\outdeg}{\alpha}
\newcommand{\incoming}{I}
\newcommand{\out}{X}
\newcommand{\static}{\mathsf{StaticMaximalMatching}\xspace}
\mathchardef\hyph="2D
\newcommand{\inn}{in-neighbors\xspace}
\newcommand{\outn}{out-neighbors\xspace}
\newcommand{\batchinsert}{\mathsf{BatchInsert}}
\newcommand{\batchdelete}{\mathsf{BatchDelete}}
\newcommand{\lowoutdegorient}{\mathsf{LowOutdegreeOrient}}
\newcommand{\delworkalpha}{W_{\mathit{del}}(\alpha)}
\newcommand{\deldepth}{D_{\mathit{del}}}
\newcommand{\insworkalpha}{W_{\mathit{ins}}(\alpha)}
\newcommand{\insdepth}{D_{\mathit{ins}}}
\newcommand{\batchdeletes}{\batch_{\mathit{del}}}
\newcommand{\batchinserts}{\batch_{\mathit{ins}}}
\newcommand{\batchedgeflip}{\mathsf{BatchFlips}}
\newcommand{\flipswork}{W_{\mathit{flips}}(\alpha)}
\newcommand{\flipsdepth}{D_{\mathit{flips}}}
\newcommand{\kcount}{\texttt{count}}
\newcommand{\julian}[1]{{}}
\newcommand{\laxman}[1]{{}}
\newcommand{\qq}[1]{{}}
\newcommand{\shangdi}[1]{{}}
\newcommand{\jessica}[1]{{}}
\newcommand{\julian}[1]{{\color{cyan} Julian: #1}}
\newcommand{\laxman}[1]{{\color{brown} Laxman: #1}}
\newcommand{\qq}[1]{{\color{magenta} Quanquan: #1}}
\newcommand{\shangdi}[1]{{\color{orange} Shangdi: #1}}
\newcommand{\jessica}[1]{{\color{blue} Jessica: #1}}
\newcommand{\myparagraph}[1]{\vspace{1pt}\noindent {\bf #1.}}
\crefname{theorem}{Theorem}{Theorems}
\Crefname{lemma}{Lemma}{Lemmas}
\Crefname{claim}{Claim}{Claims}
\Crefname{observation}{Observation}{Observations}
\Crefname{algorithm}{Algorithm}{Algorithms}
\Crefname{myalgctr}{Algorithm}{Algorithms}
\Crefname{invariant}{Invariant}{Invariant}
\Crefname{challenge}{Challenge}{Challenges}
\renewcommand\theHALG@line{\thealgorithm.\arabic{ALG@line}}
\algnewcommand\algorithmicparfor{\textbf{parfor}}
\algnewcommand\algorithmicinput{\textbf{Input:}}
\algnewcommand\algorithmicoutput{\textbf{Output:}}
\algnewcommand\algorithmicreduceadd{\textbf{ReduceAdd}}
\algnewcommand\algorithmicpardo{\textbf{do}}
\algnewcommand\algorithmicendparfor{\textbf{end\ input}}
\patchcmd{\maketitle}{\@copyrightpermission}{
   \begin{minipage}{0.3\columnwidth}
     \href{https://creativecommons.org/licenses/by-nc-nd/4.0/}{\includegraphics[width=0.90\textwidth]{figures/cc-by-nc-nd4acm.png}}
   \end{minipage}\hfill
   \begin{minipage}{0.7\columnwidth}
     \href{https://creativecommons.org/licenses/by-nc-nd/4.0/}{This work is licensed under a Creative Commons Attribution-NonCommercial-NoDerivs International 4.0 License.}
   \end{minipage}

   \vspace{5pt}
}{}{}
\begin{document}

\title{Parallel Batch-Dynamic Algorithms for $k$-Core Decomposition and Related Graph Problems}
\thanks{$^\dag$This work was done while the authors were at MIT CSAIL}

\author{Quanquan C. Liu$^\dag$}
\affiliation{\institution{Northwestern University} \country{USA}}
\email{quanquan@northwestern.edu}
\author{Jessica Shi}
\affiliation{\institution{MIT CSAIL}\country{USA}}
\email{jeshi@mit.edu}
\author{Shangdi Yu}
\affiliation{\institution{MIT CSAIL}\country{USA}}
\email{shangdiy@mit.edu}
\author{Laxman Dhulipala$^\dag$}
\affiliation{\institution{University of Maryland}\country{USA}}
\email{laxman@umd.edu}
\author{Julian Shun}
\affiliation{\institution{MIT CSAIL}\country{USA}}
\email{jshun@mit.edu}

\renewcommand{\shortauthors}{Quanquan C. Liu, Jessica Shi, Shangdi Yu, Laxman Dhulipala, Julian Shun}

\begin{abstract}
  Maintaining a $k$-core decomposition quickly in a dynamic graph has
  important applications in network analysis.
  The main challenge for
  designing efficient \emph{exact} algorithms is that a
  single update to the graph can cause significant global changes. 
  Our paper focuses on \emph{approximation} 
  algorithms with small approximation factors that are much more efficient
  than what exact algorithms can obtain.

    We present the first parallel,
    batch-dynamic algorithm for approximate $k$-core
    decomposition that is efficient in both theory and practice. Our algorithm is based on our novel
    \emph{parallel level data structure}, inspired by the sequential level data structures of Bhattacharya \etal [STOC '15] and 
    Henzinger \etal [2020].
    Given a graph with $n$ vertices and a batch of updates $\batch$, our algorithm provably maintains a $(2 + \varepsilon)$-approximation of
    the coreness values of all vertices (for any constant $\varepsilon > 0$)
    in $O(|\mathcal{B}|\log^2 n)$ amortized work and $O(\log^2 n \log\log n)$
    depth (parallel time) with high probability. 
    
    As a by-product, our $k$-core decomposition algorithm also gives a batch-dynamic algorithm for maintaining an $O(\alpha)$ out-degree 
    orientation, where $\alpha$ is the \emph{current} arboricity 
    of the graph. We demonstrate the usefulness of our low out-degree orientation
    algorithm by presenting a new framework to formally study 
    batch-dynamic  algorithms in bounded-arboricity graphs.
    Our framework obtains new provably-efficient
    parallel batch-dynamic algorithms 
    for 
    maximal matching, clique counting, and vertex coloring.

    We implemented and experimentally evaluated our $k$-core decomposition 
    algorithm on a 30-core machine with two-way hyper-threading on $11$ graphs
    of varying densities and sizes. Compared to the
    state-of-the-art algorithms, our algorithm achieves
    up to a $114.52\times$ speedup against the best parallel implementation,
    up to a $544.22\times$ speedup against the best approximate
    sequential algorithm, and up to a $723.72\times$
    speedup against the best exact sequential algorithm. 
    We also obtain results for our algorithms
    on graphs that are orders-of-magnitude larger than those used in
    previous studies.
\end{abstract}

\settopmatter{printfolios=true}
\maketitle

\section{Introduction}
Discovering the structure of large-scale networks is a fundamental
problem for many areas of computing. One of the key challenges is to detect communities in which
individuals (or vertices) have close ties with one another, and to
understand how well-connected a particular individual is to the
community. 
The well-connectedness of a vertex or a
group of vertices is naturally captured by the concept of a \kc or, more
generally, the \kc decomposition; hence, this particular problem and
its variants have been widely studied in the machine
learning~\cite{Alvarez2005,Esfandiari2018,Ghaffari2019},
database~\cite{Chu20, LiZZ19,ESTW19,BGKV14,MMSS20}, social network analysis and graph analytics~\cite{KBS15, Kabir2017, dhulipala2017julienne,
dhulipala2018theoretically}, computational biology~\cite{ciaperoni2020,kitsak2010,LiuTang15,malliaros2016},
and other communities~\cite{GBGL20,
KBS15,Luo19,SGJ13}.

Given an undirected graph $G$, with $n$ vertices and $m$ edges, the
\kc of the graph is the maximal subgraph $H \subseteq G$ such that
the induced degree of every vertex in $H$ is at least $k$. The \kc
decomposition of the graph is defined as a partition of the graph
into layers such that a vertex $v$ is in layer $k$ if it belongs to
a \kc but not a $(k+1)$-core; this value is known as the \emph{coreness} of the vertex,
and the coreness values induce a natural hierarchical clustering.
Classic algorithms for \kc decomposition are inherently
   sequential.  A well-known algorithm for finding the decomposition
   is to iteratively select and remove all vertices $v$ with smallest
   degree from the graph until the graph is empty~\cite{Matula83}.
   Unfortunately, the
   length of the sequential dependencies, or the \emph{depth}, of such a
   process can be $\Omega(n)$ given a graph with $n$ vertices. 
   As \kc decomposition
   is a $\mathsf{P}$-complete problem~\cite{anderson84pcomplete}, it
   is unlikely to have a parallel algorithm with polylogarithmic depth.
   To obtain parallel methods with $\poly(\log n)$ depth,
   we relax the condition
   of obtaining an \emph{exact} decomposition to one of obtaining
   a close \emph{approximate} decomposition.

   Previous works studied approximate $k$-core decompositions as a
   way for obtaining faster and more scalable algorithms in larger graphs
   than in exact settings~\cite{chan2021,Ghaffari2019,SCS20,Esfandiari2018,Chu20}. 
   Approximate coreness values are useful for applications where
   existing methods are already approximate, such as
   diffusion protocols in epidemiological studies
   \cite{ciaperoni2020,kitsak2010,LiuTang15,malliaros2016},
   community detection and network centrality
   measures~\cite{dourisboure09,Fang17,Healy07,Mitzenmacher2015,WangCao18,ZhangYing17},
   network visualization and
   modeling~\cite{Alvarez2005,carmi2007,yangdefining2015,ZhangZhao10},
   protein interactions~\cite{Amin2006,baderautomated2003}, and
   clustering~\cite{GiatsidisMTV14,lee2010}.
   Furthermore, due to the rapidly changing nature of today's large
   networks, many recent studies have focused on the \emph{dynamic}
   setting, where edges and vertices can be inserted and deleted, and the $k$-core
   decomposition is computed in real time. 
   There has been significant interest in obtaining fast and practical
dynamic, approximate and exact \kc algorithms.
Dynamic algorithms have been developed for
both the sequential~\cite{Li2014,
Sariyuce2016,Zhang2017,Wen2019,Li2014,SCS20, lin2021dynamickcore} and
parallel~\cite{Hua2020,Jin2018,Aridhi2016} settings.
There has also been interest in the
closely-related dynamic $k$-truss problem~\cite{Huang2014, Akbas2017,
Zhang2019a, Luo2021}. However, to the best of
our knowledge, \emph{there are no existing parallel batch-dynamic \kc
algorithms with provable polylogarithmic depth}, which our algorithm achieves.

   Our paper focuses on
   the \emph{batch-dynamic} setting where
   updates are performed over a batch of
   \emph{multiple} edge updates applied simultaneously. Such a
   setting is conducive to parallelization, which we leverage
   to obtain scalable algorithms.
   We provide a work-efficient batch-dynamic approximate $k$-core decomposition algorithm based on a  parallel level data structure that we design. We implement our algorithm and show experimentally that it performs favorably compared to the state-of-the-art. Furthermore, we show that our parallel level data structure can be used to obtain work-efficient parallel batch-dynamic algorithms for several other problems, specifically, low out-degree orientation, maximal matching, $k$-clique counting, and vertex coloring.
   
   We introduce the necessary definitions in~\cref{sec:prelims} before giving a technical overview of our results in~\cref{sec:tech}.~\cref{sec:bounded-depth-arboricity} presents our parallel level data structure and $k$-core decomposition algorithm in more detail.~\cref{sec:eval} presents experimental results.
   \cref{sec:static-kcore} gives our parallel, static, approximate algorithm for \kc decomposition.
   Finally,~\cref{sec:framework} gives our low out-degree framework for our maximal matching (\cref{sec:matching}), 
   $k$-clique counting (\cref{sec:clique}), and coloring (\cref{sec:coloring}) results.  
   
   \section{Preliminaries}\label{sec:prelims}
This paper studies undirected, unweighted graphs, and we use $n$ to denote the number of vertices and $m$ to denote the number of edges in a graph.
\cref{def:approx-k-core} defines 
approximate $k$-core decomposition. The definition requires the
definition of a $k$-core, which we define first.

\begin{definition}[$k$-Core]\label{def:k-core}\label{def:k-shell}
  For a graph $G$ and positive integer $k$, the \defn{$k$-core} 
  of $G$ is the maximal subgraph of $G$ with minimum induced degree $k$.
\end{definition}

\begin{definition}[\unboldmath{$k$}-Core Decomposition]\label{def:k-core-decomp}
A \defn{\kc decomposition} is a partition of vertices into layers
such that a vertex $v$ is in layer $k$ if it
belongs to a $k$-core but not to a $(k + 1)$-core. $k(v)$
denotes the layer that vertex $v$ is in, and is called the
\defn{coreness} of $v$.
\end{definition}

Definition~\ref{def:k-core-decomp} defines an \emph{exact} $k$-core
decomposition.
A \emph{$c$-approximate} $k$-core decomposition is defined as follows.

\begin{definition}[\unboldmath{$c$}-Approximate \unboldmath{$k$}-Core
    Decomposition]\label{def:approx-k-core}
    A \defn{$c$-approximate} \defn{\kc decomposition} 
    is a partition of
    vertices into layers such that a vertex $v$ is 
    in layer $k'$ only if
    $\frac{k(v)}{c} \leq k' \leq ck(v)$, where $k(v)$ is the coreness of $v$.
\end{definition}

We let $\kest(v)$ denote the \emph{estimate} of $v$'s coreness.
\cref{fig:approx-def} shows an example of a $k$-core decomposition and a
$(3/2)$-approximate $k$-core decomposition.

\begin{figure}[!t]
    \centering
    \includegraphics[width=0.7\columnwidth]{./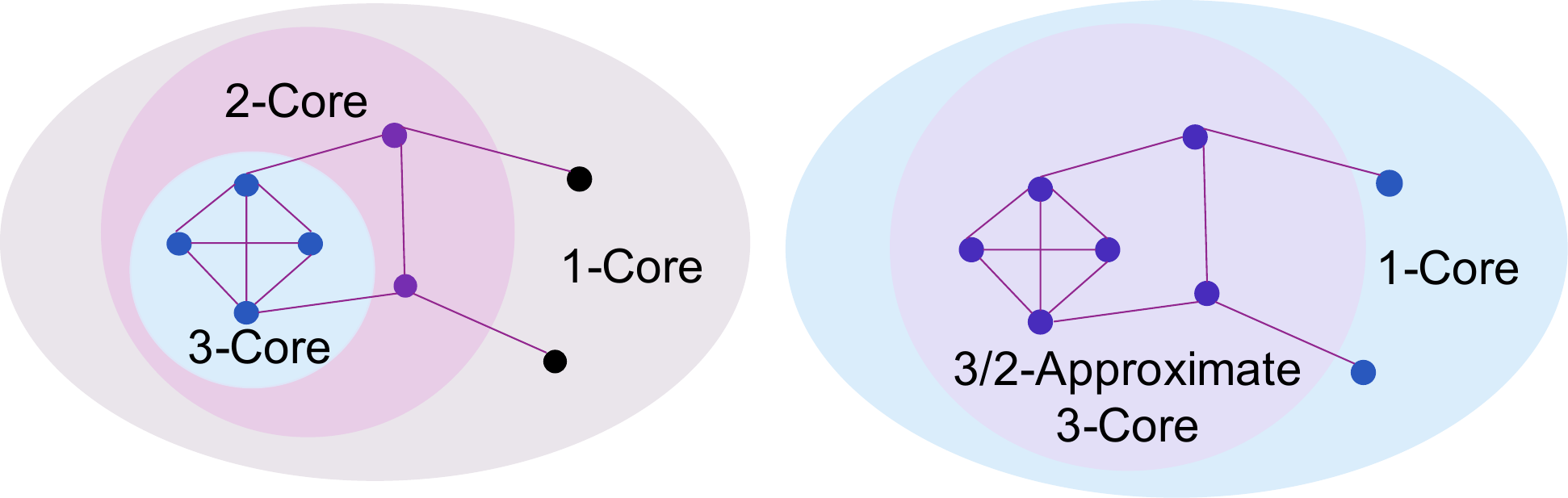}
    \caption{Exact $k$-core decomposition (left) and $(3/2)$-approximate $k$-core
    decomposition (right).}\label{fig:approx-def}
\end{figure}

\begin{table}[t]
\begin{center}
    \footnotesize
    \begin{tabular}{c c}
    \toprule
    Symbol & Meaning \\
    \midrule
    $G = (V, E)$ & undirected/unweighted graph\\
    \hline
    $n, m$ & number of vertices, edges, resp.\ \\
    \hline
    $\arb$ & current arboricity of graph \\
    \hline
    $\Delta$ & current maximum degree of graph\\
    \hline
    $K$ & number of levels in PLDS\\
    \hline
    $\neighbor(v)$ (resp.\ $\neighbor(S)$)
    & set of neighbors of vertex $v$ (resp.\ vertices $S$)\\
    \hline
    $\dl(v)$ & \emph{desire-level} of vertex $v$\\
    \hline
    $\ell$, $\ell(v)$ & a level (starting with level $\ell = 0$), current level of vertex $v$, resp.\ \\
    \hline
    $V_{\ell}$, $Z_{\ell}$ & set of vertices in level $\ell$, set of vertices in levels $\geq \ell$, resp.\ \\
    \hline
    $\group_i$ & set of levels in group $i$ (starting with $g_0$)\\
    \hline
    $\group(v), \gn(\level)$ & \emph{group number} of vertex $v$, index $i$ where level $\ell \in \group_i$, resp.\\
    \hline
    $\core(v)$, $\kest(v)$ & coreness of $v$, 
    estimate of the coreness of $v$, resp.\ \\
    \hline
    $\up(v)$,$\down(v)$ & \emph{\hd} of $v$, \emph{\ld} of $v$, resp.\ \\
    \hline
\iflong
    $\eps, \lambda, \delta$ & constants where $\eps, \lambda,\delta > 0$ \\ 
    \fi
    \bottomrule
    \end{tabular}
\end{center}
\caption{Table of notations used in this paper.}\label{tbl:notation}
\end{table}

\myparagraph{Model Definitions}
We analyze the theoretical efficiency of our parallel algorithms in
the \emph{work-depth model}.
The model is
defined in terms of two complexity measures,
\defn{work} and \defn{depth}~\cite{JaJa92,CLRS}.
The \defn{work} is the total number of operations
executed by the algorithm. The \defn{depth} is the longest
chain of sequential dependencies. We assume that concurrent reads and
writes are supported in $O(1)$ work/depth. A \defn{work-efficient} parallel algorithm is one with
work that asymptotically matches the best-known sequential time complexity for
the problem.
We say that a bound holds \defn{with high probability (\whp)} if it holds
    with probability at least $1 - 1/n^c$ for any
    $c \geq 1$.

We use parallel primitives in our algorithms, which take as input a sequence $A$ of length $n$, including: parallel \defn{reduce-add}, which
returns the sum of the entries in $A$, and parallel \defn{filter}, which also takes as input a predicate function
$f$, and returns the sequence $B$ containing each element $a \in A$ where
$f(a)$ is true, while preserving the same relative order as the
order of elements in $A$. These primitives take $O(n)$ work and $O(\log n)$ depth~\cite{JaJa92}.
We also use \defn{parallel hash tables} that support insertions, deletions, and membership queries; they can perform $n$ insertions or deletions in $O(n)$ work and $O(\log^* n)$ depth \whp, and $n$ membership queries in $O(n)$ work and $O(1)$ depth \whp~\cite{Gil91a}.
Provided an input sequence $A$, a parallel \defn{prefix-sum} takes as input an 
identity $x$ and an associative binary
operator $\oplus$, and returns the sequence $B$ of length $n$ where
$B[i] = \bigoplus_{j < i} A[j] \oplus x$. This primitive
takes $O(n)$ work and $O(\log n)$ depth~\cite{JaJa92}.

Our parallel
algorithms operate in the batch-dynamic setting. A
\defn{batch-dynamic} algorithm processes updates (vertex or edge
insertions/deletions) in batches $\batch$ of size
$|\batch|$. For simplicity, since we can reprocess the graph using an
efficient parallel static algorithm when $|\batch| \geq m$, we consider
$1 \leq |\batch| < m$ for our bounds.

Given a graph $G = (V, E)$ and a sequence of
batches of edge insertions and deletions, $\batch_1, \dots, \batch_N$,
where $\batch_i = (E^i_{\textit{delete}}, E^i_{\textit{insert}})$,
the goal is to efficiently
maintain a $(2+\eps)$-approximate $k$-core decomposition (for any
    constant $\eps > 0$) after applying each batch $\batch_i$ (in order)
    on $G$. In other words, let $G_i = (V, E_i)$ be the graph after
    applying batches $\batch_1, \dots, \batch_i$ and suppose that we have
    a $(2+\eps)$-approximate \kc decomposition on $G_i$;
    then, for $\batch_{i+1}$, our goal is to efficiently
    find a $(2+\eps)$-approximate $k$-core
    decomposition of $G_{i+1} = (V, (E_i \cup E^{i+1}_{\textit{insert}}) \setminus
    E^{i+1}_{\textit{delete}})$.
\\\\
All notations used are summarized in~\cref{tbl:notation}.
Our data structure also maintains a \emph{low 
out-degree orientation}, which may be parameterized by a graph property
known as the \emph{arboricity}.

\begin{definition}[Arboricity]\label{def:arboricity}
The arboricity ($\alpha$) of a graph is the minimum number
of spanning forests needed to cover the graph.
\end{definition}

\begin{definition}[$c$-Approximate Low Out-Degree Orientation]\label{def:loo}
    Given an undirected graph $G = (V, E)$, 
    a $c$-approximate low out-degree orientation is an acyclic orientation
    of all edges in $G$ such that the maximum out-degree of any vertex, $d^{+}_{max}$, is within a $c$-factor of the minimum possible maximum 
    out-degree, $d^{+}_{opt}$ of any acyclic orientation:\footnote{$d^{+}_{opt}$, is equal to the \emph{degeneracy}, $d$, of $G$, and is closely related to $\alpha$: $d/2 \leq \alpha \leq d$.}
        $d^{+}_{opt}/c \leq d^{+}_{max} \leq c \cdot d^{+}_{opt}$.
\end{definition}

We define an \defn{$O(\alpha)$ out-degree orientation} to be
an acyclic orientation where all out-degrees are $O(\alpha)$.
For an oriented graph, we call neighbors of vertex $v$ connected by outgoing edges the \defn{\outn} of $v$ and neighbors of $v$ connected by incoming 
edges the \defn{\inn} of $v$.
\ifspaa
Definitions of the other problems we consider are given at the top of their 
respective sections (\cref{sec:matching,sec:coloring,sec:clique}).
\fi
   
   \section{Technical Overview}\label{sec:tech}
   In this paper, we provide a number of parallel work-efficient 
   algorithms for various problems. This section gives an overview of our algorithms and how they compare with prior work.
   \cref{table:results} summarizes our algorithmic results.

We first discuss $k$-core decomposition.
A number of previous works~\cite{LiYu14,Luo21hypergraph,Sariyuce2016,Zhang17,Zhang2019a} 
provided methods for maintaining the \emph{exact}
\kc decomposition 
under single edge updates in the sequential setting. 
Unfortunately, none 
of these works provide algorithms with provable 
polylogarithmic update
time. The main bottleneck for obtaining 
\emph{provably-efficient} methods is that a single edge
update can cause \emph{all} coreness values to change: consider a cycle
with one edge removed as a simple example. Removing and adding the edge
into this cycle, repeatedly in succession, causes the coreness of all 
vertices to change by one with each update. 
In the parallel setting, a number of previous works~\cite{Hua2020,Aridhi2016,WangYu17,Jin2018,Gabert21}
investigated batch-dynamic algorithms for exact \kc decomposition. 
Unfortunately, none of these works have $\poly(\log n)$ depth
and some even have $\Omega(n)$ depth. 

This paper shows that we can surprisingly obtain a parallel batch-dynamic $k$-core decomposition algorithm with amortized time bounds that are
independent of the number of vertices that \emph{changed coreness} 
for \emph{approximate} coreness. Such
provable time bounds can be obtained by cleverly avoiding updating
coreness values until enough error has accumulated; once such error
has accumulated, we can charge the amount of time required to update
the coreness to the number of updates that occurred.  
Doing so carefully allows
a provable $O(\log^2 n)$ amortized work per update that is
independent of the number of changed coreness values. 
A recent paper by Sun \etal~\cite{SCS20} provides a \emph{sequential}
dynamic approximate $k$-core decomposition algorithm that takes
$O(\log^2n)$ amortized time per update.
Their algorithm is a threshold peeling/elimination procedure 
that gives a $(2+\eps)$-approximation bound.
They also provide another sequential algorithm, which they call \emph{round-indexing}, that performs faster in practice.\footnote{Our experiments compare against the round-indexing algorithm since it is faster than their thresholding peeling algorithm in practice.} However, they do not provide 
formal runtime proofs for this algorithm.
Their threshold peeling algorithm is inherently sequential since a vertex
that changes thresholds can cause another to change their threshold 
(and coreness estimate),
resulting in a long chain of sequential dependencies; such a situation
results in polylogarithmic \emph{amortized} depth, 
whereas efficient parallel algorithms require
polylogarithmic depth \whp in the \emph{worst case}, which we obtain.

To design our $k$-core decomposition algorithm, we formulate a \emph{parallel
   level data structure (PLDS)} inspired by the sequential level data structures (LDS) of
   Bhattacharya \etal~\cite{BHNT15} and Henzinger \etal~\cite{HNW20} to maintain a
   partition of the vertices satisfying specific degree properties in certain induced
   subgraphs. 
In the LDS, vertices are updated one
at a time. One of our main technical insights is that we can update many vertices 
\emph{simultaneously}, leading to high parallelism. 
Our \kc decomposition algorithm is work-efficient, 
   and matches the 
   approximation factor of the best-known sequential dynamic
   approximate \kc decomposition algorithm of Sun~\etal~\cite{SCS20}, while
   achieving polylogarithmic depth \whp
   
Dynamic problems related to \kc decompositions have been recently
studied in the theory community, such as densest subgraph~\cite{BHNT15,SWSoda20} and low
out-degree orientations~\cite{BB20,HeTZ14,brodal99dynamicrepresentations,Kowalik07,KKPS14,KS18,HNW20,SW20}; some of these works use
the LDS. However, none of these
previous works proved guarantees regarding the $k$-core
decomposition that can be maintained via a LDS.
Notably, we show via a new, intuitive
proof that one can use the level of a
vertex to estimate its coreness in the
LDS of~\cite{HNW20}. Unlike the proof 
in~\cite{SCS20} for their dynamic algorithm, 
our proof does not require densest 
subgraphs nor any additional information besides the two invariants 
maintained by the structure. 

Our main theoretical and practical technical contributions 
for \kc decomposition are three-fold: (1) we present a simple 
modification and a new $(2+\eps)$-approximate coreness proof for the 
sequential level data structure of~\cite{BHNT15,HNW20} (which were not previously used for coreness values)
using only the levels of the vertices---no such 
modification was known prior to this work since~\cite{SCS20} 
requires an additional elimination/peeling/round-indexing procedure; (2) we provide the first parallel work-efficient
batch-dynamic level data structure that takes $O(\log^2
n\log\log n)$  depth \whp, which we use to obtain a $(2+\epsilon)$-approximate batch-dynamic $k$-core decomposition algorithm;
and (3) we provide multicore implementations of
our new algorithm
and demonstrate its
practicality through extensive experimentation with state-of-the-art parallel and sequential algorithms. 

\begin{table}[t!]
    \centering
    \caption{Work and depth bounds of algorithms in this paper.\protect\footnotemark}\label{table:results}
    \setlength\tabcolsep{3pt}
    \footnotesize
    \begin{tabular}{*5c}
        \toprule
        Problem & Approx & Work  & Depth & Adversary \\    
        \midrule
        \kc &  $(2+\eps)$ & $O(\batchsize \log^2 n)$ & $\tO(\log^2 n)$\protect \footnotemark & Adaptive\\
        \kc & $(2 + \eps)$ & $O(m+n)$ & $\tO(\log^3 n)$ & Static\\
        Orientation &  $(4+\eps)$    &  $O(\batchsize \log^2 n)$  & $\tO(\log^2 n)$ & Adaptive \\
        Matching &   Maximal    & $O(\batchsize (\alpha + \log^2 n))$ & $\tO(\log \Delta \log^2 n)$\protect \footnotemark[6] & Adaptive \\
        $k$-clique & Exact &    $O(\batchsize\alpha^{k-2}\log^2 n)$   &   $\tO(\log^2 n)$  & Adaptive   \\
        Coloring & $O(\alpha \log n)$\protect \footnotemark & $O(\batchsize\log^2 n)$ &  $\tO(\log^2 n)$ & Oblivious   \\
        Coloring & $O\left(2^{\alpha}\right)$ & $O(\batchsize\log^3 n)$ & $\tO(\log^2 n)$ & Adaptive\\
        \bottomrule
    \end{tabular}
    \vspace{-1em}
\end{table}
\addtocounter{footnote}{-2}
\footnotetext[\thefootnote]{All bounds are \whp, except for the work of static $k$-core and $O(\alpha \log n)$-coloring.}
\addtocounter{footnote}{1}
\footnotetext[\thefootnote]{$\tO$ hides a factor of $O(\log\log n)$.}
\addtocounter{footnote}{1}
\footnotetext[\thefootnote]{We denote  by $\alpha$ the \emph{current} 
arboricity of the graph \emph{after 
processing all updates including the most recent 
ones}.}

The following theorems give our theoretical bounds.

\begin{theorem}[Batch-Dynamic $k$-Core Decomposition]\label{thm:batch-dynamic}
   Given $G = (V, E)$ where $n = |V|$ and batch of updates $\batch$,
   our algorithm maintains $(2 + \eps)$-approximations of
    core values for all vertices (for any constant $\eps > 0$)
    in $O(|\batch|\log^2 n)$ amortized work and $O(\log^2 n \log\log n)$
    depth \whp, using $O(n\log^2 n + m)$ space.
\end{theorem}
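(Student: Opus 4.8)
The three claims---the $(2+\eps)$-approximation, the work/depth bounds, and the space bound---all reduce to the two structural invariants that the PLDS maintains after every batch is processed: an \emph{upper-bound invariant} that caps each vertex's up-degree $\up(v)$ (the number of neighbors at level $\geq \ell(v)$) as a function of its group $g(v)$, and a \emph{lower-bound invariant} that forces each vertex at a positive level to have up*-degree $\down(v)$ at least roughly $(1+\eps)^{g(v)}$. The plan is therefore to first argue that the batch-update procedure terminates with both invariants reestablished on $G_{i+1}$, and then to derive each of the three claims from the invariants together with the per-round cost of the update procedure. Setting $\kest(v) = (1+\eps)^{\max(g(v)-1,0)}$ up to the chosen constants, the approximation claim becomes a two-sided bound relating $g(v)$ to the coreness $k(v)$.

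\textbf{Approximation.} Assuming the invariants, I would sandwich $k(v)$ between two multiples of $\kest(v)$. For the \emph{lower bound on coreness} ($k(v) \gtrsim \kest(v)$), I use the lower-bound invariant: fixing the bottom level $\ell$ of $v$'s group, every vertex in $Z_{\ell}$ has at least $\approx (1+\eps)^{g(v)}$ neighbors in $Z_{\ell-1}$, and iterating this observation across the levels of the group shows that the subgraph induced on a suitable $Z_{\ell'}$ has minimum induced degree $\Omega((1+\eps)^{g(v)})$; this subgraph is thus contained in a core of that value, so $k(v) = \Omega((1+\eps)^{g(v)})$. For the \emph{upper bound on coreness} ($k(v) \lesssim \kest(v)$), I use the upper-bound invariant in a peeling argument: order the vertices of any subgraph by decreasing level and charge each vertex's induced degree to its up-degree (bounded by the invariant) plus contributions from strictly lower levels; summing a geometric series over the levels within a group bounds the degeneracy, and hence the coreness of every vertex, by $O((1+\eps)^{g(v)})$. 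Choosing the constants $\eps,\lambda,\delta$ so that both multiplicative losses fit inside $2+\eps$ yields the claimed approximation. The appeal of this route is that it invokes only the two invariants and never densest-subgraph duality, unlike the argument of Sun \etal~\cite{SCS20}.

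\textbf{Work and depth.} I would analyze the update as a sequence of parallel \emph{rounds}, in which all vertices currently violating an invariant recompute their \emph{desire-levels} and move simultaneously, after which affected neighbors are gathered and rechecked. For depth, I separate an up-phase (vertices only move up) from a down-phase (vertices only move down) and argue that within a phase a vertex's level changes monotonically, so a propagating wave crosses each of the $K = O(\log^2 n)$ levels at most once; this caps the number of rounds at $O(\log^2 n)$. Each round is implemented in $O(\log\log n)$ depth \whp---gathering the affected vertices via parallel hash tables (at $O(\log^* n)$ depth) and locating each vertex's new desire-level by a search over the $O(\log^2 n)$ candidate levels (at $O(\log\log n)$ depth)---which gives the $O(\log^2 n\log\log n)$ depth bound. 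For work, I set up a potential function assigning each vertex a charge proportional to its level, with separate slack terms for vertices whose invariants are temporarily violated; an edge insertion or deletion raises the potential by $O(K) = O(\log^2 n)$, while every unit of rebalancing work strictly decreases it, so the total work is charged to the updates at $O(\log^2 n)$ amortized work per edge and $O(|\batch|\log^2 n)$ overall.

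\textbf{Space and the main obstacle.} The space bound is immediate once the representation is fixed: each vertex stores $O(\log^2 n)$ words of level-indexed bookkeeping (its level, its group, and the counts needed to evaluate $\up$ and $\down$), for $O(n\log^2 n)$ total, plus $O(m)$ for the adjacency structure and its orientation. I expect the \emph{main obstacle} to be the joint work-and-round analysis of the parallel rebalancing: in the sequential LDS a vertex moves one step at a time and the amortization is routine, but in the batch setting many vertices move at once and a single vertex may be pushed by several neighbors within one round, so I must show both that level changes stay monotone within each phase---so waves terminate in $O(K)$ rounds, controlling the depth---and that the potential argument never double-charges a move, so the amortized work stays $O(\log^2 n)$. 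Reconciling fast termination of the cascade with a clean, non-overlapping charging scheme under simultaneous moves is the technical heart of the proof.
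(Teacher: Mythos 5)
Your work, depth, and space analyses follow essentially the same route as the paper: split each batch into an insertion-only phase and a deletion-only phase, use monotonicity of level movements within a phase so that each of the $K = O(\log^2 n)$ levels is handled at most once, implement each level in $O(\log\log n)$ depth \whp via parallel hash tables plus a doubling/binary search for desire-levels, and pay for moves with a BHNT-style potential whose increase per update is $O(K)$; the double-charging worry you flag is resolved in the paper by a serialization lemma (\cref{lem:same-seq}) showing that any parallel round of moves equals some sequential sequence of single-vertex moves of the same cost. The genuine problem is in the approximation argument, specifically the direction $\core(v) = \Omega(\kest(v))$.

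There your proposed step fails. You claim that iterating \cref{inv:degree-2} shows that a suitable $Z_{\ell'}$ has minimum induced degree $\Omega((1+\eps)^{g(v)})$, so that $Z_{\ell'}$ itself witnesses a core containing $v$. But \cref{inv:degree-2} only guarantees that a vertex at level $\ell'$ has many neighbors in $Z_{\ell'-1}$, and all of those neighbors may lie at level exactly $\ell'-1$, i.e.\ \emph{outside} $Z_{\ell'}$; hence the bottommost vertices of $Z_{\ell'}$ can have induced degree $0$ in $Z_{\ell'}$, and this boundary defect persists for every choice of $\ell'$, so no $Z_{\ell'}$ is a high-minimum-degree subgraph. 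The paper's proof of this direction (\cref{lem:core-num}) is instead a counting argument by contradiction: if $\level(v)$ is at or above the top of group $g'$ yet $\core(v) < (1+\delta)^{g'}/\bigl(\coeff(1+\delta)\bigr)$, then the process that repeatedly prunes vertices of degree below this threshold must eventually prune $v$; by \cref{inv:degree-2}, pruning $X$ vertices at one level forces pruning at least $\frac{(2+3/\lambda)(1+\delta)}{2}\, X$ vertices one level further down, so the pruned count grows geometrically across the $4\ceil{\log_{(1+\delta)} m}$ levels of the group and eventually exceeds $2m$, a contradiction. Note that this is precisely where the $\Theta(\log n)$ levels-per-group parameter of the data structure is consumed; your argument never uses that parameter at all, which is a symptom of the missing idea. (A minor further point: in the opposite direction your peeling order should be by \emph{increasing} level, so that each removed vertex's residual degree is exactly its up-degree and is bounded by \cref{inv:degree-1}; no geometric series over levels is needed there.)
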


Using the same parallel level data structure, we also obtain the following 
result for maintaining a low out-degree orientation. 

\begin{theorem}[Batch-Dynamic Low Out-Degree 
Orientation]\label{thm:low-outdegree}
Our algorithm maintains an $(4 + \eps)$-approximation 
of a minimum acyclic out-degree orientation, 
with the same bounds as~\cref{thm:batch-dynamic}, where the amortized
number of edge flips is $O(\batchsize \log^2 n)$.
\end{theorem}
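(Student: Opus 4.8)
The plan is to read the orientation directly off the PLDS and verify its four required properties---acyclicity, the $(4+\eps)$ approximation, the flip count, and the resource bounds---one at a time, invoking \cref{thm:batch-dynamic} as a black box for the last of these. First I would \emph{define} the orientation: for every edge $\{u,v\}$ orient it toward the endpoint of higher level, and for an edge whose endpoints share a level orient it toward the endpoint of larger \id (any fixed total order on the vertices works). Acyclicity is then immediate, since along any directed edge the pair $(\ell(\cdot),\id(\cdot))$ strictly increases lexicographically, so no directed cycle can exist. Because every out-neighbor of $v$ lies at a level $\geq \ell(v)$, the out-degree of $v$ is at most its up*-degree $\down(v)$.

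Next I would bound the maximum out-degree. By the upper-bound invariant the PLDS maintains, $\down(v) \le \coeff \cdot \lowerbound$ for every $v$, so every out-degree is $O\!\left((1+\eps)^{\group(v)}\right)$. It then remains to convert this per-vertex level bound into a bound against $d^{+}_{opt}$. Since $d^{+}_{opt}$ equals the degeneracy $d$, which equals the maximum coreness, and since the level/group of $v$ certifies a $(2+\eps)$-approximation of $k(v)$ by \cref{thm:batch-dynamic}, the quantity $(1+\eps)^{\group(v)}$ is within a constant factor of $d$; substituting and tuning $\lambda$ and $\eps$ collapses the invariant's coefficient $\coeff$ together with the coreness-approximation factor into $(4+\eps)$, giving $d^{+}_{max} \le (4+\eps)\,d^{+}_{opt}$. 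The matching inequality is free: our orientation is itself a valid acyclic orientation, so $d^{+}_{max} \ge d^{+}_{opt} \ge d^{+}_{opt}/(4+\eps)$, and both halves of \cref{def:loo} hold.

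I would then account for the edge flips and inherit the remaining bounds. Since the tie-breaking order is fixed, an edge can reverse orientation only when one of its endpoints changes level; each such level change is an operation the structure of \cref{thm:batch-dynamic} already performs, and re-orienting the incident edges whose relative order flipped costs work proportional to the incidence scan the structure already charges for. Hence the amortized number of flips is at most the amortized work, namely $O(\batchsize \log^2 n)$, and maintaining the orientation alongside the levels adds no asymptotic overhead, so the work, depth, and space bounds carry over verbatim from \cref{thm:batch-dynamic}.

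The hard part will be the approximation constant: pinning down exactly how $(1+\eps)^{\group(v)}$ at the top occupied level relates to the degeneracy $d$, since this is where the factor of two from the coreness approximation must combine with the coefficient $\coeff$ of the up*-degree invariant to yield precisely $(4+\eps)$ rather than a larger constant. I also need to confirm that the invariant indeed caps the \emph{up*}-degree (counting same-level neighbors), not merely the up-degree, so that same-level out-edges cannot inflate $d^{+}_{max}$; and that the flip charging is genuinely subsumed by the work already accounted for in \cref{thm:batch-dynamic} rather than contributing a separate term.
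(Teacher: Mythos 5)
Your proposal is correct and takes essentially the same route as the paper: orient edges from lower to higher levels with ties broken by vertex ID, bound each out-degree by the cap of \cref{inv:degree-1}, convert that cap into $\coeff^2(1+\delta)\,d = (4+\eps)\,d^{+}_{opt}$ using the coreness bounds of \cref{lem:core-num,lem:decomp-bound} together with $d^{+}_{opt} = d = \max_v \core(v)$ (this is exactly the computation in the paper's proof of \cref{cor:arboricity-orientation}), and inherit the flip count and the work/depth/space bounds from the potential argument (\cref{thm:work}) and \cref{cor:random-depth}.

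On the point you flagged at the end: you have the two degree notions reversed, but the issue resolves in your favor. \cref{inv:degree-1} caps the up-degree $\up(v)$, i.e., the number of neighbors in $Z_{\level(v)}$, which \emph{does} include same-level neighbors; it is the up*-degree $\down(v)$ (neighbors in $Z_{\level(v)-1}$) that has no upper-bound invariant at all---\cref{inv:degree-2} only lower-bounds it. So your written chain ``out-degree $\leq \down(v) \leq \coeff(1+\delta)^{\group(v)}$ by the invariant'' is false in its second step. The correct chain is the tighter one: every out-neighbor of $v$ lies at level $\geq \level(v)$, hence out-degree $\leq \up(v) \leq \coeff(1+\delta)^{\group(v)}$ by \cref{inv:degree-1}, and same-level out-edges cannot inflate $d^{+}_{max}$ precisely because they are already counted in $\up(v)$. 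With $\up$ substituted for $\down$, your argument matches the paper's.
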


A consequence of~\cref{thm:low-outdegree} is the following corollary.

\begin{corollary}[$O(\alpha)$ Out-Degree Orientation]\label{cor:arboricity-orientation}
Our algorithm maintains an $O(\alpha)$ out-degree orientation, where $\alpha$ is the current arboricity (\cref{def:arboricity}),
with the same bounds as~\cref{thm:low-outdegree}.
\end{corollary}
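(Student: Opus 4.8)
The plan is to read off the corollary directly from \cref{thm:low-outdegree}, whose heavy lifting already establishes everything except a short conversion between out-degree and arboricity. First I would recall that \cref{thm:low-outdegree} guarantees the maintained orientation is acyclic and that its maximum out-degree $d^{+}_{max}$ is a $(4+\eps)$-approximation of the optimum, so in particular $d^{+}_{max} \le (4+\eps)\, d^{+}_{opt}$.

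Next I would invoke the relationship between the optimal out-degree and the arboricity recorded in the footnote to \cref{def:loo}: the minimum possible maximum out-degree $d^{+}_{opt}$ equals the degeneracy $d$ of the graph, and the degeneracy and arboricity satisfy $d/2 \le \alpha \le d$. The inequality I need is $d \le 2\alpha$, which rearranges from $d/2 \le \alpha$. Substituting $d^{+}_{opt} = d \le 2\alpha$ gives
\[
d^{+}_{max} \;\le\; (4+\eps)\, d^{+}_{opt} \;=\; (4+\eps)\, d \;\le\; (8 + 2\eps)\,\alpha \;=\; O(\alpha),
\]
for any constant $\eps > 0$, so every out-degree in the maintained orientation is $O(\alpha)$, as required. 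Here $\alpha$ must be read as the \emph{current} arboricity of the graph after processing the batch (matching the convention used in \cref{table:results}); since the degeneracy inequality holds for the graph at each snapshot, the bound $d^{+}_{max} = O(\alpha)$ holds after every batch.

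Finally, because this is literally the same parallel level data structure and update procedure analyzed in \cref{thm:low-outdegree}, the work, depth, edge-flip, and space bounds transfer verbatim; no re-analysis is needed. The only real content is the degeneracy--arboricity conversion above, and there is no substantive obstacle---the difficulty of the result lives entirely in \cref{thm:low-outdegree} (and ultimately in the invariants of the PLDS that certify the $(4+\eps)$-approximate orientation). The one point to be careful about is that the approximation guarantee is stated as a two-sided bound, but for an $O(\alpha)$ out-degree orientation only the upper direction $d^{+}_{max} \le (4+\eps)\, d^{+}_{opt}$ is invoked.
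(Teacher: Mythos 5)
Your proof is correct and follows essentially the same route as the paper: both arguments reduce to bounding the maximum out-degree by $(4+\eps)\,d$ (where $d$ is the degeneracy) and then converting via $d \le 2\alpha$ to get an $O(\alpha)$ bound, with the work/depth/space/flip bounds inherited unchanged from the underlying PLDS. The only cosmetic difference is that you cite \cref{thm:low-outdegree} (together with $d^{+}_{opt}=d$) as a black box, whereas the paper re-derives the $(4+\eps)\,d$ out-degree bound from \cref{lem:decomp-bound} and the level-structure invariants before applying the same degeneracy--arboricity conversion.
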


By the Nash-Williams theorem and the relationship between degeneracy, arboricity, and the maximum core number, we also obtain a batch-dynamic $(4+\eps)$-approximate densest subgraph algorithm 
which returns an approximate
value of the densest subgraph. This algorithm immediately follows from our 
core decomposition algorithm in which we return our maximum core number
as the estimate for the densest subgraph value.

\begin{corollary}[Densest Subgraph Value]
  Given $G = (V, E)$ with $n = |V|$ vertices and $m = |E|$ edges,
  for any constant $\eps > 0$, our
  algorithm finds an $(4+\eps)$-approximate densest subgraph value with the same bounds as in~\cref{thm:batch-dynamic}.
\end{corollary}

Using \cref{thm:low-outdegree}, we design a framework for parallel batch-dynamic algorithms on bounded-arboricity graphs 
for batch of updates $\batch$, which
in addition to problem-specific 
techniques allows us to 
obtain a set of batch-dynamic 
algorithms for a variety of other fundamental
graph problems including maximal matching, 
clique counting, and vertex
coloring. The coloring algorithms are based heavily on the 
sequential algorithms of Henzinger et al.~\cite{HNW20}, 
but we present them as an application of our framework. 
For the problems we consider in this paper, \cref{table:sequential} 
summarizes the update times of the previous best-known sequential results
for their respective settings.

\begin{theorem}[Batch-Dynamic Maximal Matching]\label{thm:mm}
We maintain a maximal matching in
$O(\batchsize(\alpha + \log^2n))$ amortized work and
$O(\log^2 n \left(\log \Delta + \log\log n\right))$ depth \whp,\footnote{$\Delta$ 
denotes the maximum \emph{current} degree of the graph \emph{after} processing
all updates.} in $O(n\log^2 n + m)$ space.
\end{theorem}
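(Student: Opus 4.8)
The plan is to maintain the maximal matching as a thin layer on top of the $O(\alpha)$ out-degree orientation of \cref{thm:low-outdegree,cor:arboricity-orientation}, using the orientation so that an unmatched vertex searches for a partner among only its $O(\alpha)$ \outn rather than its whole neighborhood. The invariant I maintain is that the matching is maximal, which in the oriented graph means precisely that (i) no free vertex has a free out-neighbor and (ii) no free vertex has a free in-neighbor --- since every edge is oriented, one of its endpoints sees the other as an out-neighbor, so these two conditions together say that no edge has two free endpoints. Condition (i) is cheap to restore, since a free vertex need only scan its $O(\alpha)$ \outn for a free one. Condition (ii) is the awkward direction, because a vertex can have many \inn; to avoid ever scanning a large in-neighborhood I would have each vertex $u$ keep a hash set $\mathsf{FreeIn}(u)$ of its currently-free in-neighbors, maintained by having a vertex insert itself into the $\mathsf{FreeIn}$ sets of its $O(\alpha)$ \outn when it becomes free and delete itself when it becomes matched.

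First I would push the batch $\batch$ through the orientation data structure, which by \cref{thm:low-outdegree} returns the updated orientation together with $O(\batchsize \log^2 n)$ amortized edge flips. I would then repair the $\mathsf{FreeIn}$ sets, moving a constant number of entries per flip and per direct edge update in $O(1)$ work each, and collect the set of newly-free vertices, namely the endpoints of deleted matching edges. The core step is a parallel \unmatched routine that restores maximality by adding a maximal matching of the subgraph $H$ of edges both of whose endpoints are currently free: every free vertex proposes to a free out-neighbor or to an element of its $\mathsf{FreeIn}$ set, and conflicting proposals are resolved by the \static subroutine. Matching two free vertices never unmatches a matched vertex, so the free set only shrinks and no cascade occurs; the process ends exactly when $H$ has no uncovered edge.

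Correctness is immediate from the invariant: once a maximal matching of $H$ is added, no edge has two free endpoints, so the matching is maximal. For the work bound, the central observation is that every edge of $H$ is either incident to a newly-free vertex or was newly inserted --- before the batch, maximality guaranteed that no both-free edge existed --- so a maximal matching of $H$ has only $O(\batchsize)$ edges, and each newly-free vertex needs to touch only its $O(\alpha)$ \outn and a constant number of $\mathsf{FreeIn}$ entries to secure one partner or update its memberships. Hence the matching work is $O(\batchsize\, \alpha)$; adding the $O(\batchsize \log^2 n)$ orientation work and the $O(1)$-per-flip $\mathsf{FreeIn}$ maintenance over $O(\batchsize \log^2 n)$ flips gives the claimed $O(\batchsize(\alpha + \log^2 n))$ amortized work. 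Space is that of the orientation structure plus the $\mathsf{FreeIn}$ sets, which is $O(n \log^2 n + m)$. For depth, the orientation contributes $O(\log^2 n \log\log n)$, and the re-matching runs the \static subroutine in $O(\log \Delta)$ rounds \whp, each round of $O(\log^2 n)$ depth for its hash-table and prefix-sum operations, yielding $O(\log^2 n (\log \Delta + \log\log n))$.

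The main obstacle is parallelizing the in-neighbor direction. Sequentially, a newly-free vertex that fails to match among its \outn simply pops one still-free element from $\mathsf{FreeIn}(u)$ in $O(1)$ work; but in a batch many newly-free vertices compete for partners, the $\mathsf{FreeIn}$ sets can be far larger than $O(\alpha)$, and we cannot afford to materialize all of their edges into $H$. The crux is therefore to resolve these simultaneous in-neighbor proposals --- having each newly-free vertex grab a candidate, matching a maximal subset via \static, and letting the losers retry --- while charging every grab to the $O(\batchsize)$ edges of the final matching of $H$ rather than to $|\mathsf{FreeIn}|$, and while proving the retries converge within $O(\log \Delta)$ rounds \whp so that the depth picks up only the stated $\log \Delta$ factor. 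Verifying that the $\mathsf{FreeIn}$ bookkeeping stays consistent under the $O(\batchsize \log^2 n)$ simultaneous flips produced by the orientation is the most delicate part of the argument.
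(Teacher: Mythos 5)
Your architecture is the same as the paper's: the framework of \cref{sec:framework} on top of the $O(\alpha)$ out-degree orientation (\cref{cor:arboricity-orientation}), per-vertex parallel hash tables of unmatched in-neighbors (your $\mathsf{FreeIn}(u)$ is exactly the paper's $\incoming_u$), the parallel static maximal matching algorithm as the conflict-resolution subroutine, and the observation that any matching among currently-free vertices has $O(\batchsize)$ edges. Your handling of insertions and of flips, and your space and orientation-work accounting, also match the paper (\cref{alg:maximal-flips,alg:maximal-insert}).

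However, there is a genuine gap, and it sits exactly where you flag it: you never specify how the simultaneous in-neighbor proposals are resolved, and the scheme you do sketch --- each free vertex grabs \emph{one} candidate, run the static matching, losers retry --- does not achieve the stated depth. Consider $b = \Theta(\alpha)$ newly-unmatched vertices $u_1, \ldots, u_b$ sharing a common pool of free in-neighbors $w_1, \ldots, w_b$ (consistent with the orientation, since each $w_j$ then has out-degree $b = O(\alpha)$), where no $u_i$ has a free out-neighbor. With arbitrary candidate choices, all remaining $u_i$ can grab the same $w_j$ in every round, so each round produces only one new matched pair and the process runs for $\Omega(\alpha)$ rounds, i.e., depth $\Omega(\alpha)$ rather than $O(\log^2 n (\log \Delta + \log\log n))$; it also invalidates your claim that each newly-free vertex touches only ``a constant number of $\mathsf{FreeIn}$ entries.'' The paper's resolution (\cref{alg:maximal-delete}) is a \emph{doubling} search: in successive rounds each remaining vertex of $U$ queries $c = 1, 2, 4, \ldots$ of its unmatched in-neighbors, and the static matching is run on the induced subgraph of $U$ and all queried vertices. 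This single device yields both bounds. For depth: after $O(\log \Delta)$ rounds $c \geq \Delta$, so every free in-neighbor of every remaining vertex lies in the induced subgraph, and maximality of the static matching forces each remaining vertex either to become matched or to have $\incoming_u = \emptyset$, emptying $U$. For work: if a vertex is still unmatched after a round, then \emph{all} in-neighbors it queried in that round were matched (otherwise the static matching was not maximal), so by doubling the cost of its final failed round is at most twice that of its last such round and can be charged to the in-neighbors matched there; each matched in-neighbor is charged by at most its $O(\alpha)$ out-neighbors, and only $O(|U|) = O(\batchsize)$ in-neighbors ever become matched, giving $O(\batchsize\, \alpha)$ work (\cref{lem:work-bound}). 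Without the doubling (or an equivalent device with its own convergence and charging proof), the claimed depth bound does not follow from your argument.
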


\begin{theorem}[Batch-Dynamic Implicit $O(2^{\alpha})$-Vertex Coloring]\label{thm:implicit}
We maintain an implicit $O(2^{\alpha})$-vertex coloring\footnote{An \emph{implicit} vertex coloring algorithm returns valid colorings for queried vertices.} in
$O(\batchsize \log^3 n)$ amortized work and 
$O(\log^2 n)$ depth \whp for updates, and $O(Q\alpha \log n)$ 
work and $O(\log n)$ depth
\whp, for $Q$ queries, using $O(n\log^2 n + m)$ space.
\end{theorem}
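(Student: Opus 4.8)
The plan is to layer the coloring on top of the $O(\alpha)$ out-degree acyclic orientation that our parallel level data structure already maintains (\cref{thm:low-outdegree} and \cref{cor:arboricity-orientation}). That orientation gives us, for free, the batch bound of $O(\batchsize \log^2 n)$ amortized work, $\tO(\log^2 n)$ depth \whp, space $O(n\log^2 n + m)$, and---crucially---an amortized bound of $O(\batchsize \log^2 n)$ on the number of edge flips per batch. Since each vertex has $O(\alpha)$ out-neighbors and the orientation is acyclic, it suffices to adapt the \emph{sequential} implicit coloring scheme of Henzinger \etal~\cite{HNW20} to this orientation and then show (i) that queries parallelize to $O(\log n)$ depth and (ii) that the per-batch update cost is only an $O(\log n)$ factor above the flip count.

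For the color rule, I would use the recursive assignment of~\cite{HNW20}: the color of $v$ is a canonical element of a fixed palette determined from the colors of its $O(\alpha)$ out-neighbors $\outnbr(v)$, computed by unfolding this recursion along out-edges. Two facts need to be re-established in our setting. First, properness: for every edge $\{u,v\}$, oriented say $u \to v$, the rule forces $\text{color}(u) \neq \text{color}(v)$, and since the orientation is acyclic the recursion is well-defined. Second, the palette has size $O(2^{\alpha})$: truncating the recursion so that queries are cheap is exactly what inflates the number of colors from the $O(\alpha)$ of an exact greedy coloring up to $2^{O(\alpha)}$, and I would verify that a vertex can always avoid the $\le O(\alpha)$ colors of its out-neighbors within this palette. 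For the query bound I would argue that the recursion need descend only $O(\log n)$ steps---bounded by the group structure of the level data structure rather than by the full $\Theta(\log^2 n)$ levels---so that answering one query examines $O(\alpha)$ out-neighbors at each of $O(\log n)$ steps, giving $O(\alpha \log n)$ work and $O(\log n)$ depth; answering $Q$ queries independently in parallel then costs $O(Q \alpha \log n)$ work and $O(\log n)$ depth \whp.

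For updates, the colors are never materialized, so a batch only has to keep the auxiliary per-vertex structures (the hashed/sorted out-neighbor sets the color rule reads) consistent with the new orientation. Starting from the orientation's $O(\batchsize \log^2 n)$ amortized flips, I would process all flips in a batch in parallel, spending $O(\log n)$ work per flip to update the endpoints' structures; this multiplies the flip count by $O(\log n)$ and yields the claimed $O(\batchsize \log^3 n)$ amortized update work, while the depth of this bookkeeping is subsumed by the orientation's $\tO(\log^2 n)$ depth. The extra per-vertex structures add only $O(m)$ space on top of the data structure, so the total stays $O(n\log^2 n + m)$.

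The main obstacle I anticipate is the interaction between the recursive color rule and parallelism. Sequentially, a color query simply walks the out-orientation, but to reach $O(\log n)$ depth \whp I must show the walk can be unfolded in parallel and, more importantly, that it provably terminates within $O(\log n)$ levels---this is where the PLDS group structure (rather than its raw level count) has to be brought in. A secondary difficulty is correctness of concurrent batch updates: several flips in the same batch can touch a common vertex's out-neighbor set, so I would need to argue (e.g., via the same conflict-free aggregation used to maintain the orientation) that the parallel updates leave every vertex's auxiliary structure in exactly the state the sequential scheme would produce, which is also what lets the $O(2^{\alpha})$-color guarantee hold against an adaptive adversary.
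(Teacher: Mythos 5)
Your proposal has a genuine gap at its core: the color rule itself. You never pin down a concrete assignment, and the one you gesture at---the color of $v$ is determined recursively from the colors of its $O(\alpha)$ out-neighbors, ``unfolded along out-edges''---cannot simultaneously deliver properness, a $2^{O(\alpha)}$ palette, and $O(\alpha\log n)$ query work. If each vertex must see the colors of \emph{all} of its out-neighbors, unfolding the recursion for a single query visits up to $\alpha^{D}$ vertices, where $D$ is the recursion depth, which is far more than $O(\alpha \log n)$; if instead you follow only one out-edge per step, you cannot certify properness. Moreover $D$ is not $O(\log n)$: the PLDS orientation directs edges from lower to higher levels with ties broken by vertex index, so a directed path can stay inside a single level and have length $\Theta(n)$, and even strictly level-increasing paths are bounded only by the number of levels, $\Theta(\log^2 n)$, not by the number of groups. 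Finally, a rule of the form ``avoid the colors of your out-neighbors'' would need only $O(\alpha)$ colors, so the $2^{O(\alpha)}$ palette must arise from some other mechanism entirely; ``truncating the recursion'' is not such a mechanism, and no truncated version of this rule remains proper.

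What the paper actually does (\cref{sec:implicit}) is maintain an explicit \emph{arboricity forest decomposition}: using acyclicity of the $O(\alpha)$ out-degree orientation, the edges are partitioned into $\sigma = O(\alpha)$ undirected forests $F_0,\dots,F_{\sigma-1}$ so that no forest contains two outgoing edges of the same vertex (\cref{lem:arboricity-decomposition-correctness}; acyclicity is what lets $\sigma$ forests suffice rather than the $2\sigma$ of Henzinger \etal). Each forest is stored in a batch-dynamic Euler tour tree, and the color of $v$ is the bit vector of parities of $v$'s distance to the root in each of the $O(\alpha)$ trees containing an outgoing edge of $v$. Properness is immediate: for an edge $\{u,v\}$ lying in forest $F_i$, the depths of $u$ and $v$ in that tree differ by exactly one, so their $i$-th bits differ. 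The palette is $\{0,1\}^{O(\alpha)}$, and a query is just $O(\alpha)$ distance-to-root queries at $O(\log n)$ work each, which is exactly the claimed $O(Q\alpha\log n)$ work and $O(\log n)$ depth. The $O(\batchsize\log^3 n)$ update work then comes from performing an $O(\log n)$-work Euler tree operation per flip (together with forest-invariant repairs: when an outgoing edge of $u$ is deleted from $F_i$, the outgoing edge of $u$ in its largest-index nonempty forest is moved into $F_i$), over the $O(\batchsize\log^2 n)$ amortized flips supplied by \cref{thm:low-outdegree}. Your bookkeeping of out-neighbor sets does not substitute for this forest machinery, which is where all three guarantees of the theorem actually come from.
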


\begin{theorem}[Batch-Dynamic $k$-Clique Counting]\label{thm:clique}
We maintain the count of $k$-cliques in 
$O(\batchsize \alpha^{k-2}\log^2 n)$
amortized work and $O(\log^2 n \log\log n)$ depth \whp, in $O(m\alpha^{k-2} + n\log^2 n)$ space.
\end{theorem}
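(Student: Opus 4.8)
The plan is to layer the $k$-clique counter on top of the $O(\alpha)$ out-degree orientation maintained by \cref{cor:arboricity-orientation}, exploiting the fact from \cref{thm:low-outdegree} that a batch $\batch$ triggers only $O(\batchsize \log^2 n)$ edge flips in addition to its $\batchsize$ genuine insertions and deletions. The combinatorial identity driving everything is that the number of $k$-cliques containing a fixed edge $(u,v)$ equals the number of $(k-2)$-cliques inside the common neighborhood $N(u) \cap N(v)$. Hence a single global counter $C$ can be maintained incrementally: on inserting $(u,v)$ we add, and on deleting $(u,v)$ we subtract, the number of such $(k-2)$-cliques, the latter computed against the graph just before the deletion.

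First I would establish the per-edge counting primitive. Using the acyclic $O(\alpha)$ out-degree orientation, I would enumerate the $(k-2)$-cliques in $N(u) \cap N(v)$ by a depth-$(k-2)$ directed search that assigns each clique to its unique source vertex in the acyclic orientation and recurses only along out-edges, intersecting against the candidate set at each level with the $O(1)$-depth hash-table membership queries from \cref{sec:prelims}. Since every vertex has out-degree $O(\alpha)$ and $k$ is constant, each level branches $O(\alpha)$ ways, giving $O(\alpha^{k-2})$ work and $O(\log n)$ depth per edge (the $\log n$ coming from the parallel reduce-add aggregating subcounts). A Chiba--Nishizeki-style arboricity argument guarantees that this directed traversal visits each incident clique exactly once, which is the ingredient that simultaneously bounds the work and rules out double counting.

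Next I would account for the flips. A flip does not change $C$, but it changes the orientation against which the primitive operates, so the cached per-edge clique data we keep to make future deltas fast must be recomputed; this is what the $O(m\alpha^{k-2})$ term in the space bound pays for. I would treat each flip of $(u,v)$ as a deletion followed by a reinsertion in the oriented graph and re-run the $O(\alpha^{k-2})$-work primitive to restore the source-rooted assignment of the $O(\alpha^{k-2})$ cliques through $(u,v)$. All $O(\batchsize \log^2 n)$ flips and all $\batchsize$ genuine updates are independent edge events processed in parallel, with their signed contributions combined by one reduce-add into $C$. Multiplying the $O(\batchsize \log^2 n)$ edge events by the $O(\alpha^{k-2})$ per-event cost yields the claimed $O(\batchsize \alpha^{k-2}\log^2 n)$ amortized work; the depth is the orientation's $O(\log^2 n\log\log n)$ plus $O(\log n)$ for counting and reduction; and the space is $O(n\log^2 n)$ for the PLDS plus $O(m\alpha^{k-2})$ for the cached clique data.

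The main obstacle is getting the interaction between flips and the incremental count exactly right in parallel. Because the orientation mutates during the batch, I must fix a consistent order of operations — resolve the orientation and the resulting flips via \cref{cor:arboricity-orientation} first, then recount — and argue that the source-rooted assignment makes each clique's contribution attributable to exactly one edge event, so that concurrent signed updates to $C$ neither double-count surviving cliques nor drop cliques straddling both an updated and a flipped edge. Verifying this charging argument, together with confirming that the number of cliques incident to a single edge is genuinely $O(\alpha^{k-2})$ under the maintained (approximate, rather than exact-degeneracy) orientation, is where the real effort lies; the work, depth, and space arithmetic is then routine given \cref{thm:low-outdegree}.
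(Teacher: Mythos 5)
There is a genuine gap, and it is fatal to the proposed work bound. Your entire plan rests on the claim that the number of $k$-cliques through a fixed edge $(u,v)$ can be counted in $O(\alpha^{k-2})$ work by a search that "recurses only along out-edges." Both halves of this claim fail. First, the quantity itself is not $O(\alpha^{k-2})$: take $u$, $v$, and $t = \Theta(n)$ vertices $w_1,\dots,w_t$ each adjacent to both $u$ and $v$, plus the edge $(u,v)$. This graph has arboricity $O(1)$, so every vertex has $O(1)$ out-degree under the maintained orientation, yet there are $\Theta(n)$ triangles through $(u,v)$. Second, in any acyclic orientation of this example each $w_i$ is the source of its triangle, i.e., a common \emph{in}-neighbor of $u$ and $v$; such cliques are unreachable by a search out of $u$ and $v$ along out-edges, and the set of common in-neighbors is not bounded by any function of $\alpha$, nor can $N(u)\cap N(v)$ even be materialized within your budget. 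So your primitive either misses these cliques (wrong count) or costs $\Omega(t)$ work per edge event (breaking the $O(\batchsize\,\alpha^{k-2}\log^2 n)$ bound). Your final paragraph flags "confirming that the number of cliques incident to a single edge is genuinely $O(\alpha^{k-2})$" as remaining effort; that confirmation is impossible, so the charging scheme built on it collapses.

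The paper avoids exactly this trap by never enumerating cliques at the completing edge. Instead it maintains hash tables $I_i$ ($2 \le i \le k-1$) of \emph{potential} cliques keyed by their "largest incomplete clique without a source": whenever a vertex $s$ acquires out-edges to all vertices of some set (so $s$ becomes that set's source), $s$ performs the $O(\alpha^{k-2})$ enumeration over subsets of \emph{its own} out-neighbors and increments the stored counts; when a later update completes an incomplete clique, the contribution of all pre-registered potential $k$-cliques is retrieved by a table lookup (in the example above, $I_2[\{u,v\}]$ simply stores the integer $t$) rather than by enumeration. This deferral is precisely what makes the amortized work bound attainable, and it is what the $O(m\alpha^{k-2})$ space genuinely pays for — whereas in your write-up the "cached per-edge clique data" is charged for in space but never actually consulted when computing deltas. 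Your handling of flips (deletion followed by reinsertion, after resolving the orientation via \cref{cor:arboricity-orientation}) matches the paper's \cref{alg:clique-flips}, but without the table mechanism the insert/delete subroutines it feeds into cannot be implemented within the stated bounds.
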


All of the above results are robust against \emph{adaptive} adversaries which have
access to the algorithm's previous outputs. 
The following algorithm is robust against \emph{oblivious} adversaries
which do not have access to previous outputs.

\begin{theorem}\label{thm:alogn-coloring}
We maintain an $O(\alpha \log n)$-vertex coloring in 
$O(\batchsize \log^2 n)$ amortized expected
work and $O(\log^2 n \log\log n)$ depth \whp, in $O(m+n\log^2 n + \alpha\log n)$ space. 
\end{theorem}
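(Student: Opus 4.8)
The plan is to build the coloring directly on top of the low out-degree orientation maintained by \cref{cor:arboricity-orientation}, so that every vertex has only $O(\alpha)$ out-neighbors. I would maintain the invariant that the coloring is proper in the equivalent ``directed'' form: each vertex $v$ has a color distinct from all of its out-neighbors $\outnbr(v)$. Since every edge is an out-edge for exactly one endpoint, this is equivalent to a proper coloring. To keep the palette at $O(\alpha\log n)$ and, crucially, to confine recoloring work, I would encode each color as a pair $(\group(v), b(v))$, where $\group(v)$ is the group number already maintained by the PLDS (there are $O(\log n)$ groups) and $b(v)$ is drawn from a sub-palette of size $O(\alpha)$. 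Edges between vertices in different groups are then satisfied automatically by the first coordinate, and the only real work is to maintain, within each group $i$, a proper $O(\alpha)$-sub-coloring of the subgraph induced by vertices with $\group(v)=i$ under the induced orientation, whose out-degree is still $O(\alpha)$.

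To process a batch, I would first feed the inserts and deletes into the orientation data structure, collecting the set of ``dirty'' vertices: endpoints of inserted/deleted edges, endpoints of flipped edges, and vertices whose group number changed (the latter get a brand-new color in the new group). The number of such events is $O(\batchsize\log^2 n)$ amortized by \cref{thm:low-outdegree}. I would then run parallel recoloring rounds: in each round, every dirty vertex $v$ that currently conflicts with some out-neighbor resamples $b(v)$ uniformly from the sub-palette minus the (at most $O(\alpha)$) colors used by its same-group out-neighbors. A valid choice always exists because the sub-palette size exceeds the same-group out-degree. To guarantee that two simultaneously recolored vertices do not collide, each round would recolor only vertices forming an independent set of the dirty set, or equivalently process the dirty vertices in topological (level) order from sinks downward; acyclicity of the orientation then guarantees termination with a proper coloring.

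For the analysis I would argue three things. \emph{Correctness and termination} follow from acyclicity and the palette-size bound above. \emph{Depth} comes from the fact that a recolor of $v$ can only invalidate same-group in-neighbors, so a cascade strictly descends through the $O(\log^2 n)$ PLDS levels; processing one level per round, with each round costing $O(\log\log n)$ depth for the hash-table recoloring and conflict detection, gives the claimed $\tO(\log^2 n)$ depth \whp. \emph{Expected work} is where the oblivious adversary enters: because the update sequence is fixed independently of the algorithm's private color choices, each potential conflict created by an edge event (insert, delete, or flip) is a collision of two independent near-uniform sub-colors, hence occurs with probability $O(1/\alpha)$, so the expected number of directly triggered recolorings is $O((\text{edge events})/\alpha)=O(\batchsize\log^2 n/\alpha)$. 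Each recoloring costs $O(\alpha)$ work to scan out-neighbors, so ignoring cascades the expected work is $O(\batchsize\log^2 n)$. To include cascades, I would set up the per-vertex recursion $r_u = a_u + \tfrac{1}{C'}\sum_{y\in\outnbr(u)} r_y$, where $r_u$ is the expected number of recolorings of $u$, $a_u$ counts direct triggers, $C'=\Theta(\alpha\log n)$ is the number of colors a recolor samples among, and the sum is over the $O(\alpha)$ out-neighbors; since the same-group out-degree is $O(\alpha)\ll C'$, the recursion is a geometric contraction over the acyclic DAG and multiplies the direct cost by only a constant factor.

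The main obstacle is bounding the \emph{total} (not merely the per-vertex) expected recoloring work in the presence of vertices with large in-degree, since a single recolor of such a vertex could invalidate many in-neighbors at once. This is exactly why I introduced the group coordinate into the color: a recolor never changes $\group(v)$, so every cascade stays inside one group, and I would invoke the PLDS degree invariant to show that each vertex has only $O(\alpha)$ neighbors \emph{within its own group}. This makes the within-group in-degree $O(\alpha)<C'$ as well, so the backward cascade is a subcritical branching process with expected size $O(1)$ per trigger, and the total expected recoloring work telescopes to $O(\batchsize\log^2 n)$. The remaining delicate points, which I would verify carefully, are that the same-group degree bound indeed follows from the two PLDS invariants, that the oblivious-adversary independence survives across rounds (so that resampled colors remain uniform and independent of the fixed update sequence), and that the independent-set/level-ordered recoloring within each round introduces no new same-round conflicts, all while keeping the per-round depth at $O(\log\log n)$.
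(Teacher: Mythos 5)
There is a genuine gap, and it sits exactly at the point you flagged as delicate: the claim that ``each vertex has only $O(\alpha)$ neighbors within its own group'' does \emph{not} follow from the PLDS invariants, and is in fact false. \cref{inv:degree-1} bounds only the \emph{up-degree} of a vertex (neighbors at the same or higher levels); it says nothing about how many same-group neighbors a vertex can have at \emph{lower} levels. Consider a star on $n$ vertices: the arboricity is $1$, the invariants are satisfied with all leaves on level $0$ and the center on level $1$, and both levels lie in group $0$, so the center has $n-1$ same-group in-neighbors. Now your cascade argument collapses. Augment the star with one out-edge from the center $c$ to a same-group vertex $w$, and let an (oblivious) adversary repeatedly delete and reinsert $\{c,w\}$: each reinsertion makes $c$ conflict with $w$ with constant probability (your sub-palette has size $O(\alpha)=O(1)$ here), $c$ resamples, and the new color of $c$ collides with an expected $\Omega(n/\alpha)=\Omega(n)$ of its in-neighbor leaves, each of which must then recolor to restore the ``distinct from all out-neighbors'' invariant. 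So a single update triggers expected $\Omega(n)$ work, and the branching process over in-neighbors is supercritical, not a geometric contraction; the per-vertex recursion $r_u$ you set up does not control the \emph{total} work precisely because the within-group in-degree, not the out-degree, is what multiplies each trigger. (There is also an internal inconsistency: you define the sub-palette to have size $O(\alpha)$ but then take $C'=\Theta(\alpha\log n)$ in the recursion; neither choice rescues the star example, since in-degree can be $\Theta(n)$.)

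The paper sidesteps this by making the palettes disjoint \emph{per level}, not per group: level $\ell\in\group_i$ gets its own palette of size $2(2+3/\lambda)(1+\eps)^i$, i.e., twice the \cref{inv:degree-1} up-degree bound. Then two neighbors can conflict only if they are on the \emph{same level}, and same-level neighbors are up-neighbors, so their number is at most half the palette size --- this is the structural fact your per-group scheme lacks. With at least half the palette always free, each round of parallel resampling shrinks the set of conflicted vertices by a constant factor in expectation (and \whp by a Chernoff bound), so all follow-up rounds are charged to the first one; recolorings forced by level moves are charged to the PLDS update work, and recolorings forced by adversarial insertions are paid for by the expected $\Omega((1+\eps)^{\group(v)})$ insertions an oblivious adversary needs to hit a uniformly random color. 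The geometric palette sizes summed over the $O(\log n)$ levels per group still give $O(\alpha\log n)$ colors (\cref{lem:decomp-bound}), yielding the stated work, depth, and space bounds. If you want to salvage your two-coordinate color idea, the first coordinate must identify the \emph{level}, not the group, at which point you have essentially reconstructed the paper's scheme.
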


Our \kc, low out-degree orientation,
and vertex coloring algorithms are
work-efficient when compared to the 
best-known sequential, dynamic algorithms
for the respective problems~\cite{BHNT15,HNW20,SCS20}. 
For maximal matching, our algorithm is work-efficient
when $\alpha = \Omega(\log^2 n)$ when compared to the 
best-known sequential algorithm 
that is robust against adaptive 
adversaries~\cite{HeTZ14,neiman2015simple}; the extra
work when $\alpha = o(\log^2 n)$ 
comes from the fact that our bounds are with respect to the
\emph{current} arboricity, compared to~\cite{HeTZ14,neiman2015simple} whose bounds are with respect
to the \emph{maximum} arboricity over the sequence of updates.

The best-known batch-dynamic algorithm for 
$k$-clique counting, by Dhulipala et al.~\cite{DLSY21}, 
takes $O(\batchsize m \alpha^{k-4})$ expected work and 
$O(\log^{k-2}n)$ depth \whp,
using $O(m + \batchsize)$ space. Compared with their algorithm, 
our algorithm uses less work
when $m = \omega(\alpha^2 \log^2 
n)$. In many real-world networks,
$\alpha << \sqrt{m}$ (see e.g.,~\cref{table:sizes}, 
for maximum \kc values, which upper bound $\alpha$); thus,
our result is more efficient in many cases
at an additional multiplicative space cost of 
$O(\alpha^{k-2})$. We also obtain smaller
depth for all $k > 4$.
We provide further comparisons with the 
best-known sequential clique counting
algorithm~\cite{DvorakT13}, and
we describe more specific batch-dynamic challenges we face in designing the above algorithms in their respective sections.
The components of the PLDS used in each of the above results are summarized in~\cref{fig:usage}.

Finally, using ideas from our batch-dynamic \kc decomposition algorithm, 
we provide a new parallel
static $(2+\eps)$-approximate \kc decomposition algorithm.
We compare this algorithm with the best-known parallel
static exact algorithm of~\cite{dhulipala2017julienne}
which uses $O(m + n)$ expected work and $O(\rho \log m)$ depth \whp, where $\rho$
is the \emph{number of steps necessary to peel all vertices} 
($\rho$ could potentially be $\Omega(n)$). Hence, \cite{dhulipala2017julienne}
does not guarantee $\poly(\log n)$ depth.

\begin{theorem}\label{thm:static}
  Given $G = (V, E)$ with $n = |V|$ vertices and $m = |E|$ edges,
  for any constant $\eps > 0$, our
  algorithm finds an $(2+\eps)$-approximate \kc
  decomposition in $O(n+m)$ expected work and $O(\log^3 n)$ depth
  \whp, using $O(n+m)$ space.
\end{theorem}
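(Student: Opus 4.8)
The plan is to reduce \cref{thm:static} to the structural core of our dynamic result: rather than maintaining the PLDS under a sequence of updates, I would directly compute, from scratch, a level assignment on $V$ that satisfies (approximately) the two degree invariants of the PLDS, and then invoke the \emph{same} level-to-coreness correspondence that underlies \cref{thm:batch-dynamic}---namely, that the group $\group(v)$ of a vertex yields an estimate $\kest(v)$ within a $(2+\eps)$ factor of $\core(v)$. Because that correspondence depends only on the invariants and not on how the assignment was produced, no new approximation argument is needed once a valid assignment is in hand, and the entire task collapses to computing such an assignment statically within the stated work, depth, and space budgets.

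For the construction I would process the $K = O(\log^2 n)$ levels from the bottom up, maintaining the active set $Z_\ell$ (initialized to $Z_0 = V$) together with, for each active vertex, a counter holding its induced degree within the current active set (its up-degree $\up(v)$ restricted to $Z_\ell$). At level $\ell$ I perform one parallel peeling step against the threshold $\tau \approx (1+\eps)^{i}$ of the group $\group_i$ containing $\ell$: I \textbf{filter} out every active vertex whose counter is below $\tau$, settling it at level $\ell$, and \textbf{reduce-add} the induced decrements into the counters of its still-active neighbors; the survivors form $Z_{\ell+1}$. A vertex's final level is the largest $\ell$ for which it stays active, and its group determines $\kest(v)$. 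Since each vertex is settled exactly once and each edge triggers a single decrement when one of its endpoints settles, maintaining the counters costs $O(n+m)$ work in total, with the expectation arising from the hash-table and semisort primitives used to compact active sets; each level contributes $O(\log n)$ depth from the \textbf{filter}, \textbf{prefix-sum}, and hash-table operations, giving $O(\log^2 n)\cdot O(\log n) = O(\log^3 n)$ depth \whp. Only $O(1)$ words per vertex plus the graph are stored, for $O(n+m)$ space.

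The main obstacle is showing that a \emph{bounded} number of levels---the $O(\log n)$ levels allotted to each group---suffices to produce an assignment meeting the invariants, even though the per-level peeling does \emph{not} run to an exact fixed point (peeling a long path shows that reaching an exact $\tau$-core may take $\Omega(n)$ rounds, which would destroy the depth bound). The resolution is that we never need the exact core: the lower invariant only asks that a survivor have induced up-degree at least $\tau$ in $Z_\ell$, while a vertex that \emph{should} have been peeled but persists through a group's levels is guaranteed to be removed once the threshold grows by the multiplicative $(1+\eps)$ gap at the next group. I would formalize this with a lemma asserting that after the levels of group $i$, every remaining vertex either genuinely lies in a subgraph of induced density $\Omega((1+\eps)^{i})$---certifying a coreness lower bound---or is eliminated at group $i+1$, so that the residual slack is exactly the $(1+\eps)$ spacing between consecutive group thresholds. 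Combined with the factor of $2$ already inherent in reading coreness off vertex levels, this yields the claimed $(2+\eps)$ bound. Verifying that this bounded-round assignment satisfies both invariants \whp, and that the round count stays within $O(\log n)$ per group so the depth does not exceed $O(\log^3 n)$, is the technical heart of the argument.
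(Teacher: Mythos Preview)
Your approach is correct and takes a genuinely different route from the paper. The paper does \emph{not} build a PLDS level assignment and invoke the dynamic coreness lemma; instead it runs a bucketed peeling procedure with thresholds $(2+\eps)(1+\eps)^t$ and, for each threshold, performs at most $\lceil \log_{1+\delta} n\rceil$ inner peeling rounds. The bound on the number of inner rounds comes from a direct shrinking argument (Observation~4 of Ghaffari--Uitto~\cite{Ghaffari2019}): after one round of removing vertices with induced degree at most $(2+\eps)(1+\eps)^t$, the number of surviving vertices with coreness at most $(1+\eps)^t$ shrinks by a factor $2/(2+\eps)$, so $O(\log n)$ rounds clear them all. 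The approximate coreness is then read off the threshold index $t$ at which a vertex is peeled. Your route instead reuses the structural machinery already built for the dynamic result, which is arguably cleaner inside this paper since \cref{lem:core-num} is already proved; the paper's route is more self-contained and does not need the full two-invariant PLDS setup.

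One clarification about your final paragraph: you are over-complicating the ``main obstacle.'' With one peeling step per level at threshold $\tau_i=(1+\delta)^i$, both PLDS invariants are satisfied \emph{automatically} at every settled vertex: a vertex settled at level $\ell\in g_i$ has up-degree in $Z_\ell$ strictly below $\tau_i\leq (2+3/\lambda)(1+\delta)^i$ (so \cref{inv:degree-1} holds), and since it survived level $\ell-1$ its degree in $Z_{\ell-1}$ was at least $\tau_{i'}$ for $\ell-1\in g_{i'}$ (so \cref{inv:degree-2} holds). There is no ``vertex that should have been peeled but persists'' in the sense of violating an invariant; the invariants hold regardless of whether the peeling reaches a fixed point within a group. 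The worry that a low-coreness vertex might sit too high in the structure is exactly what the pruning argument inside \cref{lem:core-num} already rules out, using only the invariants and the $O(\log n)$ levels per group. So you do not need the extra lemma you sketch; you can simply note that your construction satisfies both invariants and invoke \cref{lem:core-num} and \cref{lem:decomp-bound} directly.
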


\begin{figure}[!t]
    \centering
    \includegraphics[width=0.8\columnwidth]{./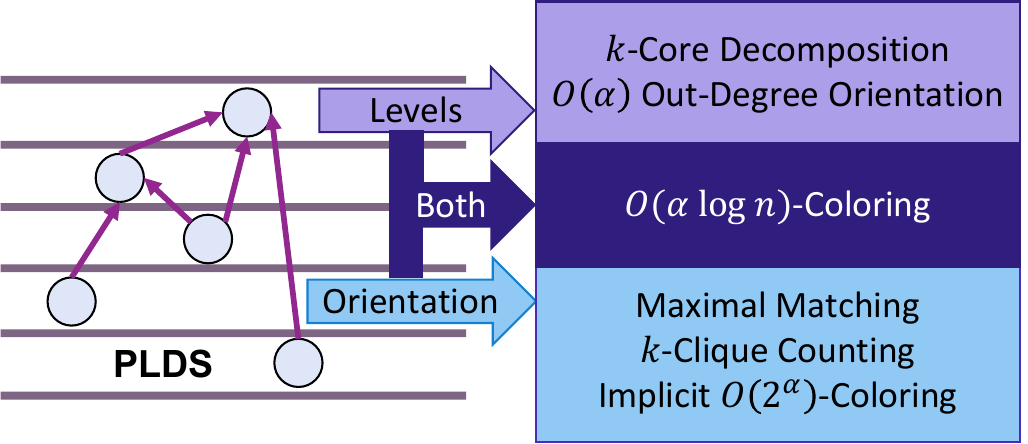}
    \caption{This figure shows what parts of the PLDS are used in each result. The level of each 
    vertex is used to determine the $k$-core decomposition (\cref{thm:batch-dynamic}) and low out-degree orientation (\cref{thm:low-outdegree} and \cref{cor:arboricity-orientation}). 
    The orientation of the edges
    is used for maximal matching (\cref{thm:mm}), implicit $O(2^{\alpha})$-coloring (\cref{thm:implicit}), and $k$-clique counting (\cref{thm:clique}). Finally,
    both are used for $O(\alpha \log n)$-coloring (\cref{thm:alogn-coloring}).}\label{fig:usage}
\end{figure}

\myparagraph{Experimental Contributions}
In addition to our theoretical contributions, we also provide
optimized multicore implementations of our \kc decomposition
algorithms.
We compare the performance of our algorithms
with state-of-the-art algorithms on a variety of real-world graphs
using a 30-core machine with two-way hyper-threading.
Our parallel
static approximate \kc algorithm achieves a 2.8--3.9x speedup over the
fastest parallel exact \kc
algorithm~\cite{dhulipala2017julienne} and
achieves a 14.76--36.07x self-relative speedup.

We show that our parallel
batch-dynamic \kc algorithm achieves up to $544.22\times$ speedups over the state-of-the-art sequential dynamic approximate \kc algorithm of Sun \etal~\cite{SCS20}, while
achieving comparable accuracy.
We also achieve up to $114.52\times$ speedups
over the state-of-the-art parallel
batch-dynamic exact \kc algorithm of Hua \etal~\cite{Hua2020},
and up to $723.72\times$ speedups against the state-of-the-art sequential
exact \kc algorithm of Zhang and Yu~\cite{Zhang2019a}.
Our batch-dynamic algorithm outperforms the best multicore static \kc algorithms by
up to $121.76\times$ on batch sizes that are less than $1/3$ of the number of
edges in the entire graph. 

Our algorithm exhibits improvements in runtime while
maintaining the same or smaller error, 
even when using only four threads (available on a standard laptop), 
and remains competitive at one thread.
We demonstrate that existing exact dynamic
implementations are not efficient or scalable enough to handle graphs
with billions of edges, whereas our algorithm is able to. Furthermore, our demonstrated speedups of up to two orders of magnitude indicates
that our implementation
not only fills the gap for processing graphs that are orders of magnitude 
larger than can be handled by existing implementations, but also
that it is the best option for many smaller networks.
Our code is publicly available at \url{https://github.com/qqliu/batch-dynamic-kcore-decomposition}. 
\section{Comparisons with Other Related Work}\label{app:related}

\paragraph{Parallel Exact Batch-Dynamic Algorithms}
The most recent,
state-of-the-art parallel batch-dynamic algorithm by Hua \etal~\cite{Hua2020} improves upon the previous parallel algorithms of
Aridhi \etal~\cite{Aridhi2016}, Wang \etal~\cite{WangYu17}, and Jin
\etal~\cite{Jin2018}.
Their algorithm relies on the concept of a \emph{joint edge set},
whose insertion and removal determines the core numbers of the vertices.
However, their algorithm could take $\Omega(n)$ depth as they use a 
standard depth-first search to traverse vertices in the joint edge set as
well as vertices outside
of the joint edge set. In comparison, our algorithm provably has $O(\log^2 
n\log\log n)$ depth \whp Our theoretical improvements also translate to 
practical gains since we demonstrate greater
scalability in our experiments.

Another recent work by Gabert \etal~\cite{Gabert21} provides a scalable 
exact $k$-core maintenance algorithm. Both their asymptotic work 
and depth is super-polylogarithmic (in fact, in the worst case it could be 
as bad as computing from scratch). Unfortunately, the code for their 
experiments is proprietary and hence not available for comparison. However, 
their reported
experimental results overall appear slower than our results, 
described in more detail in~\cref{sec:threads}.

\paragraph{Low Out-Degree Orientations} 
Many previous works 
give dynamic algorithms for low out-degree orientations with respect to
bounds on the \emph{maximum} arboricity that ever exists in the graph, 
$\alpha_{max}$~\cite{brodal99dynamicrepresentations,HeTZ14,BB20,Kowalik07,KKPS14,KS18,SW20}.
Noticeably, these sequential, dynamic works save a $O(\log n)$ 
factor in the running time compared to sequential dynamic algorithms that 
compute the orientation with respect to the \emph{current} 
arboricity~\cite{BHNT15,HNW20}. In practice, the arboricity 
of real-world graphs may vary as batches of updates are applied,
and in particular, the \kc numbers of each vertex can change drastically 
(e.g., many follows and unfollows can occur in a very short period of 
time following a viral post). Our work matches the 
update time of~\cite{BHNT15,HNW20} for maintaining a low out-degree 
orientation for the current $\alpha$. This explains why our
work bounds for maximal matching requires an additional
$O(\log n)$ factor compared to previous works \cite{HeTZ14,neiman2015simple}
that were in terms of $\alpha_{max}$.

\paragraph{Other Graph Problems}\label{sec:related-other-work} Using low out-degree orientations,
a number of works 
in the past have studied the other 
dynamic graph problems we study in this paper, 
including  maximal matching, vertex coloring, and clique 
counting~\cite{BS15,BS16,DvorakT13,DLSY21,FMN03,Kowalik10,neiman2015simple,HNW20,HeTZ14,SW20,BCKLRRV19,ChNi85,PPS16,HHH21,LT21}. 
The best update time for these problems in the sequential settings are summarized in~\cref{table:sequential}.

\begin{figure}[t!]
    \centering
    \caption{Previous best-known sequential algorithm results.}\label{table:sequential}
    \footnotesize
    \begin{tabular}{*4c}
        \toprule
        & \multicolumn{3}{c}{Summary of Best-Known Sequential Results} \\
        \cmidrule(lr){2-4}
        Problem & Approx & Update Time & Adversary \\    
        \midrule
        \kc &  $(2+\eps)$ & $O(\log^2 n)$~\cite{HNW20,SCS20}, \cref{lem:core-num} & Adaptive\\
        Orientation &  $(4+\eps)$    &  $O(\log^2 n)$~\cite{HNW20} & Adaptive \\
        Matching &   Maximal    & $O(\alpha_{max} + \log n/\log\log n)$~\cite{neiman2015simple,HeTZ14} & Adaptive \\
        $k$-clique & Exact & $O(\alpha_{max}^{k^2}\log^{k^2} n)$~\cite{DvorakT13} & Adaptive   \\
        Coloring & $O(\alpha \log n)$ & $O(\log^2 n)$~\cite{HNW20} & Oblivious   \\
        Coloring & $O\left(2^{\alpha}\right)$ & $O(\log^3 n)$~\cite{HNW20} & Adaptive\\
        \bottomrule
    \end{tabular}
\end{figure}

In the sequential 
setting, the best-known algorithm for 
$k$-clique counting uses $O(\log^{k^2} n)$ update 
time in bounded \emph{expansion} graphs for any $k$-vertex
subgraph~\cite{DvorakT13}. %
Bounded expansion is a more restricted class of 
graphs than bounded arboricity.\footnote{Graphs with bounded expansion 
have bounded arboricity, but not vice versa.}
Their algorithm crucially requires the 
\emph{fraternal augmentation} graph, $G'$,
which is created from an input directed graph, $G = (V, E)$, 
by adding an edge $(u, v)$ (direction chosen arbitrarily)
if and only if $(w, u)$ and $(w, v)$ exist. Provided an 
out-degree orientation of size $\sigma$, their algorithm runs in
$O(\sigma^{k^2}\log^{k^2}n)$ time; so for bounded arboricity graphs,
their algorithm can find any subgraph of size $k$ with 
$O(\alpha^{k^2}\log^{k^2}n)$ update time~\cite{DvorakT13}.
Our algorithm also gives a better update time in the sequential setting
than~\cite{DvorakT13} for counting cliques (for $\batchsize = 1$).
\section{Batch-Dynamic $k$-Core Decomposition}\label{sec:bounded-depth-arboricity}

In this section, we describe our parallel, batch-dynamic algorithm for
maintaining an $(2+\eps)$-approximate $k$-core decomposition (for any
constant $\eps > 0$) and prove its theoretical efficiency.

\subsection{Algorithm Overview}\label{sec:algoverview}

We present a \emph{parallel level data structure (PLDS)} that maintains a
$(2+\eps)$-approximate \kc decomposition
that is inspired by the class of sequential level data structures (LDS)
of~\cite{BHNT15,HNW20}.
Our algorithm achieves $O(\log^2 n)$
amortized work per update and  $O(\log^2 n \log \log n)$ depth \whp{}
We also present a deterministic version of our algorithm
that achieves the same work bound with $O(\log^3 n)$  depth.
Our data structure can also handle batches of
vertex insertions/deletions.
Our data structure requires $O(\log^2 n)$
amortized work, which matches the $O(\log^2 n)$ amortized update time
of~\cite{BHNT15,HNW20}.
\ifspaa
We also present a \emph{deterministic} version of our algorithm
that achieves the same work bound with $O(\log^3 n)$ depth
in~\cref{app:data-structure-impl}.

In addition to edge updates, 
our data structure also handles batches of
vertex insertions/deletions, discussed in~\cref{sec:vertex-ins-del}.
\fi
As in~\cite{HNW20},
our data structure can handle \emph{changing arboricity} that is not 
known a priori. 
Such adaptivity is necessary to
successfully maintain accurate approximations of coreness values. 

The LDS and our PLDS consists of a partition of 
the vertices into $K = O(\log^2 n)$ \defn{levels}.\footnote{When $m = o(n)$,
we can also show that $O(\log^2 m)$ levels suffice.}
We provide a very high level overview of PLDS in this section.
The levels are partitioned into equal-sized \defn{groups} of consecutive levels.
Updates are partitioned into insertions and deletions.
Vertices move up and down levels depending on the type of edge update incident
to the vertex. Rules governing the induced degrees
of vertices to neighbors in different levels determine whether a vertex
moves. Using information about the level of a vertex, we obtain a
$(2+\eps)$-approximation on the coreness of the vertex.

After every edge update, vertices update their levels depending on whether 
they satisfy two invariants. One invariant upper bounds the induced degree of each vertex $v$ in the subgraph 
consisting of all vertices in the same or higher level. Vertices whose degree exceeds this 
bound move up one or more levels. We process the levels from smallest to largest level and move
all vertices from the same level in parallel. The second invariant lower bounds the induced degree of 
each vertex $v$ in the subgraph consisting of all vertices in the level below $v$, the level of 
$v$ and all levels higher than the level of $v$. Vertices that violate this invariant calculate
a \defn{desire-level} or the closest level they can move to that satisfies this invariant. Then, 
vertices with the same desire-level are moved in parallel to that level. Finally,
the coreness estimates of the vertices are computed based on the current level of each vertex. We obtain
the low out-degree orientation by orienting edges from lower to higher levels (breaking ties by vertex index).
\cref{fig:invariants} shows the invariants maintained by our algorithm; \cref{fig:insertion-exp,fig:deletion-exp}
show how our algorithm processes insertion and deletion updates. Together, they demonstrate an example run of our 
algorithm.

\subsection{Sequential Level Data Structure (LDS)}
The sequential level data structures (LDS)
of~\cite{BHNT15,HNW20} maintains a low out-degree orientation
under dynamic updates.
Within their LDS, a vertex
moves up or down levels one by one, where a vertex $v$
(incident to an edge update) first checks whether an invariant is
violated, and then may move up or down one level. Then, the vertex
checks the invariants and repeats. Such movements may cause other vertices to move up or down 
levels. %
The LDS combined with our~\cref{sec:coreness}
directly gives an $O(\log^2 n)$ update time 
sequential, dynamic algorithm that outputs $(2+\eps)$-approximate coreness values.

Unfortunately, such a procedure can be slow in practice.
Specifically, a vertex that moves one level could cause a cascade
of vertices to move one level.
Then, if the vertex moves again, the same cascade of movements may
occur. An example is shown in~\cref{fig:bad-example}.
\begin{figure}[t]
    \centering
    \includegraphics[width=0.5\textwidth]{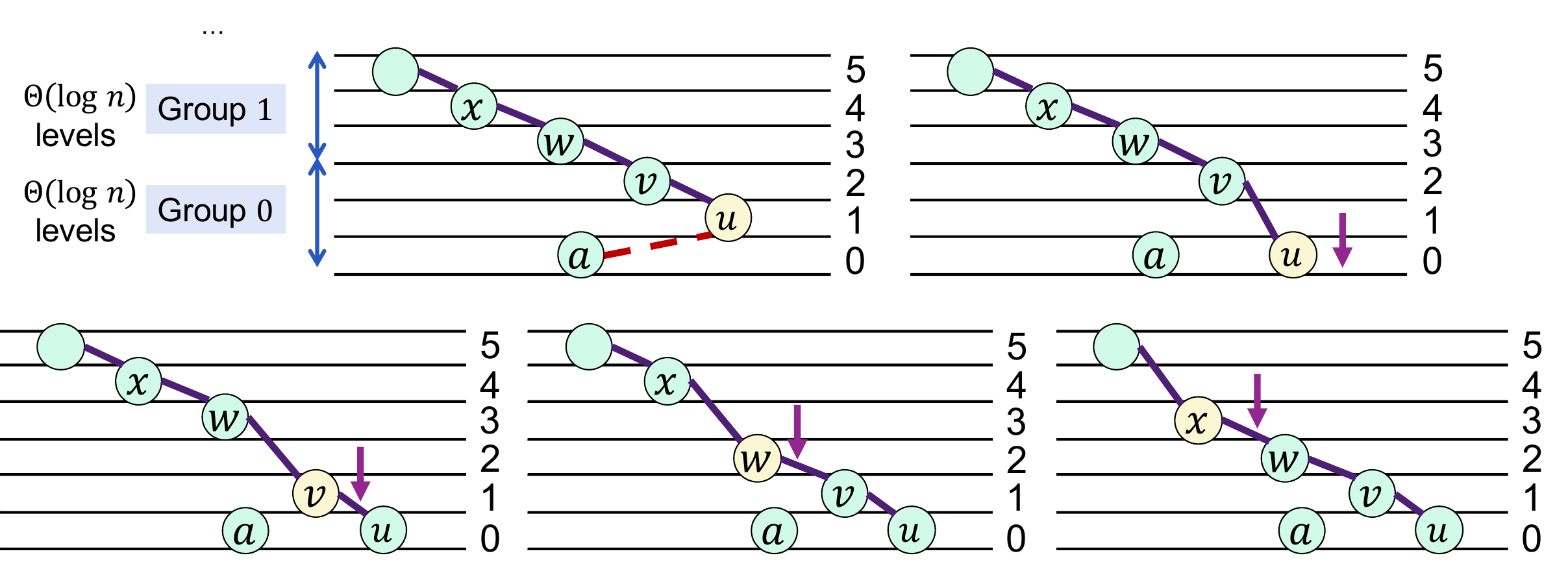}
    \caption{Example of a cascade of vertex movements
    caused by an edge deletion on $u$ (shown by the dashed red line). }\label{fig:bad-example}
\end{figure}
Furthermore, any trivial parallelization of the LDS to support a batch of updates
will run into race
conditions and other issues, requiring the use of locks which blows up the
runtime in practice.

Thus, our PLDS solves several challenges posed by the
sequential LDS. Given a batch $\batch$ of edge updates:
\textbf{(1)} our algorithm processes the levels in a careful order
that yields provably low depth for batches of updates;
\textbf{(2)} our insertion algorithm processes vertices on each level at
most once, which is key to the depth bounds---after vertices move up
from level $\ell$, no future step in the algorithm
moves a vertex up from level $\ell$; and
\textbf{(3)} our deletion algorithm moves vertices to their final
level in one step. In other words, a vertex moves at most once in a
deletion batch.

\subsection{Detailed PLDS Algorithm}\label{sec:detailed-plds}

As mentioned previously, the vertices of the input 
graph $G = (V, E)$ in our PLDS are
partitioned across $K$ \defn{levels}.
For each level $\ell = 0, \ldots, K - 1$, let $V_\ell$ be the set
of vertices that are currently assigned to level $\ell$.
Let $Z_{\ell}$ be the set of vertices in levels $\geq \ell$.
Provided a constant $\delta > 0$, the levels are partitioned
into \defn{groups} $\grp_0, \ldots, \grp_{\ceil{\log_{(1+\delta)} n}}$, 
where each group contains
$4\ceil{\log_{(1+\delta)} n}$ consecutive levels. Each $\ell \in
\left[i  \ceil{\log_{(1+\delta)}n}, \ldots, (i + 1)\ceil{\log_{(1+\delta)}n} - 
1\right]$ is a level in group $i$.
Our data structure consists of $K = O(\log^2 n)$ total levels.
The PLDS satisfies the following invariants as introduced 
in~\cite{BHNT15,HNW20}, which also govern how the
data structure is maintained. The invariants assume a given constant $\delta
> 0$ and a constant $\lambda > 0$.

\begin{invariant}[Degree Upper Bound]\label{inv:degree-1}
    If vertex $v \in V_{\ell}$, level $\ell < K$
    and $\ell \in \group_{i}$, then $v$ has at most
    $\coeff(1+\delta)^{i}$ neighbors in $Z_{\ell}$.
\end{invariant}

\begin{invariant}[Degree Lower Bound]\label{inv:degree-2}
    If vertex $v \in V_{\ell}$, level $\ell > 0$,
    and $\ell-1 \in \group_i$, then $v$ has at least
    $(1+\delta)^{i}$ neighbors in $Z_{\ell - 1}$.
\end{invariant}

\begin{figure}[t]
    \centering
    \includegraphics[width=0.4\textwidth]{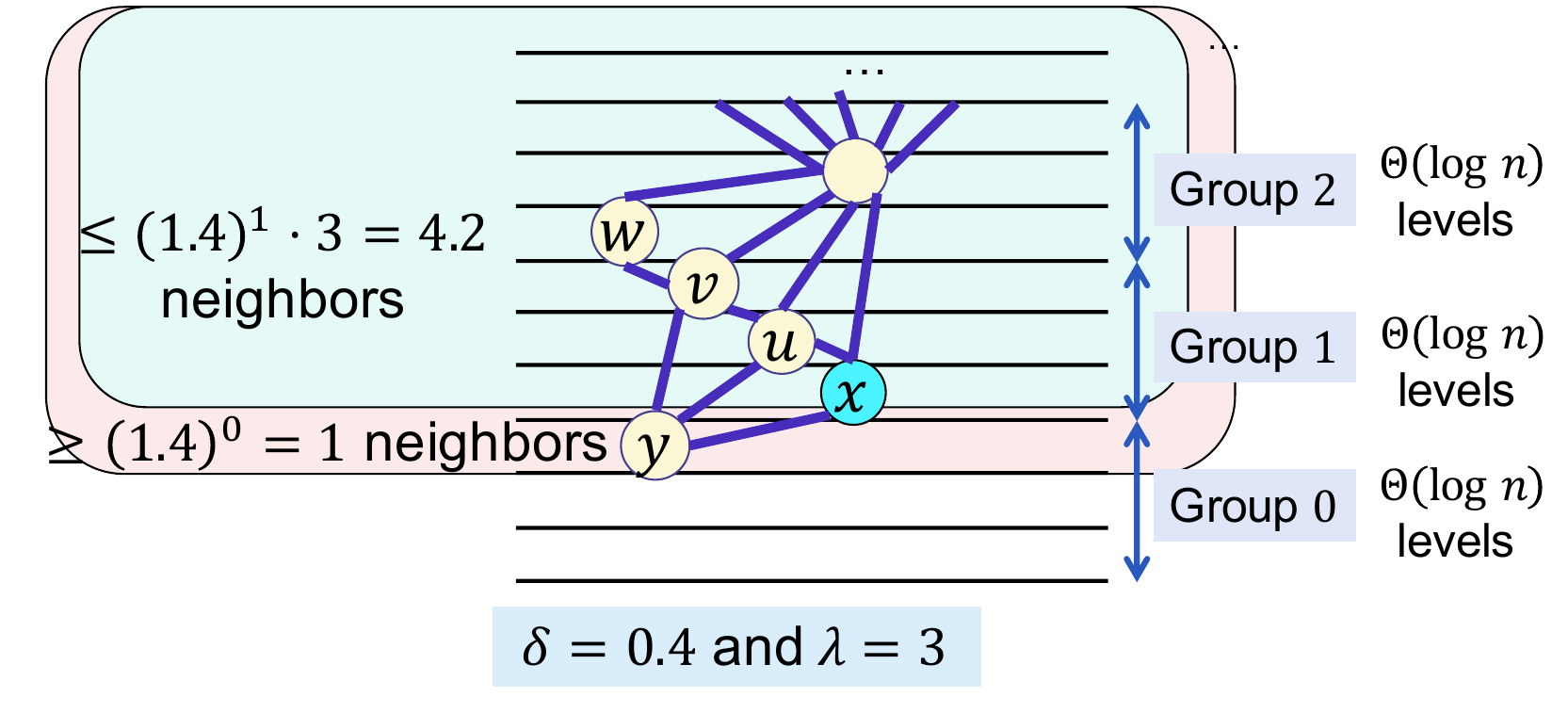}
    \caption{Example of invariants maintained by the PLDS for $\delta=0.4$ and $\lambda=3$.
    There are
    $\Theta(\log n)$ groups, each with $\Theta(\log n)$. Each vertex is in exactly one level of the structure and moves up and
    down by some movement rules.  For example, vertex $x$ (blue) is on level $3$ and in group $1$.}\label{fig:invariants}
\end{figure}

Vertices with no neighbors are placed in level $0$. An example
partitioning of vertices and maintained invariants is shown in~\cref{fig:invariants}.
Let $\level(v)$ be the level that $v$ is currently on.
We define the \defn{group number}, $\group(v)$, of a vertex $v$ to be the index
$i$ of the group
$\group_i$ where
$\level(v) \in \group_i$. Similarly, we define $\gn(\ell) = i$ to be
the group number for level $\ell$ where $\ell \in g_{i}$.
We define the \defn{up-degree}, $\up(v)$, of a vertex $v$ to be the number of
its neighbors in $Z_{\level(v)}$ (\defn{up-neighbors}),
and \defn{\upl-degree}, $\down(v)$, to be the
number of its neighbors in $Z_{\level(v)-1}$ (\defn{$\down$-neighbors}). 
These two notions of induced degree correspond to the requirements of
the two invariants of our data structure. We define neighbors $w$ of $v$
at levels $\level(w) < \level(v)$ to be the \defn{down-neighbors} of $v$.
Lastly, the \defn{desire-level} $\dl(v)$ of a vertex $v$ is the \emph{closest
level to the current level of the vertex}
that satisfies both~\cref{inv:degree-1} and~\cref{inv:degree-2}.

\begin{definition}[Desire-level]\label{def:desire-level}
    The \emph{desire-level}, $\dl(v)$, of vertex $v$ is the level $\level'$
    that minimizes $|\level(v) - \level'|$, and where
    $\down(v) \geq (1+\delta)^{i'}$ and $\up(v) \leq \coeff(1+\delta)^i$
    where $\level' - 1 \in \group_{i'}$,
    $\level' \in g_i$, and $i' \leq i$. In other words, the desire-level of $v$
    is the closest level $\ell'$ to the current level of $v$, $\level(v)$, where both~\cref{inv:degree-1}
    and~\cref{inv:degree-2} are satisfied.
\end{definition}

\begin{algorithm}[t!]\caption{Update($\batch$)}\label{alg:rebalance}
    \small
    \begin{algorithmic}[1]
        \Require{A batch of edge updates $\batch$.}
    \State Let $\batch_{\textit{ins}} =$ all edge insertions in $\batch$,
    and $\batch_{\textit{del}} =$ all edge deletions in $\batch$. \label{line:sep}
        \State Call RebalanceInsertions($\batch_{\textit{ins}}$).
        [\cref{alg:insert}]\label{line:insert}
        \State Call RebalanceDeletions($\batch_{\textit{del}}$).
        [\cref{alg:delete}]\label{line:delete}
    \end{algorithmic}
\end{algorithm}

We show that the invariants are always maintained except for a period of time when processing a new batch of insertions/deletions.
During this period,
the data structure undergoes a \emph{rebalance procedure}, 
where the invariants may be violated.
The main update procedure in~\cref{alg:rebalance}
separates the updates into insertions and deletions
(\cref{line:sep}), and then calls RebalanceInsertions
(\cref{line:insert}) and RebalanceDeletions
(\cref{line:delete}).
We make two \emph{crucial} observations:
when processing a batch of insertions,~\cref{inv:degree-2}
is never violated; and, similarly, when processing a batch of
deletions,~\cref{inv:degree-1} is never violated. Thus,
no vertex needs to move \emph{down} when processing an insertion batch
and no vertex needs to move \emph{up} when processing a deletion batch.
The two procedures are asymmetric, and so we first describe RebalanceInsertions (\cref{alg:insert}), 
and then describe RebalanceDeletions (\cref{alg:delete}).

\myparagraph{Data Structures}
Each vertex $v$ keeps track of its set of neighbors in two structures.
$\greater$ keeps track of the neighbors at $v$'s level and above.  We
denote this set of $v$'s neighbors by $\greater[v]$.  $\tbl_v$ keeps
track of neighbors of $v$ for every level below $\level(v)$---in
particular, $\tbl_v[j]$ contains the neighbors of $v$ at level $j <
\level(v)$.
\ifanon
\else
\iflong
    \cref{sec:data-structure-impl} and~\cref{app:data-structure-impl}.
\else
    the full paper.
\fi
\fi

\begin{algorithm}[!t]\caption{RebalanceInsertions($B_{\textit{ins}}$)}\label{alg:insert}
    \small
    \begin{algorithmic}[1]
    \Require{A batch of edge insertions $B_{\textit{ins}}$.}
    \State Let $\greater$ contain all up-neighbors of each vertex,
    keyed by vertex. So $U[v]$ contains all up-neighbors of $v$.
    \State Let $\tbl_{v}$ contain all neighbors of $v$ in levels $[0, \ldots, \level(v)-1]$,
    keyed by level number.
    \ParFor{each edge insertion $e = (u, v) \in \batch_{ins}$}
        \State Insert $e$ into the graph.
    \EndParFor
    \For{each level $l \in [0, \ldots, K-1]$ starting with $l = 0$} \label{line:outer-loop}
    \ParFor{each vertex $v$ incident to $B_{\textit{ins}}$ or is marked,
        where $\level(v) = l \cap \up(v) >
    (2+3/\lambda)(1+\delta)^{\gn(l)}$} \label{line:kc-check}
        \State Mark and move $v$ to level $l + 1$ and create $\tbl_v[l]$ to store
        $v$'s neighbors at level $l$.\label{line:move-higher}
        \EndParFor
        \ParFor{each  $w \in N(v)$ of a vertex $v$ that moved to level $l + 1$
        and $w$ stayed in level $l$}\label{line:insert-loop}
            \State $\greater[v] \leftarrow \greater[v] \setminus
            \left\{w\right\}, \tbl_v[l] \leftarrow \tbl_v[l] \cup
                    \{w\}$.\label{line:insert-modify}
        \EndParFor
        \ParFor{each  $u \in N(v)$ of a vertex $v$ that moved to level $l + 1$
        and $u$ is in level $l + 1$}             \label{line:mark-loop}
            \State Mark $u$ if $\up(u) > (2+3/\lambda)(1+\delta)^{\gn(l+1)}$.
            \label{line:mark-neighbor}
            \State $\greater[u] \leftarrow \greater[u] \cup \{v\}$, $\tbl_u[l] \leftarrow
            \tbl_u[l] \setminus \{v\}$.\label{line:neighbor-modify}
        \EndParFor
        \ParFor{each  $x \in N(v)$ of a vertex $v$ that moved to level $l + 1$
        and $x$ is in level $\ell(x) \geq l + 2$}\label{line:insert-ds-begin}
        \State $\tbl_x[l] \leftarrow \tbl_x[l] \setminus \{v\}, \tbl_x[l + 1]
        \leftarrow \tbl_x[l + 1] \cup \{v\}$.\label{line:insert-ds-end}
        \EndParFor
        \State Unmark $v$ if $\up(v) \leq (2+3/\lambda)(1+\delta)^{\gn(l+1)}$.
        Otherwise, leave $v$ marked.\label{line:unmark}
    \EndFor
    \end{algorithmic}
\end{algorithm}

\myparagraph{RebalanceInsertions($B_{\textit{ins}}$)}
\cref{alg:insert} shows the pseudocode.
Provided a batch of insertions $B_{\textit{ins}}$, we iterate through the $K$ levels from the
lowest level $\ell = 0$ to the highest level $\ell = K-1$
(\cref{line:outer-loop}). For each
level, in parallel we check the vertices incident to edge insertions in
$B_{\textit{ins}}$ or is marked to see if
they violate~\cref{inv:degree-1} (\cref{line:kc-check}). If a vertex $v$ in the
current level $l$
violates~\cref{inv:degree-1}, we move $v$ to level $l+1$
(\cref{line:move-higher}). After moving $v$, we update structures
$U[v], L_v$, and the structures of $w \in \neighbor(v)$ where
$\level(w) \in [l, l + 1]$.  First, we create $L_v[l]$ to store the
neighbors of $v$ in level $l$ (\cref{line:move-higher}).  If $v$ moved
to level $l + 1$ and $w$ stayed in level $l$, then we delete $w$ from
$U[v]$ and instead insert $w$ into $\tbl_v[l]$
(Lines~\ref{line:insert-loop}--\ref{line:insert-modify}). We do not need to make any data structure
modifications for $w$ since $v$ stays in $U[w]$. Similarly, no data
structure modifications to $v$ and $w$
are necessary when both $v$ and $w$ move to
level $l + 1$. For each neighbor of $v$ on level $l + 1$, we need to check
whether it now violates~\cref{inv:degree-1} (\cref{line:mark-loop}).
If it does, then we mark the vertex
(\cref{line:mark-neighbor}).
We process any such marked vertices when we process level $l + 1$. We also
update the $U$ and $L$ arrays of every neighbor of $v$ on level $l + 1$
(\cref{line:neighbor-modify}). Specifically, let $u$ be one such neighbor, we
add $v$ to $U[u]$ and remove $v$ from $L_u[l]$.
We conclude
by making appropriate
 modifications to $L$ for each neighbor on levels $\geq l + 2$
 (Lines~\ref{line:insert-ds-begin}--\ref{line:insert-ds-end}). Specifically,
 let $x$ be one such neighbor, we remove $v$ from $L_x[l]$ and add $v$ to
 $L_x[l+1]$.
All neighbors of vertices that moved can be checked
and processed in parallel.
Finally, $v$ becomes unmarked if it satisfies all invariants; otherwise, it
remains marked and must move again in a future step (\cref{line:unmark}).

\begin{figure*}[!t]
    \centering
    \includegraphics[width=0.9\textwidth]{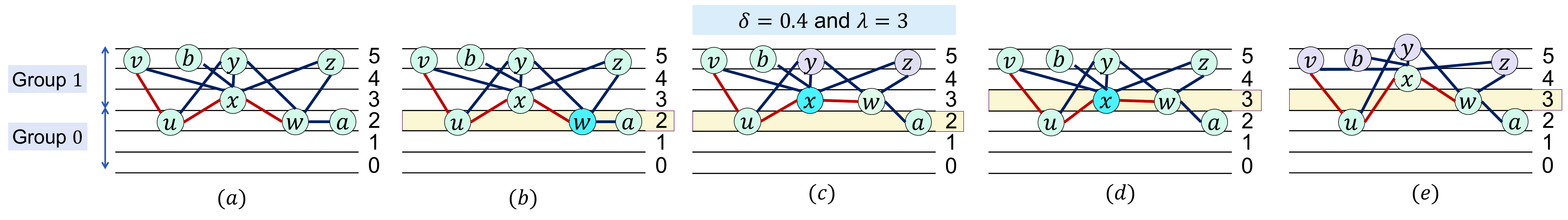}
    \caption{Example of RebalanceInsertions described in the text for $\delta=0.4$ and $\lambda=3$. The red lines represent the
    batch of edge insertions.}
    \label{fig:insertion-exp}
\end{figure*}

    \cref{fig:insertion-exp} shows an example of our entire insertion procedure
    described in~\cref{alg:insert} for $\delta=0.4$ and $\lambda=3$. The red lines in the example represent the
    batch of edge insertions. Thus, in $(a)$, the newly inserted edges are the
    edges $(u, v)$, $(u, x)$, and $(x, w)$. We iterate from the bottommost level
    (level $0$) to the topmost level (level $K - 1$).

    The first
    level where we encounter vertices that are marked or are adjacent to an edge
    insertion is level $2$. Since level $2$ is part of group $0$, the cutoff
    for~\cref{inv:degree-1} is $(2 + 3/\lambda)(1+\delta)^0 = 3$ provided $\lambda
    = 3$ and $\delta = 0.4$.
    In level $2$, only $w$ violates~\cref{inv:degree-1}
    since the number of its neighbors on levels $\geq 2$ is $4$ ($x$, $y$, $z$, and $a$),
    so $\up(w) = 4 > 3$ (shown in $(b)$).
    Then, in $(c)$, we move $w$ up to level $3$. We need to update the data
    structures for neighbors of $w$ at level $3$ and above (as well as $w$'s own
    data structures); the vertices with data structure updates are $x$, $w$, $y$, and
    $z$. After the move, $x$ becomes marked because it now
    violates~\cref{inv:degree-1} (the cutoff for level $3$ is $(2+3/3) (1+0.4) =
    4.2$ since level $3$ is in group $1$); $w$ becomes unmarked because it no longer
    violates~\cref{inv:degree-1}. %
    In $(d)$, we move on to process level $3$. The only vertex that is marked or
    violates~\cref{inv:degree-1} is $x$. Therefore, we move $x$ up one level (shown in
    $(e)$) and update relevant data structures (of $x$, $v$, $y$, $z$, and $b$).

\begin{algorithm}[t!]\caption{RebalanceDeletions($\batch_{\textit{del}}$)}\label{alg:delete}
    \small
    \begin{algorithmic}[1]
        \Require{A batch of edge deletions $\batch_{\textit{del}}$.}
        \State Let $\greater$ contain all up-neighbors of each vertex,
        keyed by vertex. So $U[v]$ contains all up-neighbors of $v$.
        Let $\tbl_{v}$ contain all neighbors of $v$ in levels $[0, \ldots,\level(v)-1]$,
        keyed by level number.
        \ParFor{each edge deletion $e = (u, v) \in \batch_{\textit{del}}$}
            \State Remove $e$ from the graph.
        \EndParFor
    \ParFor{each vertex $v$ where $\down(v) < (1+\delta)^{\gn(\level(v) -
    1)}$}\label{inv:2-violate}
    \State Calculate $\dl(v)$ using CalculateDesireLevel($v$).\label{line:calculate-dl}
    \EndParFor
    \For{each level $l \in [0, \ldots,K-1]$ starting with level $l =
    0$}\label{line:del-outer-loop}
        \ParFor{each vertex $v$ where $\dl(v) = l$}\label{line:move-to-dl-start}
            \State Move $v$ to level $l$.\label{line:move-to-dl-end}
        \EndParFor
        \ParFor{each vertex $v$ where $\dl(v) = l$}\label{line:second-loop}
        \ParFor{each neighbor $w$ of $v$
        where $\level(w) \geq l$}\label{line:violates-deg-1}
            \State Let $p_v$ and $p_w$ be the previous levels of $v$ and $w$,
                respectively, before the move.
                \If{$\level(w) = l$}\label{line:u1}
                \State $\tbl_w[p_v] \leftarrow \tbl_w[p_v] \setminus \{v\}$,
                    $\tbl_v[p_w] \leftarrow \tbl_v[p_w] \setminus \{w\}$.
                \State $\greater[w] \leftarrow \greater[w] \cup \{v\}, \greater[v] \leftarrow \greater[v] \cup \{w\}$.
            \Else
                \If{$p_v > \level(w)$} 
                    \State $\greater[w] \leftarrow \greater[w] \setminus \{v\},
                    \tbl_v[\level(w)] \leftarrow \tbl_v[\level(w)]
                    \setminus \{w\}$. %
                \ElsIf{$p_v = \level(w)$}
                \State $\greater[w] \leftarrow \greater[w] \setminus
                \{v\}$.
                \Else{
                $\tbl_w[p_v] \leftarrow \tbl_w[p_v] \setminus \{v\}$.}
                \EndIf
                \State $\tbl_w[l] \leftarrow \tbl_w[l] \cup \{v\},
                \greater[v] \leftarrow \greater[v] \cup
                \{w\}$.\label{line:u-end}
                \EndIf
                \If{$\down(w) < (1+\delta)^{\gn(\level(w) - 1)}$}\label{line:update-dl-check}
                \State Recalculate $\dl(w)$
                using~\cref{alg:desire-level}.\label{line:update-dl}
                \EndIf
            \EndParFor
            \EndParFor
        \EndFor
        \end{algorithmic}
    \end{algorithm}

\myparagraph{RebalanceDeletions($B_{\textit{del}}$)}
Unlike in LDS, deletions in PLDS are handled by moving each vertex 
at most once, directly to its final level
(the vertex \emph{does not move} again during this procedure).
We show in the analysis that this guarantee is \emph{crucial to
obtaining low depth}. The pseudocode is shown in \cref{alg:delete}.
For each vertex $v$ incident to an edge deletion, we 
check whether it violates
\cref{inv:degree-2} (\cref{inv:2-violate}). On~\cref{inv:2-violate},
$\gn(\ell(v) - 1)$ gives the
group number $i$ where $\ell(v) - 1 \in g_i$.
If $v$ violates~\cref{inv:degree-2},
we calculate its desire-level, $\dl(v)$,
using CalculateDesireLevel
(\cref{line:calculate-dl}), described next.
We iterate through the levels from $l = 0$ to $l = K - 1$
(\cref{line:del-outer-loop}).  Then, in parallel for each vertex $v$
whose desire-level is $l$, we move $v$ to level $l$
(Lines~\ref{line:move-to-dl-start}--\ref{line:move-to-dl-end}).  
We update the data structures of each $v$ that moved and $w \in
\neighbor(v)$ where $\level(w) \geq l$
(Lines~\ref{line:second-loop}--\ref{line:u-end}). Specifically, 
we need to update $U[v], U[w], L_v,$ and $L_w$ if $v$ was originally
an up-neighbor of $w$ and becomes a down-neighbor or vice versa.
Finally, we update
the desire-level of neighbors of $v$ that no longer satisfy
\cref{inv:degree-2} (Lines~\ref{line:update-dl-check}--\ref{line:update-dl}). We process all vertices
that move and their neighbors in parallel.

\cref{fig:deletion-exp} shows an example of~\cref{alg:delete} for $\delta=1$ and $\lambda=3$.
In $(a)$, the newly deleted edges are $(x, z)$ and $(y, w)$. For each vertex adjacent to an edge deletion,
we calculate its desire-level, or the closest level to its current
level that satisfies~\cref{inv:degree-2}.
In $(b)$,
    only $x$ and $z$ violate~\cref{inv:degree-2}. The lower bound on
    the number of neighbors that must be
    at or above level $3$ for $x$ and level $4$ for
    $z$ is
    $(1+\delta)^1 = 2$ since $\delta = 1$ and levels $3$ and $4$ are in
    group $1$. (Recall that the lower bound is calculated with respect to the level
    \emph{below} $x$ and $z$.)
    We calculate that the desire-levels of $x$ and $z$ are both $3$.
    The desire-levels of $y$ and $w$ are their current levels
    because they do not violate the invariant.
    Then, we iterate from the bottommost level
    (starting with level $0$) to the topmost level (level $K - 1$).
    Level $3$ is the first level where vertices want to move. Then, we move $x$
    and $z$ to level $3$ (shown in $(c)$). We only need to update the data
    structures of neighbors at or above $x$ and $z$ so we only update
    the structures of $x$, $y$, and $z$. \cref{inv:degree-2} is no longer
    violated for $x$ and $z$. In fact, our algorithm guarantees that each
    vertex \emph{moves at most once}. We check whether any of $x$ or $z$'s
    up-neighbors violate \cref{inv:degree-2}. Indeed, 
    $y$ now violates the invariant. In $(d)$, we recompute the desire-level of
    $y$ and its desire-level is now $4$. Then, we move $y$ to level $4$ in $(e)$.

\begin{figure*}[htpb]
    \centering
    \includegraphics[width=0.9\textwidth]{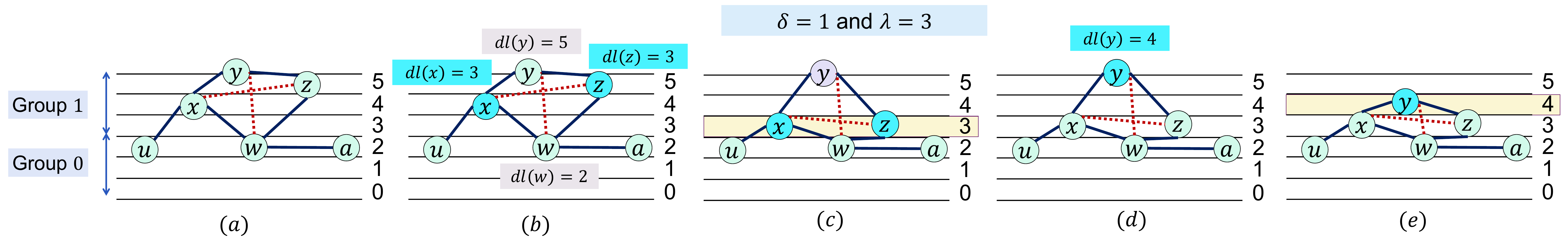}
    \caption{Example of RebalanceDeletions described in the text for $\delta=1$ and $\lambda=3$. The red dotted lines represent the
    batch of edge deletions. 
    }
    \label{fig:deletion-exp}
\end{figure*}

\begin{algorithm}[!t]\caption{CalculateDesireLevel($v$)}\label{alg:desire-level}
    \small
    \begin{algorithmic}[1]
        \Require{A vertex $v$ that needs to move to a level $j <
            \level(v)$.}
        \Ensure{The desire-level $\dl(v)$ of vertex $v$.}
        \State  $d \leftarrow \down(v), \prevind  \leftarrow 1, \ind \leftarrow 2$
        \While{$d < (1+\delta)^{\gn(\level(v) - p)}$ and $\level(v) - p > 0$}
            \State $d \leftarrow d + \sum_{j=\prevind}^{\ind-1}\big|L_v[\level(v)-j-1]\big|$ \label{line:sum-level}
            \If{$d \geq (1+\delta)^{\gn(\ell(v) - i)}$}\label{line:kc-check-2}
                \State Binary search within levels $[\level(v) - \ind + 1, \ldots,\level(v) - \prevind]$ to find the
                    closest level to $\level(v)$ that
                    satisfies Invariants~\ref{inv:degree-1}
                    and~\ref{inv:degree-2}; \Return this level.\label{line:binary}
            \EndIf
            \State  $\prevind \leftarrow \ind, \ind \leftarrow \min(2\cdot
            \ind,\level(v))$.\label{line:double-count}
\EndWhile
    \State \Return 0.
    \end{algorithmic}
\end{algorithm}

\myparagraph{CalculateDesireLevel($v$)}
\cref{alg:desire-level} shows the procedure for calculating the desire-level, $\dl(v)$, of vertex $v$, which is used
in~\cref{alg:delete}.
Let $\gn(\level)$ be the index $i$ where level $\ell \in \group_i$.
We use a doubling procedure followed by a binary search to calculate
the desire-level.  We initialize a variable $d$ to $\down(v)$ (number of
neighbors at or above level $\level(v) - 1$).
Starting with level $\level(v) - 2$, we add the number of neighbors in
level $\level(v) - 2$ to $d$
(\cref{alg:desire-level},
\cref{line:sum-level}). This procedure
checks whether moving $v$ to $\level(v) - 1$
satisfies~\cref{inv:degree-2} (\cref{line:kc-check-2}). If it passes the
check, then we are done and we move $v$ to $\level(v) - 1$. Otherwise,
we iteratively double the number of levels from which we count neighbors until we find a level where~\cref{inv:degree-2} is
satisfied
(\cref{line:double-count}). 
On each iteration, we sum
the number of neighbors (\cref{line:sum-level}) 
in the range of levels using a parallel reduce.
We continue until we find a level where~\cref{inv:degree-2} is
satisfied. Let this level be $\ell'$ and the previous cutoff be
$\ell_{\textit{prev}}$. Finally, we perform a binary search within the range
$[\ell',\ldots, \ell_{\textit{prev}}]$ to find the \emph{closest} level to $\level(v)$
that satisfies~\cref{inv:degree-2} (\cref{line:binary}).

\ifanon
\else
\iflong
\subsection{Vertex Insertions and Deletions}\label{sec:vertex-updates}
We can handle vertex insertions and deletions by inserting vertices which have
zero degree and considering deletions of vertices to be a batch of edge
deletions of all edges adjacent to the deleted vertex. When we insert a vertex
with zero degree, it automatically gets added to level $0$ and remains in level
$0$ until edges incident to the vertex are inserted. For a vertex deletion, we
add all edges incident to the deleted vertex to a batch of edge deletions. Note,
first, that all vertices which have $0$ degree will remain in level $0$. Thus,
there are at most $O(m)$ vertices which have non-zero degree.

Because we have $O(\log^2 m)$ levels in our data structure, we rebuild the data
structure once we have made $\frac{m}{2}$ edge updates (including edge updates
from edges incident to deleted vertices). Rebuilding the data structure requires
$O(m\log^2 n)$ total work which we can amortize to the $\frac{m}{2}$ edge updates to
$O(\log^2 n)$ amortized work. Running~\cref{alg:insert} and~\cref{alg:delete} on
the entire set of $O(m)$ edges requires $\tilde{O}(\log^2 n)$ depth.
\fi
\subsection{Coreness and Low-Outdegree Orientation}\label{sec:compute-core}
\fi
\subsection{Efficiency Analysis}\label{sec:efficiency}

We now analyze the work and depth of our \plds.
\ificml
We defer proofs
to the full paper due to space constraints.
\fi
First, it is easy to show that there exists a level where both invariants are
satisfied. This allows our PLDS to assign each vertex to
a single level.

\begin{lemma}\label{lem:good-level}
    If a vertex $v$ violates~\cref{inv:degree-1}, then there
    exists a level $l >
    \level(v)$ where $v$ satisfies both~\cref{inv:degree-1}
    and~\cref{inv:degree-2}. If a vertex $w$ violates~\cref{inv:degree-2}, then
    there exists a level $l < \level(w)$ where $w$ satisfies both invariants or
    $l = 0$ (it is in the bottommost level).
\end{lemma}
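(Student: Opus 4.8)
The plan is to exploit a tight coupling between the two invariants at adjacent levels. The key observation is that for a fixed vertex, the up-degree constraint at one level is controlled by exactly the same quantity as the up*-degree constraint one level higher. Concretely, fix $v$ and let $d_{\geq \ell}$ denote the number of $v$'s neighbors lying in $Z_{\ell}$ (levels $\geq \ell$); this is independent of where $v$ itself sits. If $v$ is placed at level $\ell$ then $\up(v) = d_{\geq \ell}$, whereas if $v$ is placed at level $\ell+1$ then $\down(v) = d_{\geq (\ell+1)-1} = d_{\geq \ell}$. Hence \cref{inv:degree-1} at level $\ell$ and \cref{inv:degree-2} at level $\ell+1$ are both governed by $d_{\geq \ell}$ and by the \emph{same} group index $\gn(\ell)$, and their thresholds, $\coeff(1+\delta)^{\gn(\ell)}$ and $(1+\delta)^{\gn(\ell)}$ respectively, differ only by the factor $\coeff > 1$. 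I would also record the monotonicity facts that $d_{\geq \ell}$ is non-increasing in $\ell$ while both thresholds are non-decreasing in $\ell$; these give intuition but the proof itself only needs a comparison at one boundary level.

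For the first claim, I would define $l^{*}$ to be the smallest level $l > \level(v)$ at which \cref{inv:degree-1} is satisfied, i.e.\ $d_{\geq l} \leq \coeff(1+\delta)^{\gn(l)}$. Such an $l^{*} \leq K-1$ exists because in the topmost group the threshold $\coeff(1+\delta)^{\gn(K-1)} \geq \coeff\cdot n$ exceeds the maximum possible degree, so \cref{inv:degree-1} holds there vacuously. By the hypothesis that \cref{inv:degree-1} is violated at $\level(v)$ together with minimality of $l^{*}$, \cref{inv:degree-1} is violated at level $l^{*}-1$, giving $d_{\geq l^{*}-1} > \coeff(1+\delta)^{\gn(l^{*}-1)}$. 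Placing $v$ at $l^{*}$ yields $\down(v) = d_{\geq l^{*}-1}$, and \cref{inv:degree-2} at $l^{*}$ only requires $\down(v) \geq (1+\delta)^{\gn(l^{*}-1)}$; since $\coeff > 1$ this comparison is immediate, so both invariants hold at $l^{*}$.

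For the second claim I would argue symmetrically, defining $l^{*}$ to be the largest level $l < \level(w)$ at which \cref{inv:degree-2} is satisfied, with level $0$ always qualifying since \cref{inv:degree-2} is vacuous there; this makes $l^{*}$ well-defined. By the hypothesis that \cref{inv:degree-2} is violated at $\level(w)$ together with maximality of $l^{*}$, \cref{inv:degree-2} is violated at level $l^{*}+1$, giving $d_{\geq l^{*}} < (1+\delta)^{\gn(l^{*})}$. Placing $w$ at $l^{*}$ yields $\up(w) = d_{\geq l^{*}}$, and \cref{inv:degree-1} at $l^{*}$ requires $\up(w) \leq \coeff(1+\delta)^{\gn(l^{*})}$; since $d_{\geq l^{*}} < (1+\delta)^{\gn(l^{*})} \leq \coeff(1+\delta)^{\gn(l^{*})}$ this holds, while \cref{inv:degree-2} holds at $l^{*}$ by construction (vacuously when $l^{*}=0$, which matches the stated boundary case).

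The crux of the argument—and the only genuinely conceptual step—is recognizing the interlocking identity that the up-degree at level $\ell$ equals the up*-degree at level $\ell+1$, and that the two thresholds at these coupled positions share the index $\gn(\ell)$ and differ only by the factor $\coeff$; once this is set up, each half collapses to a single inequality at the boundary level $l^{*}\mp 1$. The remaining work is purely bookkeeping: confirming $l^{*}$ lies in the valid range $[0,K-1]$ (via the top-group threshold exceeding the maximum degree for the first part, and via level $0$ being an always-valid fallback for the second), and checking that $d_{\geq \ell}$ is well-defined independently of $v$'s own placement so that moving $v$ does not alter the counts used in the comparison.
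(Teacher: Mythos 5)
Your proof is correct and takes essentially the same route as the paper's: move to the first level above (resp.\ below) the current one at which the violated invariant is restored, and observe that the up-degree at level $\ell$ coincides with the \upl-degree at level $\ell+1$ while the two thresholds share the group index $\gn(\ell)$ and differ only by the factor $\coeff>1$, so the other invariant holds automatically at that boundary level, with existence guaranteed by the top level's threshold exceeding the maximum possible degree. The only difference is cosmetic: you phrase the walk as a minimality/maximality choice of $l^{*}$ and spell out the symmetric downward case, which the paper compresses into ``a similar argument holds for~\cref{inv:degree-2}.''
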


\begin{proof}
    First note that no vertex can simultaneously violate
    both~\cref{inv:degree-1} and~\cref{inv:degree-2}. Thus, suppose first that
    $v$ violates~\cref{inv:degree-1}. Then, this means that the number of
    neighbors
    of $v$ on levels $\geq \level(v)$ is more than
    $\coeff(1+\delta)^{g(v)}$ where $g(v)$ is the group number of $v$.
    If $v$ still violates~\cref{inv:degree-1} on level $\level(v) + 1$,
    then we keep moving $v$ to the next level.

    Otherwise, $v$ does not violate~\cref{inv:degree-1} on level $\level(v) +1$.
    Since we know that $v$ violated~\cref{inv:degree-1} on level $\level(v)$,
    then after we move $v$ to $\level(v) + 1$, $v$'s \ld is greater than
    $(1+\delta)^{\gn(\level(v))}$;
    hence, $v$ also does not violate~\cref{inv:degree-2}. The very last
    level of the $K$ levels
    has up-degree bound $\coeff(1+\delta)^{\ceil{\log_{1+\delta}(n)}} > 2n$
    when a vertex can be adjacent to at most $n -1$ vertices. 
    Hence, there must exist a level at or below the last level where both invariants
    are satisfied.
    A similar argument holds for~\cref{inv:degree-2}.
\end{proof}

Then, we make the following two observations that a batch of
insertions never violates~\cref{inv:degree-2} and a batch of
deletions never violates~\cref{inv:degree-1}.
This is true because deletions can never increase the \hd of any vertex and
insertions can never decrease the \ld of any vertex. 
\ificml
Using these two facts,
we can immediately obtain our depth bounds for our insertion and
deletion procedures.
\fi

\iflong
\begin{observation}[Batch Insertions]\label{lem:batch-insertions-inv}
    Given a batch of insertions, $\batch_{ins}$,~\cref{inv:degree-2} is never violated
    while $\batch_{ins}$ is applied.
\end{observation}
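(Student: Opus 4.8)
The plan is to track the single quantity that \cref{inv:degree-2} constrains for a vertex $v$, namely $\down(v)$, the number of neighbors of $v$ in $Z_{\level(v)-1}$, and to show that throughout \cref{alg:insert} this quantity can only increase relative to the start of the batch while every vertex's level only moves up. Since both invariants hold immediately before the batch (they are maintained between batches, being violated only transiently during a rebalance), it suffices to check that neither of the two kinds of change occurring during insertion processing---first adding the edges of $\batch_{ins}$ to the graph, and then moving vertices up a level---can drop $\down(v)$ below its required lower bound $(1+\delta)^{\gn(\level(v)-1)}$ for any $v$.

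First I would dispose of the raw edge additions, which occur before any vertex moves. Inserting an edge $(u,v)$ only adds a neighbor to each endpoint and never removes one, so for every vertex $w$ the set of its neighbors lying in $Z_{\level(w)-1}$ can only grow; hence $\down(w)$ is non-decreasing, and since no level has changed yet the threshold $(1+\delta)^{\gn(\level(w)-1)}$ is unchanged, so \cref{inv:degree-2} continues to hold.

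Next I would analyze the level-processing loop, exploiting the crucial fact that insertion processing \emph{only moves vertices up}. Fix any vertex $w$ and a step in which some of its neighbors move from level $l$ to $l+1$. If $w$ does not move in this step, then $\level(w)$ is fixed and every neighbor's level is non-decreasing, so each neighbor's membership in $Z_{\level(w)-1}$ is preserved or newly gained; thus $\down(w)$ does not decrease and its (unchanged) threshold is still met. The only remaining case is when $w$ is itself a vertex that moves up, which is exactly the situation handled by \cref{lem:good-level}: $w$ moves from $l$ to $l+1$ only because it violated \cref{inv:degree-1} at level $l$, meaning its count of neighbors in $Z_l$ exceeds $\coeff(1+\delta)^{\gn(l)}$; after the move this same count becomes $\down(w)$, which therefore exceeds $(1+\delta)^{\gn(l)} = (1+\delta)^{\gn(\level(w)-1)}$, so \cref{inv:degree-2} holds at the new level.

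I expect the main subtlety to be the \emph{parallel} nature of the moves rather than any single inequality. When many vertices on level $l$ move simultaneously, I must argue that the count of $w$'s neighbors in $Z_l$ used to certify the \cref{inv:degree-1} violation does not shrink because of the other concurrent moves; this holds because every co-moving neighbor either stays within $Z_l$ or was already in it, and no vertex below $l$ moves in this step, so $Z_l$ loses no members. The one place where $\down$ could genuinely fall is for the mover itself---moving $w$ up shrinks its reference set from $Z_{l-1}$ to $Z_l$ and can raise its threshold across a group boundary---which is precisely why the \cref{inv:degree-1}-violation hypothesis (via \cref{lem:good-level}) is needed rather than pure monotonicity. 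Assembling these cases over all levels and all steps of \cref{alg:insert} yields that \cref{inv:degree-2} is never violated while $\batch_{ins}$ is applied.
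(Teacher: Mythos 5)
Your proposal is correct and follows essentially the same route as the paper's proof: first observe that raw edge insertions can only increase $\down$-degrees, then argue over the level-processing loop that up-moves never decrease any other vertex's \ld, while a mover itself satisfies \cref{inv:degree-2} at its new level precisely because its \cref{inv:degree-1} violation at the old level guarantees more than $(1+\delta)^{\gn(l)}$ neighbors in $Z_l$. The only cosmetic difference is that you phrase the argument as a per-step case analysis (mover vs.\ non-mover) with an explicit treatment of simultaneous parallel moves, whereas the paper packages the same two facts into an induction on the processed level; both are sound.
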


\begin{proof}
    The first part of the algorithm inserts the edges into the data structure.
    Since no edges are removed from the data structures, the degrees of all the
    vertices after the insertion of edges cannot decrease.
    \cref{inv:degree-2} was satisfied
    before the insertion of the edges, and hence, it remains satisfied after the
    insertion of edges because no vertices lose neighbors.
    We prove that the lemma holds for the remaining part of~\cref{alg:insert}
    via induction on the level $i$ processed by the procedure.
    In the base case, when $i = 0$, all vertices $v$ in the level which
    violate~\cref{inv:degree-1} are moved up to a level $\dl(v) > 0$.
    By definition of desire-level, $v$ is moved to a level
    where~\cref{inv:degree-2} is still satisfied, by~\cref{lem:good-level}.
    Vertices from level $0$ which move
    to levels $k \geq 1$
    cannot decrease
    the \ld for neighbors in all levels $j$ where
    $j > 1$.
    Thus,~\cref{inv:degree-2} cannot be violated for these
    vertices. Vertices not adjacent to $v$ are not affected by the move.

    We assume that~\cref{inv:degree-2} was
    not violated up to level $i$ and prove it is not violated while processing
    vertices on level $i + 1$. By our induction hypothesis, no vertices
    violate~\cref{inv:degree-2} before we process level $i + 1$. Then, when we
    process level $i + 1$, no vertices move down to a lower level than $i + 1$
    by construction of our algorithm
    because~\cref{inv:degree-2} is not violated for any vertex on level $i + 1$
    and if~\cref{inv:degree-1} is violated for any vertex $w$, $w$ must move up
    to a higher level. Any vertex $w$ which moves up to a higher level
    cannot decrease
    the \ld of neighbors of $w$. Hence, no vertex on levels $\geq i + 1$
    can violate~\cref{inv:degree-2}. The \ld of vertices on levels $< i +1$ are
    not affected by the move. Hence, no vertices on levels $< i + 1$
    violate~\cref{inv:degree-2}. Finally, if a vertex on level $i + 1$
    violates~\cref{inv:degree-1}, it will move to a level $j > i + 1$ where both
    invariants are satisfied by~\cref{lem:good-level}.
\end{proof}

\begin{observation}[Batch Deletions]\label{lem:batch-deletes}
    Given a batch of deletions, $\batch_{del}$,~\cref{inv:degree-1} is never
    violated while $\batch_{del}$ is applied.
\end{observation}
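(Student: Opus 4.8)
The plan is to mirror the structure of the Batch Insertions argument, exploiting the symmetry between the two invariants. The single fact I would isolate first is that \emph{no operation performed while processing $\batch_{del}$ can increase the \hd $\up(w)$ of any vertex $w$}. There are two sources of change during \cref{alg:delete}: the initial removal of the deleted edges, and the downward relocation of vertices to their desire-levels. Edge removal trivially only decreases degrees, so every $\up(w)$ can only drop; since \cref{inv:degree-1} held for the configuration entering the batch, it continues to hold after the edges are deleted.

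The crux is the relocation step, and here I would do a short case analysis. Suppose vertex $v$ moves down from its old level $p_v$ to a strictly smaller level $l = \dl(v)$, and let $w$ be any neighbor. Relative to $w$, the quantity that matters is whether $v$ lies in $Z_{\level(w)}$ before and after the move. If $\level(w) > p_v$, then $v$ was below $w$ both before and after, so $\up(w)$ is unchanged; if $\level(w) \le p_v$, then moving $v$ from $p_v$ down to $l$ can only move $v$ out of (or keep it at the boundary of) $Z_{\level(w)}$, so $\up(w)$ can only decrease or stay equal. In every case $\up(w)$ does not increase, so a downward move cannot create a fresh \cref{inv:degree-1} violation at any neighbor. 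For the moved vertex $v$ itself, by \cref{def:desire-level} its desire-level is chosen precisely so that \cref{inv:degree-1} (and \cref{inv:degree-2}) holds there, and such a level is guaranteed to exist by the second part of \cref{lem:good-level}.

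Finally I would package these facts into an induction on the level $l$ processed by the outer loop of \cref{alg:delete}, exactly paralleling the Batch Insertions proof: assuming \cref{inv:degree-1} holds everywhere before level $l$ is processed, moving the vertices with $\dl(v) = l$ down to $l$ leaves each of them satisfying \cref{inv:degree-1}, and leaves every other vertex's \hd non-increasing, hence still within its bound. The one subtlety I expect to be the main obstacle is reconciling the \emph{time} at which a desire-level is computed (or recomputed, \cref{line:update-dl}) with the later moment at which the vertex actually moves, since between these two moments other vertices may have descended. The monotonicity established above resolves this cleanly: because intervening downward moves can only decrease $\up(v)$, the \cref{inv:degree-1} bound $\coeff(1+\delta)^{\gn(l)}$ guaranteed at the computed desire-level remains valid at the instant of the move. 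Thus no vertex ever lands in a level where its up-degree exceeds its cutoff, and \cref{inv:degree-1} is never violated throughout the application of $\batch_{del}$.
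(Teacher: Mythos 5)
Your proof is correct and takes essentially the same route as the paper's: both rest on the monotonicity fact that edge deletions and downward relocations never increase any vertex's \hd, wrapped in an induction over the levels processed by~\cref{alg:delete}. If anything, yours is slightly more careful---you explicitly handle the moving vertex itself (whose own \hd \emph{does} increase when it descends) via~\cref{def:desire-level} and~\cref{lem:good-level}, and you reconcile the time a desire-level is computed with the later moment of the move, two points the paper's proof passes over silently.
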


\begin{proof}
    \cref{alg:delete} first applies all the edge deletions in the batch. Edge
    deletions cannot make the \hd of any vertex greater; thus, no vertex
    violates~\cref{inv:degree-1} after applying the edge deletions. We prove
    that the rest of the algorithm does not violate~\cref{inv:degree-1} via
    induction over the levels $i$. Specifically, we use as our induction
    hypothesis that after processing the $i$'th level, no vertices
    violate~\cref{inv:degree-1}. In the base case, when $i = 0$, no vertices
    violate~\cref{inv:degree-1} at the beginning, and vertices from levels $i >
    0$ move to level $0$. This means that during the processing of level $i =
    0$, vertices only move to level $0$ from a higher level. Thus, all such
    vertices that move will move to a lower level. Since vertices which move to
    lower levels do not increase the \hd of any other
    vertices, no vertex can violate~\cref{inv:degree-1} at the end of processing
    level $0$. We now prove the case for processing level $i + 1$. In this case,
    we assume by our induction hypothesis that no vertices
    violate~\cref{inv:degree-1} after we finish  processing level $i$. Thus,
    all vertices that want to move to level $i + 1$ and violate~\cref{inv:degree-2}
    are at levels $j > i + 1$.
    Such vertices move down
    and thus cannot increase the
    up-degree of any vertex. This means that
    after moving all vertices that want to move to level $i + 1$, no vertices
    violate~\cref{inv:degree-1}.
\end{proof}

  \myparagraph{Batch Insertion Depth Bound}
    Using our observations,
    the depth of our batch insertion algorithm (\cref{alg:insert}) depends on the
    following lemma which states that once we have \emph{processed} a level
    (after finishing the corresponding iteration of~\cref{line:outer-loop}),
    no vertex will want to move from any level lower than that level.
    This means that each level is processed exactly once, resulting in at most
    $O(\log^2 n)$ levels to be processed sequentially.

    \begin{lemma}\label{lem:move-up}
        After processing level $i$ in~\cref{alg:insert},
        no vertex $v$ in levels $\level(v) \leq i$ will
        violate~\cref{inv:degree-1}.
        Furthermore, no vertex $w$ on levels $\level(w) > i$ will have
        $\dl(w) \leq i$.
    \end{lemma}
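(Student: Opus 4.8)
The plan is to prove both statements by induction on the index $i$ of the level processed by the outer loop of~\cref{alg:insert} (\cref{line:outer-loop}), with the first statement resting on a global invariant about which vertices are marked. The key observation driving everything is that, since all inserted edges are added before the level sweep and (by~\cref{lem:batch-insertions-inv}) no vertex ever moves \emph{down} during an insertion batch, a vertex's \hd can only grow, and it grows only when either an edge incident to it is inserted or a neighbor rises to or above its current level. Consequently, a single left-to-right pass over the levels should suffice, and the whole argument reduces to showing that the algorithm never lets a violator of~\cref{inv:degree-1} slip past unexamined.

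For the first statement, I would maintain the following invariant throughout the sweep: every vertex violating~\cref{inv:degree-1} is either marked or incident to an edge of $\batch_{ins}$. This holds at the outset, since before the sweep the only violators are endpoints of inserted edges. To see it is preserved, note that the \emph{only} event that can push a vertex into violation is an \hd increase, which happens exactly when some vertex $v$ moves from a level $l$ to $l+1$ on~\cref{line:move-higher}. Examining whom this move affects: neighbors of $v$ at levels $\le l$ keep $v$ as an up-neighbor and are unchanged; neighbors at levels $\ge l+2$ keep $v$ as a down-neighbor and are unchanged; the only vertices gaining an up-neighbor are $v$'s neighbors now sitting at level $l+1$, and these are precisely the vertices tested and marked on~\cref{line:mark-loop}--\cref{line:mark-neighbor}; and $v$ itself is re-tested at its new level on~\cref{line:unmark} and left marked iff it still violates~\cref{inv:degree-1}. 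Hence no violator escapes being flagged. Granting this invariant, when iteration $i$ runs, \cref{line:kc-check} inspects every marked-or-incident vertex at level $i$, which therefore includes every level-$i$ violator, and moves each violator up. A vertex that stays at level $i$ cannot be driven into violation by these $i\!\to\!i+1$ moves (a mover remains an up-neighbor of a stayer at level $i$, so its departure leaves the stayer's \hd unchanged), and the up-degrees of vertices below level $i$ are likewise untouched; combined with the inductive hypothesis for levels $<i$, this yields the first statement.

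I expect the preservation of this marking invariant to be the main obstacle, because it is where the parallel marking rules must be shown to be \emph{complete}: one has to verify the case analysis above against the data-structure updates in~\cref{alg:insert} and, crucially, confirm that the \hd value used in the marking and unmarking tests reflects \emph{all} simultaneous moves in the current iteration. Since every move on~\cref{line:move-higher} is applied before the mark/unmark steps execute, a vertex that acquires several new up-neighbors in a single parallel step is still evaluated against its final up-degree and flagged correctly; checking this carefully is the delicate part.

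The second statement follows almost immediately from~\cref{lem:batch-insertions-inv}: since~\cref{inv:degree-2} is never violated while $\batch_{ins}$ is applied, every vertex $w$ satisfies~\cref{inv:degree-2} at its current level at all times, so its desire-level---the closest level meeting both invariants---never lies below $\level(w)$ (descending can only deepen an~\cref{inv:degree-1} violation and cannot help~\cref{inv:degree-2}, which already holds, while if~\cref{inv:degree-1} also holds the desire-level is simply $\level(w)$). Thus $\dl(w)\ge\level(w)$, and any $w$ with $\level(w)>i$ has $\dl(w)>i$, i.e.\ $\dl(w)\not\le i$, as claimed.
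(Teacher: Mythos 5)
Your proof is correct and rests on the same skeleton as the paper's: induction over the levels, using the two facts that (i) a vertex moving up from a level $\geq i$ remains in the up-degree count of every vertex at levels $\leq i$, so upward moves never create new violations of \cref{inv:degree-1} at or below level $i$, and (ii) by \cref{lem:batch-insertions-inv} nothing moves down during an insertion batch, which is what rules out $\dl(w) \leq i$ for vertices above level $i$. The genuine addition in your write-up is the explicit marking invariant---every violator of \cref{inv:degree-1} is at all times either marked or incident to $\batch_{ins}$---together with the case analysis showing that when $v$ moves from $l$ to $l+1$, the only vertices whose \hd grows are $v$'s neighbors sitting at level $l+1$, which are exactly the vertices tested on \cref{line:mark-loop,line:mark-neighbor}, while $v$ itself is re-tested on \cref{line:unmark}. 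The paper's proof never argues this completeness of the marking scheme: it only shows that moves cannot create violations at levels $\leq i$, and implicitly takes for granted that every level-$i$ violator is actually inspected by \cref{line:kc-check}. Your version closes that gap, and it is the right step to make explicit, since the correctness of the single left-to-right sweep hinges precisely on no violator escaping the marked-or-incident test; the rest of your argument (stayers at level $i$ unaffected by $i\!\to\!i+1$ moves, and the desire-level claim reduced to \cref{inv:degree-2} never being violated) matches the paper's reasoning.
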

    
    \begin{proof}
    We prove this via induction. For the base case $i = 0$, all vertices on
    level $0$ are part of each other's \hd; then, no vertices which move up from
    $i = 0$ can cause the \hd of any vertices remaining in level $0$ to
    increase. We now assume the induction hypothesis for $i - 1$
    and prove the case for $i$.
    Vertices on level $j \leq i$ already contain vertices on levels $\geq i$ in
    its \hd. Such vertices on levels $\geq i$
    when moved to a higher level are still part of the
    \hd of vertices on levels $j \leq i$.
    Hence, no vertices on levels $j \leq i$ will
    violate~\cref{inv:degree-1} due to vertices in levels $\geq i$
    moving up to a level $l > i$. Then, in order for a vertex $w$ with
    $\level(w) > i$ to have $\dl(w) \leq i$, some neighbors of $w$ must have moved
    to a level $\leq i$. By~\cref{lem:batch-insertions-inv}, no vertices move
    down during~\cref{alg:insert}, so this is not possible.
\end{proof}

\ifanon
\else
\begin{proof}
    \cref{alg:delete} first applies all the edge deletions in the batch. Edge
    deletions cannot make the \hd of any vertex greater; thus, no vertex
    violates~\cref{inv:degree-1} after applying the edge deletions. We prove
    that the rest of the algorithm does not violate~\cref{inv:degree-1} via
    induction over the levels $i$. Specifically, we use as our induction
    hypothesis that after processing the $i$'th level, no vertices
    violate~\cref{inv:degree-1}. In the base case, when $i = 0$, no vertices
    violate~\cref{inv:degree-1} at the beginning, and vertices from levels $i >
    0$ move to level $0$. This means that during the processing of level $i =
    0$, vertices only move to level $0$ from a higher level. Thus, all such
    vertices that move will move to a lower level. Since vertices which move to
    lower levels do not increase the \hd of any other
    vertices, no vertex can violate~\cref{inv:degree-1} at the end of processing
    level $0$. We now prove the case for processing level $i + 1$. In this case,
    we assume by our induction hypothesis that no vertices
    violate~\cref{inv:degree-1} after we finish  processing level $i$. Thus,
    all vertices that want to move to level $i + 1$ and violate~\cref{inv:degree-2}
    are at levels $j > i + 1$.
    Such vertices move down
    and thus cannot increase the
    up-degree of any vertex. This means that
    after moving all vertices that want to move to level $i + 1$, no vertices
    violate~\cref{inv:degree-1}.
\end{proof}
\fi

\myparagraph{Batch Deletion Depth Bound}
For the batch deletion algorithm (\cref{alg:delete}), we prove that, starting from the lowest level,
after all vertices with $\dl(w) = i$ are moved to the $i$'th level,
no vertex $v$ will have $\dl(v) \leq i$.
This means that each level is processed exactly once, resulting in at most
$O(\log^2 n)$ levels to be processed sequentially.

\begin{lemma}\label{lem:smaller-level}
    After processing all vertices that move to level $i$ in~\cref{alg:delete},
    no vertex $v$ needs to be moved to any level $j \leq i$ in a future iteration
    of~\cref{line:del-outer-loop}; i.e., no vertex $v$ has $\dl(v) \leq
    i$ after processing $i$.
\end{lemma}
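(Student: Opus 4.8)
The plan is to prove this by induction on the level index $i$, paralleling the structure of \cref{lem:move-up} for insertions but using the downward-movement analogue. I would phrase the inductive claim as: after the iteration of \cref{line:del-outer-loop} that processes level $i$, every vertex $w$ currently residing at a level $\level(w) > i$ satisfies $\dl(w) > i$. The base case $i = 0$ is immediate, since every neighbor of any vertex lies in $Z_0$, so only isolated vertices (which sit at level $0$ and stay there) could have $\dl = 0$, and none of them resides above level $0$.

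The first key step is to characterize the predicate ``$\dl(w) \le i$'' by a single monotone quantity, the number of neighbors of $w$ at levels $\ge i$, namely $|N(w) \cap Z_i|$. I would argue that during a deletion batch the desire-level is governed \emph{solely} by \cref{inv:degree-2}. Let $\ell^{\ast}$ be the highest level below $\level(w)$ at which \cref{inv:degree-2} holds (or $0$). Because \cref{inv:degree-2} fails at level $\ell^{\ast}+1$, we have $|N(w) \cap Z_{\ell^{\ast}}| < (1+\delta)^{\gn(\ell^{\ast})}$; but $|N(w)\cap Z_{\ell^{\ast}}|$ is exactly $w$'s up-degree were it placed at level $\ell^{\ast}$, and $(1+\delta)^{\gn(\ell^{\ast})} \le \coeff(1+\delta)^{\gn(\ell^{\ast})}$, so \cref{inv:degree-1} holds automatically at $\ell^{\ast}$. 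Together with \cref{lem:batch-deletes} (which guarantees \cref{inv:degree-1} is never violated during the batch), this shows $\dl(w) = \ell^{\ast}$. Since $|N(w)\cap Z_{\ell-1}|$ is nonincreasing in $\ell$ while the threshold $(1+\delta)^{\gn(\ell-1)}$ is nondecreasing, \cref{inv:degree-2}-satisfiability is downward closed in the candidate level; hence $\dl(w) \le i$ holds if and only if \cref{inv:degree-2} fails at level $i+1$, i.e. $|N(w) \cap Z_i| < (1+\delta)^{\gn(i)}$.

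The second key step is that $|N(w)\cap Z_i|$ is invariant throughout the processing of level $i$. By the induction hypothesis, at the start of this iteration no vertex still needing to move has desire-level smaller than $i$, so every move performed while processing level $i$ sends a vertex from a level $> i$ down to level \emph{exactly} $i$; such a move keeps the relocated vertex in $Z_i$, and no vertex ever moves up during a deletion batch, so $|N(w) \cap Z_i|$ changes for no $w$. To close the induction, take any $w$ with $\level(w) > i$ at the start of iteration $i$: the hypothesis gives $\dl(w) \ge i$; if $\dl(w) = i$ then $w$ is moved to level $i$ and no longer lies above $i$, and if $\dl(w) > i$ then $|N(w)\cap Z_i| \ge (1+\delta)^{\gn(i)}$, a quantity left unchanged by the iteration, so $w$ still has $\dl(w) > i$ afterward. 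Hence no vertex remaining above level $i$ has $\dl \le i$.

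I expect the characterization in the second paragraph to be the main obstacle: one must verify that \cref{inv:degree-1} is genuinely free during deletions at the \emph{destination} level (not merely at the current level), which is precisely what the slack $(1+\delta)^{\gn(\ell^{\ast})} \le \coeff(1+\delta)^{\gn(\ell^{\ast})}$ supplies, and one must confirm that the parallel, simultaneous relocation of many vertices does not perturb any $|N(w) \cap Z_i|$ — which holds exactly because all of these simultaneous moves land on the \emph{same} level $i$.
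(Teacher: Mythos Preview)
Your proof is correct and follows essentially the same inductive approach as the paper: induct on the level index, and observe that because every vertex moved while processing level $i$ lands \emph{at} level $i$ (and never below), the set $Z_i$ is preserved, so no vertex above $i$ can acquire a desire-level $\le i$. The paper phrases the inductive step as a short contradiction (``some neighbor of $w$ would have to move below level $i$''), whereas you make the same idea explicit by isolating the monotone quantity $|N(w)\cap Z_i|$ and proving the equivalence $\dl(w)\le i \iff |N(w)\cap Z_i|<(1+\delta)^{\gn(i)}$; this extra characterization is a cleaner justification of the same step, not a different argument.
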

    
\begin{proof}
     We prove this via induction. In the base case when $i = 0$, all
     vertices with $\dl(v) = 0$ are moved to level $0$. All vertices which have
     $\dl(v) = 0$ are vertices which have degree $0$. Thus, all vertices that
     do not have $\dl(v) = 0$ have degree $\geq 1$ and have $\dl(w) \geq 1$.
     Hence, after moving all vertices with $\dl(v) = 0$ to level $0$, no
     additional vertices need to be moved to level $0$.  Assuming our induction
     hypothesis, we now show our lemma holds for level $i + 1$.
     All vertices
     that move to level $i + 1$ violated~\cref{inv:degree-2} and hence have \ld $<
     (1+\delta)^{\gn(j - 1)}$ at level $j > i + 1$ and \ld $\geq
     (1+\delta)^{\gn(i)}$ at level $i + 1$.
     After moving all vertices with $\dl(v) = i+1$ to level $i + 1$, no
     vertices on levels $k \leq i + 1$ have their \ld decreased by the move.
     We conclude the proof with vertices at levels $l > i + 1$. Suppose
     for the sake of contradiction
     that there exists some vertex $w$ on level $l
     > i + 1$ which has $\dl(w) \leq i + 1$ after the move.
     In order for $\dl(w) \leq i + 1$, some neighbor(s) of $w$ must move below level
     $i$, a contradiction. Finally, due to~\cref{lem:batch-deletes}, no vertices
     below level $i + 1$ will move up.
\end{proof}

\iflong

\ifanon
\else
Before we present the final lemmas for the depth of our algorithm, we discuss
the depth incurred from our data structures and also
briefly the deterministic and space-efficient settings.

\paragraph{Deterministic Setting}

In the deterministic setting, we maintain the list of neighbors using dynamic
arrays, which also means that we maintain and access the sizes of these arrays
in $O(1)$ work and depth. Because we are using dynamic arrays, we need to
occasionally resize the arrays in $O(1)$ amortized work and $O(1)$
depth.
Finally, we can modify the arrays in $O(1)$ work and
depth (not counting the depth for resizing).

\paragraph{Randomized Setting}
\fi
We describe the depth of our parallel data structures next.
We provide a set of linear-space data structures in~\cref{app:data-structure-impl} 
at the cost of increased depth. 

\fi
\ifanon
\else
\paragraph{Space-Efficient Setting}

In the space-efficient setting, we replace the structure used to represent
$\tbl_{v_i}$ with a linked list. Inserting and deleting vertices from the linked
list requires $O(1)$ work and depth (assuming we are given a pointer to the
vertex). Then, resizing the dynamic arrays (pointed to by the linked lists to
maintain the set of elements in each level)
require $O(1)$ amortized work and $O(1)$ depth.
\fi

\begin{lemma}\label{cor:random-depth}
    \cref{alg:rebalance} returns a randomized parallel level data structure that
    maintains~\cref{inv:degree-1} and~\cref{inv:degree-2} and has $O(\log^2
    n \log\log n)$ depth, \whp{}, and $O(n\log^2 n + m)$ space.
\end{lemma}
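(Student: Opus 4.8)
The plan is to establish the lemma in three parts: correctness of the maintained invariants, the depth bound, and the space bound. For correctness, I would show that after \cref{alg:rebalance} returns, both \cref{inv:degree-1} and \cref{inv:degree-2} hold. Since \cref{alg:rebalance} runs RebalanceInsertions (\cref{alg:insert}) and then RebalanceDeletions (\cref{alg:delete}), I would compose the two observations and the two monotonicity lemmas already proved. By \cref{lem:batch-insertions-inv} the insertion phase never violates \cref{inv:degree-2}, and by \cref{lem:move-up} once all $K$ levels of the outer loop have been processed no vertex violates \cref{inv:degree-1}; hence both invariants hold after the insertion phase. Symmetrically, by \cref{lem:batch-deletes} the deletion phase never violates \cref{inv:degree-1}, and by \cref{lem:smaller-level} together with the fact that \cref{alg:desire-level} sends each moving vertex to a level satisfying both invariants (\cref{lem:good-level}), no vertex violates \cref{inv:degree-2} at the end. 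Thus both invariants hold when \cref{alg:rebalance} returns.

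For the depth, the crucial structural fact is that the only sequential dependency in both \cref{alg:insert} and \cref{alg:delete} is over the $K = O(\log^2 n)$ levels of the outer loops (\cref{line:outer-loop} and \cref{line:del-outer-loop}), with each level touched exactly once. This is exactly what \cref{lem:move-up} and \cref{lem:smaller-level} guarantee: after a level $i$ is processed, no vertex ever again needs to move from or to a level $\le i$, so the algorithm never revisits a processed level and there is no cascading chain of movements of the kind that plagues the sequential LDS. Consequently the total depth is $O(\log^2 n)$ times the depth of processing a single level in parallel.

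I would then bound the per-level depth by $O(\log\log n)$ \whp. Within one iteration, all vertex moves and the accompanying updates to the $\greater$ and $\tbl$ structures are performed simultaneously; using the parallel hash tables, these membership updates take $O(\log^* n)$ depth \whp. The remaining contribution is the desire-level computation of \cref{alg:desire-level}, whose doubling-plus-binary-search over at most $K = O(\log^2 n)$ levels runs in $O(\log K) = O(\log\log n)$ rounds, each performing a parallel reduce over the relevant level-arrays. Taking the maximum of these contributions gives $O(\log\log n)$ per level \whp, and multiplying by the $O(\log^2 n)$ levels yields the claimed $O(\log^2 n\log\log n)$ depth \whp (the \whp guarantee follows from a union bound over the polynomially many hash-table operations). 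For space, each vertex maintains level-indexed neighbor structures spanning up to $K = O(\log^2 n)$ levels, contributing $O(n\log^2 n)$, while each edge is stored a constant number of times across $\greater$ and $\tbl$, contributing $O(m)$; together this is $O(n\log^2 n + m)$.

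The main obstacle is the per-level depth accounting rather than correctness, which follows almost immediately from the earlier lemmas. In particular, I expect the delicate point to be bounding \cref{alg:desire-level}: I must argue that the doubling search genuinely needs only $O(\log\log n)$ sequential rounds over the $O(\log^2 n)$ levels, that each round's parallel reduce does not inflate the depth beyond the target, and that the recomputations of neighbors' desire-levels inside the deletion loop (\cref{line:update-dl}) can be folded into the same per-level round without introducing a further sequential factor. A second subtle point is verifying that the hash-table-based structure updates in both rebalancing procedures can be fully batched within a single level's parallel step, so that their $O(\log^* n)$ \whp depth does not accumulate across levels beyond the $O(\log^2 n)$ factor.
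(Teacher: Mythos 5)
Your proposal is correct and follows essentially the same route as the paper's proof: the sequential dependency is confined to the $O(\log^2 n)$ levels of the outer loops (justified by \cref{lem:move-up} and \cref{lem:smaller-level}), the per-level depth is dominated by the $O(\log K) = O(\log\log n)$ doubling-plus-binary search of \cref{alg:desire-level} with hash-table operations contributing only $O(\log^* n)$ \whp{}, and the space is $O(n\log^2 n)$ for the per-vertex level-indexed structures plus $O(m)$ for the edges. The paper's write-up is terser on the invariant-maintenance part, deferring it to the same supporting lemmas you cite, so your explicit composition of those lemmas is a harmless elaboration rather than a divergence.
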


\begin{proof}
By~\cref{lem:move-up} and~\cref{lem:smaller-level},~\cref{alg:insert} (\cref{line:outer-loop})
and~\cref{alg:delete} (\cref{line:del-outer-loop})
iterates through $O(\log^2 n)$ levels sequentially.
Thus, the depth of algorithms is determined by the depth 
of the procedures that are run in each level the algorithm iterates through.

    We maintain the list of neighbors using separate parallel hash tables for each 
vertex $v$.  One hash table contains $v$'s neighbors at the same or higher levels.
Vertex $v$'s neighbors in levels below $\ell(v)$ are placed in a separate hash table
for each level. Parallel lookups into the hash tables require $O(1)$  
depth \whp, and inserting and deleting elements within the tables require
$O(\log^*n)$ depth \whp
Simultaneously changing the values within
    the hash table require $O(\log^* n)$ depth \whp Then, the depth per level of the
    structure is dominated by~\cref{alg:desire-level}. 
    
    The only additional depth we need to consider
is the depth incurred from~\cref{alg:desire-level}. Both the doubling search and
the binary search require $O(\log K)=O(\log \log n)$ depth.
All other contributions come from concurrently modifying and accessing dynamic arrays
and hash tables and can be done in $O(\log^*n)$ depth \whp
    
    Using the above, we successfully prove that the depth of~\cref{alg:rebalance} is $O(\log^2n\log\log n)$ \whp{} %
The extra space in addition to storing the graph is $O(n\log^2n)$ because we must have $O(\log^2 n)$ size dynamic arrays for each vertex to 
track their neighbors at lower levels (i.e., the neighbors in
$L_v$). Thus, the total depth of
    our randomized algorithm is $O(\log^2 n \log\log n)$ \whp{}, and the space used is $O(n\log^2 n + m)$.
\end{proof}

\ifanon
\else
\begin{lemma}\label{cor:random-depth}
    \cref{alg:rebalance} returns a randomized level data structure that
    maintains~\cref{inv:degree-1} and~\cref{inv:degree-2} and has $O(\log^2
    m\log\log m)$ depth, \whp{} and the space used is $O(n\log^2 n + m)$.
\end{lemma}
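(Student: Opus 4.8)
The plan is to split the statement into three claims: that both invariants hold once \cref{alg:rebalance} terminates, that the depth is $O(\log^2 n \log\log n)$ \whp{}, and that the space is $O(n\log^2 n + m)$. For invariant maintenance I would argue phase by phase, since \cref{alg:rebalance} runs RebalanceInsertions (\cref{alg:insert}) and then RebalanceDeletions (\cref{alg:delete}). For the insertion phase I would invoke \cref{lem:batch-insertions-inv} to conclude \cref{inv:degree-2} is never violated, and apply \cref{lem:move-up} at the last processed level to conclude no vertex violates \cref{inv:degree-1} afterward (each move targets a level satisfying both invariants by \cref{lem:good-level}). Symmetrically, for the deletion phase \cref{lem:batch-deletes} rules out violations of \cref{inv:degree-1}, while the correctness of CalculateDesireLevel (\cref{alg:desire-level}), which by \cref{lem:good-level} returns the closest level satisfying \cref{inv:degree-2}, rules out violations of \cref{inv:degree-2}. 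Hence both invariants hold at termination.

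The core of the depth argument is to show that the outer for-loops of both subroutines sweep through the $K = O(\log^2 n)$ levels exactly once, so the sequential-dependency chain across levels has length $O(\log^2 n)$ rather than something larger (the sequential cascade of \cref{fig:bad-example} is precisely the behaviour we must avoid). This is exactly what \cref{lem:move-up} and \cref{lem:smaller-level} provide: once level $i$ has been processed, no vertex ever again wants to move to a level $\le i$, so \cref{line:outer-loop} of \cref{alg:insert} and \cref{line:del-outer-loop} of \cref{alg:delete} never revisit a level. The total depth is therefore $O(\log^2 n)$ multiplied by the depth incurred while processing a single level.

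Next I would bound the per-level depth. All bookkeeping inside a level---marking, moving vertices, and repairing the $\greater$ and $\tbl$ structures of affected neighbors, all in parallel---is done via per-vertex parallel hash tables, where membership queries cost $O(1)$ depth \whp{} and insertions, deletions, and value changes cost $O(\log^* n)$ depth \whp{} The only contribution beyond $O(\log^* n)$ comes from the doubling and binary searches inside \cref{alg:desire-level}, which span $O(\log K) = O(\log\log n)$ depth, with the neighbor counts on \cref{line:sum-level} summed by parallel reduce. So a single level costs $O(\log\log n)$ depth \whp{}, and multiplying by the $O(\log^2 n)$ levels gives the claimed $O(\log^2 n \log\log n)$ depth \whp

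Finally, for space, storing the graph costs $O(m)$, and each vertex keeps one hash table $\greater[v]$ together with $O(\log^2 n)$ lower-level neighbor structures $\tbl_v[j]$, contributing $O(n\log^2 n)$ across all vertices, for a total of $O(n\log^2 n + m)$. The main obstacle is the once-per-level sweep: without \cref{lem:move-up} and \cref{lem:smaller-level}, parallel moves could repeatedly push neighbors back below an already-processed level and the level loop could run far more than $O(\log^2 n)$ times, destroying the bound. Establishing that each level is touched only once is what makes the polylogarithmic-depth bound go through, and it is the reason the batch procedures are deliberately designed so that every vertex moves at most once per deletion batch and is processed at most once per level during insertions.
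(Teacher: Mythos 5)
Your proposal is correct and follows essentially the same route as the paper's proof: the once-per-level sweep established by \cref{lem:move-up} and \cref{lem:smaller-level} gives $O(\log^2 n)$ sequential iterations, the per-level depth is bounded by the $O(\log^* n)$ whp hash-table operations and the $O(\log K) = O(\log\log n)$ doubling/binary search in \cref{alg:desire-level}, and the space accounting ($O(m)$ for the graph plus $O(n\log^2 n)$ for the per-vertex lower-level structures) is identical. Your more explicit treatment of invariant maintenance via \cref{lem:batch-insertions-inv}, \cref{lem:batch-deletes}, and \cref{lem:good-level} is a fine (and slightly more complete) packaging of facts the paper's proof invokes implicitly, but it does not constitute a different approach.
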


We show that with slightly
more complicated data structures involving linked lists,
we can obtain \emph{space-efficient}
\ificml
    structures in the full version of our paper~\cref{full}.
\fi
\iflong
    structures.
\fi
\iflong
\begin{corollary}\label{cor:space-efficient}
    \cref{alg:rebalance} returns a space-efficient, deterministic, level data
    structure that maintains~\cref{inv:degree-1} and~\cref{inv:degree-2} and has
    $O(\log^4 m)$ worst-case depth.
\end{corollary}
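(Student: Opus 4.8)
The plan is to reuse essentially everything from the proof of \cref{cor:random-depth} and change only the data-structure layer. First I would observe that correctness—namely that \cref{alg:rebalance} maintains \cref{inv:degree-1} and \cref{inv:degree-2}—is entirely independent of how the neighbor sets $\greater[v]$ and $\tbl_v$ are represented: it follows verbatim from \cref{lem:good-level}, \cref{lem:batch-insertions-inv}, \cref{lem:batch-deletes}, \cref{lem:move-up}, and \cref{lem:smaller-level}, all of which reason only about levels, desire-levels, and induced degrees. Hence switching to deterministic structures changes nothing about which vertices move or where they move, so both invariants continue to hold outside the rebalance window exactly as before.

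For space-efficiency, I would replace the per-vertex collection of $\Theta(\log^2 m)$ dense arrays (the source of the $O(n\log^2 n)$ space term in \cref{cor:random-depth}) by, for each vertex $v$, a single linked list storing only the \emph{nonempty} levels $j < \level(v)$, where each list node points to a dynamic array holding $\tbl_v[j]$, together with a dynamic array for $\greater[v]$. Every stored element corresponds to a distinct incident edge, so vertex $v$ uses $O(\deg(v)+1)$ space and the total is $O(n+m)$. The splices and array updates performed on \cref{line:insert-modify,line:neighbor-modify,line:insert-ds-end} of \cref{alg:insert} and on \crefrange{line:u1}{line:u-end} of \cref{alg:delete} can each be done in $O(1)$ worst-case depth given a pointer to the element, which I would maintain alongside each stored edge.

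For the depth bound, the number of sequential passes is unchanged: by \cref{lem:move-up} and \cref{lem:smaller-level} the outer loops of \cref{alg:insert} (\cref{line:outer-loop}) and \cref{alg:delete} (\cref{line:del-outer-loop}) each touch every one of the $K=O(\log^2 m)$ levels exactly once, giving $O(\log^2 m)$ sequential iterations. What changes is the per-pass depth, which rises from the $O(\log\log n)$ of the randomized proof to $O(\log^2 m)$: without hashing, the batch of neighbor updates generated when a level's vertices move must be routed to the correct lists by a deterministic sort/semisort of the (up to $\Theta(m)$) affected edges in $O(\log m)$ depth, and because the corollary asks for \emph{worst-case} (not amortized) depth the dynamic-array reallocations triggered during a pass can no longer be charged away and contribute another $O(\log m)$ factor, while the range-sum queries inside \cref{alg:desire-level} are computed by list-ranking/prefix sums over the $O(\log^2 m)$-length level lists and are lower-order. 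Multiplying the $O(\log^2 m)$ per-pass depth by the $O(\log^2 m)$ passes yields the claimed $O(\log^4 m)$ worst-case depth. The hard part will be precisely this deterministic per-pass bound: showing that the hash-table lookups and batched insert/delete operations can be simulated deterministically—routing each update to its target list and maintaining the element pointers—within $O(\log^2 m)$ worst-case depth while keeping the work linear and the space $O(n+m)$, since this is exactly where the randomized version relied on $O(1)$-depth lookups and $O(\log^* n)$-depth batched updates that have no direct deterministic analogue.
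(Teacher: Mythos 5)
Your proposal adopts exactly the paper's data structure and skeleton: the same per-vertex linked list over \emph{nonempty} levels with a dynamic array per node, the same observation that invariant maintenance is representation-independent, the same $O(\log^2 m)$ sequential passes via \cref{lem:move-up} and \cref{lem:smaller-level}, and the same final bound. Where you diverge is the per-pass accounting, and there your reasoning does not hold up, while the paper's is simpler. The paper inherits the deterministic routing/resizing cost from \cref{lem:depth} as an \emph{additive} $O(\log n)$ per level (prefix-sum the offsets, then resize), and attributes the entire jump from $O(\log^3 n)$ to $O(\log^4 m)$ to a single change: since a linked list has no random access, \cref{alg:desire-level}'s doubling-plus-binary search is replaced by a sequential linear scan of the list, which has up to $O(\log^2 n)$ nodes, giving $O(\log^2 n)$ depth per pass. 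You do the opposite: you dismiss the desire-level computation as lower-order (via list ranking / pointer jumping) and manufacture the $O(\log^2 m)$ per-pass bound by treating worst-case reallocations as a \emph{multiplicative} $O(\log m)$ factor on top of the $O(\log m)$ sorting depth. That multiplication is unjustified---resizing is additive here exactly as in \cref{lem:depth}---so under your own scheme the per-pass depth would be $O(\log m)$ and your components never actually sum to the $O(\log^2 m)$ you claim. The stated $O(\log^4 m)$ bound survives only because every reading of your estimate is still an overestimate of the true cost.

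A second caution on the list-ranking shortcut: it is fine for depth but not innocuous. Ranking a list of length $L = O(\log^2 m)$ costs $\Theta(L\log L)$ work for every vertex whose desire-level is recomputed, \emph{regardless} of how far that vertex moves, whereas the paper's linear scan only traverses the levels between $\dl(v)$ and $\ell(v)$ and is therefore chargeable to the potential function in \cref{thm:work}. Since this statement feeds \cref{cor:space-efficient-result}, which pairs the $O(\log^4 m)$ depth with $O(\batchsize\log^2 n)$ amortized work, substituting full list ranking for the scan would quietly break the work bound even though work is not part of the statement you were asked to prove. Finally, note that the ``hard part'' you flag---deterministic routing of updates into the lists---is precisely what \cref{lem:depth} already resolves with offsets and dynamic arrays; the genuinely new cost in the space-efficient setting is the desire-level search you set aside.
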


\begin{proof}
    The proof is the same as the proof of~\cref{lem:depth} except we
    replace~\cref{alg:desire-level} with an $O(\log^2 n)$ linear in number of
    levels search. The specific data structure we use for each vertex $v$ is a linked list
    with each vertex of the linked list representing a level $\leq \ell(v) - 1$
    which contains one or
    more neighbors of $v$. Then, each vertex in the linked list contains a pointer to
    a dynamic array containing the neighbors in that level.
    The linked list has size at most $O(\log^2 n)$.
    Thus, the total depth is $O(\log^4 m)$.
\end{proof}
\fi
\fi

\subsection{Potential Argument for Work Bound}\label{sec:work}

Our work bound uses the potential functions presented in Section 4
of~\cite{BHNT15}. We show that we can analyze our algorithm using these
potential functions and our parallel algorithm serializes to a set of sequential
steps that obey the potential function. We obtain the following lemma by the
potential argument provided in this section.

\begin{lemma}\label{cor:depth}
    For a batch of $\batchsize <m$ updates, \cref{alg:rebalance} returns a
    PLDS that maintains~\cref{inv:degree-1} and~\cref{inv:degree-2}
    in $O(\batchsize\log^2n)$ amortized work and $O(\log^2 n \log\log n)$
    depth \whp{}, using    $O(n\log^2 n + m)$ space.
\end{lemma}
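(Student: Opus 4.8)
The plan is to separate the three claims and note that two of them are already in hand. The depth bound $O(\log^2 n \log\log n)$ and the space bound $O(n\log^2 n + m)$ are exactly what \cref{cor:random-depth} provides, and the fact that \cref{alg:rebalance} restores both \cref{inv:degree-1} and \cref{inv:degree-2} follows by combining \cref{lem:good-level} with \cref{lem:batch-insertions-inv}, \cref{lem:batch-deletes}, \cref{lem:move-up}, and \cref{lem:smaller-level}: insertions never break \cref{inv:degree-2} and halt after each of the $O(\log^2 n)$ levels is processed once, while deletions never break \cref{inv:degree-1} and move each vertex at most once. Hence the only new work is the amortized bound $O(\batchsize\log^2 n)$, for which I would adapt the potential-function analysis of Bhattacharya \etal~\cite{BHNT15}.

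First I would port the potential $\Phi$ of~\cite{BHNT15} to our parameters ($K = O(\log^2 n)$ levels arranged into $O(\log n)$ groups). As there, $\Phi$ is a sum of a per-node term and a per-edge term, each a monotone function of the endpoint levels, engineered so that (i) each edge insertion or deletion raises $\Phi$ by at most $O(K) = O(\log^2 n)$, and (ii) each unit of rebalancing work is covered by a constant drop in $\Phi$: namely, every neighbor touched when a vertex crosses a single level in \cref{alg:insert}, and every neighbor touched plus every term summed by \cref{alg:desire-level} when a vertex descends in \cref{alg:delete}. The reason the released potential suffices is the multiplicative gap $\coeff$ between the thresholds of \cref{inv:degree-1} and \cref{inv:degree-2}: a vertex that violates an invariant has enough surplus (or deficit) of neighbors relative to its target level to pay for the move out of the potential it frees.

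The crux, and the main obstacle, is reconciling the simultaneous batched movements with a per-move accounting designed for single-vertex sequential updates. Here I would make the serialization remark precise: fix a total order on the individual single-level crossings executed by \cref{alg:insert} and \cref{alg:delete} by processing levels smallest-to-largest and ordering the vertices that move at a given level arbitrarily, and argue that the parallel algorithm performs no more work than a sequential execution of these crossings in that order, which in turn is a legal sequence of single-vertex LDS moves to which the \cite{BHNT15} potential bound applies. The content to verify is that sequentializing does not disturb the accounting: within a level all moves go the same direction (up for insertions by \cref{lem:batch-insertions-inv}, into the level from above for deletions by \cref{lem:batch-deletes}), and \cref{lem:move-up} and \cref{lem:smaller-level} guarantee that no vertex at or below the current level is ever revisited, so the moves at a level are effectively independent and their potential changes telescope identically whether applied together or one at a time. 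The only care needed is the case where two adjacent vertices move simultaneously, where the parallel code skips a data-structure update that a sequential run would perform and undo; this only makes the parallel work smaller, so the sequential bound remains an upper bound.

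Finally I would sum the accounting. Total rebalancing work is at most the total potential released over the batch, which is bounded by the potential present at the batch's start plus the potential injected by the $\batchsize$ updates. Since the batch begins from a valid \plds (so the starting potential is charged to prior batches in the usual amortized fashion) and each update injects $O(\log^2 n)$ potential, the work to process the batch is $O(\batchsize\log^2 n)$ amortized, matching the sequential update time of~\cite{BHNT15}. Combining this with the depth and space guarantees of \cref{cor:random-depth} yields the lemma.
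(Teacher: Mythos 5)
Your proposal is correct and follows essentially the same route as the paper: depth and space are quoted from \cref{cor:random-depth}, invariant maintenance from \cref{lem:good-level}, \cref{lem:batch-insertions-inv}, \cref{lem:batch-deletes}, \cref{lem:move-up}, and \cref{lem:smaller-level}, and the work bound is obtained by porting the potential functions of Bhattacharya \etal~\cite{BHNT15} together with a serialization argument showing the parallel moves correspond to a legal sequence of sequential single-vertex moves (the paper's \cref{lem:same-seq}), with the desire-level computation charged to the released potential exactly as in the paper's \cref{thm:work}. Your observation that simultaneous moves of adjacent vertices only make the parallel work smaller than its sequential counterpart is a slightly more careful phrasing of the paper's claim that the two operation sets are identical, but the substance is the same.
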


\iflong
\subsubsection{Proof of Work Bound}

Unlike the algorithm presented in~\cite{HNW20,BHNT15}, in each round,
to handle deletions, we recompute the
$\dl(v)$ of any vertex $v$ that we want to move to a lower level.
Specifically, we compute and move $v$ to the closest level that satisfies
both~\cref{inv:degree-1} and~\cref{inv:degree-2}.
This is a different algorithm from the algorithm presented
in~\cite{HNW20,BHNT15}, and so we present for completeness a work argument for our
modified algorithm. The work bound we present accounts for the work of any one vertex's
movement up or down levels using the potential function argument
of~\cite{BHNT15}.
Note that this potential function also gives us
the amortized work per edge update of our algorithm since there exists a corresponding set of
sequential updates that cannot do less work than the set of parallel updates. Although the algorithm is different, the below
potential work bound argument follows closely the 
work bound proof 
presented in Bhattacharya \etal~\cite{BHNT15}. However, we repeat the proof again here (with some modifications) 
for completeness.

\paragraph{Charging the Cost of Moving Levels}
The strategy behind our potential function is to use the \emph{increase} in our
potential function due to edge updates to pay for the \emph{decrease} in
potential due to vertices moving up or down levels, which is enough to account for the work of moving the vertices.
We can then charge our costs
to the increase in potential due to edge updates. 
Below, we bound the increase
in potential due to edge updates and the decrease in potential due to vertex
movements.

We use the following potential function to calculate our potential.
First, recall some notation.
Let $Z_i$ be
the set of vertices in levels $i$ to $K$. In other words, $Z_i = \bigcup_{j =
i}^{K}
V_j$. Let $N(u, Z_i)$ be the set of neighbors of $u$ in the induced subgraph
given by $Z_i$. 
Let $\ell(u)$ be the current level that $u$ is on.
Finally,
let $\gn(\ell)$ be the group number of level $\ell$; in other words, $\ell \in
\group_{\gn(\ell)}$.
Let $f: [n] \times [n] \rightarrow \{0, 1\}$ be a function where
$f(u, v) = 1$ when
$\level(u) = \level(v)$ and $f(u, v) = 0$ when $\level(u) \neq \level(v)$. Using
the potential functions defined in~\cite{BHNT15}, for some constant $\lambda >
0$: %

\begin{align}
    \Pi &= \sum_{v \in V}\Phi(v) + \sum_{e \in E}\Psi(e)\\
    \Phi(v) &= \lambda\sum_{i = 0}^{\level(v) - 1}\max(0, \coeff(1+\delta)^{\gn(i)} -
    |N(v, Z_i)|)\label{eq:phi}\\
    \Psi(u, v) &= 2\left(K - \min(\level(u), \level(v))\right) + f(u, v)\label{eq:psi}
\end{align}
We first calculate the potential changes for insertions and deletions of edges.

\paragraph{Insertion} The insertion of an edge $(u, v)$ creates a new edge with
potential $\Psi(u, v)$. The new potential has value at most $2K + 1$.
With an edge insertion $\Phi(u)$ and $\Phi(v)$ cannot increase.
Thus, the potential increases by at most $2K + 1$.

\paragraph{Deletion} The deletion of edge $(u, v)$ increases potentials
$\Phi(u)$ and $\Phi(v)$ by at most $(2\lambda + 3)K$ and $2K$, respectively. %
It does not increase
any other potential since the potential of edge $(u, v)$ is eliminated.
\\\\
First it is easy to see that the potential $\Pi$ is always non-negative. Thus,
we can use the positive gain in potential over edge insertions and deletions to
pay for the decrease in potential caused by moving vertices to different levels.

Now we discuss the change in potential given a movement of a vertex to a higher or
lower level. Moving such a vertex decreases the potential and we show that this
decrease in potential is enough to pay for the cost of moving the vertex to a
higher or lower level.

\paragraph{A vertex $v$ moves from level $i$ to level $\dl(v) < i$ due
to~\cref{alg:delete}}

Since vertex $v$ moved down at least one level, this means that prior to the
move, its \ld is $\down(v) < (1+\delta)^{\gn(\ell(v)- 1)}$.
It is moved to a level $\dl(v)$ where its \ld
is at least $(1+\delta)^{\gn(\dl(v) - 1)}$ and its up-degree is at most
$\coeff(1+\delta)^{\gn(\dl(v))}$ (or it is moved to level $0$).

The potential before the move is at least

\begin{align*}
&\lambda\sum_{i = 0}^{\dl(v) - 1}\max\left(0, \coeff(1+\delta)^{\gn(i)} - |N(v,
Z_i)|\right)\\
&+ \sum_{i = \dl(v)}^{\level(v) - 1} (\lambda +
\add)(1+\delta)^{\gn(i)}
\end{align*}

since we only move a vertex to a lower level
if $\down(v) < (1+\delta)^{\gn(\level(v)-1)}$ and we move it to the closest level
$\dl(v)$ where~\cref{inv:degree-2} is no longer violated. To derive the second
term, since we moved vertex $v$ to level $\dl(v)$, we know that its degree
$|N(v, Z_{\dl(v)})| < (1+\delta)^{\gn(\dl(v))}$ (otherwise, we could have moved $v$
to level $\dl(v) + 1$). Then, substituting $(1+\delta)^{\gn(i)}$ for all levels $i
\geq \dl(v)$ into $\Phi(v)$ allows us to obtain $\sum_{i = \dl(v)}^{\level(v) -
1} (\lambda + 3)(1+\delta)^{\gn(i)}$.
Then, when it reaches its final level,
we know that it is at the highest level it can move to or at level $0$. In
both cases, $$\Phi(v) =
\lambda\sum_{i = 0}^{\dl(v) - 1}\max\left(0, \coeff(1+\delta)^{\gn(i)} - |N(v,
Z_i)|\right)$$ %
after the move.
In this case, $\Phi(v)$ decreases by at least
$\sum_{i = \dl(v)}^{\level(v) - 1} (\lambda + \add)(1+\delta)^{\gn(i)}$.

We need to account for two potential increases: the increase in $\Psi$ and the
increase in $\Phi$ from neighbors of $v$. There are less than $(1+\delta)^{\gn(\dl(v))}$
such neighbors that we need to consider. Namely, there are less than $(1+\delta)^{\gn(\dl(v))}$
neighbors in levels $\geq \dl(v)$ that we need to consider for the potential increase.
This is due to the fact that we moved $v$ to the \emph{highest} level that satisfies the invariants.
If $v$ has $\geq (1+\delta)^{\gn(\dl(v))}$ neighbors in $Z_{\dl(v)}$, then the desire-level
of $v$ would be $\dl(v) + 1$ since $v$ satisfies~\cref{inv:degree-2} at level $\dl(v) + 1$
and we can increase its $\dl(v)$. Furthermore, we only need to consider 
neighbors in levels $\geq \dl(v)$ since only these neighbors will contribute to the potential increase
by~\cref{eq:phi} and~\cref{eq:psi}. 

We first consider the increase in $\Psi$.
The total potential increase in $\Psi(u, v)$ (\cref{eq:psi}) 
summed over the increase for every edge $(u, v)$ where $\level(u)
\geq \dl(v)$ is at
most $2(\level(v)-\dl(v))(1+\delta)^{\gn(\dl(v))}$.  
This is due to the fact that for each edge $(u, v)$, the
potential gain from $\Psi$ is upper bounded by $2$
for every level in $[\dl(v), \level(v) - 1]$. Thus, in total over $< (1+\delta)^{\gn(\dl(v))}$
such neighbors results in a total potential increase of less than $2(\level(v)-\dl(v))(1+\delta)^{\gn(\dl(v))}$.

Now we consider the potential increase in $\Phi$.
For this potential increase, we need to account for the increase in potential of every neighbor
whose edge is flipped by the move. 
Decreasing the degree of each neighbor by one for each of
$|N(v, \dl(v))| < (1+\delta)^{\gn(\dl(v))}$ neighbors results in the total increase in $\Phi$.
In other words, for each flipped edge $(v, w)$, $N(w, Z_i)$
decreases by $1$ for each level $i \in [\dl(v) + 1, \ell(v)]$. The total increase in $\Phi$ is then
at most $\lambda(\ell(v) - \dl(v))(1+\delta)^{\gn(\dl(v))}$
by~\cref{eq:phi} %
over all flipped edges since there are 
less than $(1+\delta)^{\gn(\dl(v))}$ such neighbors in levels $\geq \dl(v) + 1$ 
and so the total number of flipped edges
is less than $(1+\delta)^{\gn(\dl(v))}$. 

Then, in total, the potential decrease is at least

\begin{align*}
    \left(\sum_{i = \dl(v)}^{\level(v) - 1} (\lambda + \add)(1+\delta)^{\gn(i)}\right)
    &- 2(\ell(v) - \dl(v))(1+\delta)^{\gn(\dl(v))}\\ 
    - \lambda(\ell(v) - \dl(v))(1+\delta)^{\gn(\dl(v))}
&\geq (\ell(v) -
\dl(v))(1+\delta)^{\gn(\dl(v))}
\end{align*}

which is enough to pay for the at most $(1+\delta)^{\gn(\dl(v))}$ edge flips as well
as the $O(\ell(v) - \dl(v))$ work for computing the desire-level.
The total number of edge flips is upper bounded by $|N(v, \dl(v))|$. Since we
moved $v$ to $\dl(v)$ and not $\dl(v) + 1$, we know that $v$
satisfies~\cref{inv:degree-2} at $\dl(v)$ and not at $\dl(v) + 1$. Then, this
means that $|N(v, \dl(v)))| < (1+\delta)^{\gn(\dl(v))}$. Hence, our number of edge
flips is also bounded by $(1+\delta)^{\gn(\dl(v))}$.

\paragraph{A vertex $v$ moves from level $i$ to level $i + 1$ due
to~\cref{alg:insert}}
In order for~\cref{alg:insert} to move a vertex from level $i$ to $i+1$, it must
have violated~\cref{inv:degree-1} and that $\up(v) >
(2+3/\lambda)(1+\delta)^{\gn(i)}$ before the move.
Before and after the move, $\Phi(v) = 0$, since in these cases $\down(v) >
\coeff(1+\delta)^{\gn(i-1)}$ and $\down(v) > \coeff(1+\delta)^{\gn(i)}$,
respectively. 
Thus, $\Phi(v)$ does not change in
value. Furthermore, the $\Phi(w)$ of its neighbors $w$ cannot increase.
Then, this leaves us with the potential change in $\Psi(v, w)$.

$Z_i$ is the set of neighbors that $v$ has to iterate through within its
data structures if $v$ goes up a level. The potential decrease for every
neighbor of $v$ on $i = \level(v)$ is $1$.
The potential decrease for every neighbor on level $i+1$ is $1$.
Finally, the potential decrease for every neighbor in levels $> \dl(v)$ is $2$.
Then, the potential decrease for every neighbor in $Z_i$ is at least $1$
and is enough to pay for the $O(|Z_i|)$ cost of iterating and moving the neighbors
of $v$ in its data structures.

\paragraph{Parallel Amortized Work}
The last part of the proof that needs to be shown is that any set of parallel level
data structure operations that is undertaken by~\cref{alg:insert}
or~\cref{alg:delete} has a sequential set of operations of the form detailed
above (i.e., moving $v$ to $\dl(v)$ or moving $v$ from level $i$ to $i+1$)
that consists of the same or strictly larger set of operations.

\begin{lemma}\label{lem:same-seq}
    For any set of operations performed in parallel by~\cref{alg:insert}
    or~\cref{alg:delete}, there exists an identical set of sequential operations
    to the set of parallel operations.
\end{lemma}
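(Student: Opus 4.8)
The plan is to reduce the parallel work analysis to the single-move potential argument already established above by exhibiting, for each iteration of the outer loop of \cref{alg:insert} or \cref{alg:delete}, a serialization of the simultaneously-moved vertices into a sequence of single-vertex moves, each of one of the two analyzed forms (``move $v$ from level $i$ to $i+1$'' or ``move $v$ to $\dl(v)$''). The core of the argument is a non-interference claim: within one level-iteration, the vertices that move together do not alter one another's triggering conditions, so they may be played out one at a time in any order, yielding exactly the same set of data-structure operations—and hence the same total work—as the parallel step. Because each single move is charged against the potential function, the total parallel work is then bounded by the net potential decrease, which in turn is paid for by the potential increase from edge updates.

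For \cref{alg:insert}, I would fix a level $l$ and consider the set $S$ of vertices on level $l$ that violate \cref{inv:degree-1}. The key observation is that when $v \in S$ moves from $l$ to $l+1$ it remains in $Z_{l}$, so for every other vertex $w$ on level $l$ the quantity $\up(w) = |N(w, Z_{l})|$ is unchanged, since $v$ stays an up-neighbor of $w$. Consequently the membership of $S$ is independent of the order in which its vertices are processed, and each move is a legitimate ``move from $i$ to $i+1$'' operation with the $\up(v) > \coeff(1+\delta)^{\gn(l)}$ precondition. The relocation of neighbors between $\greater$ and $\tbl$ performed in parallel (Lines~\ref{line:insert-loop}--\ref{line:insert-ds-end}) is exactly the union of the per-vertex relocations done serially, so the work matches. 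That processing later levels never reopens level $l$ is \cref{lem:move-up}, and that no vertex ever moves down during insertions is \cref{lem:batch-insertions-inv}.

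For \cref{alg:delete}, I would fix a target level $l$ and the set $S$ of vertices with $\dl(v) = l$; each such $v$ that actually moves currently sits strictly above $l$ and moves down to exactly $l$. The crucial point is that after the move $v$ lies in both $Z_{l-1}$ and $Z_{l}$, which are precisely the neighbor sets controlling whether a co-mover $w \in S$ satisfies \cref{inv:degree-2} and \cref{inv:degree-1} at level $l$. Hence $v$'s descent leaves $|N(w, Z_{l-1})|$ and $|N(w, Z_{l})|$ unchanged, so $\dl(w)$ remains $l$; at any strictly higher candidate level $l' > l$ the move can only decrease $|N(w, Z_{l'-1})|$, which keeps \cref{inv:degree-2} violated there and therefore cannot pull $\dl(w)$ above $l$. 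Thus the vertices of $S$ are mutually non-interfering and serialize, in any order, into valid ``move $v$ to $\dl(v)$'' operations whose total data-structure work equals the parallel work. As before, \cref{lem:batch-deletes} guarantees no upward motion and \cref{lem:smaller-level} guarantees that processing level $l$ finalizes it.

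The main obstacle is the deletion case: unlike insertions, moving a vertex down generically changes $\down(w)$ for its higher neighbors, so a priori serializing could invalidate a co-mover's desire-level or even force it lower. The resolution I rely on is exactly the observation above that everyone in $S$ targets the \emph{same} level $l$, so each descending vertex lands inside $Z_{l-1} \cap Z_{l}$ and leaves the level-$l$ conditions of the others intact; this is what upgrades ``a serialization exists doing at least as much work'' to the stronger statement that the sequential and parallel operation sets are identical. With non-interference in hand, the reduction is immediate, and the single-move potential accounting then completes the work bound of \cref{cor:depth}.
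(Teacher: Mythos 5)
Your proposal is correct and follows essentially the same route as the paper's proof: serialize the co-moving vertices in an arbitrary order and argue non-interference, using the fact that an insertion-phase mover stays inside $Z_{l}$ (so up-degrees of level-$l$ peers are unchanged) and a deletion-phase mover lands inside $Z_{l-1}\cap Z_{l}$ (so co-movers' desire-levels stay fixed at $l$). Your added remark that a descent can only decrease $|N(w,Z_{l'-1})|$ at higher candidate levels, keeping $\dl(w)$ from rising above $l$, is a slightly more explicit version of the same argument the paper gives.
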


\begin{proof}
    In~\cref{alg:insert}, the parallel set of operations consists of moving all
    vertices that violate~\cref{inv:degree-1} in the same level $i$ up to
    level $i + 1$. Again, suppose we choose an arbitrary order to move the
    vertices in level $i$ to level $i + 1$. Given two neighbors in the order $v$
    and $w$, if $v$ moves to level $i+1$, the \hd of $w$ still includes $v$;
    since the \hd of any vertex $w$ is not affected by the previous vertices that
    moved to level $i + 1$, $w$ moves to $i + 1$ on its turn. This order
    provides a sequential set of operations that is equivalent to the parallel
    set of operations.

  In~\cref{alg:delete}, the parallel set of operations consists of moving a set
    of vertices down from arbitrary levels to the same level $i$. We show that there
    exists an identical set of sequential operations to the parallel
    operations. First, any vertex whose
    $\dl(v) = i$ considered all vertices in levels $\geq i - 1$ in its
    calculation of $\dl(v)$.
    Thus, any other vertex $w$ moving from a level $j > i$ to level $i$ is
    included in calculating the desire-level of vertex $v$. Suppose we pick an
    arbitrary order to move the vertices that have $\dl(v) = i$ to level $i$.
    Then, the desire-level of any vertex $w$ whose $\dl(w) = i$ does not change
    after $v$ is moved to level $i$. Hence, when it is $w$'s turn in the order,
    $w$ moves to level $i$. This arbitrary order is a sequential set of
    operations that is identical to the parallel set of operations.
\end{proof}
\fi

\begin{lemma}\label{thm:work}
     For a batch of $\batchsize<m$ updates, \cref{alg:rebalance} requires
     $O(\batchsize\log^2n)$ amortized work with high probability.
    The required space
    is $O(n\log^2 n + m)$ using the randomized
    data structures. %
\end{lemma}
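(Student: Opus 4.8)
The plan is to combine the per-edge and per-move potential accounting derived above into a single global amortized bound, using the non-negativity of $\Pi$ and the parallel-to-sequential correspondence of \cref{lem:same-seq}. First I would split the total work of \cref{alg:rebalance} into three contributions: (i) inserting or deleting the $\batchsize$ edges into the graph and the relevant per-vertex hash tables, costing $O(\batchsize)$ work \whp; (ii) the $O(K) = O(\log^2 n)$ overhead of sweeping through the levels in the outer loops of \cref{alg:insert} and \cref{alg:delete} (processing an empty level is $O(1)$ if active vertices are bucketed by level); and (iii) the actual vertex relocations, together with the repairs to the $\greater$ and $\tbl$ tables of affected neighbors and the desire-level computations in \cref{alg:desire-level}. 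Only (iii) is nontrivial, and it is precisely what $\Pi$ is designed to pay for.

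Next I would bound the potential injected by the updates. The preceding case analysis shows that an edge insertion raises $\Pi$ by at most $2K+1$ and an edge deletion by at most $(2\lambda+3)K + 2K = O(K)$, since $\lambda$ and $\delta$ are constants; with $K = O(\log^2 n)$, a batch of $\batchsize$ updates therefore injects $O(\batchsize \log^2 n)$ potential. On the other side, the two vertex-move computations above establish that a single downward move in \cref{alg:delete} (to $\dl(v)$) and a single upward move in \cref{alg:insert} (from level $i$ to $i+1$) each decrease $\Pi$ by at least the work they incur --- the edge flips, the table repairs, and the $O(\level(v)-\dl(v))$ cost of \cref{alg:desire-level}. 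Because $\Pi$ is always non-negative and equals $0$ for the empty initial graph, the standard telescoping argument bounds the total work of part (iii) over the entire update sequence by the total potential injected, so each edge update is charged $O(\log^2 n)$ amortized work and each batch $O(\batchsize \log^2 n)$.

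The one gap is that these potential decreases are derived for a single \emph{sequential} vertex move, whereas \cref{alg:rebalance} relocates an entire level's worth of vertices simultaneously. I would close this with \cref{lem:same-seq}: every parallel set of moves on a level is realized by an identical sequence of single-vertex moves, so the parallel algorithm performs exactly the primitive operations the sequential potential bound accounts for, and its amortized work cannot exceed the sequential bound. Summing (i)--(iii) and using $\batchsize \geq 1$ gives $O(\batchsize \log^2 n)$ amortized work; the \whp qualifier enters solely through the randomized hash tables, whose operations cost $O(1)$ work \whp. The space bound is inherited directly from \cref{cor:random-depth}: the graph takes $O(m)$, and each vertex stores $O(\log^2 n)$ level buckets of $\tbl$, for $O(n\log^2 n)$ total.

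The main obstacle is not the arithmetic but this sequential-to-parallel bridge: I must be confident that a parallel sweep never performs strictly more primitive operations than some valid sequential schedule (for instance, that no vertex is relocated, or has its tables touched, more times in parallel than it would be sequentially). This is exactly the content of \cref{lem:same-seq}, so the crux of the argument is invoking it correctly rather than re-deriving the potential bookkeeping.
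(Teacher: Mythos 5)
Your proposal is correct and takes essentially the same route as the paper: the paper's own proof likewise charges all vertex-relocation, neighbor-table, and \cref{alg:desire-level} work (doubling plus binary search) to the potential $\Pi$, using exactly the per-update increases and per-move decreases you cite, and it invokes \cref{lem:same-seq} as the parallel-to-sequential bridge, with the space bound inherited from the randomized data structures. Your explicit (i)--(iii) decomposition and telescoping statement merely make explicit what the paper leaves implicit.
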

\iflong
\begin{proof}
    Our potential argument handles the cost of moving neighbors of a vertex $v$
    between different levels. Namely, our potential argument shows that such
    costs of updating neighbor lists of nodes require $O(\log^2 n)$ amortized
    work per edge update to the structure since we showed that the $O(\log^2 n)$
    potential increase from each edge insertion or deletion is enough to pay for the 
    cost of moving vertices to different levels.

    Then, it remains to calculate the
    amount of work of~\cref{alg:desire-level}.
    We can obtain the size of each neighbor list in $O(1)$ work and depth.
    If we show that the work of
    running~\cref{alg:desire-level} is asymptotically
    bounded by the work of calculating the set of neighbor vertices that need to
    be moved between neighbor lists for a vertex, then we can also charge this
    work to the potential. To compute the first lower bound
    on $\dl(v)$, we maintain a cumulative
    sum of the total number of neighbors for each vertex at or below the current
    level $\level(v)$. %
    Then, we sequentially double the number of elements we use to compute
    the next level. We use $O(\level-\dl(v))$ work to compute $\dl(v)$.

    Finally, we also bound the work of the final binary search. Let $R$ be the
    size of the range of values in which we perform our binary search. The size of
    the number of possible levels becomes
    smaller as we decrease our range of values to search.
    Whenever we go right in the binary search, we perform $R/2$ work.
    Whenever we go left in the binary search, we also perform at most
    $R/2$ work. Thus, the total amount of work we perform while
    doing the binary search is $O(R)$. And by the argument above, the
    amount of work is $O\left(|Z_{\dl(v)} \setminus Z_{\level(v)}|\right)$.

    The total work of~\cref{alg:desire-level} is $O(|Z_{\dl(v)} \setminus
    Z_{\level(v)}| + (\level-\dl(v)))$ which we can successfully charge to the
    potential.
    We conclude that the amount of work per update is $O(\log^2 n)$.
\end{proof}

\subsection{Estimating the Coreness and Orientation}\label{sec:compute-core}

\ifanon
\else
To obtain a low-outdegree orientation
we maintain an orientation of each edge from
the endpoint at a lower level to the endpoint with a higher level. So
edges are directed from lower to higher levels. For any two vertices on
the same level, we direct the edge from the vertex with higher ID to the
vertex with lower ID. Performing this procedure results in an $8\alpha$
outdegree orientation, where $\alpha$ is the arboricity of the
graph.
\fi

\myparagraph{$(2+\eps)$-Approximation of Coreness}\label{sec:coreness}
The \emph{coreness estimate},
$\kest(v)$, is an estimate of the coreness of a vertex $v$.
We compute
a coreness estimate using \emph{only} $v$'s level 
and the number of levels per group (which is fixed). 
We show how to use such information to obtain a
$(2+\eps)$-approximation to the actual coreness of $v$ for any constant
$\eps > 0$. (We can find an approximation for any fixed $\eps$
by appropriately setting $\delta$ and $\lambda$.)
To calculate $\kest(v)$, we
find the largest index $i$ of a
group $\group_i$, where $\ell(v)$ is at least as high as the highest level
in $\group_i$.

\begin{definition}[Coreness Estimate]\label{def:core-estimate-number}
    The \emph{coreness estimate} $\kest(v)$ of vertex $v$ is
    $(1+\delta)^{\max(\floor{(\level(v) + 1)/4\ceil{\log_{1+\delta} n}}-1, 0)}$, where
    each group has $4\ceil{\log_{(1+\delta)} n}$ levels.
\end{definition}

To see an example, consider vertex $y$ in~\cref{fig:deletion-exp} $(e)$.
We estimate $\kest(y) = 1$ since the highest level that is the last level of
a group and is equal to or below level $\level(y) = 4$ is level $2$.
Level $2$ is part of group $0$, and  so our coreness estimate for $y$ is
$(1+\delta)^0 = 1$. This is a $2$-approximation of its actual coreness of
$2$.
Using~\cref{def:core-estimate-number}, we prove that our PLDS maintains a $\coeff(1+\delta)$-approximation of the coreness value of each vertex, for any constants $\lambda >
0$ and $\delta > 0$. 
Therefore, we obtain the following lemma giving the desired
$(2+\eps)$-approximation.
Our experimental analysis shows that our theoretical bounds limit the maximum 
error of our experiments, although our average errors are much smaller. 
To get a maximum error bound of $(2+\eps)$ for
any $\eps > 0$, we
can set $\delta = \eps/3$ and $\lambda = \frac{9}{\eps} + 3$.

By~\cref{lem:decomp-bound}, it suffices to return $\kest(v)$ as the estimate of
the coreness of $v$;
\fi
\iflong
this proves the approximation factor in
\cref{thm:batch-dynamic}.
\fi
\ificml
The proof of the error bound can be
found in our full paper.
\fi

\begin{lemma}\label{lem:core-num}
    Let $\kest(v)$ be the coreness estimate
    and $\core(v)$ be the coreness of $v$,
    respectively. If $\core(v) >
    \coeff(1+\delta)^{g'}$, then
    $\kest(v) \geq (1+\delta)^{g'}$. Otherwise, if $\core(v) <
    \frac{(1+\delta)^{g'}}{\coeff(1+\delta)}$,
    then $\kest(v) < (1+\delta)^{g'}$.
\end{lemma}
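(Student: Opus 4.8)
The plan is to reduce the entire statement to two inequalities relating the coreness $\core(v)$ to the level $\level(v)$, since the estimate $\kest(v) = (1+\delta)^{\max(\floor{(\level(v)+1)/T}-1,\,0)}$, where $T = 4\ceil{\log_{(1+\delta)}n}$ is the number of levels per group, depends only on which group-top $\level(v)$ has surpassed. Concretely, $\kest(v) \ge (1+\delta)^{g'}$ is equivalent to $\level(v) \ge (g'+1)T - 1$ (the top level of group $g'$), while $\kest(v) < (1+\delta)^{g'}$ is equivalent to $\level(v) < (g'+1)T - 1$. I would prove the first implication directly and the second by its contrapositive, so that both halves become a single statement of the form ``high coreness forces a high level, and vice versa,'' each extracted from one of the two invariants.

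For the first implication, suppose $\core(v) > \coeff(1+\delta)^{g'}$ and let $H$ be the $\core(v)$-core of $G$, so that $v \in H$ and every vertex of $H$ has at least $\core(v)$ neighbors inside $H$. Take $u \in H$ at the lowest level $\ell_{\min}$ occurring in $H$; all of $u$'s $H$-neighbors lie at levels $\ge \ell_{\min}$, i.e.\ in $Z_{\ell_{\min}}$, so $u$ has at least $\core(v)$ neighbors in $Z_{\ell_{\min}}$. \cref{inv:degree-1} caps this count at $\coeff(1+\delta)^{\gn(\ell_{\min})}$, hence $\coeff(1+\delta)^{\gn(\ell_{\min})} \ge \core(v) > \coeff(1+\delta)^{g'}$, which forces $\gn(\ell_{\min}) \ge g'+1$ and therefore $\level(v) \ge \ell_{\min} \ge (g'+1)T$. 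Substituting into the estimate gives exponent $\floor{(\level(v)+1)/T} - 1 \ge g'$, i.e.\ $\kest(v) \ge (1+\delta)^{g'}$ (the degenerate case $\ell_{\min} = K$ only makes the level larger).

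For the contrapositive of the second implication I assume $\level(v) \ge (g'+1)T - 1$ and must show $\core(v) \ge (1+\delta)^{g'}/\coeff(1+\delta)$, i.e.\ that $v$ lies in some subgraph of minimum degree at least this value. The natural candidate is $G[Z_{\ell_b}]$ with $\ell_b = g'T$ the bottom of group $g'$; note $v \in Z_{\ell_b}$ since $\level(v) > \ell_b$. \cref{inv:degree-2} already handles every vertex strictly above the bottom: any $w \in Z_{\ell_b}$ with $\level(w) > \ell_b$ has $\gn(\level(w)-1) \ge g'$, hence at least $(1+\delta)^{g'}$ neighbors in $Z_{\level(w)-1} \subseteq Z_{\ell_b}$, comfortably above the target threshold $D = (1+\delta)^{g'}/\coeff(1+\delta)$. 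The only vertices that can have small induced degree in $Z_{\ell_b}$ are those at level exactly $\ell_b$, whose guaranteed neighbors may sit just below, in $Z_{\ell_b - 1}$. I would therefore run the standard degeneracy peeling on $G[Z_{\ell_b}]$ with threshold $D$ and argue that $v$ survives, so that the surviving subgraph certifies $\core(v) \ge D$.

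The main obstacle is exactly this last step: showing that the peeling cascade cannot climb from level $\ell_b$ up to $\level(v)$ and delete $v$. This is where the $\Theta(\log_{(1+\delta)} n)$ levels per group are essential rather than incidental. My plan is a charging argument on the removal order: whenever a vertex at level $\ell > \ell_b$ is deleted, it began with at least $(1+\delta)^{g'}$ down-neighbors in $Z_{\ell-1}$ but had fewer than $D$ neighbors remaining at removal time, so a constant fraction of its down-neighbors were deleted strictly earlier; tracing this back level by level, reaching $v$ at level $\ge (g'+1)T-1$ would require a deletion set that grows by a factor bounded below by a constant greater than $1$ across each of the $\Theta(\log_{(1+\delta)} n)$ levels of the group, forcing more than $n$ distinct deleted vertices, a contradiction. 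Making the counting tight, so that shared down-neighbors are not over-counted, is the delicate point and the reason the group width is $4\ceil{\log_{(1+\delta)}n}$ rather than $\ceil{\log_{(1+\delta)}n}$. Once $v$ is shown to survive, the surviving portion of $Z_{\ell_b}$ witnesses $\core(v) \ge D = (1+\delta)^{g'}/\coeff(1+\delta)$, completing the contrapositive. Since this direction alone reasons about $v$'s membership in a dense subgraph, it---not the clean \cref{inv:degree-1} argument---is where essentially all of the difficulty lies.
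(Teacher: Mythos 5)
Your route coincides with the paper's: the first implication is extracted from \cref{inv:degree-1}, and the second from \cref{inv:degree-2} via pruning at threshold $D = \frac{(1+\delta)^{g'}}{\coeff(1+\delta)}$ together with a geometric-growth argument across the levels of one group. Your proof of the first implication is complete and correct, and is a clean equivalent of the paper's: you apply \cref{inv:degree-1} to the lowest-level vertex of the $\core(v)$-core, while the paper peels vertices level by level from the bottom and observes that every vertex removed up to and including level $\level(v)$ has degree at most $\coeff(1+\delta)^{g'}$ at removal time; both are short consequences of the same invariant.

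For the second implication, the step you explicitly left open---making the counting tight so that shared down-neighbors are not over-counted---is the crux, and the paper closes it by counting \emph{edges} rather than vertices; no cleverer bookkeeping is needed. Two adjustments make your sketch rigorous. First, organize the cascade by nested suffixes rather than by individual levels: a pruned vertex $u$'s guaranteed down-neighbors lie in $Z_{\level(u)-1}$, which spans many levels, so ``tracing back level by level'' does not quite parse; instead let $Y_j$ be the set of pruned vertices at levels at least $\level(v)-j$. Second, for each $u \in Y_j$ (with $\level(u)-1$ still in or above group $g'$), \cref{inv:degree-2} gives at least $(1+\delta)^{g'}$ neighbors in $Z_{\level(u)-1} \subseteq Z_{\level(v)-j-1}$, and at the moment $u$ is pruned fewer than $D$ of its neighbors survive; hence more than $(1+\delta)^{g'}-D$ edges join $u$ to vertices pruned strictly earlier, all of which lie in $Y_{j+1}$. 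Summing over $u \in Y_j$ yields more than $|Y_j|\bigl((1+\delta)^{g'}-D\bigr)$ such (later, earlier) pairs, while each $w \in Y_{j+1}$ can serve as the earlier endpoint of fewer than $D$ of them, since $w$'s surviving degree is below $D$ when it is pruned. Therefore $|Y_{j+1}| > |Y_j|\bigl(\coeff(1+\delta)-1\bigr) \geq (1+2\delta)\,|Y_j|$; the paper's own accounting instead pays a factor of $2$ because an edge can lower the degree of two layer vertices, obtaining ratio $\coeff(1+\delta)/2$, and either constant exceeds $1$. Iterating this over the levels of the group forces more distinct pruned vertices than the graph contains, which is the contradiction you wanted, so $v$ survives and the surviving subgraph certifies $\core(v) \geq D$. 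One small correction to your commentary: the width $4\ceil{\log_{(1+\delta)}n}$ is generous slack for converting between growth bases such as $\coeff(1+\delta)/2$ (or $1+2\delta$) and $1+\delta$, not something forced by the over-counting issue itself, which the edge count disposes of. With these substitutions your outline becomes essentially the paper's proof.
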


\begin{proof}
        For simplicity, we assume the number of levels per group is
    $4\ceil{\log_{(1+\delta)} m} + 1$ (a tighter analysis can accommodate the
    case when the number of levels per group is $\ceil{\log_{(1+\delta)}m}$).
    Let $\topmost(g')$ be the topmost level of group $g'$.
    In the first case, we show that if $\core(v) > \coeff(1+\delta)^{g'}$, then $v$
    would be in a level higher than
    $\topmost(g')$
    in our level data structure.
    This would also imply that $\kest(v) \geq (1+\delta)^{g'}$.
    Suppose for the sake of contradiction
    that $v$ is located at some level $\level(v)$ where
    $\level(v) \leq \topmost(g')$.
    This means that $\up(v) \leq \coeff(1+\delta)^{g'}$ at level
    $\ell(v)$. Furthermore, by the invariants of our level data structure,
    each vertex $w$ at the same or lower level has $\up(w) \leq \coeff(1+\delta)^{g'}$.
    This means that when we remove all vertices starting at level $0$
    sequentially up to and including $\ell(v)$, all vertices removed have degree $\leq
    \coeff(1+\delta)^{g'}$ when removed. Thus, when we reach $\ell(v)$, $v$ also has
    degree $\leq \coeff(1+\delta)^{g'}$. This is a contradiction with $\core(v) >
    \coeff(1+\delta)^{g'}$. It must then be the case that $v$ is at a level higher than
    $T(g')$ and $\kest(v) \geq (1+\delta)^{g'}$.

    Now we prove that if $\core(v) < \frac{(1+\delta)^{g'}}{\coeff(1+\delta)}$, then
    $\kest(v) < (1+\delta)^{g'}$. We assume for sake of contradiction that $\core(v) <
    \frac{(1+\delta)^{g'}}{\coeff(1+\delta)}$ and $\kest(v) \geq (1+\delta)^{g'}$.
    To prove this case, we consider the following process, which we call the
    \emph{pruning} process. Pruning is done on a subgraph $S \subseteq G$. We
    use the notation $d_S(v)$ to denote the degree of $v$ in the subgraph
    induced by $S$.
    For a given subgraph $S$, we \emph{prune} $S$ by repeatedly removing
    all vertices $v$ in $S$ whose $d_S(v) <
    \frac{(1+\delta)^{g'}}{\coeff(1+\delta)}$.
    Note that in this argument, we need only consider levels from the same group
    $g'$ before we reach a contradiction, so we assume that all levels are in the group
    $g'$. Let $j$ represent the number of levels below level
    $T(g')$. (Recall that because $\kest(v) \geq (1+\delta)^{g'}$, $\level(v) \geq
    T(g')$, if we consider a level $\level(v) > T(g')$, then the \ld cannot decrease
    due to~\cref{inv:degree-2} becoming stricter.
    This only makes our proof
    easier, and so for simplicity, we consider $\level(v) = T(g')$.)
    We prove via induction that
    the number of vertices pruned from the subgraph induced
    by $Z_{T(g') - j}$ \emph{must}
    be at least
    \begin{align*}
        \left(\frac{(2+3/\lambda)(1+\delta)}{2}\right)^{j - 1}\left((1+\delta)^{g'} -
    \frac{(1+\delta)^{g'}}{\coeff(1+\delta)}\right)
    \end{align*}
    or otherwise, $\core(v) \geq \frac{(1+\delta)^{g'}}{\coeff(1+\delta)}$.
    We first prove the base case when $j = 1$.
    In the base case, we know that
    $d_{Z_{T(g') - 1}}(v) \geq (1+\delta)^{g'}$ by~\cref{inv:degree-2}.
    Then, if fewer than
	$(1+\delta)^{g'} - \frac{(1+\delta)^{g'}}{\coeff(1+\delta)}$
    neighbors of $v$ are pruned from the graph, then $v$ is part of a $\geq
    \frac{(1+\delta)^{g'}}{\coeff(1+\delta)}$-core and $\core(v) \geq
    \frac{(1+\delta)^{g'}}{\coeff(1+\delta)}$, a contradiction.

    Thus, at least $(1+\delta)^{g'} - \frac{(1+\delta)^{g'}}{\coeff(1+\delta)}$
    vertices must be pruned in this case.
    We now assume the induction hypothesis for $j$
    and prove that this is true for step $j + 1$.
    By~\cref{inv:degree-2}, each vertex on level $T(g') - j$ and above has
    degree \emph{at least} $(1+\delta)^{g'}$ in graph $Z_{T(g') - j - 1}$. Then,
    in order to prune all $X$  vertices from the previous induction step, we must
    prune at least $\frac{(1+\delta)^{g'}X}{2}$ edges, since each edge decreases
    the degree of two vertices by $1$; all adjacent edges of a pruned vertex are
    also pruned/removed.
    Each vertex that is pruned can remove
    \emph{at most} $\frac{(1+\delta)^{g'}}{\coeff(1+\delta)}$ edges when it is
    pruned, by definition of our pruning procedure since we prune vertices with
    degree $< \frac{(1+\delta)^{g'}}{\coeff(1+\delta)}$. Thus, the \emph{minimum} number
    of vertices we must prune in order to prune the $X =
    \left(\frac{(2+3/\lambda)(1+\delta)}{2}\right)^{j - 1}\left((1+\delta)^{g'} -
    \frac{(1+\delta)^{g'}}{\coeff(1+\delta)}\right)$ vertices from the previous step is
    
    \begin{align*}
        &\frac{\text{\# edges that need to be pruned}}{\text{max \# edges pruned per pruned
        vertex}} = \frac{(1+\delta)^{g'}X}{2
        \frac{(1+\delta)^{g'}}{\coeff(1+\delta)}}\\
        &= \frac{(2+3/\lambda)(1+\delta)}{2} X
    \end{align*}
    
    \begin{align*}
        &=
        \left(\frac{(2+3/\lambda)(1+\delta)}{2}\right)^{j}\left((1+\delta)^{g'}
        -\frac{(1+\delta)^{g'}}{\coeff(1+\delta)}\right).
    \end{align*}

    Thus, we have proven our argument for the $(j + 1)$-st induction step.
    Note that for $j =\ceil{ \log_{(2+3/\lambda)(1+\delta)/2} (4m + 1) }$, we have $j\leq
    4\ceil{\log_{(1+\delta)}(m)} + 1$. This is because, since we pick $\lambda$ to
    be a constant greater than $0$,
    $2 + 3/\lambda > 2$ and for large enough $m$,
    $\log_{(2+3/\lambda)(1+\delta)/2}(4m + 1) \leq 4\ceil{\log_{(1+\delta)}(m)}
    + 1$.
    Then, by our induction, if we substitute $4\ceil{\log_{(1+\delta)}(m)} + 1$ for
    $j$,
    \begin{align*}
        &\left(\frac{(2+3/\lambda)(1+\delta)}{2}\right)^{4\ceil{\log_{(1+\delta)}(m)}}\left((1+\delta)^{g'} -
            \frac{(1+\delta)^{g'}}{\coeff(1+\delta)}\right)\\
        &> 4m \cdot (1/2) = 2m.
    \end{align*}
    This means we must prune at least $2m + 1$ vertices at this step, which we cannot because
    there are at most $2m$ vertices in a level that is not level $0$.
    This last step holds because
    all vertices with degree $0$ must be on the first level. Hence,
    all vertices not on level $0$ must be adjacent to at least one edge,
    and $n \leq 2m$ where $n$ is the number of vertices on the level that is not
    level $0$.
    Thus, our assumption is incorrect and we
    have proven our desired property.
\end{proof}

We show that~\cref{lem:core-num} implies~\cref{lem:decomp-bound}.

\begin{lemma}\label{lem:decomp-bound}
    The coreness estimate $\kest(v)$ of a vertex $v$
     satisfies $\frac{\core(v)}{(2+\eps)} \leq \kest(v) \leq
    (2+\eps)\core(v)$ for any constant $\eps > 0$. %
\end{lemma}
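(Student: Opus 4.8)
The plan is to obtain \cref{lem:decomp-bound} as a direct corollary of \cref{lem:core-num}, by taking contrapositives and substituting the correct group indices, and then fixing the constants with a short calculation. Recall from \cref{def:core-estimate-number} that the estimate always has the form $\kest(v) = (1+\delta)^{i}$ for a nonnegative integer $i$ (the index of the highest group whose top level lies at or below $\level(v)$). So it suffices to sandwich $\core(v)$ between the two powers $(1+\delta)^{i}$ and $(1+\delta)^{i+1}$, each scaled by the common factor $\coeff(1+\delta)$, and then pick $\delta,\lambda$ to make that factor at most $2+\eps$.

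For the upper bound on $\core(v)$ I would use the contrapositive of the first implication in \cref{lem:core-num}: if $\kest(v) < (1+\delta)^{g'}$ then $\core(v) \le \coeff(1+\delta)^{g'}$. Taking $g' = i+1$ and using $\kest(v) = (1+\delta)^{i} < (1+\delta)^{i+1}$ gives $\core(v) \le \coeff(1+\delta)^{i+1} = \coeff(1+\delta)\,\kest(v)$. For the lower bound I would use the contrapositive of the second implication: if $\kest(v) \ge (1+\delta)^{g'}$ then $\core(v) \ge (1+\delta)^{g'}/(\coeff(1+\delta))$. Taking $g' = i$ and using $\kest(v) = (1+\delta)^{i}$ gives $\core(v) \ge \kest(v)/(\coeff(1+\delta))$. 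Rearranging the two inequalities yields $\frac{\core(v)}{\coeff(1+\delta)} \le \kest(v) \le \coeff(1+\delta)\,\core(v)$.

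Finally I would close by choosing the constants so that $\coeff(1+\delta) \le 2+\eps$. Writing $\coeff = 2 + 3/\lambda$ and setting $\delta = \eps/3$ and $\lambda = 9/\eps + 3$, one checks that $\coeff = \frac{3(2+\eps)}{3+\eps}$ and $1+\delta = \frac{3+\eps}{3}$, so their product is exactly $2+\eps$. Substituting $\coeff(1+\delta) = 2+\eps$ into the sandwich from the previous paragraph then gives the claimed $\frac{\core(v)}{2+\eps} \le \kest(v) \le (2+\eps)\core(v)$.

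I do not expect a serious obstacle here, since all the substantive work—the pruning/peeling argument tying levels to coreness—is already carried out in \cref{lem:core-num}, and this lemma is essentially a change of variables plus a constant computation. The only care required is bookkeeping: matching the strict versus non-strict inequalities of \cref{lem:core-num} to the correct indices ($g' = i$ for the lower bound, $g' = i+1$ for the upper bound), and noting that the clamping at $i = 0$ in \cref{def:core-estimate-number} needs no special treatment, since applying the lower-bound implication with $g' = 0$ (where $\kest(v) = 1$) still yields $\core(v) \ge \kest(v)/(\coeff(1+\delta))$.
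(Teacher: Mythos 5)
Your proposal is correct and follows essentially the same route as the paper's own proof: both take the contrapositives of \cref{lem:core-num} with group indices $g'=i$ (lower bound) and $g'=i+1$ (upper bound) to sandwich $\core(v)$ between $\kest(v)/(\coeff(1+\delta))$ and $\coeff(1+\delta)\,\kest(v)$, then choose $\delta,\lambda$ so that $\coeff(1+\delta)\le 2+\eps$. Your explicit choice $\delta=\eps/3$, $\lambda=9/\eps+3$ matches the values the paper states just before the lemma, so there is nothing to add.
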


\begin{proof}
    Suppose $\kest(v) = (1+\delta)^g$. Then, by~\cref{lem:core-num},
we have
$        \frac{(1+\delta)^{g}}{\coeff(1+\delta)} \leq \core(v) \leq
        \coeff(1+\delta)^{g+1}$.
        Then, substituting $\kest(v) = (1+\delta)^{g}$ and
        solving the above bounds, $\frac{\core(v)}{\coeff(1+\delta)} \leq
    \kest(v) \leq \coeff(1+\delta)\core(v)$.
    For any constant $\eps> 0$, there exists constants $\lambda, \delta > 0$
    where $\frac{\core(v)}{2(1+\eps)} \leq \kest(v) \leq
    2(1+\eps)\core(v)$.
\end{proof}

\iflong

\ificml
Furthermore, using similar analysis to that of \cite{HNW20}, our PLDS directly gives a
low out-degree orientation with the same complexity bounds if we orient all edges towards neighbors at higher
levels and orient edges from high ID to low ID
between neighbors in the same level.

\begin{theorem}\label{thm:batch-dynamic-out-k-core}
   Provided an input graph with $m$ edges, and a batch of $\batch$
    updates, our algorithm maintains a $(2 + \eps)$-approximation of
    the coreness values and a $(8+\eps)\alpha$-low-outdegree orientation
    for all vertices (for any constant $\eps > 0$)
    in $O(\batch\log^2 m)$ amortized work and $O(\log^2 m \log\log m)$
    depth \whp{}, using $O(n\log^2 m + m)$ space.
\end{theorem}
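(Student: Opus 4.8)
The plan is to assemble this theorem from the pieces already established, treating the coreness guarantee and the orientation guarantee separately and then invoking the efficiency lemmas for the $O(\log^2 m)$-level variant of the PLDS. The coreness claim is immediate: by \cref{lem:decomp-bound} (which rests on \cref{lem:core-num}), returning $\kest(v)$ computed from $v$'s level alone yields $\frac{\core(v)}{2+\eps} \le \kest(v) \le (2+\eps)\core(v)$ for any constant $\eps > 0$, once $\delta$ and $\lambda$ are set appropriately. So the only genuinely new content is the orientation bound, which I would prove next.

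For the orientation, I would first fix the rule: orient each edge toward its higher-level endpoint, breaking ties among equal-level endpoints by directing from the higher-ID to the lower-ID vertex. This induces a total order on vertices (ascending by level, then descending by ID), so the orientation is acyclic. The key observation is that every out-neighbor of $v$ lies in $Z_{\level(v)}$, so the out-degree of $v$ is at most $\up(v)$, which by \cref{inv:degree-1} is at most $\coeff(1+\delta)^{\group(v)}$. It remains to bound $\group(v)$ in terms of $\alpha$. Since the maximum coreness equals the degeneracy $d$ and $d \le 2\alpha$, I would use the contrapositive of the second half of \cref{lem:core-num}: any vertex $v$ whose level lies in group $g^\ast$ has $\kest(v) \ge (1+\delta)^{g^\ast - 1}$ and hence $\core(v) \ge \frac{(1+\delta)^{g^\ast-1}}{\coeff(1+\delta)}$. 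Combining with $\core(v) \le 2\alpha$ gives $(1+\delta)^{g^\ast} \le 2\coeff(1+\delta)^2\alpha$, so the out-degree is at most $\coeff(1+\delta)^{g^\ast} \le 2\coeff^2(1+\delta)^2\alpha$. As $\coeff \to 2$ (taking $\lambda$ large) and $\delta \to 0$, this tends to $8\alpha$, so choosing $\lambda$ and $\delta$ as in \cref{lem:decomp-bound} simultaneously yields both the $(2+\eps)$-coreness bound and a $(8+\eps)\alpha$ out-degree orientation.

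Finally, I would collect the efficiency bounds. The orientation is read directly off the levels and IDs that the PLDS already maintains, so maintaining it adds no asymptotic work or depth. Running the $O(\log^2 m)$-level variant of the structure, \cref{thm:work} gives $O(\batchsize\log^2 m)$ amortized work \whp, while \cref{cor:random-depth} gives $O(\log^2 m \log\log m)$ depth \whp; the space is $O(n\log^2 m + m)$ since each vertex maintains $O(\log^2 m)$ per-level neighbor structures. The switch from $\log n$ to $\log m$ throughout is justified by the earlier footnote that $O(\log^2 m)$ levels suffice, since every degree-zero vertex sits permanently in level $0$ and at most $2m$ vertices ever leave it.

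The main obstacle I anticipate is the orientation constant. The out-degree bound chains three separate factors --- the \cref{inv:degree-1} slack $\coeff$, the group-to-coreness conversion loss $\coeff(1+\delta)$ from \cref{lem:core-num}, and the degeneracy-to-arboricity factor of $2$ --- and I must verify that their product collapses to exactly $8$ in the limit so that the claimed $(8+\eps)\alpha$ is achievable with the \emph{same} $\delta,\lambda$ that certify the $(2+\eps)$-coreness bound; keeping these two constant choices consistent is the delicate part.
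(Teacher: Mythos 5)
Your proposal is correct and takes essentially the same route as the paper: the coreness guarantee is quoted from \cref{lem:decomp-bound}, the orientation bound is obtained exactly as in the paper's proof of \cref{cor:arboricity-orientation} (out-degree bounded by up-degree via \cref{inv:degree-1}, group index capped through the coreness--level relation of \cref{lem:core-num}, and the degeneracy-to-arboricity factor $d \leq 2\alpha$, with the product of constants collapsing to $8$ in the limit), and the efficiency bounds are assembled from \cref{thm:work} and \cref{cor:random-depth} just as the paper does. The only cosmetic difference is that you invoke the contrapositive of \cref{lem:core-num} directly where the paper routes through \cref{lem:decomp-bound}, which changes nothing of substance.
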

\fi

\ifanon
\else
    Using our deterministic and space-efficient data structures, we can obtain
    the following additional results.

\begin{corollary}\label{cor:batch-dynamic-det}
    Provided an input graph with $m$ edges, and a batch of updates $\batch$,
    our algorithm maintains a $(2 + \eps)$-approximation of
    the coreness values for all vertices (for any constant $\eps > 0$)
    in $O(|\batch|\log^2 m)$ amortized work and $O(\log^3 m)$
    depth worst-case, using $O(n\log^2 m + m)$ space.
\end{corollary}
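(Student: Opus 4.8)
The plan is to observe that Corollary~\ref{cor:batch-dynamic-det} follows from running the exact same procedure, \cref{alg:rebalance}, but with the randomized parallel hash tables replaced by deterministic dynamic arrays; only the depth analysis changes substantively. Since the sequence of level movements performed by \cref{alg:insert} and \cref{alg:delete} is determined entirely by which vertices violate \cref{inv:degree-1} and \cref{inv:degree-2}---and this is independent of how the neighbor sets are stored---the combinatorial invariants are maintained exactly as in the randomized case. Consequently the $(2+\eps)$-approximation of coreness follows directly from \cref{lem:decomp-bound} (via \cref{lem:core-num}), setting $\delta = \eps/3$ and $\lambda = \frac{9}{\eps}+3$. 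No step in establishing the approximation factor uses randomness, so the bound holds in the worst case.

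For the work bound I would reuse the potential argument of \cref{thm:work} verbatim: the potential $\Pi$ pays for all edge flips and for the cost of CalculateDesireLevel, and the $O(\log^2 m)$ increase in potential per edge update suffices. The only new thing to check is that each individual neighbor-list operation charged by the potential costs $O(1)$ amortized work deterministically. Each vertex $v$ stores $\greater[v]$ and each $\tbl_v[j]$ as a dynamic array augmented with a back-pointer from every incident edge to its current index; a single insertion appends in $O(1)$ amortized work (with array doubling), and a single deletion swaps the target element with the last element and truncates, also $O(1)$. Hence the total amortized work remains $O(\batchsize\log^2 m)$, and the space is $O(\log^2 m)$ arrays per vertex, i.e.\ $O(n\log^2 m + m)$.

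The depth bound is where the analysis differs. By \cref{lem:move-up} and \cref{lem:smaller-level}, both \cref{alg:insert} and \cref{alg:delete} process the $K = O(\log^2 m)$ levels strictly sequentially, each exactly once, so it suffices to bound the depth of processing a single level. Within one level many vertices move simultaneously and many neighbor lists are updated concurrently; because several updates may target the same dynamic array, I would batch the concurrent insertions into an array using a parallel prefix-sum to compute write offsets, and batch the concurrent deletions using a parallel filter to compact the array, re-establishing the back-pointers during compaction. Each of these primitives runs in $O(\log m)$ depth. The remaining per-level cost---the doubling and binary search inside CalculateDesireLevel, which only use reduce over per-level neighbor counts---is $O(\log\log m)$ and is subsumed. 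Thus each level takes $O(\log m)$ depth, and the total worst-case depth is $O(\log^2 m)\cdot O(\log m) = O(\log^3 m)$.

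The main obstacle is precisely this deterministic, low-depth handling of concurrent modifications to a shared neighbor array: the randomized version offloads concurrent insertions and deletions to parallel hash tables running in $O(\log^* m)$ depth \whp, whereas deterministically I must pay $O(\log m)$ per level for prefix-sum and filter compaction. Care is needed to keep the edge-to-index back-pointers correct when the filter step permutes the surviving elements during compaction and when arrays are resized, so that later $O(1)$ single-element deletions remain valid; everything else transfers unchanged from the randomized analysis.
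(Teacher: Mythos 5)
Your proposal is correct and follows essentially the same route as the paper: the approximation factor and the potential-based work bound carry over unchanged because they are independent of the underlying data structures, and the depth bound is obtained exactly as the paper does it---$O(\log^2 m)$ sequentially processed levels (via the move-up and desire-level lemmas) times an $O(\log m)$ per-level cost for deterministic dynamic-array maintenance, where the paper likewise pays $O(\log m)$ per level to ``compute the element offsets and then resize the arrays,'' i.e.\ the same prefix-sum-based batching you describe. The only cosmetic difference is in deletion mechanics: the paper marks deleted slots with tombstones and periodically compacts, whereas you swap-and-truncate (single deletions) and filter-compact (batched deletions) with explicit back-pointer maintenance; both give $O(1)$ amortized work and $O(\log m)$ depth per level, so the bounds are identical.
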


\begin{corollary}\label{cor:batch-dynamic-space}
    Provided an input graph with $m$ edges, and a batch of updates $\batch$,
    our algorithm maintains a $(2 + \eps)$-approximation of
    the coreness values for all vertices (for any constant $\eps > 0$)
    in $O(|\batch|\log^2 m)$ amortized work and $O(\log^4 m)$
    depth worst-case, using $O(n + m)$ space.
\end{corollary}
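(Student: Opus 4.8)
The plan is to reuse \cref{alg:rebalance} verbatim and change only the concrete representation of each vertex's neighbor lists, then re-derive the four claimed bounds by invoking results already established. Concretely, I would store each vertex $v$'s lower-level neighbors $\tbl_v$ as a linked list whose nodes correspond to the \emph{non-empty} levels below $\level(v)$, each node pointing to a dynamic array of $v$'s neighbors on that level (the up-neighbor table $\greater$ is unchanged). The only algorithmic modification is that \cref{alg:desire-level} can no longer random-access an arbitrary level, so its doubling-plus-binary-search is replaced by a single sequential walk down the list that accumulates neighbor counts until \cref{inv:degree-2} is first satisfied; this is exactly the substitution used in \cref{cor:space-efficient}.

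For the space bound, I would count linked-list nodes. Each edge $(u,v)$ appears in at most one node of $u$'s list and one node of $v$'s list, and because a node stores \emph{all} of $v$'s neighbors on a single level, the number of nodes owned by $v$ is at most $\min(\deg(v),K)$. Summing gives $\sum_v \min(\deg(v),K) \le \sum_v \deg(v) = O(m)$ nodes plus $O(n)$ per-vertex overhead, so the total space is $O(n+m)$. This is precisely the improvement over \cref{cor:batch-dynamic-det}: the linked list stores only occupied levels rather than allocating a length-$\Theta(\log^2 m)$ array per vertex, eliminating the $O(n\log^2 m)$ term.

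The work and approximation bounds then transfer with no change. The potential functions $\Phi$ and $\Psi$ and the entire charging argument of \cref{thm:work} count \emph{operations}, not representation details; in particular, the replacement desire-level routine performs $O(\level(v)-\dl(v))$ work walking the list from level $\level(v)-1$ down to $\dl(v)$, which is exactly the $(\level(v)-\dl(v))$ term already paid for by the potential drop of a downward move, so the amortized work remains $O(|\batch|\log^2 m)$. The depth bound is inherited directly from \cref{cor:space-efficient}: by \cref{lem:move-up} and \cref{lem:smaller-level} each of the $O(\log^2 m)$ levels is still processed once in sequence, but the sequential list walk now costs $O(\log^2 m)$ worst-case depth per level (a list of up to $O(\log^2 m)$ nodes cannot be summed by a parallel reduce), giving $O(\log^4 m)$ worst-case depth. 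Finally, since \cref{lem:decomp-bound} (via \cref{lem:core-num}) derives $\kest(v)$ solely from $\level(v)$ and the two invariants, which the new structure still maintains, choosing $\delta$ and $\lambda$ as in \cref{lem:decomp-bound} yields the $(2+\eps)$-approximation.

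The step I expect to be delicate is the work accounting for the sequential scan: I must confirm that the list walk touches only $O(\level(v)-\dl(v))$ nodes (so it is subsumed by the existing potential charge) rather than degrading to a cost proportional to the total number of neighbors encountered, and that the amortized resizing of the per-level dynamic arrays indeed contributes only $O(1)$ amortized work and $O(1)$ depth, as claimed in the space-efficient setting. Everything else is a direct appeal to \cref{cor:space-efficient}, \cref{thm:work}, and \cref{lem:decomp-bound}.
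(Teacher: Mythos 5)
Your proposal is correct and follows essentially the same route as the paper's own proof: the paper's $O(m)$-space structure is exactly a linked list with one node per \emph{non-empty} level below $\level(v)$ (each node storing a neighbor count and pointing to the neighbors at that level), and the paper likewise replaces the doubling-plus-binary-search of CalculateDesireLevel with a linear search over this list, giving $O(\log^2 m)$ depth per level and hence $O(\log^4 m)$ total depth, with the work bound and the $(2+\eps)$-approximation inherited unchanged from the potential argument (\cref{thm:work}) and the coreness lemmas (\cref{lem:core-num}, \cref{lem:decomp-bound}). The delicate points you flag resolve as you expect: the walk visits one $O(1)$-size node per non-empty level between $\dl(v)$ and $\level(v)$, so its cost is subsumed by the existing $O(\level(v)-\dl(v))$ potential charge, and the per-level dynamic arrays resize with $O(1)$ amortized work, exactly as asserted in the paper's space-efficient setting.
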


Using this data structure, we show a bound for the outdegree
if we orient all edges towards neighbors at higher
levels and orient edges from higher ID to lower ID in the same level.

Our data structure also provides an approximation of the densest subgraph within
the input graph by the Nash-Williams theorem.

\begin{corollary}\label{cor:densest-subgraph}
    The PLDS maintained by~\cref{alg:rebalance} provides an $(8 +
    \eps)$-approximation on the density of the densest subgraph within the
    input graph.
\end{corollary}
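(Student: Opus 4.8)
The plan is to relate the maximum out-degree of the orientation that the PLDS \emph{already} maintains to the density $\rho^{*} = \max_{S \subseteq V} |E(S)|/|S|$ of the densest subgraph, via the standard chain connecting subgraph density, degeneracy, and the quality of a low out-degree orientation. Concretely, I would report as the density estimate the maximum out-degree $d^{+}_{max}$ of the maintained orientation (a quantity the structure can track), and show it lies within an $(8+\eps)$ factor of $\rho^{*}$.

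First I would establish the lower bound $\rho^{*} \le d^{+}_{max}$ by an averaging argument. Let $S \subseteq V$ be any subgraph. In the acyclic orientation maintained by the PLDS, each of the $|E(S)|$ edges with both endpoints in $S$ is oriented, and each contributes one unit of out-degree to its tail, which lies in $S$. Hence $\sum_{v \in S} d^{+}(v) \ge |E(S)|$, so some vertex of $S$ has out-degree at least $|E(S)|/|S|$. Taking $S$ to be the densest subgraph yields $d^{+}_{max} \ge \rho^{*}$.

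Next I would sandwich $d^{+}_{max}$ from above. By \cref{thm:low-outdegree} the maintained orientation is a $(4+\eps)$-approximate low out-degree orientation, so \cref{def:loo} gives $d^{+}_{max} \le (4+\eps)\, d^{+}_{opt}$. Since $d^{+}_{opt}$ equals the degeneracy $d$ of $G$, and every subgraph $S$ has a vertex of degree at most its average degree $2|E(S)|/|S| \le 2\rho^{*}$ (so the degeneracy, being the maximum over subgraphs of the minimum degree, is at most $2\rho^{*}$), we obtain $d^{+}_{opt} \le 2\rho^{*}$. Chaining these bounds gives
\[
\rho^{*} \;\le\; d^{+}_{max} \;\le\; (4+\eps)\,d^{+}_{opt} \;\le\; (8+2\eps)\,\rho^{*},
\]
so $d^{+}_{max}$ estimates $\rho^{*}$ within a factor of $8+2\eps$. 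Rescaling the constants $\delta$ and $\lambda$ (exactly as in the passage from \cref{lem:core-num} to \cref{lem:decomp-bound}) replaces $8+2\eps$ by $8+\eps$ for any desired constant $\eps > 0$, which is the claimed bound.

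The averaging step and the degeneracy-versus-density inequality are both routine once the orientation guarantee of \cref{thm:low-outdegree} is in hand. The one thing to get right is the constant bookkeeping: the factor $8$ arises precisely as the product of the $(4+\eps)$ orientation-approximation factor and the factor $2$ loss between degeneracy and density, which is also the content of the Nash--Williams relationship ($\rho^{*} \le \alpha$ and $\alpha \le 2\rho^{*}$). I would therefore base the certificate on the $(4+\eps)$-approximate orientation rather than directly on $\max_{v} \kest(v)$, so that the stated $(8+\eps)$ factor comes out cleanly and the Nash--Williams connection is explicit.
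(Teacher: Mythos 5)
Your proposal is correct. The paper in fact offers no proof of this corollary at all: it is stated bare, justified only by the preceding one-line remark that the result follows ``by the Nash--Williams theorem'' from the maintained orientation, and it never even says what quantity serves as the density estimate. Your write-up is the natural completion of that sketch: you fix a concrete certificate (the maximum out-degree $d^{+}_{max}$ of the maintained acyclic orientation), lower-bound it by $\rho^{*}$ with the averaging argument (every edge of a subgraph $S$ charges its tail inside $S$), and upper-bound it by chaining \cref{thm:low-outdegree} with the degeneracy--density inequality, giving $\rho^{*} \le d^{+}_{max} \le (4+\eps)\,d \le (8+2\eps)\,\rho^{*}$, then rescale constants. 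All steps check out, including the use of the paper's own footnote identifying $d^{+}_{opt}$ with the degeneracy $d$.

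One aspect of your bookkeeping is worth emphasizing because it is not just careful but necessary: routing the upper bound through the degeneracy ($d^{+}_{max} \le (4+\eps)d$ and $d \le 2\rho^{*}$) is what produces the constant $8$. Had you instead chained the paper's arboricity-phrased guarantee (\cref{cor:arboricity-orientation}, an $(8+\eps)\alpha$ out-degree orientation) with the Nash--Williams bound $\alpha \le 2\rho^{*} + O(1)$, you would only obtain a factor of roughly $16$. So the ``Nash--Williams'' phrasing in the paper is best read as shorthand for exactly the degeneracy-based chain you wrote down, and your proof is the one that actually delivers the stated $(8+\eps)$ bound.
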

\fi

For arbitrary batch sizes, getting better than a 2-approximation for coreness values is P-complete~\cite{anderson84pcomplete}, and so there is unlikely to exist a polylogarithmic-depth algorithm with such guarantees.

\begin{proof}[Proof of~\cref{thm:low-outdegree}]
The approximation factor for our algorithm is given by~\cref{lem:decomp-bound}. The work and
depth bounds of our algorithm is given by~\cref{thm:work} and~\cref{cor:random-depth}. Altogether, we prove our main
theorem.
\end{proof}

\begin{corollary}\label{cor:densest-subgraph}
    Our algorithm maintains a $(4+\eps)$-approximation of the densest
    subgraph value with the same bounds as~\cref{thm:batch-dynamic}.
\end{corollary}

\begin{proof}
    The degeneracy $d$ of our input graph is equal to the maximum
    core number of any node in the graph. It is a well-known fact
    that $\frac{d}{2} \leq \alpha \leq d$ where $\alpha$ is the 
    arboricity of the graph. By the Nash-Williams theorem, the arboricity
    of the input graph $G = (V, E)$ is equal to $\ceil{\max_{S \subseteq G} \frac{|E(S)|}{|V(S)| - 1}}$. Hence, returning the maximum approximate core number returned by our algorithm gives a
    $(4+\eps)$-approximation on the densest subgraph value. 
\end{proof}

\subsection{$O(\alpha)$ Out-Degree Orientation}\label{sec:orient}

We orient all edges from vertices in lower 
levels to higher levels, breaking ties for 
vertices on the same level by using their
indices. Such an orientation can be maintained dynamically
in the same work and depth as our \plds via
a parallel hash table keyed by the edges and where the
values give the orientation. 
Specifically, we require the following data structures for maintaining a low out-degree
orientation. First, we maintain a parallel hash table, $H$,
containing the edges of the graph. 
The edge $(u, v)$ is the key in the hash table
where $u \pred v$ (i.e.\ the index of $u$ is less than the index of $v$). 
The value for key $(u, v)$ is $0$ if the edge is oriented from $u$ to
$v$ and $1$ if the edge is oriented from $v$ to $u$. The pseudocode
is shown in~\cref{alg:low-outdegree}. Additionally, we make a slight
modification to our update algorithm that keeps track of the edges that
were searched when a vertex moves to a higher or lower level.
The pseudocode for our algorithm is given in~\cref{alg:low-outdegree}.

\begin{algorithm}[!t]\caption{$\lowoutdegorient(\batch)$}
    \label{alg:low-outdegree}
    \small
    \begin{algorithmic}[1]
    \Require{A batch $\batch$ of updates.}
    \Ensure{A set of edges $F$ that were flipped after processing the 
    batch of updates. An edge $(u, v) \in F$ represents the orientation
    of the edge \emph{before} the flip. Also returns oriented updates $(u, v) \in \batch$ where for edge deletions $(u, v)$ is the orientation of the edge \emph{before} the deletion and for edge insertions $(u, v)$ is the orientation of the edge \emph{after} the insertion.}
    \State $F \leftarrow \emptyset$.
    \ParFor{each searched edge $(u, v)$ for a vertex that moved levels}
        \If{$H[(u, v)] = 0$ and ($(\level(u) > \level(v)$ or ($\level(u) = \level(v)$ and $v < u$))}
            \State $F \leftarrow F \cup (u, v)$.
        \ElsIf{$H[(u, v)] = 1$ and ($\level(v) > \level(u)$ or ($\level(u) = \level(v)$ and $u < v$))}
            \State $F \leftarrow F \cup (u, v)$.
        \EndIf
    \EndParFor
    \State $J \leftarrow \emptyset$.
    \ParFor{each edge update $\{u, v\} \in \batch$}
        \If{$\{u, v\}$ is an insertion}
            \State Add to $J$ the orientation of
            edge \emph{after} processing $\batch$.
        \Else
            \State Add to $J$ the orientation
            of edge \emph{before} processing $\batch$.
        \EndIf
    \EndParFor
    \Return $F, J$.
    \end{algorithmic}
\end{algorithm}

\begin{proof}[Proof of~\cref{cor:arboricity-orientation}]
    Let the degeneracy of the graph be $d$. As is well-known, the degeneracy of
    the graph is equal to $k_{max}$ where $k_{max}$ is the maximum \kc of the
    graph.  Furthermore, it is well-known that $\frac{d}{2} \leq \alpha \leq d$.
    By~\cref{lem:decomp-bound}, the vertices in the largest \kc in the graph are
    in a level with group number at most $\log_{(1+\delta)}((2+3/\lambda)(1 + \delta)d) +
    1$. This means that the \hd of each vertex in that group is at most
    $(2+3/\lambda)(1+\delta)^{\log_{(1+\delta)}((2+3/\lambda)(1 + \delta)d)} = (4 + \eps)d$
    for any constant $\eps > 0$ for appropriate settings of $\lambda, \delta > 0$.
    We then also obtain an $(8+\eps)\alpha$ out-degree orientation
    where $\alpha$ is the arboricity of the graph.
\end{proof}

\subsection{Deterministic and Space-Efficient Data Structures}\label{app:data-structure-impl}

In addition to the randomized data structures presented in Section 3.4,
we present two additional sets of data structures that we can use to
obtain a \emph{deterministic} and a \emph{space-efficient} $(2 +
\eps)$-approximate \kc algorithms.

The work of all of our randomized, deterministic, and space-efficient
algorithms are the same; however, using randomization allows us to obtain a
better depth with slightly less complicated data structures.

\paragraph{Deterministic Data Structures}
We initialize an array $\greater$, of size $n$.
Each vertex is assigned a unique index in $\greater$. The entry for the
$i$'th vertex, $\greater[i]$, contains a pointer to a dynamic array
that stores the neighbors of vertex $v_i$ at levels $\geq \level(v_i)$.
Each vertex $v_i$ also stores another dynamic array,
$\tbl_{v_i}$, that contains pointers to a set of dynamic arrays storing
the neighbors of $v_i$ partitioned by their levels $j$ where $j < \level(v_i)$.
Specifically, we maintain
a separate dynamic array for each
level from level $0$ to level $\level(v_i) - 1$ storing the neighbors of
$v_i$ at each respective level.
We also maintain the current level of each vertex in an array.

To perform a batch of insertions into a dynamic array, we insert the elements at
the end of the array. The array is resized and doubles in size if too many
elements are inserted into the array (and it exceeds its current size). For a batch
of deletions, the deletions are initially marked with a ``deleted'' marker
indicating that the element in the slot has been deleted. A counter is used to
maintain how many slots contain ``deleted.'' Then, once a constant fraction of
elements (e.g.\ $1/2$) has ``deleted'' marked in their slots, the array is cleaned
up by reassigning vertices to new slots and resizing the array.

\begin{lemma}\label{lem:depth}
    \cref{alg:rebalance} returns a deterministic level data structure that maintains
    \cref{inv:degree-1} and~\cref{inv:degree-2} and has $O(\log^3 n)$ worst-case
    depth and $O(n\log^2 n + m)$ space. %
\end{lemma}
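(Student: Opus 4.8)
The plan is to mirror the structure of the proof of \cref{cor:random-depth}, factoring the total depth into (number of sequentially processed levels) $\times$ (depth per level), and then re-deriving only the per-level depth for the dynamic-array representation in place of the parallel hash tables. First I would invoke \cref{lem:move-up} and \cref{lem:smaller-level} to conclude that the outer loops of \cref{alg:insert} (\cref{line:outer-loop}) and \cref{alg:delete} (\cref{line:del-outer-loop}) each touch a given level exactly once, so both subroutines march through the $K = O(\log^2 n)$ levels in strictly increasing order; this caps the number of sequential iterations at $O(\log^2 n)$. The correctness claim—that the returned structure maintains \cref{inv:degree-1} and \cref{inv:degree-2}—is inherited verbatim, since it is purely about vertex movements and is independent of how neighbor lists are stored.

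The crux is bounding the depth of a single level-iteration when $\greater[v]$ and the $\tbl_v[\cdot]$ lists are dynamic arrays rather than hash tables. Reading or writing a single slot and reading an array's current size each cost $O(1)$ depth, and a resize can be charged $O(1)$ amortized depth, as described for the deterministic data structures. The one operation that is strictly more expensive than in the randomized setting is applying a \emph{batch} of insertions/deletions to a single neighbor list: when many vertices move out of or into the current level at once, up to $\Theta(n)$ of them may need to be appended to (or compacted out of) the same dynamic array, and assigning their destination slots in parallel requires a prefix-sum, costing $O(\log n)$ depth. This $O(\log n)$ term replaces the $O(\log^* n)$ hash-table term of \cref{cor:random-depth} and is exactly the source of the extra logarithmic factor.

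I would then verify that \cref{alg:desire-level} is not the bottleneck. Its doubling search and final binary search run for $O(\log K) = O(\log\log n)$ iterations, and each iteration sums array \emph{sizes} over a contiguous band of at most $K$ levels via a reduce-add, costing $O(\log\log n)$ depth, so the subroutine contributes only $O(\log^2\log n)$ depth, which is dominated by the $O(\log n)$ prefix-sum term. Multiplying the resulting $O(\log n)$ per-level depth by the $O(\log^2 n)$ sequential level-iterations yields the claimed $O(\log^3 n)$ worst-case depth. For space, each vertex stores one dynamic array $\greater[v]$ and an index of up to $O(\log^2 n)$ dynamic arrays $\tbl_v[\cdot]$, one per level below it, giving $O(n\log^2 n)$ for the index structure; the elements stored across all arrays total $O(m)$ because each edge contributes a constant number of entries, for $O(n\log^2 n + m)$ overall.

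I expect the main obstacle to be the per-level depth argument rather than the level count or the space accounting. Specifically, one must argue carefully that all concurrent modifications triggered by a single level's moves—deletions of the moving vertices from their neighbors' $\greater[\cdot]$ and $\tbl_\cdot[\cdot]$ lists, the corresponding insertions into destination lists, and the induced resizes—can be grouped so that each distinct dynamic array receives a \emph{single} batched update resolved by one prefix-sum and at most one resize, rather than a chain of data-dependent updates. It is precisely this batching that keeps the per-level depth at $O(\log n)$; without it the dependencies among updates to a shared list could inflate the depth, so making this grouping explicit (and checking it is consistent with the neighbor-list bookkeeping in \cref{alg:insert} and \cref{alg:delete}) is where the care is needed.
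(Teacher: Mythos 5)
Your proposal is correct and follows essentially the same route as the paper's proof: it invokes \cref{lem:move-up} and \cref{lem:smaller-level} to bound the number of sequentially processed levels by $K = O(\log^2 n)$, attributes the extra logarithmic factor (relative to \cref{cor:random-depth}) to the $O(\log n)$-depth offset/prefix-sum computation and resizing needed to apply batched updates to the dynamic arrays, checks that \cref{alg:desire-level} contributes only $O(\operatorname{poly}\log\log n)$ depth per level, and accounts for space exactly as the paper does. The batching concern you flag at the end is precisely what the paper's phrase ``compute the element offsets and then resize the arrays'' implicitly assumes, so your argument matches the intended one.
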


\begin{proof}
    All edge updates can be partitioned into $\batch_{ins}$ and $\batch_{del}$
    in parallel in $O(\log n)$ depth. Then, it remains to bound the depth
    of~\cref{alg:insert} and~\cref{alg:delete}.

    \cref{alg:insert} iterates through all $K = O(\log^2 n)$ levels
    sequentially. By~\cref{lem:move-up}, no vertices on levels $\leq i$ will
    violate~\cref{inv:degree-1} after processing level $i$. Thus, by the end of
    the procedure no vertices violate~\cref{inv:degree-1}.
    By~\cref{lem:batch-insertions-inv},~\cref{inv:degree-2} was never violated
    during~\cref{alg:insert}. Thus, both invariants are maintained at the end of
    the algorithm. Since we iterate through $O(\log^2 n)$ levels and, in each
    level, we require checking the neighbors at one additional level which can
    be done in parallel in $O(1)$ depth, the total depth of this procedure is
    $O(\log^2 n)$.
    For each level, an additional depth of $O(\log n)$ might be
    necessary to compute the element offsets and then resize the arrays.
    Then, \cref{alg:insert} requires $O(\log^3 n)$ worst-case depth.

    \cref{alg:delete} iterates through all $K = O(\log^2 n)$ levels
    sequentially. By~\cref{lem:batch-deletes} and~\cref{lem:smaller-level},
    after processing level $i$, no vertices on a level higher than $i + 1$
    will have $\dl(v) \leq i+1$ and no vertices on levels $\leq i$ will
    violate~\cref{inv:degree-1}. Thus, by the
    end of the procedure all vertices satisfy~\cref{inv:degree-2}.
    Furthermore,~\cref{inv:degree-1} was never violated
    due to~\cref{lem:batch-deletes}. There are $O(\log^2 n)$ levels and for each
    level we require running~\cref{alg:desire-level} to obtain the $\dl(v)$ of
    each affected vertex $v$ that should be moved to each level.

    Running~\cref{alg:desire-level} requires $O\left(\log \log n\right)$ depth
    to obtain the first level that satisfies invariants
    for each affected vertex $v$ and $O\left(\log \log
    n \right)$ depth for the final binary search that determines the closest
    level to $\level(v)$ that satisfies the invariants.
    In conclusion, ~\cref{alg:delete} requires $O(\log^3 n)$ worst-case depth.

    Altogether,~\cref{alg:rebalance} requires $O(\log^3 n)$ worst-case depth.
\end{proof}

\paragraph{$O(m)$ Total Space Data Structures}

Here we describe how to reduce the total space usage of our data structures to $O(m)$.
All of our previous data structures use
$O(n\log^2 n + m)$ space, which means that when $m = O(n)$, we use space
that is superlinear in the size of the graph. To reduce the total space to $O(m)$,
we maintain two structures for $\tbl_{v_i}$. We can use either
the deterministic or randomized structures for the other structures. Each
$\tbl_{v_i}$ is maintained as a linked list.
The $j$'th node in the linked list
maintains the number of neighbors of $v_i$ at
the $j$'th non-empty level (a non-empty level is one where $v_i$ has neighbors at
that level) that is less than $\level(v_i)$. The node representing a level is
removed from the linked list when the level becomes empty.
Each node in $\tbl_{v_i}$ contains pointers to vertices at the level represented
by the node.
Each vertex then contains pointers to every edge it is
adjacent to and every edge contains pointers to the two nodes in the two linked
lists representing the levels on which the endpoints of the edge reside. Using
either dynamic arrays or hash tables for the lists of neighbors allow us to
maintain these data structures in $O(m)$ space. Since we only maintain a node in
our linked list for every non-empty level, our linked list contains $O(m)$ nodes.

Using the data structures above, we can prove equivalent results
to~\cref{thm:batch-dynamic}.

\begin{lemma}\label{cor:space-efficient}
    \cref{alg:rebalance} returns a deterministic level data
    structure that maintains~\cref{inv:degree-1} and~\cref{inv:degree-2} and has
    $O(\log^4 m)$ depth, while using $O(m)$ space.
\end{lemma}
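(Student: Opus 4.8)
The plan is to follow the proof of~\cref{lem:depth} almost verbatim, since the correctness of both invariants is insensitive to which concrete data structures realize $\greater$ and $\tbl_{v_i}$: \cref{alg:insert} and~\cref{alg:delete} are run exactly as before, so \cref{lem:move-up} and~\cref{lem:smaller-level} still guarantee that each of the $K = O(\log^2 m)$ levels is processed exactly once and in sequence, and \cref{lem:batch-insertions-inv} and~\cref{lem:batch-deletes} still certify that~\cref{inv:degree-1} and~\cref{inv:degree-2} hold at termination. The only two things that genuinely change relative to~\cref{lem:depth} are (i) how~\cref{alg:desire-level} is executed on the linked-list representation of $\tbl_{v_i}$, and (ii) the total space accounting. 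So the proof reduces to re-analyzing these two quantities and then composing them with the unchanged per-level structure of the algorithm.

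For the depth bound, I would first recall that the outer loops of~\cref{alg:insert} (\cref{line:outer-loop}) and~\cref{alg:delete} (\cref{line:del-outer-loop}) iterate through $O(\log^2 m)$ levels sequentially. Within each level, the only super-logarithmic contribution is the desire-level computation. With linked lists we can no longer perform the doubling search and binary search of~\cref{alg:desire-level} in $O(\log\log n)$ depth, because the list does not support random access or parallel prefix-sums without list ranking. Instead I would replace that search with a single sequential scan of $\tbl_{v_i}$ from the node nearest $\level(v)-1$ downward, maintaining a running total of the accumulated degree $\down(v)$ and stopping at the highest level where~\cref{inv:degree-2} first becomes satisfiable; this is exactly the ``$O(\log^2 n)$ linear in number of levels search'' promised in the earlier space-efficient corollary. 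Since the list has at most one node per distinct non-empty level below $\level(v_i)$, it has at most $K = O(\log^2 m)$ nodes, so the scan costs $O(\log^2 m)$ depth per level. All remaining per-level operations (linked-list insertions/deletions given pointers, and dynamic-array resizes) are $O(1)$ amortized work and $O(1)$ depth, hence lower order. Multiplying the $O(\log^2 m)$ levels by the $O(\log^2 m)$ per-level desire-level scan yields the claimed $O(\log^4 m)$ worst-case depth.

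For the space bound, the key observation is that the linked list for $v_i$ stores a node \emph{only} for each non-empty level strictly below $\level(v_i)$, and the node is deleted as soon as that level empties. Each such node is witnessed by at least one edge incident to $v_i$ whose other endpoint sits at that level, so summing over all vertices the number of linked-list nodes is $O(m)$; the per-edge and per-vertex pointers described in the data-structure paragraph (edges pointing to the two level-nodes of their endpoints, vertices pointing to incident edges) also total $O(m)$. Thus the entire structure fits in $O(m)$ space. The main obstacle I anticipate is making the modified~\cref{alg:desire-level} provably correct on the sparse linked list: because empty levels are omitted, the scan must correctly track the group number $\gn(\cdot)$ as it passes between consecutive stored nodes so that it tests the right threshold $(1+\delta)^{\gn(\cdot)}$ at each candidate level, and it must still return the level \emph{closest} to $\level(v)$ satisfying both invariants (not merely the first feasible non-empty node). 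I would handle this by noting that $\down(v)$ is monotone as the level decreases and constant across the gap between two stored nodes, so the feasibility threshold can only be crossed at a group boundary, which the scan can detect using the stored per-node level index; verifying this bookkeeping is the one place requiring care, while the rest of the argument is inherited directly from~\cref{lem:depth}.
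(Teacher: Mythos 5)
Your proposal is correct and follows essentially the same route as the paper's proof: inherit the structure of the proof of~\cref{lem:depth}, replace~\cref{alg:desire-level} with a linear scan over the per-vertex linked list of at most $O(\log^2 m)$ nodes (giving $O(\log^2 m)$ levels $\times$ $O(\log^2 m)$ per-level search $= O(\log^4 m)$ depth), and charge each linked-list node to a witnessing edge to get $O(m)$ space. Your additional care about tracking group numbers across gaps of empty levels in the sparse list is a detail the paper's proof glosses over, but it does not change the argument.
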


\begin{proof}
    The proof is the same as the proof of~\cref{lem:depth} except that we
    replace~\cref{alg:desire-level} with a linear search in the linked list, which has size at most the number of levels, which is $O(\log^2 n)$. The specific data structure we use for each vertex $v$ is a linked list
    with each node of the linked list representing a level $\leq \ell(v) - 1$
    which contains one or
    more neighbors of $v$. Then, each node in the linked list contains a pointer to
    a dynamic array containing the neighbors in that level.
    Thus, the total depth is $O(\log^4 n)$.
\end{proof}

\subsubsection{Overall Work and Depth Bounds}\label{sec:overall-work-bounds}
Our deterministic and space-efficient structures also give the following
corollary using our above depth bound arguments.

\fi

Using~\cref{lem:depth} %
and~\cref{cor:space-efficient}, %
we obtain the following two corollaries.

\begin{corollary}\label{cor:deterministic-result}
    For a batch of $\batchsize <m$ updates, \cref{alg:rebalance} returns a
    PLDS that maintains~\cref{inv:degree-1} and~\cref{inv:degree-2}
    in $O(\batchsize\log^2n)$ amortized work and $O(\log^3 n)$
    depth, using $O(n\log^2 n + m)$ space.
\end{corollary}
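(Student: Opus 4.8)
The plan is to observe that this corollary is a direct consequence of two results proved earlier: the potential-based work bound of~\cref{thm:work} and the deterministic depth/space bound of~\cref{lem:depth}. The only real content is checking that these two bounds, which were established somewhat separately, are compatible when we instantiate the PLDS with the \emph{deterministic} data structures (dynamic arrays) rather than the randomized hash tables, and that doing so changes only the depth (not the work or space).

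First I would invoke~\cref{thm:work} to obtain the $O(\batchsize\log^2 n)$ amortized work bound. The key point to emphasize is that the potential argument of~\cref{sec:work} is entirely \emph{agnostic} to the concrete data structures used: it charges the cost of every vertex level-move and every edge flip against the decrease in the potential $\Pi$, and bounds the total increase in $\Pi$ by $O(\log^2 n)$ per edge update (since each insertion raises $\Psi$ by at most $2K+1 = O(\log^2 n)$ and each deletion raises $\Phi$ by at most $O(K) = O(\log^2 n)$). Because the number of elementary operations (neighbor-list insertions, deletions, and the reduce/binary-search steps inside \cref{alg:desire-level}) is what the potential pays for, and these counts do not depend on whether neighbor lists are stored in hash tables or dynamic arrays, the amortized work remains $O(\batchsize\log^2 n)$. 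Here I would note that dynamic-array resizing contributes only $O(1)$ amortized work per element inserted or deleted, so it is absorbed into the same charging scheme without changing the asymptotics.

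Next I would invoke~\cref{lem:depth} directly for the depth and space. That lemma already establishes that \cref{alg:rebalance}, run with the deterministic array-based structures, maintains both~\cref{inv:degree-1} and~\cref{inv:degree-2}, achieves $O(\log^3 n)$ worst-case depth, and uses $O(n\log^2 n + m)$ space. The extra logarithmic factor over the randomized $O(\log^2 n \log\log n)$ bound comes precisely from the $O(\log n)$ depth needed per level to compute prefix offsets and resize the arrays deterministically, applied across the $O(\log^2 n)$ sequentially processed levels (the sequential processing itself being justified by~\cref{lem:move-up} and~\cref{lem:smaller-level}). Combining the work bound from the first step with the depth and space bounds from~\cref{lem:depth} yields the corollary.

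The step I would watch most carefully is the compatibility check: I want to be sure that replacing hash tables by dynamic arrays does not secretly increase the \emph{work} (only the depth). This is the main, if modest, obstacle, and it is resolved by the standard amortized analysis of array doubling/halving—each resize copies a number of elements proportional to the elements inserted or deleted since the previous resize, so the amortized per-operation work is $O(1)$ and the global $O(\batchsize\log^2 n)$ work charge is preserved. With that in hand, the corollary follows by simply quoting~\cref{thm:work} and~\cref{lem:depth} together.
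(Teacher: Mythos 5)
Your proposal is correct and follows essentially the same route as the paper: the paper obtains this corollary by combining the potential-based amortized work bound (which is independent of the choice of neighbor-list representation) with Lemma~\ref{lem:depth}, which already establishes the $O(\log^3 n)$ worst-case depth, invariant maintenance, and $O(n\log^2 n + m)$ space for the deterministic array-based structures. Your extra compatibility check—that array doubling/halving costs only $O(1)$ amortized work per operation and so cannot inflate the work bound—is exactly the observation the paper makes in its discussion of the deterministic setting, so nothing is missing.
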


\begin{corollary}\label{cor:space-efficient-result}
    For a batch of $\batchsize <m$ updates, \cref{alg:rebalance} returns a
    PLDS that maintains~\cref{inv:degree-1} and~\cref{inv:degree-2}
    in $O(\batchsize\log^2n)$ amortized work and $O(\log^4 m)$
    depth, using $O(m)$ space.
\end{corollary}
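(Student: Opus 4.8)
The plan is to assemble this corollary from two ingredients that are already in hand: the potential-based amortized work bound of~\cref{thm:work}, and the depth and space bounds of the space-efficient (linked-list) data structures in~\cref{cor:space-efficient}. Indeed, the depth bound $O(\log^4 m)$ and the space bound $O(m)$ claimed here are \emph{exactly} those of~\cref{cor:space-efficient}, so the only thing that must be supplied on top of that lemma is the work bound. The crucial structural observation making this clean is that the amortized work analysis is essentially independent of which concrete neighbor-list representation we use: the potential $\Pi = \sum_{v}\Phi(v) + \sum_{e}\Psi(e)$ of~\cref{eq:phi} and~\cref{eq:psi} is defined purely in terms of vertex levels and induced degrees, so the amount of potential released by each edge insertion/deletion, and consumed by each vertex movement, is identical across the randomized, deterministic, and space-efficient implementations.

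First I would invoke~\cref{thm:work} to carry over the $O(|\batch|\log^2 n)$ amortized work. The one place where the implementation genuinely enters the work argument is~\cref{alg:desire-level}: in the space-efficient setting its doubling-plus-binary search is replaced by a single linear scan down the linked list of non-empty levels below $\ell(v)$, as described in~\cref{cor:space-efficient}. I would verify that this replacement preserves the charging argument. Each node the scan visits corresponds to a non-empty level and hence to at least one neighbor of $v$ residing in a level of the range $[\dl(v), \ell(v)-1]$, so the number of nodes traversed is $O(|Z_{\dl(v)} \setminus Z_{\ell(v)}|)$ --- precisely the quantity that the potential-decrease computation of~\cref{thm:work} already pays for when $v$ moves from $\ell(v)$ down to $\dl(v)$. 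Thus the linear-scan variant is charged to the same potential drop, and the per-update amortized work stays $O(\log^2 n)$, i.e.\ $O(|\batch|\log^2 n)$ over the batch. I would also note that the incidental bookkeeping of the linked lists and dynamic arrays (inserting or deleting a level node as a level becomes non-empty or empty, and resizing) costs $O(1)$ amortized each and is triggered only when an edge endpoint changes level, so these costs are likewise absorbed into the potential; moreover, because all of these operations are deterministic, the work bound here holds deterministically rather than merely \whp.

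The depth and space then follow by directly citing~\cref{cor:space-efficient}: the outer loops of~\cref{alg:insert} and~\cref{alg:delete} run through $O(\log^2 n)$ levels sequentially (each level is processed once, by~\cref{lem:move-up} and~\cref{lem:smaller-level}), within each level the dominant cost is the $O(\log^2 n)$-length linear search of~\cref{alg:desire-level} plus $O(1)$-depth list operations, yielding $O(\log^4 m)$ depth, and a node is stored only per non-empty level per vertex with $O(1)$ pointers per edge, yielding $O(m)$ space. The main (and only) obstacle is the first step: confirming that re-implementing~\cref{alg:desire-level} as a linear scan does not inflate the asymptotic work, since~\cref{thm:work} was phrased with the doubling/binary-search implementation in mind. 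Once the per-node-to-per-neighbor correspondence above is checked, the remainder is routine bookkeeping combining~\cref{thm:work} and~\cref{cor:space-efficient}.
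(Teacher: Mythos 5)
Your proposal is correct and follows essentially the same route as the paper, which obtains this corollary by combining the potential-based amortized work bound of~\cref{thm:work} with the depth and space bounds of the space-efficient linked-list structures in~\cref{cor:space-efficient}. Your additional verification that the linear scan replacing~\cref{alg:desire-level} is still paid for by the same potential drop (each traversed node being a non-empty level, hence witnessing at least one neighbor of $v$ in $[\dl(v), \ell(v)-1]$) is a detail the paper only asserts (``the work of all of our randomized, deterministic, and space-efficient algorithms are the same''), and your charging argument for it is sound.
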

\subsection{Handling Vertex Insertions and Deletions}\label{sec:vertex-ins-del}
We can handle vertex insertions and deletions by inserting vertices that have
zero degree and considering deletions of vertices to be a batch of edge
deletions of all edges adjacent to the deleted vertex. When we insert a vertex
with zero degree, it automatically gets added to level $0$ and remains in level
$0$ until edges incident to the vertex are inserted. For a vertex deletion, we
add all edges incident to the deleted vertex to a batch of edge deletions. Note,
first, that all vertices which have $0$ degree will remain in level $0$. Thus,
there are at most $O(m)$ vertices which have non-zero degree.

In this setting, we may need to rebuild the PLDS from scratch. 
Instead of maintaining $\ceil{4\log^2 n}$ levels, we maintain $\ceil{8\log^2 n}$ levels in this setting.
Doubling the number of levels is a very loose bound to ensure that we can handle two times the number
of vertices in the graph before we perform a rebuild of our entire structure.
To maintain $O(\log^2 n)$ levels in our data structure,
we rebuild the data structure once we have made $n/2$ vertex updates. 
Rebuilding the data structure requires
$O(n\log^2 n)$ total work which we can amortize to the $n/2$ vertex updates to obtain
$O(\log^2 n)$ amortized work \whp Running~\cref{alg:insert} and~\cref{alg:delete} on
the entire set of $O(n + m)$ vertices and edges requires $O(\poly\log n)$ depth \whp{}
depending on the specific set of data structures we use.

Lastly, in order to obtain a set of vertices which are re-numbered consecutively
(in order to maintain our space bounds), we perform parallel integer sort or
hashing.

\section{Experimental Evaluation}\label{sec:eval}
In this section, we compare the performance of our dynamic \plds
with existing approaches on a set of large real-world graphs.
Our results show that our algorithms consistently achieve speedups, by up to two
orders of magnitude, compared with all of the previous state-of-the-art dynamic
$k$-core decomposition algorithms.

\myparagraph{Evaluated Algorithms}
We evaluate two versions of our algorithm: $\textbf{\emph{PLDS}}$: an exact implementation of our
theoretical algorithm and $\textbf{\emph{PLDSOpt}}$: a
version with $\ceil{\log_{1 + \delta} n/50}$ levels per
group. $\textbf{\emph{PLDS}}$ maintains the
approximation guarantees given by~\cref{lem:decomp-bound}, while
$\textbf{\emph{PLDSOpt}}$ achieves better performance while
maintaining slightly worse approximation bounds.

We compare our algorithms with the following \emph{dynamic}
implementations:
\textbf{\emph{Sun}}: the sequential, approximate algorithm of Sun
\etal~\cite{SCS20}, specifically their faster, round-indexing algorithm,
which is publicly available~\cite{suncode};
\textbf{\emph{Hua}}: the parallel, exact algorithm of Hua
\etal~\cite{Hua2020}, kindly provided by the authors;
\textbf{\emph{Zhang}}: the sequential, exact algorithm of Zhang
and Yu~\cite{Zhang2019a}, kindly provided by the authors;
and \textbf{\emph{LDS}}: our implementation of the sequential, approximate
algorithm of Henzinger \etal~\cite{HNW20}, but using our coreness approximation
procedure in~\cref{sec:coreness}.
All are state-of-the-art algorithms, outperforming previous algorithms in their
respective categories.

We also implemented \textbf{\emph{ApproxKCore}},
our new static parallel approximate
$k$-core decomposition algorithm (\cref{thm:static}).
We compared it with \textbf{\emph{ExactKCore}}, the
state-of-the-art parallel, static, exact \kc algorithm of Dhulipala
\etal~\cite{dhulipala2017julienne}.

\myparagraph{Setup}
We use \texttt{c2-standard-60} Google Cloud
instances (3.1
GHz Intel Xeon Cascade Lake CPUs with a total of 30 cores with two-way hyper-threading, and 236 GiB RAM)
and \texttt{m1-megamem-96} Google Cloud instances (2.0 GHz Intel Xeon Skylake CPUs with a total of 48
cores with two-way hyper-threading, and 1433.6 GB
RAM). We use hyper-threading in our parallel experiments by default.
Our programs are written in C++, use a work-stealing scheduler~\cite{BlAnDh20}, and
are compiled using \texttt{g++} (version 7.5.0) with the \texttt{-O3}
flag.  We terminate experiments that take over 3 hours.
PLDS and PLDSOpt finished within 3 hours for all experiments. 

\myparagraph{Datasets} We test our algorithms on $11$ real-world
undirected graphs from SNAP~\cite{leskovec2014snap}, the DIMACS Shortest Paths
challenge road networks~\cite{dimacs}, and the Network
Repository~\cite{nr}, namely
\defn{dblp}, \defn{brain}, \defn{wiki},
\defn{orkut}, \defn{friendster}, \defn{stackoverflow},
\defn{usa}, \defn{ctr},
\defn{youtube}, and \defn{livejournal}.  We also used \defn{twitter}, a
symmetrized version of the Twitter network~\cite{kwak2010twitter}.
We remove duplicate edges, zero-degree vertices, and self-loops. \cref{table:sizes}
reflects the graph sizes \emph{after} this removal, and gives the largest $k$-core values. 
Both \emph{stackoverflow} and \emph{wiki} are temporal networks; for these, we maintain the
edge insertions and deletions in the temporal order
from SNAP. \emph{usa} and \emph{ctr} are two high-diameter road
networks and \emph{brain} is a highly dense human brain network
from NeuroData (\hyperlink{https://neurodata.io/}{https://neurodata.io/}).
All experiments are run on the \texttt{c2-standard-60} instances,
except for \textit{twitter} and \textit{friendster}, which
are run on the \texttt{m1-megamem-96} instances as they require more memory.

\begin{table}[t]
\caption{Graph sizes and largest values of $k$ for $k$-core decomposition.} \label{table:sizes}
\begin{center}
\footnotesize
\begin{tabular}[!t]{l|r|r|r}
\toprule
{Graph Dataset} & Num. Vertices & Num. Edges & Largest value of $k$\\
\midrule
{\emph{ dblp  }  }           & 317,080          &1,049,866   & 113\\
{\emph{ brain        }  }        & 784,262          &267,844,669  & 1200\\
{\emph{ wiki  }  } & 1,094,018        &2,787,967  & 124 \\
{\emph{ youtube }  }         & 1,138,499        &2,990,443 & 51 \\
{\emph{ stackoverflow }  }    & 2,584,164        &28,183,518 & 198 \\
{\emph{ livejournal  }  }    		 & 4,846,609  &42,851,237 & 372 \\
{\emph{ orkut        }  }    & 3,072,441        &117,185,083 & 253 \\
{\emph{ ctr     }  }     & 14,081,816       &16,933,413 & 3 \\
{\emph{ usa        }  }     & 23,947,347       &28,854,312  & 3\\
{\emph{ twitter      }  }    	 & 41,652,230       &1,202,513,046 & 2488\\
{\emph{ friendster}}         & 65,608,366       &1,806,067,135 & 304\\
\end{tabular}
\end{center}
\vspace{-1em}
\end{table}

\myparagraph{Ins/Del/Mix Experiments}
Our experiments are run for \emph{three different types of batched updates},
referred to by:
(1) \expins: starting with an empty graph,
\emph{all} edges are inserted in multiple size $|\batch|$ batches of insertion updates,
(2) \expdel: starting with the original graph, \emph{all} edges are deleted
in multiple size $|\batch|$ batches of deletion updates, 
and (3) \mix: starting with the initial
graph minus a random set $I$ of $|\batch|/2$ edges, a set $D$ of $|\batch|/2$ random edges is chosen
among the edges in the graph; then,
a single size $|\batch|$ mixed batch of updates with insertions $I$ and deletions $D$
is applied.
For the temporal graphs, \emph{stackoverflow} and \emph{wiki},
the order of updates in the batches follows the order
in SNAP~\cite{leskovec2014snap}. For the rest, 
updates are generated by taking two
random permutations of the edge list, one for \expins and one for \expdel.
Batches are generated by taking regular intervals of the
permuted lists.
For \mix, $I$ and $D$ are chosen uniformly at random.

Some past works only ran experiments in the \mix
setting~\cite{Hua2020,Zhang2019a}, while others~\cite{SCS20} also consider
\expins and \expdel. In this paper, we run experiments in all three
settings.
For \expins and \expdel, we consider the average running time across
all batches as a good indicator of how well the algorithm performs.
For \mix, we test each algorithm and dataset $3$ rounds each and take
the average.

We use the original timing functions provided by \hua, \sun, \zhang, and \ekcore.
We use the original code of \hua and \zhang for \mix and modify their code to
perform \expins and \expdel.
We note that \hua's timing function does not include the time to
process the graph and maintain their data structures; we include all
such times in our code. All other benchmarks also include this time.
If we include this time in their implementation, their running times
increase by  up to $8\times$  for some experiments. This explains some
of \hua's experimental performance improvements over the other
benchmarks.

The static algorithms, \ekcore and \akcore, are re-run on the entire
graph after each batch of updates in \expins and \expdel.
For the \mix batch,
we order all insertions in the batch before all deletions.
Then, we generate two static graphs per batch, one following
all insertions, and the other following all deletions.
We re-run the static algorithms on each static graph and take the
average of the times to obtain comparable per-batch running times. We
do this because some of the deletion updates may cancel the insertion
updates in the batch.

\subsection{PLDS Implementation Details}
We implemented our algorithms using the primitives from the Graph Based Benchmark Suite~\cite{dhulipala2018theoretically}.
We implemented the PLDS with work, depth, and space bounds given in~\cref{thm:batch-dynamic}.
One can choose to instead implement our space-efficient version of our data structure
in exchange for additional $\poly(\log n)$ factors in the theoretical depth. 

Our data structure uses
concurrent hash tables with linear
probing~\cite{shun2014phase},
which support $x$
concurrent insertions, deletions, or finds in $O(x)$ amortized work
and $O(\log^*x)$ depth \whp~\cite{Gil91a}.
For deletions, we used the folklore \emph{tombstone}
method: when an element is deleted, we mark the slot in the table as a
tombstone, which can be reused, or cleared during a table resize.
We also use
dynamic arrays, which support adding or deleting $x$ elements from the
end in $O(x)$ amortized work and $O(1)$ depth.

We first assign each vertex a unique ID in $[n]$.
Then, we maintain an array $\greater$ of size $n$ keyed by vertex ID
that returns a parallel
hash table containing neighbors of $v$ on levels $\geq
\level(v)$.
For each vertex $v$, we maintain a dynamic array $\tbl_v$ keyed by
indices $i \in [0, \level(v) - 1]$.
The $i$'th entry of the array contains a pointer to a parallel hash
table containing the neighbors of $v$ in level $i$.
Appropriate
pointers exist that allow $O(1)$ work to access elements
in structures. Furthermore, we maintain a hash table which contains
pointers to vertices $v$ where $\dl(v) \neq \level(v)$, partitioned by their levels.  This allows us to quickly
determine which vertices to move up (in~\cref{alg:insert}) or move
down (in~\cref{alg:delete}).

\iflong
We make one modification in our parallel
implementation of our insertion procedure from our theoretical algorithm
which is instead of moving vertices up level-by-level, we perform a parallel
filter and sort that calculates the desire-level of vertices we move up. This
results in more work theoretically, but we find that, practically, it results in
faster runtimes.
Also, notably, in practice, we optimized the performance of our PLDS
by considering $\ceil{\frac{\log_{(1+\delta)}m}{50}}$ levels per group instead
of $\ceil{\log_{(1+\delta)}m}$.
We also implemented a version of our structure that \emph{exactly
follows} our theoretical algorithm and compared the performance of
both structures. We see that even such a simple optimization resulted
in significant gains in performance, up to $23.89\times$.

\iflong
\fi

\fi

\subsection{Accuracy vs.\ Running Time}\label{sec:expaccuracy}
We start by evaluating the empirical error ratio of the per-vertex
core estimates given by our implementations (\pldsopt, \plds, \lds)
and Sun on \dblp and \lj, using batches of size $10^5$ and
$10^6$, respectively.
\cref{fig:exp1} shows the average batch time (in seconds) against the
average and maximum \emph{per-vertex} core estimate error ratio.
This error ratio is computed as
$\max\left(\frac{\kest(v)}{\core(v)},
\frac{\core(v)}{\kest(v)}\right)$ for each vertex $v$
(where $\kest(v)$ is the core estimate
and $\core(v)$ is the exact core value).
The average is the error ratio averaged across all vertices
and the maximum is the maximum error.
If the exact core number is $0$, we ignore
the vertex in our error ratio since our algorithm guarantees 
an estimate of $0$; for vertices of non-zero degree, the lowest
estimated core number is $1$ for all implementations.

The parameters we use for \pldsopt, \plds, and \lds
are all combinations of $\delta = \{0.2, 0.4, 0.8, 1.6, 3.2, 6.4\}
$ and $\lambda = \{3, \allowbreak 6, \allowbreak 12, \allowbreak 24, \allowbreak
48, \allowbreak 96\}$. We call these \emph{theoretically-efficient parameters}, since they maintain the work-efficiency of our algorithms.
For Sun, we use all combinations of their parameters  $\eps_{\textit{sun}} =
\lambda_{\textit{sun}} = \{0.2,\allowbreak  0.4,\allowbreak
0.8,\allowbreak  1.6,\allowbreak  3.2\}$, and $\alpha_{\textit{sun}} =
\allowbreak \{2(1+3\eps_{\textit{sun}})\}$.
We also tested $\alpha_{\textit{sun}} = \{\allowbreak
1.1, 2, 3.2\}$, as done in Sun et al.'s work~\cite{SCS20}.
When $\alpha = 1.1$, the theoretical efficiency
bounds by Sun et al.~\cite{SCS20} no
longer hold, but they yield better estimates empirically.
We compare this heuristic setting to a similar one in
our algorithms, where we replace $(2 + 3 /
\lambda)$ with $1.1$ in our code
(where our efficiency bounds no longer
hold) for $\delta = \{0.4, 0.8, 1.6, 3.2\}$. We refer to these as the \emph{heuristic parameters}.

\begin{figure*}[!t]
\centering
\includegraphics[width=\textwidth]{./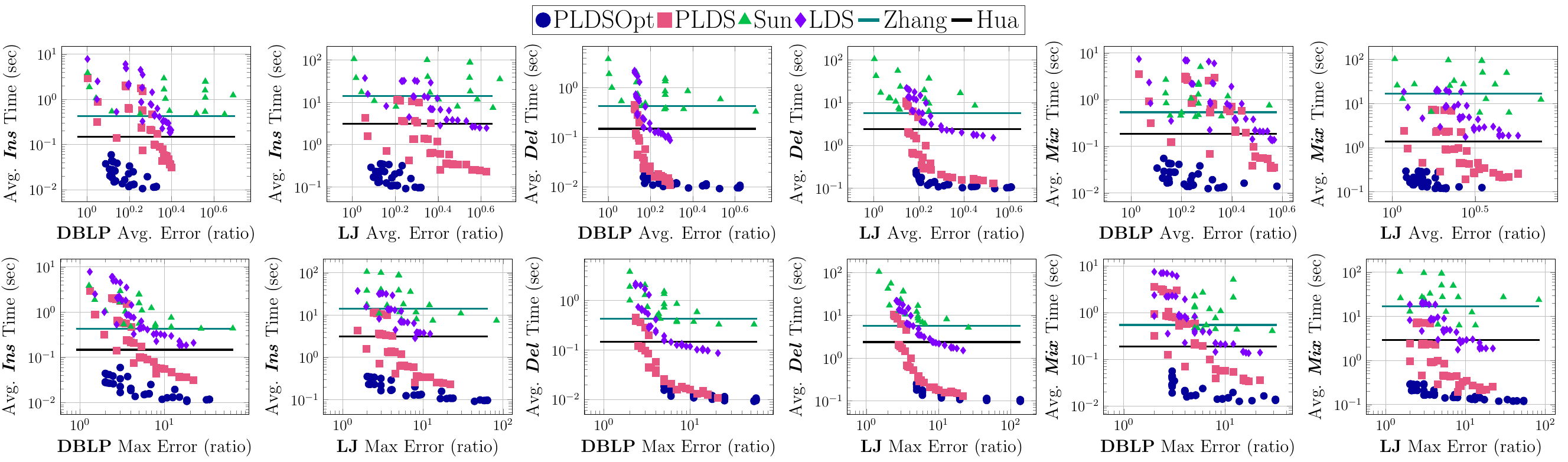}
\caption{
Comparison of the average per-batch time versus the average (top row) and maximum (bottom row)
per-vertex core estimate error ratio of \pldsopt, \plds, Sun, and \lds, using
varying parameters, on the \dblp and \lj graphs, with
batch sizes $10^5$ and $10^6$, respectively. Experiments were run for \expins,
\expdel, and \mix. The data
uses theoretically-efficient parameters as well as the heuristic parameters
where $(2 + 3 / \lambda) = \alpha_{\textit{sun}} = 1.1$.
Runtimes for Hua and Zhang are shown as horizontal lines.
}
\label{fig:exp1}
\end{figure*}

\cref{fig:exp1} shows that, using theoretically-efficient parameters, our
PLDSOpt, PLDS, and LDS implementations are faster than Sun, Zhang, and Hua,
for parameters that give
similar average and maximum per-vertex core estimate error ratios.
Furthermore, besides PLDS,
PLDSOpt \emph{outperforms all other algorithms},
regardless of approximation factor and error.
This set of experiments demonstrates the flexibility of our algorithm;
one can achieve smaller error at the cost of slightly increased
runtime. However, as the experiments demonstrate, PLDSOpt still
outperforms all other algorithms even when the parameters are tuned to
give small error; this performance gain is maintained for \expins,
\expdel, and \mix.
Greater speedups are achieved on \lj
compared to
\emph{dblp}. Such a result is expected since larger batches allow for
greater parallelism.

Concretely, compared with Zhang, PLDSOpt achieves $7.19$--$147.59\times$, $19.70$--$58.41\times$, and $9.75$--$142.79\times$
speedups on \expins, \expdel, and \mix batches, respectively.
Compared with Hua, PLDSOpt achieves $2.49$--$33.95\times$,
$6.81$--$24.51\times$, and $2.94$--$21.77\times$ speedups.
Against PLDS, PLDSOpt obtains $2.98$--$47.8\times$, 
$1.03$--$25.58\times$, and $1.5$--$76.94\times$ 
speedups for \expins, \expdel, and \mix, respectively, on parameters that give similar
approximations.
Compared with Sun, on parameters that give similar
theoretical guarantees and smaller empirical average
error, PLDSOpt achieves
$21.34$--$544.22\times$, $25.49$--$128.65\times$,
and $19.04$--$248.36\times$
speedups for \expins, \expdel, and \mix, respectively.
Neither Zhang nor Hua guarantee polylogarithmic work.
The peeling-based algorithm of \sun can have large depth and they do not provide
a concrete bound on their amortized work for their faster, round-indexing implementation.
Thus, the speedups we obtain over the benchmarks
are due to the greater theoretical efficiency 
and because our algorithms are parallel.

Finally, PLDSOpt achieves average error in the ranges $1.26$--$2.13$,
$1.47$--$4.20$, and $1.28$--$2.33$ for \expins, \expdel, and \mix, respectively.
PLDS gives comparable average errors in the ranges $1.27$--$4.22$, $1.33$--$3.39$, 
and $1.63$--$5.73$, for \expins, \expdel, and \mix, respectively, 
while running slower than \pldsopt for all parameters,
despite the guarantee that the maximum error of PLDS is bounded by $(1+\delta)(2 + 3/\lambda)$
(\cref{lem:decomp-bound}). Thus, our optimized version allows us to obtain good error
bounds empirically while drastically improving performance.\\

\begin{figure*}[hbt!]
\begin{center}
    \centerline{\includegraphics[width=0.98\textwidth]{./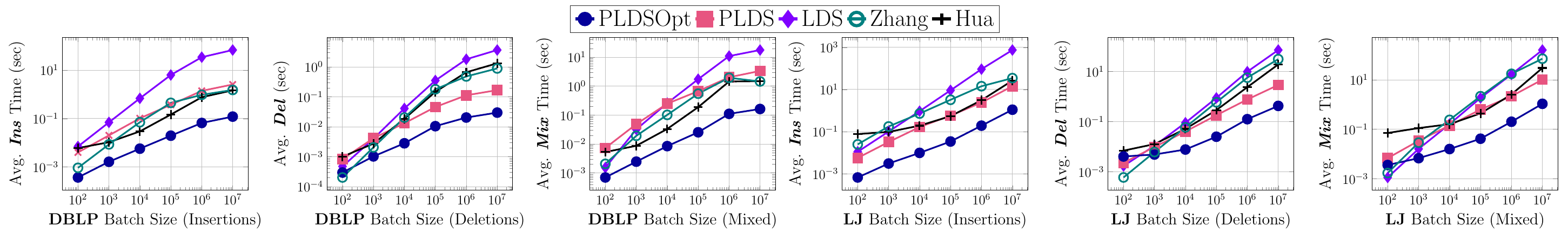}}
\caption{
Average \expins, \expdel, and \mix
per-batch running times on varying batch sizes for
\pldsopt, \plds, \lds, \zhang, and \hua on \dblp 
and \lj.
  }
\label{fig:exp2}
\end{center}
\vspace{-1em}
\end{figure*}

For \emph{all of the remaining experiments}, set 
$\delta = 0.4$ and $\lambda = 3$.

\subsection{Batch Size vs.\ Running Time} 
\cref{fig:exp2} shows the average
per-batch running times for \expins, \expdel, and \mix
on varying batch sizes for \pldsopt, \plds, \hua, \lds, and \zhang
on \dblp and \lj. We do not run this experiment on Sun
since their implementation does not have batching.
Our experiments show that \pldsopt is faster for all batch sizes except for the
smallest \expdel and \mix batches.  

Against PLDS, \pldsopt achieves a speedup over all batches from
$10.85$--$21.25\times$, $2.81$--$5.65\times$, and $10.42$--$29.28\times$ for \expins, \expdel, and \mix, respectively,
on \dblp and $8.47$--$16.9\times$, $1.99$--$7.18\times$, and $1.9$--$15.26\times$  for
\expins, \expdel, and \mix, respectively,
on \lj 
for all but the batch of size $100$ for \expdel. On the batch size of $100$ , \plds performs better than \pldsopt by a $1.79\times$ factor. 
Compared with Hua, \pldsopt achieves speedups over all batches
from $5.17$--$16.43\times$, $3.39$--$44.58\times$, and
$2.53$--$13.05\times$ for \expins, \expdel, and \mix, respectively,
on \dblp and $15.97$--$114.52\times$, $1.71$--$45.01\times$,
and $9.10$--$19.82\times$ for \expins, \expdel, and \mix, respectively, on \lj. Compared with Zhang, \pldsopt achieves speedups of
$2.49$--$22.74\times$, $2.00$--$29.92\times$, and $2.95$--$21.57\times$ for \expins, \expdel, and \mix, respectively,
on \dblp, and $31.53$--$95.33\times$, $1.25$--$73.19\times$ and $4.26$--$87.05\times$ for \expins, \expdel, and \mix, respectively, on \lj
on all but the smallest batches for \expdel and \mix.
For \expdel with a batch size of $100$, \zhang is the fastest with speedups of
$1.46\times$ and $6.86\times$ over \pldsopt on \dblp and \lj, respectively.
For \mix with batch size $100$, \lds is the fastest with speedups of
$3.19\times$ over \pldsopt on \lj.
For small batch sizes,
sequential algorithms perform better than parallel algorithms since
the runtimes of parallel algorithms are dominated by parallel overheads.

\begin{figure*}[!t]
\begin{center}
\centerline{\includegraphics[width=0.98\textwidth]{./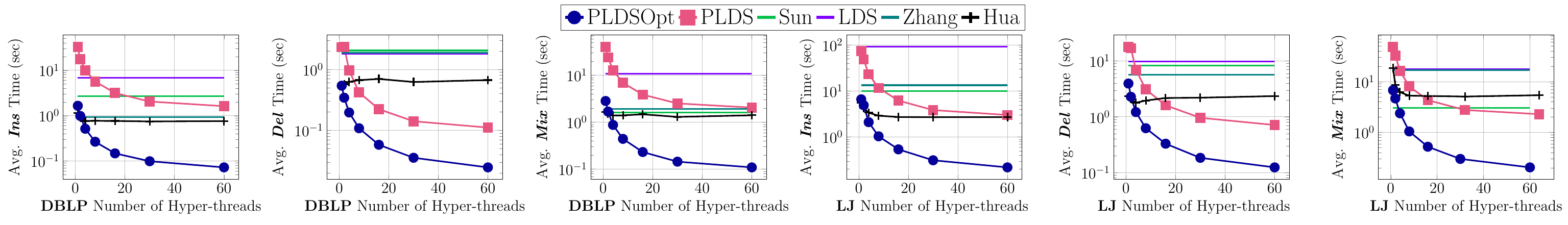}}
\caption{Parallel speedup of PLDSOpt, PLDS, and Hua, with respect to their
    single-threaded running times on \textit{dblp} and
    \textit{livejournal} 
    on \expins, \expdel, and \mix batches of size $10^6$ for all algorithms.
    The ``60'' on the $x$-axis indicates 30 cores with hyper-threading. LDS, Sun,
    and Zhang are shown as horizontal lines since they are sequential. }
\label{fig:exp3}
\end{center}
\vspace{-1em}
\end{figure*}

\subsection{Thread Count vs.\ Running Time}\label{sec:threads}
\cref{fig:exp3}
shows the scalability of PLDSOpt, PLDS, and Hua with respect to
their single-thread running times on \textit{dblp} and \textit{livejournal} using a batch size of $10^6$.
\lds, \sun, and \zhang are represented as horizontal lines since they are
sequential.
For \expins, \expdel, and \mix batches,
PLDSOpt and PLDS achieve up to $30.28\times$, $32.02\times$, and $33.02\times$,
and $26.46\times$, $25.33\times$, and $21.15\times$,
self-relative speedup, respectively.
Hua achieves up to a $3.6\times$ self-relative speedup.
We see that our PLDS algorithms achieve
greater self-relative speedups than Hua. Also,
with just $4$ threads
(available on a standard laptop),
\pldsopt already outperforms all other
algorithms.
\hua's algorithm performs DFS/BFS, which could lead to linear depth, potentially explaining
the bottleneck to their scalability with more
cores.

Gabert \etal~\cite{Gabert21} present a parallel batch-dynamic $k$-core decomposition
algorithm but their code is proprietary. However, their algorithm
appears slower and less scalable based on their paper's stated results. For
example, their
algorithm on $10^5$ edges using $32$ threads for the \lj graph requires $4$
seconds, while our algorithm on a batch of $10^6$ edges using $30$ threads (more edges and fewer threads) requires a \emph{maximum} of $0.35$ seconds.
Also, they appear to exhibit a maximum of $8\times$ self-relative 
speedup on \lj while we exhibit $21.2\times$ self-relative speedup on \lj.

\begin{figure}[htpb]
\begin{center}
\centering
\includegraphics[width=1.01\columnwidth]{./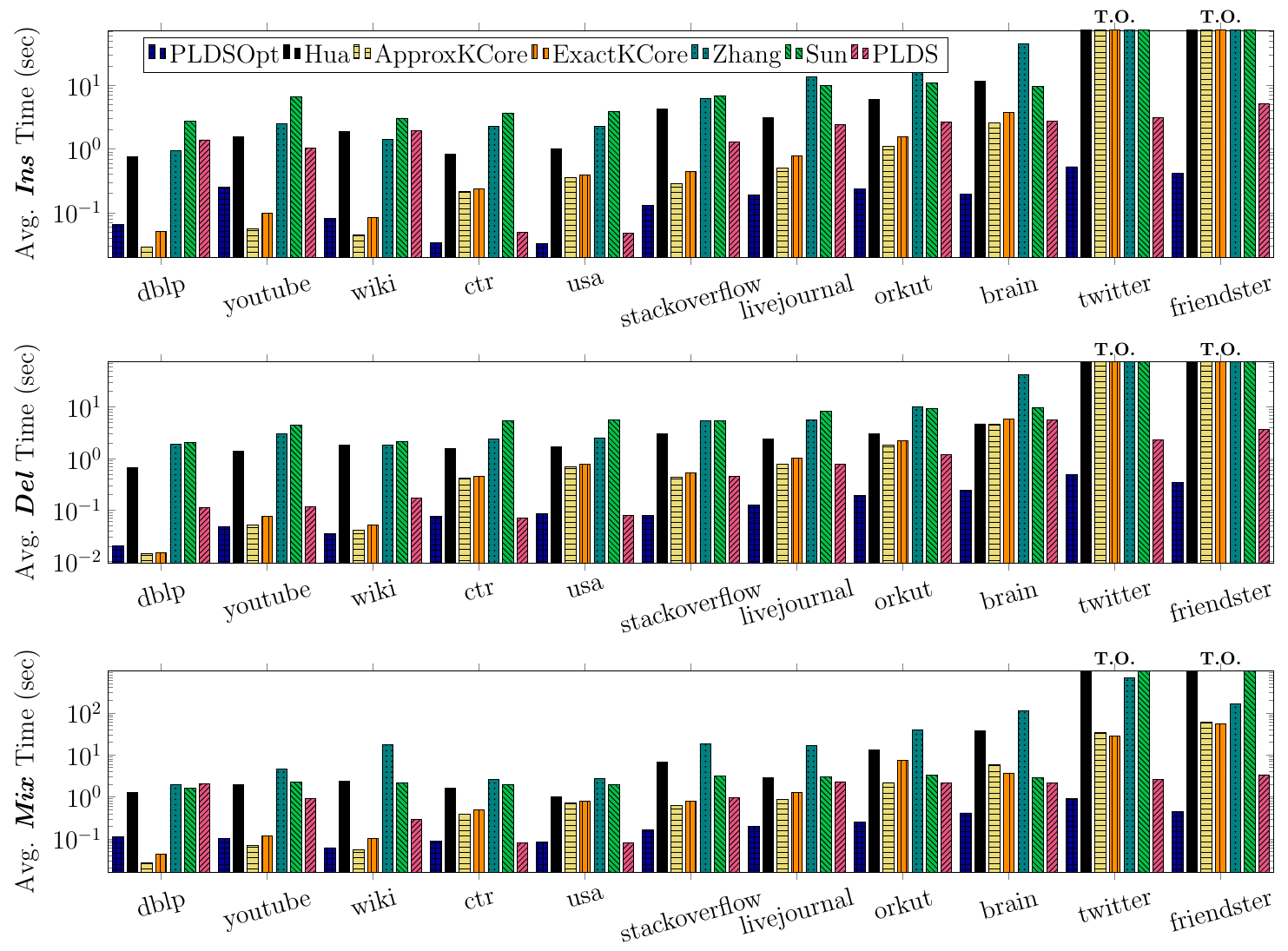}
    \caption{Average per-batch running times for PLDSOpt, Hua,
        PLDS, \sun, \zhang, ApproxKCore, and ExactKCore, on
        \textit{dblp},
        \textit{youtube},
        \textit{wiki},
        \textit{ctr},
        \textit{usa},
        \textit{stackoverflow},
        \textit{livejournal},
        \textit{orkut},
        \textit{brain},
        \textit{twitter}, and
\textit{friendster} with batches of size $10^6$ (and approximation
settings $\delta = 0.4$ and $\lambda= 3$
for PLDSOpt and PLDS).
All benchmarks (except PLDSOpt and PLDS)
timed out (T.O.) at 3 hours for \textit{twitter} and
\textit{friendster} for \expins and \expdel.
\hua and \sun timed out on \tweet and \frie for \mix.
The top graph shows insertion-only, middle graph shows deletion-only, and
bottom graph shows mixed batch runtimes. %
}\label{fig:exp4}
\end{center}
\end{figure}
\subsection{Results on Large Graphs} %
\cref{fig:exp4}
shows the runtimes of PLDSOpt, PLDS, \hua, \sun, and \zhang compared
with the static algorithms \ekcore and \akcore on additional graphs,
using \expins, \expdel, and \mix batches, all of size $10^6$.
\ekcore and \akcore are run from scratch over the entire graph after every batch
since they do not handle batch updates.
\pldsopt and \plds finished for all graphs and experiments while all
other algorithms timed out on \expins and \expdel batches for \tweet
and \frie.  \zhang
was able to finish on \mix
because their indexing algorithm
(used to create their data
structures provided the initial graph without the mixed batch) was
able to finish; since only one mixed batch is used to update the
graph, the sum of the time needed for indexing plus the update time of
one batch fell under the timeout. The same is true for \ekcore and \akcore.
However, these algorithms were
not able to finish for \expins and \expdel because the sum of the
update times across all batches is too high.

\pldsopt is faster than all other dynamic algorithms on all types of batches,
except for \plds on \ctr and \usa.
We report concrete speedups for experiments which finished within the
timeout. For \expins, it gets $10.01$--$229.71\times$ speedups over \zhang,
$6.20$--$58.66\times$ speedups over \hua, $26.02$--$119.77\times$
speedups over \sun, and $1.45$--$23.89\times$ speedups over \plds. For \expdel, it gets $30$--$176.48\times$ speedups over \zhang,
$15.79$--$52.36\times$ speedups over \hua, $41.02$--$100.34\times$ speedups
over \sun, and $2.51$--$23.45\times$ speedups over \plds
(except on \ctr and \usa).
For \mix, it gets $17.54$--$723.72\times$ speedups over \zhang,
$11.34$--$91.95\times$ over \hua, $6.95$--$35.59\times$ speedups over \sun,
and $2.81$--$18.68\times$ speedups over \plds (except on \ctr and \usa).
These massive speedups over previous work demonstrate the utility of
\pldsopt not only on large graphs but also on smaller graphs.
Notably, our \pldsopt and \plds algorithms perform not
only well on dense networks but also on very sparse road networks. For
\ctr and \usa, \plds performs better than \pldsopt, achieving up to a
$1.09\times$ speedup on \expdel and $1.12\times$ speedup on \mix.

Compared to the static algorithms, \pldsopt achieves speedups for all
but the smallest graphs, \dblp, \wiki, and \ytb. For these graphs,
the batch of size $10^6$ accounts for more than $1/3$ of the edges, and so even if the static algorithm reprocesses the entire graph
per batch, it does not process many more edges past
the batch size.
Thus, it is expected that the parallel static algorithms
perform better on small graphs and large batches. For all but the smallest
graphs, \pldsopt obtains $2.22$--$13.09\times$, $5.56$--$19.64\times$,
and $4.4$--$121.76\times$ speedups over the \emph{fastest} static algorithm
for each graph for \expins, \expdel, and \mix, respectively. \ekcore and \akcore
both timeout for \expins and \expdel on \tweet and \frie;
otherwise, we expect to see the
large improvements that we see for \mix on these experiments.

\subsection{Accuracy of Approximation Algorithms} We also computed
the average and maximum errors of all of our approximation algorithms
for our experiments shown in~\cref{fig:exp4}.
According to our
theoretical proofs, the maximum error (for PLDS) should be $(2+3/3) (1+0.4) = 4.2$.
We confirm that the maximum empirical error for PLDS falls under this 
constraint.
PLDSOpt achieves an average error of $1.24$--$2.37$  compared to errors of
$1.26$--$3.48$ for PLDS,
$1.01$--$4.17$ for ApproxKCore, and $1.03$--$3.23$ for \sun.
PLDSOpt gets a maximum error of
$3$--$6$ compared to $2$--$4.19$ for PLDS,
$3$--$5$ for ApproxKCore, and $3$--$5.99$ for \sun. We conclude that
our error bounds match those of the current best-known algorithms and
are sufficiently small to be of use for many applications.

\begin{figure*}[htb!]
\centering
\includegraphics[width=\textwidth]{./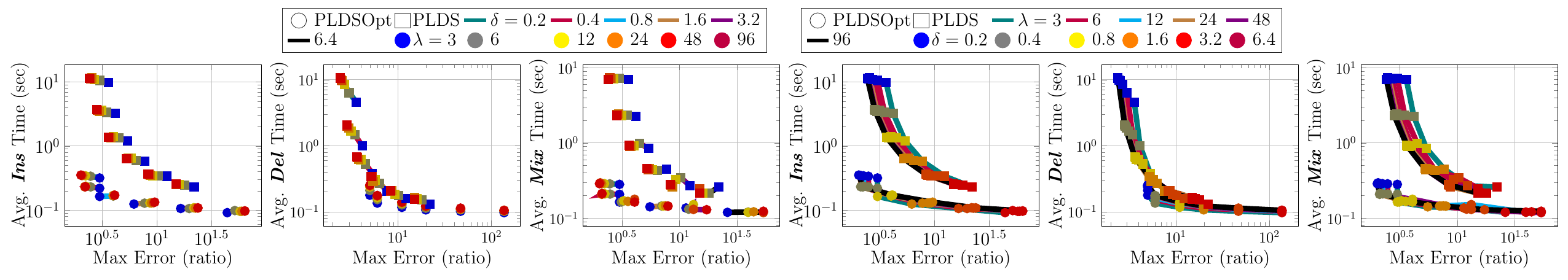}
\caption{
    Sensitivity analysis of PLDSOpt and PLDS on \textit{livejournal}.
    The first three plots fix $\delta$;
    each line is a fixed $\delta$ value
    and each point is a different $\lambda$
    value. The last three plots fix $\lambda$ and vary $\delta$.
}
\label{fig:sensitivity}
\end{figure*}

\subsection{Sensitivity of PLDS and PLDSOpt to $\delta$ and $\lambda$}
In~\cref{fig:sensitivity}, we provide a sensitivity
analysis for the parameters
$\delta$ and $\lambda$ on the \emph{maximum} error
of our PLDS and PLDSOpt algorithms since our theoretical guarantees are for the
\emph{maximum error} and as we showed in
Section~\ref{sec:expaccuracy}, the average error does not vary
significantly for our chosen set of parameters. The first three graphs
of~\cref{fig:sensitivity} shows the effect of fixing $\delta$ while
varying $\lambda$ and the last three show the opposite.

We see that for both \pldsopt and \plds, different $\lambda$ values do not
affect either the error by much (each line is essentially a cluster
of points). This matches what we expect
theoretically. 
Recall our bound on error, $(1+\delta)(2+3/\lambda)$; suppose
we set $\delta = 0.4$ and $\lambda = 3$ as in our experiments. This leads to an
upper bound of $4.2$. If we increase $\lambda$ to $6$, this only
decreases the error to $3.5$. On the other hand, if $\delta$ is increased
to $0.8$, then the error increases to $5.4$, resulting in greater sensitivity to
$\delta$. 

However, \emph{increasing} $\delta$ leads to a drastic decrease in
running time (each line is a decreasing curve) at the expense of a
large increase in error. Again, this matches what we expect
theoretically, since $\delta$ affects the number of levels in \plds and
\pldsopt (recall that in our algorithm, the number of levels per group is
$\ceil{\log_{(1+\delta)}(m)}$). A larger number of levels leads to
larger running time and we see this in our results. We do not see as
large an increase for \pldsopt since we divide the number of levels by
$50$.  This means
that for \lj the number of levels per group is
$\ceil{\log_{(1+\delta)}(42851237)/50} = 1$ for all $\delta < 0.42$.
We see this in our experiments as the curves for \pldsopt are flat for
$\delta \in [0.8, 6.4]$.

For the rest of the experiments, we fix $\delta = 0.4$ and $\lambda = 3$ based
on our sensitivity analysis;
these parameters offer a reasonable tradeoff between approximation error and
speed, as shown in \cref{fig:exp1} and \cref{fig:sensitivity}. For \sun,
we choose the parameters $\eps = \lambda = 2$ and $\alpha = 2$ since we observe
these parameters give similar approximation errors to the parameters that we chose
for our algorithms.

\subsection{Space Usage}
For each program, we implemented functions that measured the space usage of the
data structures used in the algorithms (specifically, the private and public
variables maintained in their data structure classes); for all of the algorithms,
we do not count ephemeral space usage needed by auxiliary structures that are not
maintained as either private or public variables of their data structure
class. For this set of experiments, we only test on \expins and \expdel since
maximum space is used when the entire graph is present in memory.

\cref{fig:exp10} shows the results of our space-bound experiments.
Although \plds uses more memory than most other implementations,
our PLDSOpt uses less memory than \hua and \zhang
in most settings (up to $1.34\times$ factor less memory than the minimum
space used by either) for
\textit{dblp} and up to 1.08x additional space in a few cases;
for \textit{livejournal}, it uses up to 1.72x additional space compared
to the minimum space used by \hua and \zhang. \sun uses more space
than \pldsopt for most cases; although for a few parameters for deletions in
\dblp, it uses up to $1.9\times$ less space. Since we have a $O(\log^2 n)$ 
factor in our space usage bound, we expect a slight increase in our space usage
compared to algorithms with linear space; however, as we
demonstrated, empirically our space usage is not much greater, and we
believe that this small extra space usage is a small price to pay for
the large improvement in performance obtained by our algorithms.
We provide theoretical space-efficient implementations of our PLDS which may also prove
to be more space-efficient in practice.

\begin{figure*}[htb!]
\begin{center}
    \centerline{\includegraphics[width=\textwidth]{./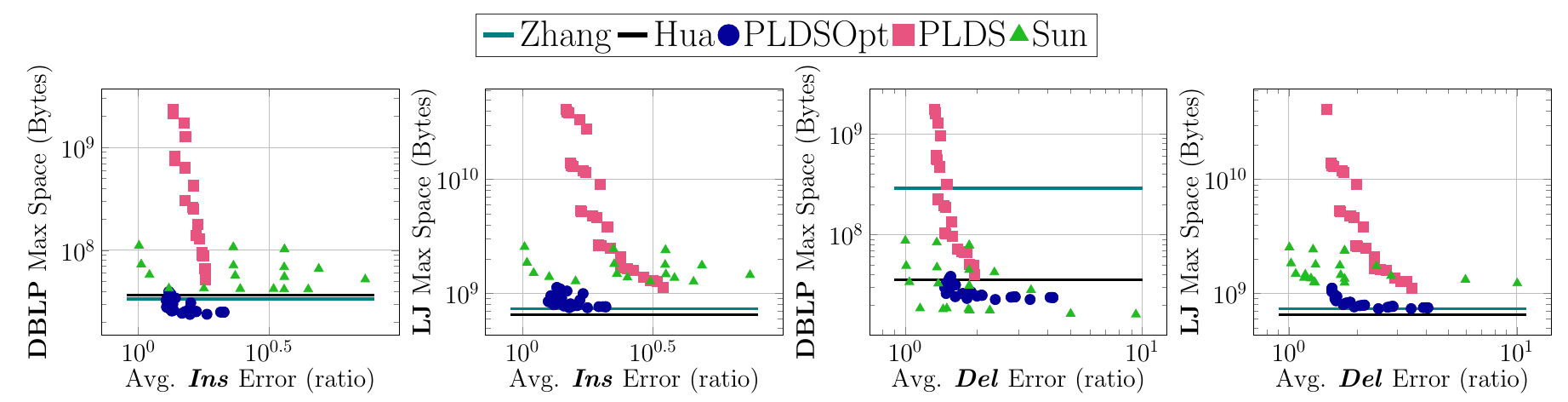}}
    \caption{Maximum space usage in bytes for PLDSOpt, Hua, Zhang,
        PLDS, and Sun in terms of the average error.
        We varied $\delta$ and $\lambda$
        and computed the error ratio and space usage
        for the programs on \textit{dblp} and \textit{livejournal}.
        We tested against  \expins and \expdel batches of size $10^5$
        for \textit{dblp} and  batches of size $10^6$
        for \textit{livejournal}.}\label{fig:exp10}
\end{center}
\end{figure*}
\section{Static $(2+\eps)$-Approximate 
$k$-Core}\label{sec:static-kcore}
Due to the $\mathsf{P}$-completeness of \kc decomposition for $k\geq
3$~\cite{anderson84pcomplete}, all known static exact \kc algorithms
do not achieve polylogarithmic depth.
We introduce a linear work and polylogarithmic depth
$(2+\eps')$-approximate \kc decomposition algorithm (with only one-sided error)
based on the parallel bucketing-based peeling algorithm for static \emph{exact} \kc
decomposition of Dhulipala et al.~\cite{dhulipala2017julienne}. The
algorithm maintains a mapping $M$ from $v \in V$ to a set of
\emph{buckets}, with the bucket for a vertex $M(v)$ changing over the
course of the algorithm. The algorithm starts at
$k=0$,  peels all vertices with degree at most $(2+\eps)(1+\eps)^k$ where
$\eps$ is set to $\frac{\sqrt{4\eps' + 9} - 3}{2}$, %
increments $k$, and repeats until the graph
becomes empty. The approximate core value of $v$ 
is $(1+\eps)^{k-1}$ where we use the value of $k$ when $v$ is peeled.
We observe that the dynamic algorithm in this paper can be
combined with a peeling algorithm like the above to yield a
linear-work approximate \kc{} algorithm with
polylogarithmic depth.

\newcommand{\codevar}[1]{\ensuremath{\mathit{#1}}}
\newcommand{\eqs}{\ensuremath{\leftarrow}}
\begin{algorithm}[!t]
\caption{Static Approximate \kc{} Decomposition} \label{alg:static}
\small
\begin{algorithmic}[1]
\Require{An undirected graph $G(V,E)$.}
\Ensure{An array of $(2+\eps')$-approximate coreness values for any constant $\eps' > 0$.}
  \State $\forall v \in V$, let $C[v] = |N(v)|$.\label{kcore:init}
  \State $\codevar{finished} \eqs 0, t \eqs 0, \eps \eqs \frac{\sqrt{4\eps' + 9} - 3}{2}, \delta \eqs \frac{2}{\eps}$.\label{kcore:init2}
  \State Let $M$ be a bucketing structure formed by initially
  assigning each $v \in V$ to the $\left \lceil \log_{1+\eps} C[v] \right \rceil$'th bucket.\label{kcore:init_bucket}
  \While { $(\codevar{finished} < |V|)$ }\label{kcore:peelstart}
    \State $(I,\codevar{bkt}) \eqs $ Vertex IDs and bucket ID of next
    (peeled) bucket in $M$.\label{kcore:nextbucket}
    \State $t \leftarrow bkt$.\label{kcore:checkinner}
    \For{iteration $j \in [\ceil{\log_{1+\delta}(n)}]$}

    \State $R \eqs$ $\{(v,r_v)\ | \ v \in N(I)$, $r_v = |\{(u,v) \in E\ |\ u \in I\}|\}$.\label{kcore:r}
    \State $U \eqs$ Array of length $|R|$.

    \ParFor {$R[i] = (v,r_v)$, $i \in [0, |R|)$}\label{kcore:computebktstart}
      \State $\codevar{inducedDeg} = C[v] - r_v$
      \State $C[v] = \max(\codevar{inducedDeg}, \left \lceil (1+\eps)^{t-1} \right \rceil)$
      \State $\codevar{newbkt} = \max(\left \lceil \log_{1+\eps} C[v] \right \rceil, t)$
      \State $U[i] = (v, \codevar{newbkt})$\label{kcore:computebktend}
    \EndParFor

    \State Update $M$ for each $(u, \codevar{newbkt})$ in $U$. \label{kcore:updatebkt}
    \State $next$-$bkt \leftarrow $ bucket ID of the next smallest bucket in $M$. 
    \If{$(1+\eps)^{next\text{-}bkt} \leq (2+\eps)(1+\eps)^t$}
        \State $(I, next\text{-}bkt) \eqs $ Vertex IDs of the next (peeled) bucket in $M$.\label{kcore:nextbucket}
    \Else
        \State \textbf{break}
    \EndIf
    \EndFor
  \EndWhile\label{kcore:peelend}
  \State \Return $C$.
\end{algorithmic}
\end{algorithm}

Algorithm~\ref{alg:static} shows pseudocode for our
approximate \kc{} algorithm, which computes an approximate coreness
value for each vertex. The algorithm sets the initial coreness
estimates, $C[v]$, of each vertex to its degree
(Line~\ref{kcore:init}).
Then, it maintains a parallel bucketing data structure $M$, which maps each vertex to the $\lceil \log_{1+\eps} C[v]
\rceil$'th bucket (\cref{kcore:init_bucket}). It initializes a variable $\codevar{finished}=0$ to keep track of the number of vertices peeled and a variable $t=0$ used to compute the approximate core values (\cref{kcore:init2}).
The rest of the algorithm performs peeling, where the peeling
thresholds are powers of $(1+\eps)$.
The peeling loop (\cref{kcore:peelstart}--\cref{kcore:peelend})
first extracts the lowest non-empty bucket from $M$
(\cref{kcore:nextbucket}), which consists of $I$, a set of vertex IDs of
vertices that are being peeled, and the bucket number \codevar{bkt}.
If more than $\log_{1+\delta}(n)$ rounds of peeling have occurred at
the threshold $(2+\eps)(1+\eps)^{t}$ (where we set $\delta = \frac{2}{\eps}$), the algorithm increments $t$
(\cref{kcore:checkinner}). Next,
the algorithm computes in parallel an array $R$ of pairs $(v, r_v)$, where $v$ is a
neighbor of some vertex in $I$ and $r_v$ is the number of neighbors
of $v$ in $I$ (\cref{kcore:r}). Finally, the algorithm computes in parallel the new buckets for
the affected neighbors $v$ (\cref{kcore:computebktstart}--\cref{kcore:computebktend}).
The coreness estimate is updated to the maximum of the peeling
threshold of the previous level and the current induced degree of $v$
after $r_v$ of its neighbors are removed. Finally, the algorithm updates the
buckets using the new coreness estimates for the updated vertices
(\cref{kcore:updatebkt}), which can be done in parallel using our bucketing data structure.

We provide an example of this algorithm below.

\begin{figure}[t]
    \centering
    \includegraphics[width=0.9\columnwidth]{./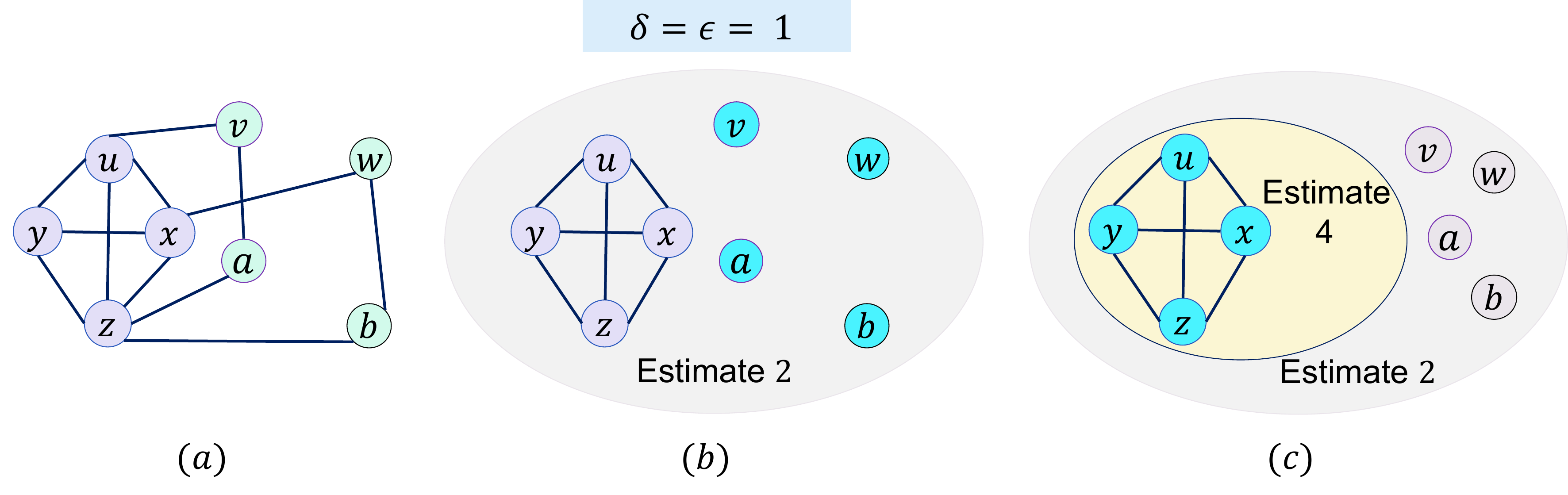}
    \caption{Example of a run of~\cref{alg:static} described
    in~\cref{exp:static}.}\label{fig:static}
\end{figure}

\begin{example}\label{exp:static}
    \cref{fig:static} shows a run of~\cref{alg:static} on an example
    graph.
    Given the parameters $\eps = \delta = 1$, the two buckets that the vertices
    of the input graph (shown in $(a)$) are partitioned
    into are bucket index $1$
    (green vertices)
    and bucket index $2$ (purple vertices). Vertices $v$, $w$, $a$, and $b$ have degree
    $2$ so they are put into the bucket with index $\ceil{\log_2(2)} = 1$. Since
    $u$, $x$, $y$, and $z$ have degree $\geq 3$, they are put into the bucket with index
    $\ceil{\log_2(3)} = 2$.

    Since the bucket with index $1$ has the smaller
    bucket index, we peel off all the vertices in that bucket (the green
    vertices) and we assign the core estimate of $(1+\eps)^1 = 2$
    to all vertices in that bucket (shown in $(b)$). We update
    the buckets of all neighbors of the peeled vertices; however, since
    $u$, $x$, $y$, and $z$ all still have degree $\geq 3$, they remain in the bucket with index
    $2$. Finally, we peel bucket index $2$ and assign all vertices
    in that bucket an estimate of $(1+\eps)^2 = 4$
    (shown in $(c)$). In this example, the estimates produced are
    $3$-approximations of the real coreness values.
\end{example}

\ificml
In the full paper, we prove
\fi
\iflong
    We prove below
\fi
that \cref{alg:static} finds an $(2+\eps)$-approximate \kc
decomposition in $O(m)$ expected work and $O(\log^3 m)$ depth \whp, using $O(m)$ space, as stated in
\cref{thm:static}.
We give the approximation guarantees of 
our algorithm using lemmas from~\cite{Ghaffari2019}, 
and use an efficient parallel semisort implementation~\cite{gu15semisort} for our work bounds.

\iflong
\begin{theorem}\label{thm:static-app}
    For a graph with $m$ edges,\footnote{Our bounds in this paper assume
  $m =\Omega(n)$ for simplicity, although our algorithms work even if
  $m=o(n)$.} for any constant $\eps > 0$, there is an
  algorithm that finds an $(2+\eps)$-approximate \kc
  decomposition in $O(m)$ expected work and $O(\log^3 m)$ depth
  with high probability, using $O(m)$ space.
\end{theorem}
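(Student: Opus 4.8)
The plan is to verify the four assertions of the theorem---approximation factor, $O(m)$ work, $O(\log^3 m)$ depth, and $O(m)$ space---separately, viewing \cref{alg:static} as a geometric-threshold peeling process running on top of the parallel bucketing structure of~\cite{dhulipala2017julienne}. Write $\beta_t = (2+\eps)(1+\eps)^t$ for the peeling threshold at level $t$, and recall that a vertex peeled while the level counter equals $t$ receives estimate $\kest(v) = (1+\eps)^{t-1}$ (the maximum with the induced degree in the pseudocode only refines this upward). The parameters satisfy $\delta = 2/\eps$ and $\eps = \tfrac{\sqrt{4\eps'+9}-3}{2}$, the latter chosen so that $(1+\eps)(2+\eps) = 2+\eps'$; this identity is what turns the intrinsic $(1+\eps)(2+\eps)$ loss of the method into the advertised $(2+\eps')$ factor.

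For the approximation guarantee I would establish two bounds on $\core(v)$. First, a core-preservation argument gives the upper bound: if $\core(v)=\kappa$, then the $\kappa$-core containing $v$ has all internal degrees at least $\kappa$, so by induction on the peeling rounds no vertex of this core can be removed while the threshold stays below $\kappa$; hence $v$ is peeled only once $\beta_t \ge \kappa$, giving $\core(v) \le \beta_t$. Second, for completeness I would invoke the degeneracy (Nash--Williams) counting lemmas of~\cite{Ghaffari2019}: the subgraph induced by the not-yet-peeled vertices of coreness at most $(1+\eps)^t$ has degeneracy at most $(1+\eps)^t$, hence average degree at most $2(1+\eps)^t$, so a $\tfrac{\eps}{2+\eps}$-fraction of them fall below $\beta_t$ and are peeled in each round; thus every such vertex is gone by the end of level $t$, and a vertex surviving to level $t$ has $\core(v) > (1+\eps)^{t-1}$. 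Combining, $\frac{\core(v)}{(1+\eps)(2+\eps)} \le \kest(v) \le \core(v)$, which by the choice of $\eps$ is exactly a $(2+\eps')$-approximation in the sense of \cref{def:approx-k-core}, with one-sided error.

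For the depth bound I would multiply three logarithmic factors. The level counter $t$ ranges over $O(\log_{1+\eps} n)=O(\log m)$ values since coreness is at most $n$; the inner loop runs $\lceil \log_{1+\delta} n\rceil = O(\log m)$ rounds per level; and each round performs one semisort to assemble the array $R$ of affected neighbors together with one batch of bucket updates, each running in $O(\log m)$ depth \whp{} via~\cite{gu15semisort} and~\cite{dhulipala2017julienne}. The product is $O(\log^3 m)$ depth \whp{} The one nontrivial ingredient is that $\lceil\log_{1+\delta} n\rceil$ rounds per level really suffice, which is exactly the geometric-shrinkage consequence of the counting lemma above, and this is the step I expect to be the main obstacle: peeling decisions use each vertex's degree in the full remaining graph, whereas the degeneracy bound controls only its degree within the low-coreness subgraph, so one must use the inductive invariant that after level $t-1$ every surviving vertex has coreness exceeding $(1+\eps)^{t-1}$ to argue that the surviving high-coreness neighbors cannot stall the geometric decrease.

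For the work bound I would charge every cost to edges. Assembling $R$ in a round costs $O\!\left(\sum_{u\in I}\deg(u)\right)$ work by semisort, and since each vertex joins a peeled set $I$ exactly once, summing over all rounds charges each edge $O(1)$ times, for $O(m)$ total \emph{expected} work (the expectation arising solely from the randomized semisort of~\cite{gu15semisort}). Each reinsertion of a vertex into a new bucket is likewise charged to a strict drop in its coreness estimate, which happens only when an incident edge is removed, so the bucketing structure of~\cite{dhulipala2017julienne} performs $O(m)$ operations overall. Finally, the space bound is immediate: the graph, the arrays $C$, $R$, $U$, and the bucketing structure $M$ all fit in $O(m)$ space under the assumption $m=\Omega(n)$, giving $O(m)$ total.
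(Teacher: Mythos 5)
Your proposal follows the paper's own proof essentially step for step: the same per-level induction driven by the geometric-shrinkage counting observation of~\cite{Ghaffari2019}, the same (implicit in the paper) core-preservation argument for the upper bound on $\core(v)$, the same product of three logarithmic factors for the depth, the same edge-charging plus randomized-semisort argument for $O(m)$ expected work, and the same space accounting. So this is not a different route; the substantive difference is that you make explicit two things the paper leaves terse --- the one-sidedness of the error, and the exact step where the argument is delicate.

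However, your proposed resolution of that delicate step does not work as stated, and this is a genuine gap. The obstacle, as you correctly identify, is that peeling decisions use each vertex's degree in the \emph{full} remaining graph, while the degeneracy bound controls only degrees inside the induced subgraph $G[A]$, where $A$ is the set of surviving vertices of coreness at most $k=(1+\eps)^t$. The inductive invariant you invoke --- that every survivor of level $t-1$ has coreness exceeding $(1+\eps)^{t-1}$ --- is of no help here: the problematic edges run from $A$ into the set $B$ of vertices of coreness greater than $(1+\eps)^t$, which are never peeled at level $t$, and the invariant places no bound on how many such edges a vertex of $A$ can have. What actually closes the gap is a different fact: in the exact sequential peeling (degeneracy) order of $G$, every vertex with $\core(v)\le k$ has at most $k$ neighbors among vertices eliminated after it, and this later set contains all of $B$. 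Summing over $A$ gives $e(A)+e(A,B)\le k|A|$, so the total \emph{full} remaining degree of $A$ is at most $2k|A|$, and Markov then yields that at most a $\frac{2}{2+\eps}$ fraction of $A$ survives each round --- exactly the shrinkage constant the paper quotes from Observation 4; bounding only the degrees inside $G[A]$, as your sketch does, cannot yield any shrinkage for small $\eps$. (A side effect of carrying out this computation: the number of rounds needed per level is $\log_{(2+\eps)/2} n = \log_{1+\eps/2} n$, so the assertion that $\lceil\log_{1+\delta}n\rceil$ rounds with $\delta = 2/\eps$ suffice holds only when $\eps\ge 2$; for smaller $\eps$ one should take $\delta\le\eps/2$, a constant-factor correction inherited from the paper's parameter choice that leaves the stated asymptotic bounds intact.)
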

\begin{proof}
    Our approximation guarantee is given by Observation 4 of~\cite{Ghaffari2019}. 
    Using Observation 4, the number of vertices with core number $(1+\eps)^t$
    after one round of peeling in~\cref{alg:static} shrinks by a factor of $\frac{2}{2+\eps}$. 
    Let $V_{\leq (1+\eps)^t}$ be the number of vertices with core number
    at most $(1+\eps)^t$. After removing all vertices with degree at most $(2+\eps)(1+\eps)^t$,
    the number of vertices with core number $(1+\eps)^t$ and with degree greater than 
    $(2+\eps)(1+\eps)^t$ is at most $\frac{2(1+\eps)^t |V_{\leq (1+\eps)^t}|}{(2+\eps)(1+\eps)^t} = 
    \frac{2|V_{\leq (1+\eps)^t}|}{2+\eps}$. Since $|V_{\leq (1+\eps)^t}| \leq n$, the maximum number of rounds
    needed to peel all vertices with core number at most $(1+\eps)^t$ is $\log_{(2+\eps)/2}(n)$.
    By induction on $t$ (\cref{kcore:checkinner}), after $\log_{(2+\eps)/2}(n)$ rounds,
    all vertices with core number at most $(1+\eps)^t$ are removed. Hence, in round
    $t+1$, all vertices have core number greater than $(1+\eps)^t$ and have core 
    number at most $(2+\eps)(1+\eps)^{t+1}$; hence, we obtain a $(2+\eps)(1+\eps) = 
    2+\eps'$ approximation (for any constant $\eps' > 0$ and appropriate setting 
    of $\eps$) when we give coreness approximations of $(1+\eps)^t$ to all vertices 
    peeled for $t + 1$. 
    
    Our algorithm uses a number of data structures that we use to obtain our work,
    depth, and space bounds. Our parallel bucketing data structure
    (\cref{kcore:init_bucket}) can be maintained via a sparse set
    (hash map), or by using the bucketing data structure
    from~\cite{dhulipala2017julienne}.
    The outer loop iterates for $O(\log n)$ times (\cref{kcore:peelstart}). Within
    each iteration of the outer loop, we iterate for
    $O(\log_{(1+\delta)}n) = O(\log n)$ rounds for constant $\delta = \frac{2}{\eps}$.
    After obtaining a set of vertices, we update the buckets using
    semisort in $O(\log n)$ depth \whp{}~\cite{dhulipala2017julienne}.
    Thus the overall depth of the algorithm is $O(\log^3 m)$ for any
    constant $\delta > 0$.

    The work of the algorithm can be bounded as follows. We charge the
    work for moving a vertex from its current bucket to a lower bucket
    within a given round to one of the edges that was peeled from the vertex
    in the round. Thus the total number of bucket moves done by the
    algorithm is $O(m)$. Each round of the algorithm also peels a
    number of edges and aggregates, for each vertex that has a
    neighbor in the current bucket, the number of edges incident to
    this vertex that are peeled (the $r_v$ variable in the algorithm).
    We implement this step using a randomized
    semisort~\cite{gu15semisort}. Since $2m$ edges are peeled in
    total, the overall work is $O(m)$ in expectation.

    Lastly, we bound the space used by the algorithm. There are a
    total of $O(\log_{1+\eps} n) = O(\log n)$ buckets for any
    constant $\eps > 0$. Each vertex appears in exactly one bucket,
    and thus the overall space of the bucketing structure is $O(n)$.
    The algorithm also semisorts the edges peeled from the graph in
    each step. Since all $m$ edges could be peeled and removed within
    a single step, and thus semisorted the overall space used by the
    algorithm is $O(m)$.
\end{proof}
\fi

The approximation guarantees provided by our algorithm are essentially
the best possible, under widely believed conjectures.  Specifically,
Anderson and Mayr~\cite{anderson84pcomplete} show that the
optimization version of the High-Degree Subgraph problem, namely to
compute the largest core number, or \emph{degeneracy} of a graph
cannot be done better than a factor of $2$. Thus, obtaining a
polynomial work and polylogarithmic depth $(2-\eps)$-approximation
to the coreness value of each vertex would yield a $(2-\eps)$-approximation to the optimization version of the High-Degree Subgraph
problem, and show that $\mathsf{P}=\mathsf{NC}$, contradicting a
widely-believed conjecture in parallel complexity theory.

In recent years, several results have given parallel
algorithms that obtain a $(1+\eps)$-approximation to the coreness
values in distributed models of computation such as the Massively
Parallel Computation  model~\cite{Esfandiari2018,
Ghaffari2019}. These results work by performing a
\emph{random sparsification}
of the graph into
a subgraph that approximately
preserves the coreness values. They then send this subgraph to a
single machine, which runs the sequential peeling algorithm on the
subgraph to find approximate coreness values. Crucially, this second
peeling step on a single machine can have $\Theta(n)$ depth, and thus,
this approach does not yield a polylogarithmic depth algorithm in the
work-depth model of computation.

\section{Framework for Batch-Dynamic Graph Algorithms from Low Out-Degree Orientations}\label{sec:framework}

In this section, we introduce a framework that we will use in all of our batch-dynamic
algorithms that use our batch-dynamic low out-degree orientation algorithm (\cref{sec:compute-core}).
Our framework assumes three different methods for each of the problems (maximal matching,  $k$-clique counting, and vertex coloring) that
we solve. Specifically, these three methods handle batches of insertions and deletions separately; let $\batchedgeflip$,
$\batchinsert$, and $\batchdelete$ denote these three methods. 

We assume for simplicity that all updates in the batch $\batch$ are \emph{unique}, which means that
no edge deletion occurs on an inserted edge in the batch and vice versa. Furthermore, we assume that the updates are \emph{valid}, meaning that
if an edge insertion $(u, v)$ is in $\batch$, then $(u,v)$ does not exist in the graph, and if an 
edge deletion $(w, x)$ is in $\batch$, then edge $(w, x)$ exists in the graph.
Such assumptions are only 
\emph{simplifying} assumptions because it is easy to perform preprocessing on $\batch$ in $O(\batchsize \log n)$ work and $O(\log n)$
depth to ensure that these assumptions are satisfied. 
In fact, our implementations in \cref{sec:eval}
do perform this preprocessing on the input batches.
To find all unique updates, we perform a parallel sort in $O(\batchsize\log n)$
work and $O(\log n)$ depth~\cite{JaJa92,blelloch96parallel,dhulipala2018theoretically}; 
we first sort on the edge and then
the timestamp of the update. Then, we perform a parallel filter in 
$O(\batchsize)$ work and $O(1)$ depth~\cite{JaJa92,blelloch96parallel,dhulipala2018theoretically} 
where we keep each edge with the 
latest timestamp. Then, we perform another parallel filter to keep 
only edge insertions of nonexistent edges and edge deletions 
of edges that exist in the graph. This preprocessing ensures $\batch$ follows
our simplifying assumptions and do not exceed the complexity
bounds of our \plds, and hence, we assume all input batches contain
unique and valid updates. The work and depth for preprocessing are subsumed by the bounds for the algorithms.

\begin{algorithm}[!t]\caption{GraphProblemUpdate$(G, \batch)$}
    \label{alg:framework}
    \small
    \begin{algorithmic}[1]
    \Require{A graph $G = (V, E)$ and a batch $\batch$ of unique and valid 
    updates.}
    \Ensure{A solution to the relevant graph problem.}
    \State $\mathsf{Update}(\batch)$ [\cref{alg:rebalance}]. \label{line:update-plds}
    \State $A \leftarrow \lowoutdegorient(\batch)$.\label{line:flipped} %
    \State Perform parallel filter on $\batch$ to obtain a batch of insertions, $\batchinserts$, and a batch of deletions, $\batchdeletes$.\label{line:split}
     
    \State $\batchedgeflip(A, \batchinserts, \batchdeletes)$.\label{line:flips-batch}
    \State $\batchdelete(\batchdeletes)$.\label{line:insert-batch}
    \State $\batchinsert(\batchinserts)$.\label{line:delete-batch}
    \end{algorithmic}
\end{algorithm}

\paragraph{Detailed Framework} The pseudocode for our framework is shown
in~\cref{alg:framework}. We first update the \plds by calling the update
procedure (\cref{alg:rebalance}) on the batch of updates 
in~\cref{line:update-plds}. Afterwards, we call our low out-degree orientation
algorithm to obtain the set of edges that were flipped, placed in set $A$ 
(\cref{line:flipped}). %
Then, we take the batch of updates $\batch$ and split the 
batch into a batch of insertions, $\batchinserts$, and a batch of 
deletions,
$\batchdeletes$ (\cref{line:split}). 
We call $\batchedgeflip$ (\cref{line:flips-batch}) on the set of flipped edges 
$A$, which processes the edge flips accordingly for each problem.
Finally, we call the problem specific functions $\batchdelete$
and $\batchinsert$ (\cref{line:insert-batch,line:delete-batch}) on $\batchdeletes$ and $\batchinserts$, 
respectively; we first call $\batchdelete$ and then $\batchinsert$.

\paragraph{Analysis}

By~\cref{cor:arboricity-orientation}, our low out-degree orientation algorithm
gives a $O(\alpha)$ out-degree orientation. Furthermore, the
amortized work of the algorithm
indicates that $O(\batchsize \log^2 n)$ amortized flips occur with each 
batch $\batch$. Suppose that $\batchedgeflip(A)$ takes $O(|A|\flipswork)$ work and $O(\flipsdepth)$ depth; 
$\batchinsert(\batchinserts)$ takes
$O\left(|\batchinserts|\insworkalpha\right)$ work and 
$O\left(\insdepth\right)$ depth, and $\batchdelete\left(\batchdeletes\right)$ takes 
$O\left(|\batchdeletes|\delworkalpha\right)$ work and 
$O\left(\deldepth\right)$ depth; and the update methods 
require $O(S)$ space in total.
Then, we show the following theorem about
our framework. 

\begin{theorem}\label{thm:framework}
    \cref{alg:framework} takes $$O\left(\batchsize \flipswork \log^2 n + \batchsize\insworkalpha + \batchsize\delworkalpha \right)$$
    amortized work and $$O\left(\log^2 n \log\log n + \flipsdepth + \insdepth + \deldepth\right)$$ depth \whp, in $O(n\log^2 n + m + S)$ space.
\end{theorem}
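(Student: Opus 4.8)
The plan is to charge the work, depth, and space of~\cref{alg:framework} line by line and then combine the six contributions using the observation that the lines execute \emph{in sequence}. Sequential composition means that the depths add (so the total depth is the sum of the per-line depths up to constants), and the amortized works likewise add; the space is the sum of the space held by the persistent PLDS/orientation structures and the $O(S)$ space charged to the three problem-specific subroutines. So the proof reduces to plugging each line's cost into these sums and checking that the dominant terms match the claimed bounds.

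The quantitative inputs are as follows. \cref{line:update-plds} calls $\mathsf{Update}(\batch)$, which by~\cref{cor:depth} costs $O(\batchsize\log^2 n)$ amortized work and $O(\log^2 n\log\log n)$ depth \whp, using $O(n\log^2 n + m)$ space. \cref{line:flipped} calls $\lowoutdegorient(\batch)$, which by~\cref{thm:low-outdegree} has the same work and depth bounds and, crucially, returns a flip set $A$ whose size is $O(\batchsize\log^2 n)$ \emph{amortized}. The parallel filter on~\cref{line:split} costs $O(\batchsize)$ work and $O(\log n)$ depth. The three problem-specific calls contribute, by assumption, $O(|A|\flipswork)$, $O(|\batchdeletes|\delworkalpha)$, and $O(|\batchinserts|\insworkalpha)$ work and $O(\flipsdepth)$, $O(\deldepth)$, $O(\insdepth)$ depth, respectively, within $O(S)$ total space.

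Substituting $|A| = O(\batchsize\log^2 n)$ (amortized) and $|\batchinserts|,|\batchdeletes|\le\batchsize$ and summing over the lines yields amortized work
\[
O\!\left(\batchsize\log^2 n + \batchsize\log^2 n\,\flipswork + \batchsize\delworkalpha + \batchsize\insworkalpha\right),
\]
which collapses to the stated bound once we note that the $O(\batchsize\log^2 n)$ terms from \cref{line:update-plds,line:flipped} are subsumed by $O(\batchsize\flipswork\log^2 n)$ (treating $\flipswork=\Omega(1)$). Adding the depths gives $O(\log^2 n\log\log n + \flipsdepth + \deldepth + \insdepth)$, since the $O(\log n)$ filter depth is subsumed by $O(\log^2 n\log\log n)$, and summing the space bounds gives $O(n\log^2 n + m + S)$, matching the statement; the $O(\batchsize)$ temporaries are absorbed into $m$ since $\batchsize<m$.

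The one step requiring care — and the main obstacle — is the composition of \emph{amortized} bounds. The per-call cost $O(|A|\flipswork)$ of $\batchedgeflip$ is phrased in terms of the \emph{actual} number of flips $|A|$ in a single batch, which may well exceed $\Theta(\batchsize\log^2 n)$ on a particular batch; only the flip count \emph{summed over a sequence} of batches is $O(\batchsize\log^2 n)$ per batch. I would justify the composition by appealing to the fact that the flip bound of~\cref{thm:low-outdegree} is established through the same potential-function charging (\cref{sec:work}) that bounds the PLDS update work: the total number of flips telescopes against the potential over the entire update sequence, and multiplying a fixed per-flip cost of $\flipswork$ through this amortized flip count preserves the amortization. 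The remaining obligations are immediate from the cited results: that the orientation supplied to $\batchedgeflip$ is an $O(\alpha)$ out-degree orientation follows from~\cref{cor:arboricity-orientation}, and the $O(\batchsize\log n)$-work preprocessing that makes the batch unique and valid does not exceed any of the above bounds.
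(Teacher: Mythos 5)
Your proposal is correct and takes essentially the same route as the paper's proof: sequential composition of the per-line costs, the observation that $|\batchinserts|,|\batchdeletes|\le\batchsize$, and the amortized flip bound $O(\batchsize\log^2 n)$ from \cref{thm:low-outdegree} multiplied through by the per-flip cost $\flipswork$, with space added on top. Your explicit treatment of why amortization survives this multiplication (and the $\flipswork=\Omega(1)$ remark used to absorb the PLDS work term) is a step the paper's proof glosses over, but it is the same argument, not a different one.
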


\ifspaa
\begin{proof}
\cref{thm:low-outdegree} states
that updating the \plds and getting the flipped edges require 
$O(\batchsize \log^2 n)$ amortized work, $O(\log^2 n\log\log n)$ depth, 
and $O(n\log^2 n + m)$
space. 
Since the calls to the procedures are independent and sequential, the total 
work, depth and space equal the sum of the work, depth, and space of our 
PLDS algorithm and $\batchedgeflip, \batchdelete$ and $\batchinsert$.

Then, the only additional information we need are the
sizes of $\batchinserts$ and $\batchdeletes$. By our algorithm,
$|\batchinserts|, |\batchdeletes| \leq \batchsize$ since 
$\batchinserts \cup \batchdeletes = \batch$. By~\cref{thm:low-outdegree}, $A$ has $O(\batchsize \log^2 n)$ amortized flips; thus, the amortized work of 
$\batchedgeflip$ is $O(\batchsize \flipswork\log^2 n)$. Finally,
the \plds uses $O(n\log^2 n + m)$ space;
thus, with the additional $O(S)$ space, 
the total space used is $O(n\log^2 n + m + S)$.
\end{proof}

In addition, we assume that the algorithms $\batchinsert$
and $\batchdelete$ correctly maintain the desired properties
required by each specific problem after processing $\batchinserts$
and $\batchdeletes$, respectively. Such an assumption ensures
the correctness of the solutions produced by our framework. 
We show in~\cref{sec:matching,sec:coloring,sec:clique} that
this is true for all of our procedures. Additionally, we 
can get rid of the $O(n\log^2 n)$ term in space at the expense
of an extra $O(\log^2 n)$ factor in depth by using our
space-efficient structures from~\cref{app:data-structure-impl}.

Using this framework (with the PLDS guarantees given in~\cref{thm:batch-dynamic}), 
we present batch-dynamic algorithms for a number
of problems in~\cref{sec:matching,sec:coloring,sec:clique} for 
maximal matching, $k$-clique counting, and vertex coloring.

\fi

\section{Maximal Matching}\label{sec:matching}
A maximal matching in a graph $G = (V, E)$ is a set of edges $M$ in the graph
such that no vertex is adjacent to two edges in $M$. Furthermore, no additional
edges can be added to $M$ without causing a vertex to be adjacent to two
edges in $M$.

We provide the following parallel batch-dynamic algorithm
for maximal matching using our framework given in~\cref{sec:framework}. 
We instantiate $\batchinsert$ and $\batchdelete$ for the maximal matching
problem in this section. We use the simple algorithm of Neiman and Solomon~\cite{neiman2015simple}
as a starting point, although we will see that the batch-dynamic setting introduces several non-trivial
challenges. 

\paragraph{Sequential Algorithm of Neiman and Solomon~\cite{neiman2015simple}} 
The sequential algorithm of Neiman and Solomon~\cite{neiman2015simple} uses the dynamic 
orientation algorithm of 
Brodal and Fagerberg~\cite{brodal99dynamicrepresentations}, which gives an $O(D)$ out-degree orientation for any
$D > 2\alpha_{max}$.
Given an edge insertion, they
check whether both endpoints are in the maximal matching. If not, they match the endpoints to each other.
For each vertex $u \in V$,
they maintain the set of unmatched \inn, $F(u)$, in a data structure consisting of an array augmented with a linked list.
On an edge deletion $(u, v)$ where $(u, v)$ is in the matching, they check $F(u)$
(resp.\ $F(v)$) to see if 
any \inn, $u'$ (resp.\ $v'$) are unmatched. If so, they match $u$ to $u'$ (resp.\ $v$ to $v'$). If no 
\inn are unmatched, they check whether any of their \outn, $u''$ (resp.\ $v''$) are unmatched. If so, $u$ (resp.\ $v$) matches 
with $u''$ (resp.\ $v''$). On an edge deletion $(u, v)$ (where the edge is oriented from $u$ to $v$),
if $u$ is unmatched, it removes itself from $F(v)$.
On an edge insertion, $(u, v)$, if $u$ is unmatched, it adds itself to $F(v)$, and if $u$ is matched we do not do anything. 
Finally, for an edge flip from $(u, v)$ to $(v, u)$, if $u$ is unmatched, it removes itself from $F(v)$;
if $v$ is free, it adds itself to $F(u)$. Again if $u$ is matched, we do not do anything.
Maintaining the maximal matching and updating all data structures
can be done in $O\left(\frac{\log n}{\log\left((\log n)/\alpha_{max}\right)} + \alpha_{max}\right)$ amortized time 
for $\alpha_{max} = o(\log n)$. For $\alpha_{max} = \Omega(\log n)$, they obtain $O(\alpha_{max})$ amortized time.

Unfortunately, the batch-dynamic setting introduces several challenges, the most important of which is: 
edge deletions may unmatch many different
vertices simultaneously, which need to be matched to potentially the same set of \inn. Thus, we can no longer
arbitrarily pick \inn to match unmatched vertices since many vertices may be matched to the same in-neighbor.
But we also cannot afford to look at \emph{all} of the \inn of an unmatched vertex since the in-degree is potentially
$\omega(\alpha)$.
Even for edge insertions, we cannot choose to add every edge insertion between two unmatched vertices to
the maximal matching since many edge insertions may occur on the \emph{same} unmatched vertex.

\paragraph{Batch-Dynamic Algorithm} Edge insertions are easier to handle;
for each edge insertion, in parallel, 
we check whether both endpoints adjacent
to the insertion are unmatched. If so, we run a static, parallel algorithm 
over all such vertices adjacent to an edge insertion but is unmatched; 
this finds a maximal matching among all vertices that
want to be matched due to edge insertions. If not, we do
nothing for these vertices. 

Deletions are trickier to handle. 
For each vertex incident to an edge deletion, 
we check whether it is still matched or if it can be matched with any of its 
neighbors. However, such an operation could be expensive because although a vertex has bounded
number of \outn, it may have many \inn.
To find a new matching for unmatched vertices due to edge deletions, we make use of the best-known 
low-depth, parallel, static maximal matching
algorithm which takes $O(m+n)$ work\footnote{The work of the parallel static matching algorithm given in~\cite{BFS12} can
be shown to be $O(m + n)$ \whp when using the high probability analysis of parallel bucket sort given by Bercea and Even~\cite{BerceaE21}.}
and $O(\log^2n)$ depth \whp~\cite{BFS12,FischerN18,Birn2013} combined with a scheme where we 
progressively double the number of \inn we attempt to match. Details about these procedures are provided in the next subsections.

\paragraph{Data Structure} We maintain the following data structures in our
algorithm. For each vertex $v$,
we maintain a parallel 
hash table, $\incoming_v$, of \inn which are
unmatched. Each time a vertex $v$ becomes unmatched, we inform all \outn
of $v$ that $v$ is unmatched. Similarly, when $v$ becomes matched, we inform
all \outn that it is matched.
Then, each vertex that has been
informed that $v$ has been unmatched adds $v$ to its 
hash table of
unmatched incoming neighbors, in parallel. We assume that the out-neighbors of every vertex $u$
are also maintained in a parallel hash table $\out_u$,
that is kept up to date by the edge orientation
algorithm. %
These data structures require $O(m)$ in total space usage.
Sequential versions of $\incoming_v$ and $\out_u$ are 
maintained by Neiman and Solomon~\cite{neiman2015simple}.

\subsection{Maximal Matching $\batchedgeflip$}

\begin{algorithm}[!t]\caption{$\mathsf{MaximalMatching}\batchedgeflip(A, \batchinserts, \batchdeletes)$}
    \label{alg:maximal-flips}
    \small
    \begin{algorithmic}[1]
    \Require{A set of edge flips $A$.}
    \Ensure{Updated data structures.}
    \ParFor{each flipped edge $(u, v) \in A$} \Comment{The edge is flipped from $(u, v)$ to $(v, u)$ and stored as $(u,v)$ in $A$.}\label{flip:outer-loop} 
        \If{$(u,v)$ is in the matching}
            \State Remove $u$ from $\incoming_v$.\label{flip:remove}
            \State Add $v$ to $\incoming_u$.\label{flip:add}
        \EndIf
    \EndParFor
    \end{algorithmic}
\end{algorithm}

The pseudocode for this procedure
is given in~\cref{alg:maximal-flips}.
To implement $\batchedgeflip$ for maximal matching, 
we update the data structures $\incoming_w$ to accurately account for unmatched \inn of vertices (which are 
stored in the $\incoming_v$ structures for each vertex $v$). To do this in parallel, for each 
flipped edge from $(u, v)$ to $(v, u)$ (\cref{flip:outer-loop}), we remove $u$ from $\incoming_v$ (\cref{flip:remove})
and add $v$ to $\incoming_u$ (\cref{flip:add}).

\subsection{Maximal Matching $\batchinsert$}
The pseudocode for this procedure is given in~\cref{alg:maximal-insert}.
To implement $\batchinsert$ for maximal matching, we need to check, in parallel, whether 
\emph{both} endpoints of the inserted edge
are unmatched (\cref{line:check-unmatched}). If so, we know that they can potentially be matched 
to each other. However, there could be multiple edge
insertions incident to the same unmatched vertex; thus, we cannot simply add every inserted edge 
between unmatched vertices
to the maximal matching. Instead, we keep track of all edge insertions between two unmatched 
vertices in a dynamic array $S$ (\cref{line:arr-edges-1}) %
and run a 
static, parallel maximal matching algorithm on the \emph{induced subgraph} given by $S$ 
(\cref{line:static-mm}).
We specifically use the work-efficient parallel, static maximal matching algorithm of Blelloch
\etal~\cite{BFS12} which was shown to have a better depth than previously stated
in the analysis provided by Fischer and Noever~\cite{Fischer20}. Finally, each newly matched
vertex from~\cref{line:static-mm} updates its \outn that it is now matched. For each such 
newly matched vertex $v$, each \outn $w$ of $v$ removes $v$ from $\incoming_w$.

The correctness of our procedure follows from the fact that
only new edge insertions may be added to the matching. 
Because our algorithm always maintains a maximal matching,
any previous edge that existed in the graph is either 
in the matching or is incident to a matched vertex. 
Thus, our procedure only needs to consider newly
inserted edges and such edges can be determined using
a parallel, static maximal matching algorithm~\cite{BFS12}.

\begin{algorithm}[!t]\caption{$\mathsf{MaximalMatching}\batchinsert(\batchinserts)$}
    \label{alg:maximal-insert}
    \small
    \begin{algorithmic}[1]
    \Require{A batch $\batchinserts$ of unique and valid 
    insertion updates.}
    \Ensure{A maximal matching.}
    \State $S \leftarrow \emptyset$. \Comment{$S$ contains matching candidate edges.}\label{line:arr-edges-1}
    \ParFor{each edge $\{u, v\} \in \batchinserts$}\label{line:find-unmatched-1}
        \If{$u$ \textbf{and} $v$ are unmatched}\label{line:check-unmatched}
            \State $S \leftarrow S \cup \{\{u, v\}\}$.\label{line:arr-edges-2}
        \EndIf
    \EndParFor
    \State Run $\static(G(S))$.\label{line:static-mm}
    \ParFor{each newly matched vertex $v$}\label{line:update-insert-1}
        \ParFor{each out-neighbor $w$ of $v$}\label{line:update-insert-2}
            \State Remove $v$ from $\incoming_w$.\label{line:update-insert-3}
        \EndParFor
    \EndParFor
    \end{algorithmic}
\end{algorithm}

\subsection{Maximal Matching $\batchdelete$}

\begin{algorithm}[!t]\caption{$\mathsf{MaximalMatching}\batchdelete(\batchdeletes)$}
    \label{alg:maximal-delete}
    \small
    \begin{algorithmic}[1]
    \Require{A batch $\batchdeletes$ of unique and valid 
    deletion updates.}
    \Ensure{A maximal matching.}
    \State $U \leftarrow \emptyset$. \Comment{$U$ contains newly unmatched vertices.}\label{line:unmatched-vertices}
    \State $T \leftarrow \emptyset$. \Comment{Contains the \outn of unmatched vertices.}\label{line:induced-out-1}
    \ParFor{each edge $\{u, v\} \in \batchdeletes$}\label{line:iterate-deletes}
        \If{$\{u, v\}$ is in the matching}\label{line:check-matched}
            \State $U \leftarrow U \cup \{u, v\}$.\label{mm-del:arr-edges-2}
            \State $T \leftarrow T \cup \out_u \cup \out_v$.\label{line:induced-out-2}
        \EndIf
    \EndParFor
    \State Run $\static(G(U \cup T))$.\label{line:outneighbor-match}
    \ParFor{each newly matched vertex $v$ in $G(U \cup T)$}\label{line:new-matched-outneighbor}
            \ParFor{each out-neighbor $w$ of $v$}\label{line:iterate-out}
                \State Remove $v$ from $\incoming_w$.\label{line:remove-from-out}
            \EndParFor
        \State $U \leftarrow U \setminus \{v\}$.\label{line:remove-from-u}
    \EndParFor
    \State $c\leftarrow 1$. \Comment{$c$ is the number of incoming unmatched neighbors picked to run
    the static maximal matching algorithm.}\label{line:c} \label{line:query-neighbors}
    \While{$U \neq \emptyset$}\label{unmatch:outer-while-loop}
        \ParFor{each vertex $u \in U$}\label{unmatch:inner-for-loop}
            \State Pick $c$ incoming unmatched neighbors arbitrarily.\label{line:pick-arbitrary}
            \If{$\incoming_u = \emptyset$}\label{unmatch:empty-set}
                \State $U \leftarrow U \setminus \{u\}$.
                \label{line:empty-incoming-free}
            \EndIf
        \EndParFor
        \State Let $G'$ be the induced subgraph consisting of all vertices in
        $U$ and the picked incoming unmatched neighbors.\label{unmatch:induced-subgraph}
        \State Run $\static(G')$.\label{line:ancestor-matchings}
        \ParFor{each newly matched vertex $v$ in $G'$}\label{unmatch:remove-from-out}
            \ParFor{each out-neighbor $w$ of $v$}\label{unmatch:out-neighbors}
                \State Remove $v$ from $\incoming_w$.\label{line:remove-outn}
            \EndParFor
            \State $U \leftarrow U \setminus \{v\}$.\label{line:remove-u}
        \EndParFor

        \State Set $c \leftarrow 2 \cdot c$.\label{line:double-c}
    \EndWhile
    \ParFor{each $v \in U$}
        \If{$v$ remains unmatched}
        \ParFor{each out-neighbor $w$ of $v$}
                \State Add $v$ to $\incoming_w$.
        \EndParFor
        \EndIf
    \EndParFor
    \end{algorithmic}
\end{algorithm}

The pseudocode for this algorithm is given in~\cref{alg:maximal-delete}.
For any edge $(u, v)$ that is part of the matching that has been removed by
an edge deletion, we create an induced subgraph consisting of the set of such
vertices and their out-neighbors (\cref{line:unmatched-vertices,mm-del:arr-edges-2,line:induced-out-1,line:induced-out-2}). %
Given $|\batchdeletes|$ such deletion updates, the induced
subgraph of each vertex $v$ affected by the deletions and its out-neighbors
has size $O(|\batchdeletes|\alpha)$. We use the parallel, static algorithm of Blelloch \etal~\cite{BFS12}
to find a matching in this induced graph (\cref{line:outneighbor-match}). 

For vertices that remain unmatched after the above procedure is run, we must
now attempt to match these vertices with the set of incoming unmatched neighbors. To do
this, we run the parallel, static maximal matching
algorithm on some induced
subgraphs of the remaining unmatched vertices and a subset of 
incoming vertices. 
Specifically, starting from $c = 1$ (\cref{line:c}), each vertex remaining in $U$ queries exactly
$c$ of its \inn (the \inn can be chosen arbitrarily) (\cref{line:pick-arbitrary}).
Suppose that $G'$ is the induced subgraph consisting of all vertices in $U$ and the picked incoming
unmatched neighbors of the vertices in $U$. We run \cite{BFS12} on $G'$ to obtain matchings (\cref{line:ancestor-matchings}).
The matched vertices consists of vertices in $U$ and (some) of their \inn. For each newly matched 
vertex $v$, we remove it from the $\incoming_w$ of each of its out-neighbors $w$ (\cref{line:remove-outn}).
Then, for each vertex in $U$ that becomes matched, we remove it from $U$ (\cref{line:remove-u}).
We double $c$ and proceed with this entire process again if there remains unmatched vertices $u$ in $U$ (\cref{unmatch:outer-while-loop}) 
\emph{and} $I_u$ is not empty (\cref{line:empty-incoming-free}).

The correctness of~\cref{alg:maximal-delete} follows immediately from our procedures.
Our algorithm always maintains a maximal matching after processing
a batch of updates. A vertex becomes unmatched (if it was previously
matched) if and only if it is incident to an edge deletion and the
edge deletion deletes a matched edge. An unmatched vertex can 
be matched to one of its \inn or \outn. We check
both sets of neighbors in our procedure in order to match all unmatched
vertices adjacent to edge updates.

\subsection{Work and Depth Analysis}

Here we show the work and depth analysis of our maximal matching algorithms (\cref{alg:maximal-flips,alg:maximal-insert,alg:maximal-delete}).

\begin{lemma}\label{lem:depth-analysis}
    The depth of~\cref{alg:maximal-flips,alg:maximal-insert,alg:maximal-delete} is $O(\log^2 n (\log \Delta + \log \log n))$ \whp
\end{lemma}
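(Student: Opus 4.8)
The plan is to bound the depth of each of the three procedures separately and sum them, since \cref{alg:framework} calls $\batchedgeflip$, $\batchdelete$, and $\batchinsert$ sequentially. The flip and insertion procedures are cheap, and the deletion procedure---specifically the doubling \textbf{while} loop of~\cref{alg:maximal-delete}---is the crux.

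For \cref{alg:maximal-flips}, the whole procedure is a single \textbf{parfor} over the flipped edges whose body performs $O(1)$ parallel hash-table insertions and deletions on the $\incoming$ tables; since a batch of hash-table updates runs in $O(\log^* n)$ depth \whp, its depth is $O(\log^* n)$ \whp and is subsumed. For \cref{alg:maximal-insert}, forming the candidate set $S$ is one \textbf{parfor} of membership queries in $O(\log^* n)$ depth, the closing update of the $\incoming$ tables is another $O(\log^* n)$ depth, and the dominant cost is the single call $\static(G(S))$. Using the parallel static maximal matching algorithm of Blelloch \etal~\cite{BFS12} with the improved analysis of Fischer and Noever~\cite{Fischer20}, this call runs in $O(\log^2 n)$ depth \whp, so \cref{alg:maximal-insert} has depth $O(\log^2 n)$ \whp.

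The heart of the proof is \cref{alg:maximal-delete}. The preamble that assembles $U$ and $T$ together with the initial call $\static(G(U \cup T))$ cost $O(\log^2 n)$ depth \whp by the same reasoning. The key step is bounding the iteration count of the \textbf{while} loop (\cref{unmatch:outer-while-loop}). I would argue as follows: the counter $c$ starts at $1$ and doubles on~\cref{line:double-c}; a vertex's set of incoming unmatched neighbors has size at most its total degree, which is at most $\Delta$; hence once $c$ reaches this size the iteration examines \emph{all} of the vertex's incoming unmatched neighbors, after which the vertex is either matched (and removed from $U$) or has $\incoming_u = \emptyset$ (and is removed on~\cref{line:empty-incoming-free}). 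Therefore no vertex can survive past iteration $\lceil \log \Delta \rceil + O(1)$, and the loop runs $O(\log \Delta)$ times. Each iteration runs $\static(G')$ in $O(\log^2 n)$ depth \whp, while the neighbor-picking, induced-subgraph construction, and $\incoming$-table updates use parallel hash-table and filter primitives that contribute only lower-order depth (at most $O(\log \log n)$ per phase). Summing over the $O(\log \Delta)$ iterations and including the initial static call gives depth $O(\log^2 n (1 + \log \Delta))$, which I bound by $O(\log^2 n (\log \Delta + \log \log n))$ since $1 \le \log\log n$ for large $n$.

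Finally I would add the three procedure bounds; the deletion term dominates and yields the claimed $O(\log^2 n (\log \Delta + \log \log n))$. The main obstacle is precisely the \textbf{while}-loop termination argument: one must rule out a long adaptive chain in which vertices are repeatedly left unmatched while still having unexamined incoming neighbors, so that the geometric growth of $c$ genuinely caps the number of phases at $O(\log \Delta)$ rather than something larger. The remaining technical care is a union bound over the $O(\log \Delta)$ invocations of $\static$ and the polynomially many hash-table operations, ensuring that every per-call $O(\log^2 n)$-depth guarantee holds simultaneously \whp.
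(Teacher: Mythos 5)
Your proposal is correct and follows essentially the same route as the paper's proof: per-procedure depth bounds with the hash-table operations contributing lower-order $O(\log^* n)$ terms, the $O(\log^2 n)$-depth static matching calls of Blelloch \etal~\cite{BFS12} (via Fischer--Noever) dominating, and the doubling argument capping the deletion \textbf{while} loop at $O(\log \Delta)$ iterations since once $c \geq \Delta$ all incoming unmatched neighbors are included in $G'$ and maximality forces every surviving vertex out of $U$. The only cosmetic difference is at the end: the paper obtains the $\log\log n$ term by invoking the framework theorem (which folds in the PLDS update depth), whereas you absorb it by noting $1 \leq \log\log n$; both are valid.
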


\begin{proof}
We first prove the depth of each algorithm separately and use~\cref{thm:framework} to find the 
total depth.

In~\cref{alg:maximal-flips}, we can process all flipped edges in parallel (\cref{flip:outer-loop}). 
Adding and removing vertices from the hash tables $\incoming_v$ requires $O(\log^* n)$ depth \whp to perform in parallel (\cref{flip:add,flip:remove}).

In~\cref{alg:maximal-insert}, finding all edges in $\batchinserts$ that are between two unmatched vertices can 
be done in parallel in $O(1)$ depth (\cref{line:find-unmatched-1,line:check-unmatched,line:arr-edges-2}).
Then, by the analysis in~\cite{BFS12,Fischer20}, the parallel, static algorithm we use in~\cref{line:static-mm} runs in $O(\log^2 n)$
depth \whp %
Finally, updating the $\incoming_w$ of each out-neighbor $w$ of a newly matched vertex $v$ can be done in parallel
in $O(\log^* n)$ depth \whp (\cref{line:update-insert-1,line:update-insert-2,line:update-insert-3}). %
 Thus,~\cref{alg:maximal-insert} can be done in $O(\log^2 n)$ depth \whp

    In~\cref{alg:maximal-delete}, finding all newly unmatched vertices and making the
    induced subgraph consisting of the these vertices and their \outn can be done in
    $O(\log n)$ depth (\cref{line:iterate-deletes,line:check-matched,line:arr-edges-2,line:induced-out-2}) using a parallel filter.
    As before, running the parallel, static algorithm takes $O(\log^2 n)$ depth \whp (\cref{line:outneighbor-match}).
    Then, removing each newly matched vertex $v$ from the $\incoming_w$ of each out-neighbor $w$ 
    of $v$ takes $O(\log^* n)$ depth (\cref{line:new-matched-outneighbor,line:iterate-out,line:remove-from-out}). 
    Removing the matched vertices $v$ from $U$ can also be done in parallel in $O(\log^* n)$ depth (\cref{line:remove-from-u})
    if $U$ is maintained as a parallel hash table.
    The depth of the outer while loop (\cref{unmatch:outer-while-loop}) is
    $O(\log \Delta)$ since the while loop iterates to a value of $c$ that is at
    most $c = O(\Delta)$. 
    When $c = \Delta$, all incoming neighbors of
    every vertex would be included in the induced subgraph $G'$
    and, hence, a maximal matching is guaranteed in this 
    final case. Because the value
    of $c$ is doubled each time, the total number of iterations 
    of the while loop is $O(\log \Delta)$. 
   The depth of the static
    matching procedure is $O(\log^2 n)$ \whp, and so the total depth
    of~\cref{line:c,line:query-neighbors,unmatch:outer-while-loop,unmatch:inner-for-loop,line:pick-arbitrary,unmatch:empty-set,line:empty-incoming-free,unmatch:induced-subgraph,line:ancestor-matchings,unmatch:remove-from-out,unmatch:out-neighbors,line:remove-outn,line:remove-u,line:double-c} is 
    $O(\log \Delta \log^2 n)$ \whp Finally, the last step of adding each remaining 
    unmatched vertex from $\incoming_w$ of
    each of its \outn $w$ takes $O(\log^* n)$ depth. Note that the matched vertices have already been removed from $\incoming_w$ in the 
    previous lines (\cref{line:remove-from-out,line:remove-outn}). Thus, \cref{alg:maximal-delete} takes $O(\log \Delta \log^2 n)$ depth \whp
    
    By~\cref{thm:framework}, the total depth of~\cref{alg:maximal-delete,alg:maximal-insert,alg:maximal-flips} is $O(\log^2 n (\log\Delta + \log\log n))$
    \whp
\end{proof}

\begin{lemma}\label{lem:work-bound}
    \cref{alg:maximal-flips,alg:maximal-insert,alg:maximal-delete} require 
    $O(\batchsize (\alpha + \log^2 n))$ amortized
    work \whp
\end{lemma}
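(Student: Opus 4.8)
The plan is to invoke the framework bound of \cref{thm:framework}, which expresses the total work in terms of three per-operation quantities: $\flipswork$, the work per edge flip in $\batchedgeflip$; $\insworkalpha$, the work per insertion in $\batchinsert$; and $\delworkalpha$, the work per deletion in $\batchdelete$. Recalling that \cref{thm:low-outdegree} guarantees $O(\batchsize\log^2 n)$ amortized flips, it suffices to establish $\flipswork = O(1)$, $\insworkalpha = O(\alpha)$, and $\delworkalpha = O(\alpha)$; substituting these into \cref{thm:framework} gives $O(\batchsize\log^2 n + \batchsize\alpha + \batchsize\alpha) = O(\batchsize(\alpha + \log^2 n))$.

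The bounds for $\batchedgeflip$ and $\batchinsert$ are straightforward. In \cref{alg:maximal-flips} each flipped edge triggers a constant number of parallel hash-table updates to the $\incoming$ structures, so $\flipswork = O(1)$ \whp In \cref{alg:maximal-insert}, collecting the candidate edges $S$ and running $\static(G(S))$ costs $O(|\batchinserts|)$ work since $S \subseteq \batchinserts$; the only $\alpha$-dependent cost is notifying, for each newly matched vertex, its $O(\alpha)$ out-neighbors so they update their $\incoming$ tables, giving $\insworkalpha = O(\alpha)$.

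The crux is bounding $\delworkalpha$ in \cref{alg:maximal-delete}. Building the first induced subgraph on the $O(|\batchdeletes|)$ newly unmatched vertices $U$ together with their out-neighbors costs $O(|\batchdeletes|\alpha)$, since each such vertex contributes only $O(\alpha)$ out-edges and the static matching is linear in this size. The main obstacle is the doubling \textbf{while} loop, which I would charge with an amortized argument. First I would show the per-vertex cost telescopes: a vertex $u$ surviving rounds $c = 1, 2, 4, \ldots, c^{\ast}$ does $\sum_{c} O(c) = O(c^{\ast})$ work, and---because a maximal matching on the induced subgraph leaves $u$ unmatched only if \emph{all} $c$ of its probed incoming neighbors got matched (and were thus removed from $\incoming_u$)---in every round before the last, $u$ ``consumes'' exactly $c$ distinct incoming neighbors that became matched. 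Hence $c^{\ast} = O(1 + m_u)$, where $m_u$ is the number of incoming neighbors of $u$ matched during the loop, and the total loop work is $O\!\left(|U| + \sum_u m_u\right)$.

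It then remains to bound $\sum_u m_u$. Each matched vertex $w$ is counted in $m_u$ only for out-neighbors $u$ of $w$, of which there are $O(\alpha)$, so $\sum_u m_u = O(\alpha \cdot M)$ where $M$ is the total number of vertices matched inside the loop. Finally I would argue $M = O(|\batchdeletes|)$: every vertex that was unmatched \emph{before} the batch is, by maximality of the previous matching together with the fact that deletions never create edges, pairwise non-adjacent to other previously-unmatched vertices; hence every edge added to the matching during the loop has at least one endpoint among the $O(|\batchdeletes|)$ newly unmatched vertices in $U$, and each such vertex is matched at most once. Therefore $\sum_u m_u = O(\alpha|\batchdeletes|)$, the loop does $O(|\batchdeletes|\alpha)$ work, and $\delworkalpha = O(\alpha)$. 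Combining the three per-operation bounds through \cref{thm:framework} completes the proof; the delicate step is precisely this telescoping-plus-independence charging for the deletion loop, which is exactly what the doubling schedule is designed to enable.
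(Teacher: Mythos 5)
Your proof is correct and follows essentially the same route as the paper: invoke the framework theorem with $\flipswork = O(1)$ and $\insworkalpha = O(\alpha)$, bound the first static run in the deletion procedure by $O(|\batchdeletes|\alpha)$, and handle the doubling loop by a charging argument combining maximality of the static matching (a surviving vertex's probed in-neighbors must all have been matched), the $O(\alpha)$ out-degree bound, and the fact that only $O(|\batchdeletes|)$ vertices can become matched during the loop. The only cosmetic difference is directional: you count query/match pairs from the side of the surviving vertex $u$ (telescoping to $c^{\ast} = O(1+m_u)$), whereas the paper charges each query to the matched in-neighbor at the moment it becomes matched—these are the same double-counting read in opposite directions, and your explicit maximality-of-the-previous-matching argument for $M = O(|\batchdeletes|)$ is, if anything, spelled out more carefully than the paper's terse "$4|U|$" claim.
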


\begin{proof}
As in the depth proof, we first prove
the work of each of the individual algorithms and then use~\cref{thm:framework} to show the final work bound.
    
    By~\cref{thm:low-outdegree}, the number of edge flips is $O(\batchsize \log^2 n)$ amortized.
    Thus, in~\cref{alg:maximal-flips}, the number of edge flips we process in total is $O(\batchsize\log^2 n)$
    (\cref{flip:outer-loop}). For each edge flip, we spend $O(1)$ work to add and remove, respectively,
    from $\incoming_u$ and $\incoming_v$ (\cref{flip:add,flip:remove}). Then, the total work of~\cref{alg:maximal-flips}
    is $O(\batchsize\log^2 n)$ amortized.

    In~\cref{alg:maximal-insert}, there are at most $\batchsize$ insertions and checking whether the endpoints
    of the edges are unmatched requires $O(\batchsize)$ work. This procedure produces at most $O(\batchsize)$ unmatched
    vertices in $S$ since each edge is incident to two vertices. Then, running the parallel,
    static work-efficient maximal matching algorithm of~\cite{BFS12} on the induced subgraph of the unmatched vertices 
    requires $O(\batchsize)$ work \whp 
    Then, removing the matched vertices from the $\incoming_w$ of each out-neighbor requires $O(\batchsize)$ work.
    Thus,~\cref{alg:insert} takes $O(\batchsize)$ work \whp.
    
    The remainder of the proof focuses on proving the work for~\cref{alg:maximal-delete}.
    First, we note that each vertex that becomes matched is either a 
    vertex in
    $u \in U$ (\cref{line:arr-edges-2}) or is an unmatched incoming neighbor of $u$. There can be at 
    most $4|U|$
    unmatched vertices that may become matched since each deletion update
    can unmatch at most two vertices. 
    Thus, the work of informing all out-neighbors throughout
    the procedure is $O(\batchsize \alpha)$ since by~\cref{thm:low-outdegree}, we 
    maintain an $O(\alpha)$ out-degree orientation and there are $O(\batchsize)$ deletion updates.
    Running the static algorithm (\cref{line:outneighbor-match}) on the induced 
    subgraph of $U$ and its \outn requires $O(\batchsize \alpha)$
    work \whp
    Then, the remaining work comes from the work of looking at $c$ incoming
    neighbors of each node $u \in U$ of the set of vertices $U$ that have not
    been matched and have at least one unmatched neighbor. 
    This requires $O(c|U|)$ work to find a maximal matching
    in the induced subgraph of $U$ and $c$ incoming vertices of each
    vertex in $U$. 
    We perform the following charging argument to calculate the work
    over all attempted $c$ values. 
    
    The key to the charging argument is that we charge the cost of
    \emph{attempted} matchings of a vertex to \emph{when it or its \inn are matched}. 
    More specifically, let a \emph{query}
    be an instance when an in-neighbor %
    of a vertex is chosen in~\cref{line:pick-arbitrary} and
    the static algorithm is run on the induced subgraph of the selected
    in-neighbors and the vertices remaining in $U$.
    When an in-neighbor becomes
    matched, we charge to it the cost of 
    \emph{each previous vertex that queried
    the matched in-neighbor}. %
    Since each matched in-neighbor has at most $O(\alpha)$ out-degree, each such matched in-neighbor will be 
    queried at most $O(\alpha)$ times. The static algorithm we run in~\cref{line:ancestor-matchings} takes work
    that is linear in the size of the induced subgraph $G'$; thus, this is $O(1)$ amortized per vertex in $U$ and
    its chosen \inn.
    Furthermore, as we stated before, there 
    can be at most $O(|U|)$ matched in-neighbors. Thus, the total charged cost to 
    each matched in-neighbor is $O(\alpha|U|)$. We now need to account for the cost of the 
    in-neighbors that were queried but not matched.
    
    To bound the number of such vertices, for each vertex
    $v \in U$, consider the last run of the static algorithm
    where $v$ remains unmatched after the run. During this previous run, 
    all queried \inn of $v$ were matched to some vertices in 
    $U$; if there exists an \inn that is unmatched, it 
    would have been matched to $v$.
    In the final run of the static algorithm for $v$, 
    we query at most two times the number of \inn 
    queried in the previous round. Thus, the remaining
    unmatched, queried \inn in the final run for $v$
    can be charged to 
    the \emph{previous run} where all queried \inn were matched.
    This results in an additional cost of $O(1)$ that is charged to each
    matched in-neighbor.
    
    In total, our charging argument shows that finding the  
    maximal matching in each subgraph takes $O(\alpha|U|)$ work. $|U| = O(\batchsize)$, and so
    the total work of~\cref{alg:maximal-delete} is $O(\alpha \batchsize)$.
    
    By~\cref{thm:framework}, the total work of our batch-dynamic maximal matching algorithm
    is $O(\batchsize(\log^2 n + \alpha))$ amortized, \whp
\end{proof}

\begin{theorem}\label{thm:main-mm}
    Our maximal matching algorithm takes $O(\batchsize (\alpha + \log^2 n))$ amortized work and
    $O(\log^2 n(\log \Delta + \log \log n))$ depth \whp, and uses $O(n\log^2 n + m)$ space.
\end{theorem}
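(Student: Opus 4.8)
The plan is to obtain \cref{thm:main-mm} as a corollary of the two lemmas just established, \cref{lem:depth-analysis} and \cref{lem:work-bound}, together with the framework bound of \cref{thm:framework}. Concretely, I would instantiate the three generic subroutines $\batchedgeflip$, $\batchinsert$, and $\batchdelete$ with \cref{alg:maximal-flips,alg:maximal-insert,alg:maximal-delete} and read off the per-element parameters that \cref{thm:framework} consumes. From the work analysis in \cref{lem:work-bound}, each edge flip costs $O(1)$ work, each insertion costs $O(1)$ work, and each deletion costs $O(\alpha)$ amortized work, so $\flipswork = O(1)$, $\insworkalpha = O(1)$, and $\delworkalpha = O(\alpha)$. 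Substituting into the work expression of \cref{thm:framework}, the flip term $O(\batchsize\,\flipswork\,\log^2 n)$ collapses to $O(\batchsize \log^2 n)$, the insertion term to $O(\batchsize)$, and the deletion term to $O(\batchsize\,\alpha)$, which sum to the claimed $O(\batchsize(\alpha + \log^2 n))$ amortized work \whp

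For the depth, I would likewise read off from \cref{lem:depth-analysis} that $\flipsdepth = O(\log^* n)$, $\insdepth = O(\log^2 n)$, and $\deldepth = O(\log\Delta\,\log^2 n)$, all \whp, where the $\log\Delta$ factor comes from the doubling loop over $\incoming$ in \cref{alg:maximal-delete} that runs a static maximal matching ($O(\log^2 n)$ depth) on each of $O(\log\Delta)$ rounds. Plugging these into the depth expression $O(\log^2 n \log\log n + \flipsdepth + \insdepth + \deldepth)$ of \cref{thm:framework} leaves the dominant term $O(\log^2 n(\log\Delta + \log\log n))$, exactly as stated. The space bound is then immediate: the PLDS and its orientation occupy $O(n\log^2 n + m)$ space by \cref{thm:low-outdegree}, and the only extra structures are the per-vertex hash tables $\incoming_v$ and $\out_u$, which store each oriented edge a constant number of times, so $S = O(m)$ and the total remains $O(n\log^2 n + m)$. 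Correctness—that a maximal matching is maintained after every batch—I would cite from the arguments accompanying \cref{alg:maximal-insert,alg:maximal-delete}: insertions only add newly inserted edges between two currently free endpoints via a static matching on their induced subgraph, while deletions re-match every newly freed vertex to a free out-neighbor or, through the doubling search, to a free in-neighbor, so no edge joining two free vertices can remain.

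The part I expect to require the most care is verifying that the amortized charging for the deletion loop composes cleanly with the amortized edge-flip count. The $O(\alpha|U|)$ bound in \cref{lem:work-bound} is derived by charging every \emph{attempted} match of an unmatched vertex to the eventual matching of one of its $O(\alpha)$ out-neighbors (each queried $O(\alpha)$ times), plus a charge of the final doubling round to the previous round; since $|U| = O(\batchsize)$, this yields $O(\batchsize\,\alpha)$. I would confirm that this charging is genuinely amortized over the batch and does not secretly reintroduce a $\Delta$ dependence into the \emph{work} (the $\log\Delta$ appears only in the depth), and that it stacks additively with the $O(\batchsize\log^2 n)$ amortized flips supplied by \cref{thm:low-outdegree} rather than multiplicatively. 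Once that composition is checked, the work, depth, and space bounds follow directly from the three cited results, completing the proof.
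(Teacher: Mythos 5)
Your proposal is correct and follows essentially the same route as the paper: the paper's proof of \cref{thm:main-mm} simply cites \cref{lem:depth-analysis} and \cref{lem:work-bound} (whose own proofs perform exactly the framework instantiation via \cref{thm:framework} that you spell out) and then adds the same space argument, namely that the only structures beyond the PLDS are the per-vertex hash tables of unmatched in-neighbors and out-neighbors, contributing $O(m)$ extra space. Your additional check that the $O(\alpha|U|)$ deletion charging composes additively (not multiplicatively) with the $O(\batchsize\log^2 n)$ amortized flips is a fair point of care, and it is resolved exactly as you suggest.
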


\begin{proof}
The work and depth follow from~\cref{lem:depth-analysis} and~\cref{lem:work-bound}. 
The algorithm uses space equal to the space required by the \kc decomposition
algorithm since it only stores the additional $\incoming_v$ data structures 
which in total takes $O(m)$ additional space. Thus, the 
space required by our algorithm is $O(n\log^2 n + m)$.
\end{proof}

\section{Clique Counting}\label{sec:clique}

A $k$-clique is a set of $k$ vertices where
edges exist between all pairs of vertices in the set. 
Specifically, using our framework (\cref{alg:framework}) and 
our problem specific methods, 
we obtain a $k$-clique counting algorithm (for constant $k$) 
that runs in
$O(\outdeg^{k - 2}|\batch|\log^2 n)$ work and 
$O(\log^2 n)$ depth \whp, using
$O(m\outdeg^{k-2} + n\log n)$ space.

\subsection{Algorithm Overview}\label{sec:clique-overview}
Due to the complexity of our algorithm, we first provide some intuition behind the core ideas
before we give the specific details. First, we make the simple observation
that any clique in a directed acyclic graph has a vertex where all edges in the clique that are adjacent to the vertex are directed out from the vertex.
For a particular clique $C$, we call this vertex the \defn{source} of $C$. 

\begin{observation}\label{obs:simple-source}
Provided a directed acyclic orientation of a graph $G = (V, E)$, for any clique $C \in G$, there exists a \textit{unique} 
vertex $v \in C$ where all edges from $v$ to all other vertices $w \in C$ are directed from $v$ to $w$.  
\end{observation}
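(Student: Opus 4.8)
The plan is to reduce the claim to the elementary fact that an acyclic orientation of a complete graph (an acyclic tournament) has a unique source. First I would observe that since $C$ is a clique, the subgraph of $G$ induced on $C$ is complete: every pair of distinct vertices in $C$ is joined by an edge. Restricting the given acyclic orientation to these edges yields an orientation of a complete graph, and this restriction is itself acyclic, since any directed cycle among the edges within $C$ would be a directed cycle in $G$, contradicting that the orientation of $G$ is acyclic.

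For existence, I would use the standard fact that every finite directed acyclic graph contains at least one vertex of in-degree zero (a source): take a topological ordering of the vertices of $C$ under the restricted orientation, which exists precisely because the restriction is acyclic, and let $v$ be the first vertex in this order. Because $C$ is a clique, $v$ is adjacent to every other $w \in C$, and since $v$ has no incoming edges within $C$, each such edge must be oriented from $v$ to $w$. This is exactly the desired vertex.

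For uniqueness, I would argue by contradiction. Suppose two distinct vertices $v_1, v_2 \in C$ each had all their clique-edges oriented outward. Since $C$ is a clique, the edge between $v_1$ and $v_2$ exists and is oriented in exactly one direction; but $v_1$ having all edges outward forces the orientation $v_1 \to v_2$, while $v_2$ having all edges outward forces $v_2 \to v_1$, a contradiction. Hence the source vertex is unique.

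I do not anticipate a genuine obstacle here: the statement is a direct consequence of acyclicity restricted to a complete induced subgraph, and the only point requiring care is noting that acyclicity is inherited by the induced orientation on $C$ (so that a topological order, and hence a source, exists). The rest is a one-line contradiction for uniqueness.
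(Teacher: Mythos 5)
Your proof is correct, and your uniqueness argument is exactly the paper's: two would-be sources would force the edge between them to be oriented both ways. Where you diverge is in the existence half. You reduce it to the standard fact that a finite DAG has a vertex of in-degree zero, obtained by restricting the orientation to the clique (noting, correctly, that acyclicity is inherited) and taking the first vertex of a topological order; clique-completeness then forces all of that vertex's edges to point outward. The paper instead proves existence from scratch by contradiction: assuming every vertex of the $k$-clique has both an in-neighbor and an out-neighbor inside $C$, it follows a chain of out-neighbors $v \to w \to x \to y \to \cdots$, argues each successive vertex must be new (otherwise a directed cycle arises), and concludes that $k+1$ distinct vertices would be needed inside a $k$-clique. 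The two arguments prove the same underlying fact --- a finite acyclic tournament has a unique source --- but yours is shorter and leans on a textbook DAG property, while the paper's is self-contained and avoids invoking topological orderings at the cost of a more hands-on walk-following argument. Either is a valid substitute for the other.
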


\begin{proof}
First, it is easy to see that the source is unique. This is because for any two vertices $u$ and $v$ in the clique, the edge $\{u, v\}$
must be directed either in the $(u, v)$ direction or in the $(v, u)$ direction, one of which makes $v$ (resp.\ $u$) no longer the source.

Then, a simple proof by contradiction shows that the source exists. Suppose for contradiction that all vertices in $C$ have at least one out-neighbor and 
one in-neighbor. We start with vertex $v$. Suppose that $v$'s out-neighbor is $w$ and $v$'s in-neighbor is $u$. By our assumption, $w$
must have at least one out-neighbor, $x$. $x \neq u$, otherwise, there exists a $3$-cycle in the graph. ($x \neq v$ also since we're only
considering simple graphs.) By the same argument, $x$ must have at least one out-neighbor, $y$. $y \not\in \{x, u, v, w\}$, otherwise, by the 
same argument, there would exist a cycle and we only consider simple graphs. Making the same argument for the $k$-th unique
out-neighbor, we require a $(k+1)$-st unique vertex in order to not create a cycle. This contradicts
the fact that $C$ is a $k$-clique.
\end{proof}

We begin our description with an explanation of how to find the newly created cliques resulting from edge insertions.
Using~\cref{obs:simple-source}, we make the second observation that 
for any edge update $(u, v)$, we can count the number of $k$-cliques (for constant $k$)
incident to $(u, v)$ \emph{and} where $u$ is the source vertex of the clique in $O(\alpha^{k-2})$ work and $O(1)$ depth, 
provided an $O(\alpha)$ acyclic low out-degree orientation.
This is because $u$ and $v$ must be in the clique and, thus, there are ${c\alpha \choose k-2} = O(\alpha^{k-2})$ 
additional vertices to choose from 
among $u$'s \outn (for some constant $c$ hidden in the $O(\alpha)$). This observation also means that we do not have to worry about finding
a clique \emph{until after all edges adjacent to its source vertices} are added. The clique will be found
by the last of these source edges when it is added.
Thus, the main challenge of our algorithm is how to find
the cliques resulting from edge updates to other vertices \emph{aside from those adjacent to the source vertex}. 

This leads to our third and final observation:
a $k$-clique can be formed from a $(k-1)$-clique by attaching a source vertex where all edges from the source vertex are directed into 
the vertices of the $(k-1)$-clique. The last crucial observation allows us to count $k$-cliques inductively by counting $(k-1)$-cliques, which are 
in turn counted using $(k-2)$-cliques, and so on. This means that for any $k$-clique $C$, by~\cref{obs:simple-source}, 
there exists a set of \emph{unique} source vertices responsible for the set of smaller cliques within $C$.
Specifically, for every clique $C_i \subseteq C$ of size $i \in [2, \dots, k]$, there exists a source for this clique.  
For every edge insertion, we first determine the possible sets of $k$ vertices which can be \emph{completed} by future 
edge insertions to form $k$-cliques. Potential cliques are determined using the above observation by assuming for each edge insertion $(u, v)$,
$u$ is the source of the clique. Suppose $C$ is one such set.
We assign the responsibility of counting the potential $k$-clique $C$ to the \emph{largest incomplete clique, $C_i \subset C$, without
a source}. This can occur when $C_i$ does not yet have enough edges to determine the source (see 
(1) in~\cref{fig:k-clique} where $\{a, b, d, e\}$ does not yet have a source). %
The base case, the smallest possible largest incomplete clique without a source,
is an edge; once this edge is inserted, the source of the edge counts the clique. The concept of the largest incomplete
clique without a source is fundamental to our algorithm.

Given a batch of edge insertions, if a set of edges completes the largest incomplete clique without a source, $C_i$, of $C$, then the 
new source of this clique is responsible for counting $C$ in the clique count.
Crucially, $C$ cannot be counted until $C_i$ is completed;
furthermore, for any set of $k$ vertices $C$ with a source, but is not a clique,
there exists a $C_i$ that can count $C$. 
If the batch does not complete the clique but a source has
been found for $C_i$, then, we determine the new largest incomplete clique without a source, $C_j$ (where $j<i$), that 
will be responsible for counting $C$. In~\cref{fig:k-clique}, (2) shows a set of insertions that determines that 
$e$ is the source of $\{a, b, d, e\}$. Then, the new largest incomplete clique without a source is $\{a, b, d\}$.

This naturally leads to an algorithm for 
counting $k$-cliques. We create $k-2$ parallel hash tables,
where for each potential $k$-clique $C$, 
we store the indices of the vertices comprising the largest incomplete cliques, $C_{i}$, without a source,
for $i \in [2, k-1]$, in table $I_i$. The values stored in these hash tables are the numbers of $k$-cliques $C$ that would be completed
if $C_{i}$ in table $I_i$ is completed. 
Storing the indices of all vertices allows us to determine the source when the appropriate edges
have been inserted. (More details are given in our detailed algorithm below.) Given
this set of structures, we increase the $k$-clique count when a clique from table $I_i$ is completed 
by a batch of insertions; to increase the $k$-clique count, we use the value stored for the 
clique. If there remains any incomplete cliques, we use 
each table $I_j$ for $j > 2$ to update tables $I_i$ for $i < j$ if any incomplete cliques in $j$ have found sources. 
We give an example illustration of this part of the algorithm in~\cref{fig:k-clique}.

\begin{figure}[!t]
    \centering
    \includegraphics[width=\columnwidth]{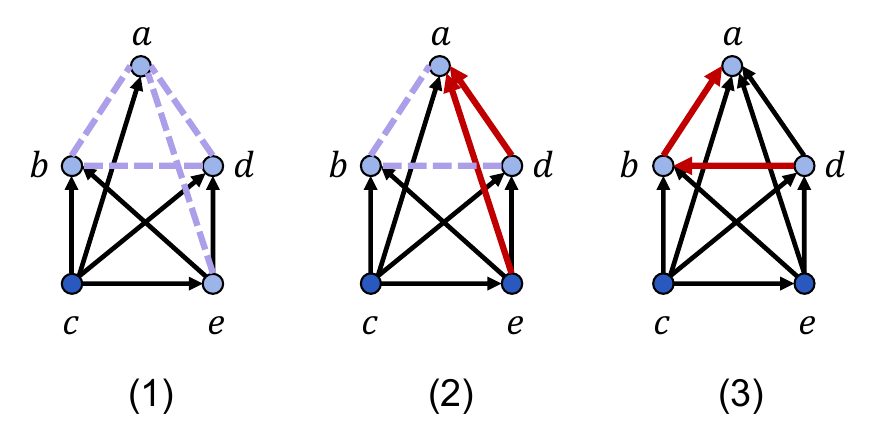}
    \caption{Example of incomplete cliques in our counting algorithm for counting $k = 5$ cliques.
    In (1), $c$ is the source of a potential $5$-clique. $\{a, b, c, d, e\}$ represents a potential $5$-clique.
    We do not yet know the source of the $4$-clique consisting of $\{a, b, d, e\}$ (the purple edges represent
    potential edges), and so we store $\{a, b, d, e\}$ in table $I_4$. (2) shows a set of edge insertions (indicated
    by the red edges) which determines a source ($e$) for the $4$-clique. Thus, we 
    insert $\{a, b, d\}$ in table $I_3$. Suppose that this is the only clique that would 
    be counted when edges are inserted between all pairs in $\{a, b, d\}$. Thus, we associate with 
    this key, a count of $1$ in table $I_3$. Finally, (3) shows two edge insertions which completes the 
    triangle; hence, we count the clique using the key $\{a, b, d\}$ and increment the $k$-clique count 
    using the count associated with it (in this example, the count is $1$) in table $I_3$.}
    \label{fig:k-clique}
\end{figure}

\paragraph{Data Structures} We maintain the following data structures in our algorithm. 
We maintain $k-2$ parallel hash tables, $I_i$ for $i \in [2, k-1]$. 
For each $I_i$, the keys are ordered sets of vertices of size $i$, and the values are counts. 
The counts represent the number of $k$-cliques that would form if all edges among the 
vertices in the keys exist. To prevent over-counting, one edge update 
incident to the new source of any newly completed clique
stored in $I_i$ is responsible for increasing the count by the stored value.

\subsection{$\batchedgeflip$ Implementation}
Our algorithm uses the framework given in~\cref{sec:framework}.
We first instantiate the algorithm for $\batchedgeflip(A, \batchinserts, \batchdeletes)$, which creates a set of edge insertions
and deletions from the flipped edges in $A$ and appends these edges to $\batchinserts$ and $\batchdeletes$. 
The pseudocode is given in~\cref{alg:clique-flips}. In parallel, for each edge that is flipped from
$(u, v)$ to $(v, u)$ (\cref{clique-flip:outer-loop}), we add $(u, v)$ to $\batchdeletes$ (\cref{clique-flip:delete})
and add $(v, u)$ to $\batchinserts$ (\cref{clique-flip:insert}).

\begin{algorithm}[!t]\caption{$\mathsf{CliqueCounting}\batchedgeflip(A, \batchinserts, \batchdeletes)$}
    \label{alg:clique-flips}
    \small
    \begin{algorithmic}[1]
    \Require{A set of edge flips $A$.}
    \Ensure{Updates $\batchinserts, \batchdeletes$.}
    \ParFor{each flipped edge $(u, v) \in A$} \Comment{The edge is flipped from $(u, v)$ to $(v, u)$ and stored as $(u,v)$ in $A$.}\label{clique-flip:outer-loop} 
        \State $\batchdeletes \leftarrow \batchdeletes \cup (u, v)$.\label{clique-flip:delete}
        \State $\batchinserts \leftarrow \batchinserts \cup (v, u)$.\label{clique-flip:insert}
    \EndParFor
    \end{algorithmic}
\end{algorithm}

\subsection{$\batchinsert$ Implementation}
We now instantiate $\batchinsert$ for $k$-clique counting.
The main basis of our $\batchinsert$ and $\batchdelete$ subroutines
is to maintain our parallel hash tables throughout edge insertions and deletions. 
We first describe $\mathsf{CliqueCounting}\batchinsert$, given in~\cref{alg:clique-insert}. 
$\mathsf{CliqueCounting}\batchdelete$ is symmetric and is discussed in~\cref{sec:batch-delete-clique}.

\begin{algorithm}[!t]\caption{$\mathsf{CliqueCounting}\batchinsert(\batchinserts)$}
    \label{alg:clique-insert}
    \small
    \begin{algorithmic}[1]
    \Require{A batch $\batchinserts$ of unique and valid 
    insertion updates.}
    \Ensure{An updated $k$-clique count and updated data structures.}
    \State Let $\kcount$ be the current count of the number of $k$-cliques in the graph.
    \State Insert the edges in $\batchinserts$ into the graph in the orientation specified
    by $\batchinserts$. Mark all edges in $\batchinserts$ in the graph.
    \State Let $R$ be the order of the edges in
    $\batchinserts$.\label{clique-insert:arbitrary-order}
    \ParFor{each edge $(u, v) \in \batchinserts$} \Comment{The edge is oriented from $u$ to $v$.}\label{clique-insert:outer-loop} 
        \For{$i \in [1, \dots, k-2]$}\label{clique-insert:inner-for-loop}
            \ParFor{each subset $T$ of $i$ \outn of $u$ (excluding $v$)}\label{clique-insert:all-subsets}
                \State Let $T'$ be the ordered set of $T \cup \{u, v\}$ sorted by vertex index.\label{clique-insert:orderedset}
                \If{$u$ is the source of $T'$ \textbf{and}
                $(u, v)$ is the earliest in $R$ out of all marked edges from $u$ to $w \in T'$}\label{clique-insert:check-complete}
                    \If{all edges between each pair $x, y \in T'$ exists}\label{clique-insert:check-if-pairs}
                        \State $j \leftarrow 2$.\label{clique-insert:smallest-index}
                \Else\label{clique-insert:not-pairs}
                        \State Find largest incomplete clique without a source, $C'$, in $T'$.\label{clique-insert:lic}
                        \State Let $j$ be the size of $C'$.\label{clique-insert:denote-lic}
                    \EndIf
                    \State $T_{sub} \leftarrow T'$.\label{clique-insert:tsub-clique}
                    \For{$l = i + 1$ to $j$}\label{clique-insert:update-table-loop}
                        \State Find $s$ the source of $T_{sub}$.\label{clique-insert:find-source}
                        \State $T_{sub} \leftarrow T_{sub} \setminus s$.\label{clique-insert:remove-source-clique}
                        \If{$|T_{sub}| = k - 1$}\label{clique-insert:new-clique}
                            \If{$l == i + 1$ and $T'$ is a $(i + 2)$-size clique}\label{clique-insert:check-if-clique} %
                                \State $\kcount \leftarrow \kcount + 1$.\label{clique-insert:increment-count-1}
                            \EndIf
                            \State $I_{l}[T_{sub}] \leftarrow I_{l}[T_{sub}] +1$.\label{clique-insert:assign-new-potential-clique}
                        \ElsIf{$T' \in I_{i+2}$}\label{clique-insert:previous-count}
                            \If{$l == i + 1$ and $T'$ is a $(i + 2)$-size clique}\label{clique-insert:check-if-clique-2}
                                \State $\kcount \leftarrow \kcount + I_{i+2}[T']$.\label{clique-insert:increment-count}
                            \EndIf
                            \State $I_{l}[T_{sub}] \leftarrow I_{l}[T_{sub}] + I_{i + 2}[T']$.\label{clique-insert:reassign-clique}
                        \EndIf
                    \EndFor
                \EndIf
            \EndParFor
        \EndFor
    \EndParFor
    \State Unmark all marked edges in the graph.
    \end{algorithmic}
\end{algorithm}

Before we dive into the details of the implementation of~\cref{alg:clique-insert}, we first provide some intuition for how our 
algorithm implements our intuitive approach in~\cref{sec:clique-overview}. The key piece of information that our algorithm maintains 
after all updates
is how many sets of $k$ vertices, $C$, can be a $k$-clique if a subset of $2 \leq i\leq k-1$ vertices, $C_i \subset C$, 
has edges between all pairs of vertices in the set. Table $i$ keeps all $C_i$ sets of vertices (as keys). 
In other words, the entry $I_{i}[C_i]$ precisely counts
the number of unique sets of $k$ distinct vertices, $C$, where the following two properties hold:

\begin{enumerate}
    \item $C_i \subset C$. 
    \item For every $v \in C \setminus C_i$, the directed edge from $v$ to $w$, $(v, w)$, exists in $G$ for every $w \in C_i$.
\end{enumerate}

The bulk of~\cref{alg:clique-insert} is concerned with updating all of the tables $I_i$ for $i \in [2, \dots, k-1]$ such that the 
above counts hold for every entry. Using these counts, we can find the number of new $k$-cliques created by inserting $\batchinserts$ by checking 
for each $C$, whether its largest incomplete clique without a source is completed. We can do this efficiently because (1) we do not 
need to check this individually for every $C$ since our tables $I_i$ already maintain these counts; and (2) if a largest incomplete clique without a source $C_i$ 
is completed, then it
must be incident to an edge update $(u, v) \in \batchinserts$, where $u$ is the new source of $C_i$ and $v \in C_i$. Knowing (2), we can 
afford to enumerate sets of \outn of $C_i$ to determine whether $C_i$ is a clique. If $C_i$ is a clique, then
we count all $C$ that has it as its largest incomplete clique without a source by adding $I_i[C_i]$ to the cumulative count.
The remaining parts of~\cref{alg:clique-insert} ensure
that we do not over-count newly formed cliques. 

We now describe~\cref{alg:clique-insert} in detail.
For each edge insertion $(u, v)$ where the edge is oriented from $u$ to $v$ (\cref{clique-insert:outer-loop}), we iterate
through \emph{all} possible subsets of $i$ \outn of $u$ (\emph{excluding} $v$, since we know $v$ must be included in the clique)
where $i \in [1, \dots, k-2]$ %
(\cref{clique-insert:inner-for-loop,clique-insert:all-subsets}).
We iterate through $i \in [1, \dots, k-2]$ because $u$ and $v$ necessarily need to be included in the clique.
This is to account for all possible largest incomplete cliques without a source that are currently in our hash tables. In order
to find whether $\batchinserts$ completes any of these cliques, we must find these cliques by performing this enumeration.
Let $T$ be the subset of \outn picked. We consider all cliques of size $i + 2$ consisting of the vertices in $T' = T \cup \{u, v\}$.
Then,~\cref{clique-insert:orderedset} provides a canonical order for the vertices in $T'$ so that we can search for $T'$ in
$I_i$. 
Note that we need to avoid duplicate counting. To avoid duplication, we use the order of the
edge insertions in $\batchinsert$ (\cref{clique-insert:arbitrary-order})
and assign the task of updating the clique count to the first insertion in this order, $(u, w)$, where $w \in T'$. Hence, the if statement in
\cref{clique-insert:check-complete} checks that all of these above conditions are satisfied.
The if statement in~\cref{clique-insert:check-if-pairs} checks whether the newly inserted edges create
a new clique, and if not (\cref{clique-insert:not-pairs}), 
the algorithm then finds the largest incomplete clique without a source, $C'$, that contains a subset of the vertices in $T'$ (\cref{clique-insert:lic}).
The algorithm then sets a parameter $j$ to be the size of $C'$ (\cref{clique-insert:denote-lic}).
If $T'$ is a completed
clique, it passes the check on~\cref{clique-insert:check-if-pairs} and we assign
$j = 2$ (\cref{clique-insert:not-pairs}). 
Now, we consider two possible scenarios. 

First, $(u, v)$ along with the other edge insertions 
in $\batchinserts$ could complete a largest incomplete clique without a source. In this case, we should increase the $k$-clique count if $u$ is also
the new source of the clique.  Furthermore,~\cref{clique-insert:check-if-clique,clique-insert:check-if-clique-2}
check if the clique is completed.
If $u$ is a new source, the clique for which it is the source is completed, and $T' \in I_{i+2}$, then we increment the clique count with the value $I_{i+2}[T']$  
(\cref{clique-insert:previous-count,clique-insert:check-if-clique-2}). The value stored in $I_{i+2}[T']$ is the number of new cliques
that are created if $T'$ is completed. Note that~\cref{clique-insert:increment-count} is only called if $|T'| < k$ since $l \leq k -1$ and
\cref{clique-insert:new-clique} handles the case when $|T'| = k$. %
This is because
we do not store size-$k$ sets of vertices in any of the tables; we do not need to store their values because we can enumerate them directly
by checking all $(k-1)$-size subsets of \outn of every $u$ in every edge insertion $(u, v)$.
We denote the ordered set of vertices that gives the key in table $I_l$ by $T_{sub}$ (initially setting $T_{sub} \leftarrow T'$ (\cref{clique-insert:tsub-clique})).
If $T' \not\in I_{i+2}$ and the size of $T_{sub}$ is $k - 1$ (implying the size of $T'$ is $k$), then we directly increment the clique count by $1$ (\cref{clique-insert:new-clique,clique-insert:check-if-clique,clique-insert:increment-count-1}). As before, in this case, we directly enumerate
the new clique for the edge insertion $(u, v)$ without needing to check the tables. 
This also means that $u$ is the source of the newly created $k$-clique consisting of the vertices in $T'$. 

After we update the $k$-clique count, we must then update
the $I_i[C_i]$ counts for each $C_i \subset T'$. We need
to update these counts because now $T'$ contains a vertex
$v \in T' \setminus C_i$ where there exists an edge
$(v, w)$ for every $w \in C_i$ (this vertex did not 
exist previously). Thus, the count for $C_i$ must be
incremented by $I_{i+2}[T']$ if $i+2 \leq k-1$, and $1$ if $i+2 = k$. 
This is because, as previously discussed, intuitively, $I_{i+2}[T']$ stores the number of  $k$-cliques that would be created if $T'$ were completed, so we must similarly maintain the number of $k$-cliques created now that each $C_i$ is completed.
We prove this more concretely in~\cref{sec:clique-correctness}.
When $T'$ is a clique, there exists a $C_i$ 
for every $2 \leq i < |T'|$
whose entry $I_{i}[C_i]$ needs to be updated.
So, \cref{clique-insert:update-table-loop} loops through
each of these possible sizes and the entries are
updated by~\cref{clique-insert:assign-new-potential-clique} 
or~\cref{clique-insert:reassign-clique} depending on 
whether $|T'| = k$.

The second scenario is that $(u, v)$ and the other edge insertions in $\batchinserts$ do not complete a clique but create a new source among
the vertices in $T'$.
The means that we need to find a new largest incomplete clique without a source within $T'$. Again, to avoid duplication, we assign the task to the 
earliest edge update that is incident to $u$. 
Similar to the case when $T'$ is a clique, the
algorithm also needs to update tables $I_{i + 1}$ to $I_j$ in this case 
(\cref{clique-insert:update-table-loop}). 
However, we do not update \emph{all} of the
tables since $T'$ still has a 
largest incomplete clique without a source (\cref{clique-insert:lic}). 
Let $l \in [j, i+1]$, the table $I_j$ is updated 
with the number of cliques that would be counted by it if it were the largest incomplete clique without a source. We need to update 
all of these tables (instead of just table $I_j$) in order to be able to 
handle deletions. This is due to the fact that when a smaller clique 
becomes incomplete due to a batch of deletions, it may cause a larger
$k$-clique to become incomplete. We cannot afford to find all 
such affected $k$-cliques; thus, we must store this information in the 
tables.

To compute the key for table $I_l$, we need to remove the source of $T_{sub}$ from $T_{sub}$ (\cref{clique-insert:find-source,clique-insert:remove-source-clique}).
Then, there are two cases we must consider (\cref{clique-insert:new-clique} and~\cref{clique-insert:previous-count}). 
In the first case, when $|T'| = k$, $u$ is a newly created
source for a new potential $k$-clique (\cref{clique-insert:new-clique}); thus, no entries in the tables
have counted $T'$ yet and we increment the count of $I_{k-1}[T_{sub}]$ by 
$1$ (\cref{clique-insert:new-clique,clique-insert:assign-new-potential-clique}) so that $T'$ will be counted when $T_{sub}$ is completed 
as the largest incomplete
clique without a source. 
In the second case (\cref{clique-insert:previous-count}), $T'$ is 
already an entry in table $I_{i+2}$; this means that it already counts
the a number of $k$-cliques that exist if $T'$ is a clique. In this case,
we increase the value for $I_{l}[T_{sub}]$ by $T_{i + 2}[T']$ 
since by definition of the values we store in $I_l[T_{sub}]$, if $T_{sub}$ is a clique, then all the $k$-cliques that are counted when $T'$ is a clique will now be counted when $T_{sub}$ is completed as the largest incomplete clique without a source
(\cref{clique-insert:reassign-clique}). 

\subsection{$\batchdelete$ Implementation}\label{sec:batch-delete-clique}

Our $\mathsf{CliqueCounting}\batchdelete$ algorithm is nearly identical to our $\mathsf{CliqueCounting}\batchinsert$ 
algorithm; in places where we assign clique counts
in the insertion algorithm,
we instead remove clique counts in the deletion
algorithm. Such changes are expected since
deletions of edges remove cliques from the count
and also remove assignments of cliques to largest
incomplete cliques without a source.
The pseudocode is provided in~\cref{alg:clique-delete}. The few changes
to the algorithm are highlighted in blue.

\begin{algorithm}[!t]\caption{$\mathsf{CliqueCounting}\batchdelete(\batchdeletes)$}
    \label{alg:clique-delete}
    \small
    \begin{algorithmic}[1]
    \Require{A batch $\batchdeletes$ of unique and valid 
    deletion updates.}
    \Ensure{An updated $k$-clique count and updated data structures.}
    \State Let $\kcount$ be the current count of the number of $k$-cliques in the graph.
    \State Insert all edges in {\color{blue} $\batchdeletes$} into the graph in the orientation specified by {\color{blue} $\batchdeletes$}. 
    Mark all edges in {\color{blue} $\batchdeletes$} in
    the graph.
    \State Let $R$ be the order of the edges in
    $\batchdeletes$.\label{clique-delete:arbitrary-order}
    \ParFor{each edge $(u, v) \in \batchdeletes$} \Comment{The edge is oriented from $u$ to $v$.}\label{clique-delete:outer-loop} 
        \For{{\color{blue} $i \in [k-2, \dots, 0]$}}\label{clique-delete:inner-for-loop}
            \ParFor{each subset $T$ of $i$ \outn of $u$}\label{clique-delete:all-subsets}
                \State Let $T'$ be the ordered set of $T \cup \{u, v\}$ sorted by vertex index.\label{clique-delete:orderedset}
                \If{$u$ is the source of $T'$ \textbf{and}
                $(u, v)$ is the earliest in $R$ out of all marked edges from $u$ to $w \in T'$}\label{clique-delete:check-complete}
                    \If{all edges between each pair $x, y \in T'$ exists}\label{clique-delete:check-if-pairs}
                        \State $j \leftarrow 2$.\label{clique-delete:smallest-index}
                \Else\label{clique-delete:not-pairs}
                        \State Find largest incomplete clique without a source, $C'$, in $T'$.\label{clique-delete:lic}
                        \State Let $j$ be the size of $C'$.\label{clique-delete:denote-lic}
                    \EndIf
                    \State $T_{sub} \leftarrow T'$.\label{clique-delete:tsub-clique}
                    \For{$l = i + 1$ to $j$}\label{clique-delete:deletion-loop}
                        \State Find $s$ the source of $T_{sub}$.\label{clique-delete:find-source}
                        \State $T_{sub} \leftarrow T_{sub} \setminus s$.\label{clique-delete:remove-source-clique}
                        \If{$|T_{sub}| = k - 1$}\label{clique-delete:new-clique}
                            \If{$l == i + 1$ and $T'$ is a clique}\label{clique-delete:check-if-clique}
                                \State $\kcount \leftarrow \kcount$ 
                                {\color{blue} $- 1$}.\label{clique-delete:increment-count-1}
                            \EndIf
                            \State $I_{l}[T_{sub}] \leftarrow I_{l}[T_{sub}]$ 
                            {\color{blue}$- 1$}.\label{clique-delete:assign-new-potential-clique}
                        \ElsIf{$T' \in I_{i+2}$}\label{clique-delete:previous-count}
                            \If{$l == i + 1$ and $T'$ is a clique}\label{clique-delete:check-if-clique-2}
                                \State $\kcount \leftarrow \kcount$ {\color{blue} $- I_{i+2}[T']$}.\label{clique-delete:increment-count}
                            \EndIf
                            \State $I_{l}[T_{sub}] \leftarrow I_{l}[T_{sub}]$ {\color{blue}$- I_{i + 2}[T']$}.\label{clique-delete:reassign-clique}
                        \EndIf
                    \EndFor
                \EndIf
            \EndParFor
        \EndFor
    \EndParFor
    \State {\color{blue} Delete} all marked edges in the graph.
    \end{algorithmic}
\end{algorithm}

\subsection{Correctness}\label{sec:clique-correctness}

To prove the correctness of our algorithm, we first show that~\cref{alg:clique-insert} and~\cref{alg:clique-delete} accurately store the
counts associated with the largest incomplete cliques without sources. For simplicity, we provide separate lemmas for~\cref{alg:clique-insert} and 
\cref{alg:clique-delete}, although fundamentally, the proof techniques are the same for both algorithms.

We use the following notation in~\cref{lem:correct-insert-table} and~\cref{lem:correct-delete-table}.
Let $c_{L}$ be the number of sets of $k$ vertices in the graph which do not form a clique, contains a source, and
whose largest incomplete cliques without sources is the set of vertices in $L$ \emph{after 
processing the current, input batch of updates}. We show that after running~\cref{alg:clique-insert} or~\cref{alg:clique-delete}, $I_{|L|}[L] = c_{L}$.
In fact, we show an even stronger lemma; suppose that $J$ is a set of vertices in the graph where $2 \leq |J| \leq k-1$. If we remove all edges between pairs of 
vertices in $J$, let $c_{J}$ be the number of sets of $k$ vertices that do not form a clique, contains a source, and
whose largest incomplete cliques without sources is the set of vertices in $J$. We show that $I_{|J|}[J] = c_{J}$. This stronger
form of the lemma is not necessary if we only consider insertion updates; however, under deletion updates, we require this 
stronger lemma in order to prove correctness. It is sufficient to assume that all data structures are maintained corrected at the 
beginning of~\cref{alg:clique-insert} and~\cref{alg:clique-delete} and they remain correct at the end of the algorithms (since by 
induction, this would prove that the data structures are always correctly maintained).

\begin{lemma}\label{lem:correct-insert-table}
After running~\cref{alg:clique-insert} on $\batchinserts$, $I_{|J|}[J] = c_{J}$ for every $c_J$ where $2 \leq |J| \leq k-1$.
\end{lemma}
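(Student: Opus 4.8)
The plan is to prove the stronger, deletion-robust statement by induction on the sequence of batches, so that within the inductive step I may assume every table is correct for the graph $G^-$ before $\batchinserts$ and need only show correctness for the graph $G^+$ after. Fix a candidate core $J$ with $2 \le |J| \le k-1$, and work in the auxiliary graph $G_J$ obtained by deleting all internal edges of $J$; write $G_J^-$ and $G_J^+$ for its pre- and post-batch versions. First I would make precise the peeling operator that repeatedly removes a vertex which is a source of the currently remaining set (all incident edges present and oriented outward), and define $\mathrm{core}_{G_J}(C)$ to be its terminal set; uniqueness of the removed vertex at each stage, hence well-definedness of $\mathrm{core}$, follows from the uniqueness-of-source argument of \cref{obs:simple-source}. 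The key structural facts I would then establish are: (i) since $J$ has no internal edges in $G_J$ it is never peelable, so $J \subseteq \mathrm{core}_{G_J}(C)$ for every $C \supseteq J$; and (ii) inserting edges can only create new sources, never destroy existing ones, so $\mathrm{core}$ shrinks monotonically, $\mathrm{core}_{G_J^+}(C) \subseteq \mathrm{core}_{G_J^-}(C)$.

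Combining (i) and (ii), no $k$-set can leave $c_J$ during an insertion batch, and $c_J^+ - c_J^- = \#\{C : \mathrm{core}_{G_J^+}(C) = J \text{ and } \mathrm{core}_{G_J^-}(C) \supsetneq J\}$, exactly the $k$-sets whose core has just collapsed to $J$. Since the algorithm only ever increments $I_{|J|}[J]$, it suffices to show these increments sum to this quantity. I would characterize each collapse: letting $J^\ast = \mathrm{core}_{G_J^-}(C) \supsetneq J$, the batch must complete the edges that let $J^\ast \setminus J$ be peeled, and the first vertex to become a new source in this peeling is the tail $u$ of some inserted edge $(u,v)$ with $u \to v$. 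This is precisely the situation the loop of \cref{alg:clique-insert} enumerates: iterating over subsets $T$ of out-neighbours of $u$ and testing whether $u$ is the source of $T' = T \cup \{u,v\}$ detects exactly the sets that have just acquired a source at $u$, and the ``earliest edge in $R$'' tie-break selects a single responsible insertion per such $T'$, preventing any $C$ from being charged twice.

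The substance of the argument is the cascade that carries the stored counts of large cores down to $J$, which I would handle with a secondary induction on the core size, from $|J| = k-1$ down to $|J| = 2$ (equivalently on $k-|J|$). When $u$ becomes the source of a set $T'$ of size $i+2$, every $k$-set $C$ that previously had $T'$ as its core now peels one level further; by the outer inductive hypothesis the number of such $C$ is exactly the old value $I_{i+2}[T']$, so propagating $I_{i+2}[T']$ into $I_l[T_{sub}]$ as $T_{sub}$ runs through the successive peels of $T'$ credits each affected $C$ to its new core with the correct multiplicity, while the base of the cascade, where $T'$ is a freshly enumerated $k$-set, contributes the explicit $+1$ (and symmetrically increments $\kcount$ when $T'$ is a clique). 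Reading the old table values is legitimate because $T'$, now possessing the source $u$, can only contribute through its $J$-edges-removed count, which the current batch's writes to strictly smaller tables do not touch. Finally I would verify that ``$C$ is not a clique'' and ``$C$ contains a source'' hold automatically for every counted $C$ (the former since $J \subseteq C$ has its edges removed, the latter since $|J| \le k-1$ leaves a nonempty peelable part), completing $I_{|J|}[J] = c_J$.

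The hard part, I expect, will be the cascade/charging step: establishing a clean bijection between the collapsed $k$-sets and the algorithm's increments when a single batch peels several sources at once, so that the counts assembled from the intermediate tables neither double-count (across different inserted edges, different subsets $T$, and the parallel iterations) nor miss any $C$. Making the ``earliest in $R$'' selection and the ``$u$ is the source of $T'$'' test perform exactly this disambiguation, together with confirming that the table reads see pre-batch-consistent values under the concurrent updates, is where the real care is needed; the monotonicity and uniqueness facts, while essential, are comparatively routine.
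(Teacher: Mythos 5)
Your proposal follows essentially the same route as the paper's proof: a downward induction on the core/table size from $k-1$ to $2$, with the same base case (direct $+1$ increments for freshly enumerated $k$-sets that acquire a source), the same propagation of stored counts from larger tables into smaller ones, and the same uniqueness mechanism (the ``$u$ is the source of $T'$'' test combined with the earliest-edge-in-$R$ tie-break). Your peeling-operator formalization of the ``largest incomplete clique without a source'' and the monotonicity observation (insertions never destroy sources, so cores only shrink) are cleaner scaffolding than the paper's informal treatment, but the substance---including the cascade/charging step you flag as the hard part, which the paper resolves by a short contradiction argument exploiting the order in which the tables are processed---is the same.
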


\begin{proof}
We prove this lemma via induction on the table index $i$, starting with $i = k-1$. We first prove our base case for $i = k-1$. 
In~\cref{alg:clique-insert}, the value stored in $I_{k-1}[J]$ is only ever incremented in~\cref{clique-insert:assign-new-potential-clique}
since this is the only time when table $I_{k-1}$ can be modified (the condition in Line~\ref{clique-insert:previous-count} is never satisfied for 
any entries in table $I_{k-1}$). By the condition given in~\cref{clique-insert:check-complete}, $I_{k-1}[J]$ is only incremented when $T'$ has a source
$s$ where $J = T' \setminus \{s\}$. Furthermore, the condition that $I_{k-1}[J]$ is incremented by the earliest edge update incident to $s$
ensures that it is incremented at most once by each $T'$. The number of sets $c_J$ of vertices $T'$ where $J \in I_{k-1}$ is the largest incomplete
clique without a source (if all edges in $J$ are removed) is precisely the number of vertices $s$ in the graph with edges directed into 
all vertices in $J$. Our argument above shows that $I_{k-1}[J]$ is incremented exactly once for each such vertex $s$; furthermore,
it is incremented only if $s$ is adjacent to an edge update $(s, x) \in \batchinserts$ and $x \in J$. This proves our base case.  

We assume for our induction step that $I_{|J|}[J] = c_J$ for all tables $I_{j}$ for $j \in [k-1, \dots, k-l]$ and prove the lemma holds
for table $I_{k-l-1}$. The value $I_{k-l-1}[J]$ is increased in~\cref{clique-insert:reassign-clique}. Every $T'$ with $k$ vertices
increases the value of $I_{k-l-1}[J]$ by $1$ if its largest incomplete clique without a source has size $\leq |J|$. This is easy to see
since if all edges from $J$ are removed, then $J$ would be the largest incomplete clique without a source for $T'$. By our induction 
hypothesis, the counts of these $T'$s are correctly stored in tables $I_j$ for $j \in [k-1, \dots, k-l]$. \cref{clique-delete:check-complete}
ensures that only one edge update is responsible for incrementing $I_{k-l-1}[J]$ for each $T'$; furthermore, it ensures that $I_{k-l-1}[J]$
is incremented with the value stored in $I_{|C|}[C]$ where $C \subseteq T'$ is the previous largest incomplete clique without a source
before the current batch $\batchinserts$ of insertions. $J \subset C$ by definition, and $J$ is guaranteed to be the largest incomplete clique without 
a source (after the edge insertions in $\batchinserts$) by our argument above. In addition, each $T'$ is counted in at most one $C \subset T'$
in each table $I_j$ where $j \in [k-1, \dots, k-l]$. This is true by our induction hypothesis since each $T'$ has one unique $C$ where $|C| = j$
which is the largest incomplete clique without a source (if the edges in $C$ are removed).

The last step we need to prove in order to prove our induction hypothesis is that $I_{k-l-1}[J]$ is incremented by $1$ for $T'$ 
with exactly one $I_{|C|}[C]$ where $C \subset T'$. We prove this via contradiction. Suppose there are two subsets $C' \subset C\subset T'$
which are used to increment $I_{k-l-1}[J]$. Let $s' \in C'$ be the source of $C'$ and $s \in C$ be the source of $C$. 
This means that in order to satisfy~\cref{clique-delete:check-complete}, $s$ and $s'$ must be incident to some update $(s, x) \in \batchdeletes$ (resp.\ 
$(s', x') \in \batchdeletes$) where $x, x' \in T'$. This means that $C$ was the previous largest incomplete clique without a source for $T'$ and so 
$C'$ would not contain a count for $T'$ by our
induction hypothesis. Since, we process the tables in~\cref{clique-delete:inner-for-loop} starting with table
$I_{2}$ in increasing order of table index, $I_{k-l-1}[J]$ cannot be incremented with the count for $T'$ from $C'$, a contradiction.
Thus, $I_{|C|}[C]$ correctly counts all $T'$ and hence, 
$I_{k-l-1}[J]$ is incremented exactly once for each $T'$ and we have proven our inductive step.
\end{proof}

The proof of the property for~\cref{alg:clique-delete} is almost identical 
to~\cref{lem:correct-insert-table} except to account for the few
changes shown in blue in~\cref{alg:clique-delete}. For simplicity, we 
present only the parts of the proof that requires more
effort than replacing decrement for all mentions of increment in 
the proof of~\cref{lem:correct-insert-table}.

\begin{lemma}\label{lem:correct-delete-table}
After running~\cref{alg:clique-delete} on $\batchdeletes$, $I_{|J|}[J] = c_{J}$ for every $c_J$ where $2 \leq |J| \leq k-1$.
\end{lemma}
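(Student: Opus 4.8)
The plan is to mirror the downward induction on the table index used in the proof of~\cref{lem:correct-insert-table}, exploiting the fact that~\cref{alg:clique-delete} is the structural inverse of~\cref{alg:clique-insert}. The observation that makes this symmetry legitimate is the restore-then-delete mechanism: by the time $\batchdelete$ is invoked in the framework, the PLDS update (\cref{alg:rebalance}) has already physically removed the edges of $\batchdeletes$ from $G$, so~\cref{alg:clique-delete} first re-inserts and marks them in their pre-deletion orientation and only applies the actual deletions at the very end. Consequently, throughout the enumeration the graph is in exactly its pre-deletion configuration, so the source relationships (\cref{obs:simple-source}) and clique memberships detected during processing coincide with those that held before the batch. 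This lets me reuse the enumeration argument of~\cref{lem:correct-insert-table} almost verbatim, with every increment replaced by the corresponding decrement, to establish the base case ($I_{k-1}$) and the inductive step.

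The part that requires genuine additional effort is showing that the decrements land on exactly the entries whose counts must shrink. Writing $c_J^{\mathrm{old}}$ and $c_J^{\mathrm{new}}$ for the structural counts before and after the batch, the inductive hypothesis gives $I_{|J|}[J] = c_J^{\mathrm{old}}$ at the start, and I must show each entry is decremented by exactly $c_J^{\mathrm{old}} - c_J^{\mathrm{new}}$. I would argue that every $k$-set $T'$ that loses its ``not-a-clique, has-source, largest-incomplete-clique-without-source equal to $J$'' status must contain a deleted edge incident to the (new) source of the relevant sub-clique; since the loop ranges over all subsets $T$ of out-neighbors of each deleted-edge tail $u$ with $T' = T \cup \{u,v\}$, every such $T'$ is enumerated, and the earliest-marked-edge tie-break on~\cref{clique-delete:check-complete} ensures it is processed once. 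The stronger formulation of $c_J$ — defined by hypothetically removing all internal edges of $J$ — is essential here, since it makes each stored value a purely structural completion count that remains well defined regardless of which internal edges of $J$ currently exist; this is what allows the propagation down the chain of source removals (\cref{clique-delete:find-source}--\cref{clique-delete:reassign-clique}), where $I_{l}[T_{sub}]$ is decremented by $I_{i+2}[T']$, to match the change in $c_J$ level by level.

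Finally, I would address the reversed loop order $i \in [k-2,\dots,0]$, noting that the extra index $i = 0$ (where $T' = \{u,v\}$) is needed precisely because deleting $(u,v)$ turns the pair $\{u,v\}$ itself into an incomplete clique without a source that can serve as the target $J$. As in the insert proof, I would show that each destroyed $T'$ contributes to a given $I_{|J|}[J]$ through a \emph{unique} chain of source removals, using uniqueness of the source along $T'$ (\cref{obs:simple-source}) together with the processing order to rule out a smaller incomplete clique $C' \subset C \subset T'$ also claiming $T'$, by the same contradiction used in~\cref{lem:correct-insert-table}. The main obstacle I anticipate is exactly this bookkeeping under deletions: verifying that the value $I_{i+2}[T']$ read during the decrement phase is the correct, not-yet-over-decremented quantity under the reversed iteration order, so that each $T'$'s contribution is subtracted from every affected table exactly once. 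Once this is confirmed, combining the base case and the inductive step yields $I_{|J|}[J] = c_J$ for all $2 \le |J| \le k-1$.
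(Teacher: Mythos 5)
Your proposal matches the paper's proof in both structure and substance: the same induction on the table index starting at $I_{k-1}$, reusing the insert-lemma argument with increments replaced by decrements, and isolating the same crux---that no two nested subsets $C' \subset C \subset T'$ can both decrement an entry---resolved by appealing to the reversed iteration order of \cref{clique-delete:inner-for-loop}. The one step you leave ``to be confirmed'' is exactly what the paper proves at that point: since $C$ is \emph{now} the largest incomplete clique without a source for $T'$ and larger sets are processed first under the reversed order, $C$'s chain of updates (\cref{clique-delete:reassign-clique}) subtracts $T'$'s count from $I_{|C'|}[C']$ before $C'$ is ever considered, so each destroyed $T'$ is subtracted from every affected table exactly once.
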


\begin{proof}
We prove this lemma via induction on the table index $i$, starting with $i = k-1$. The proof of our base case for $i = k-1$ directly follows from
the proof of the base case in~\cref{lem:correct-insert-table} when
we replace instances of increment with decrement.

We assume for our induction step that $I_{|J|}[J] = c_J$ for all tables $I_{j}$ for $j \in [k-1, \dots, k-l]$ and prove the lemma holds
for table $I_{k-l-1}$. The value $I_{k-l-1}[J]$ is increased in~\cref{clique-insert:reassign-clique}. The proof of the inductive
step follows from the proof of the inductive step in the proof
of~\cref{lem:correct-insert-table} by replacing instances of increase
by decrease, except for the last step which we prove below.

The last step we need to prove in order to prove our induction hypothesis is that $I_{k-l-1}[J]$ is decremented by $1$ for $T'$ 
with exactly one $I_{|C|}[C]$ where $C \subset T'$. We prove this via contradiction. The initial setup is the same as the setup in the proof
of~\cref{lem:correct-insert-table}. 
Suppose there are two subsets $C' \subset C\subset T'$
which are used to decrement $I_{k-l-1}[J]$. Let $s' \in C'$ be the source of $C'$ and $s \in C$ be the source of $C$. 
This means that in order to satisfy~\cref{clique-delete:check-complete}, $s$ and $s'$ must be incident to some update $(s, x) \in \batchinserts$ (resp.\ 
$(s', x') \in \batchinserts$) where $x, x' \in T'$. 

This means that $C$ \emph{is now}
the largest incomplete clique without a source for $T'$ after processing
the deletions in $\batchdeletes$. Thus, because we process the tables 
in \emph{decreasing order} by table index, starting with table $I_2$ 
(\cref{clique-delete:inner-for-loop}), $C$ satisfies the 
conditions in~\cref{clique-delete:check-complete} 
and by~\cref{clique-delete:reassign-clique}, $C$ would have deleted the 
count of $T'$ from $I_{|C'|}{C'}$. Thus,
$C'$ would not contain a count for $T'$ and $I_{k-l-1}[J]$ cannot be incremented with the count for $T'$ from $C'$, a contradiction.
$I_{|C|}[C]$ correctly counts all $T'$ by our induction hypothesis
and hence, $I_{k-l-1}[J]$ is decremented exactly once for each $T'$ and we have proven our inductive step.
\end{proof}

We are now ready to prove that our algorithms correctly return the 
$k$-clique count provided batches of updates.

\begin{theorem}\label{thm:clique-correct}
Our algorithms, \cref{alg:clique-insert} and~\cref{alg:clique-delete}, correctly returns the number of $k$-cliques in a 
given input graph, $G = (V, E)$, provided batches of updates
$\batchinserts$ and $\batchdeletes$, respectively.
\end{theorem}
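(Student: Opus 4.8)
The plan is to prove correctness by induction on the sequence of batches, assuming that at the start of each batch both the count $\kcount$ and all tables $I_i$ correctly describe the current graph, and showing they remain correct afterward. Since \cref{lem:correct-insert-table} and \cref{lem:correct-delete-table} already establish that the tables are maintained correctly, the remaining task is to show that the net change applied to $\kcount$ during \cref{alg:clique-insert} (resp.\ \cref{alg:clique-delete}) equals the number of $k$-cliques gained (resp.\ lost) because of the batch. I would therefore reduce the whole statement to a counting identity on $\kcount$ and treat the table correctness as a black box.

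First I would record the structural consequence of \cref{obs:simple-source} that drives the counting. For any $k$-vertex set $C$ that is a clique in the oriented graph, repeatedly removing the unique source yields a nested chain $C = C_k \supset C_{k-1} \supset \cdots \supset C_2$, where each removed vertex has all its $C$-edges oriented outward. The key claim I would prove is: if $C_i$ denotes the \emph{largest incomplete clique without a source} of a potential $k$-clique $C$ just before the batch, then every vertex of $C \setminus C_i$ already has out-edges to all of $C_i$ and to every later source, so the moment $C_i$ becomes a complete clique with an established source, the \emph{entire} $C$ becomes a $k$-clique. Consequently each potential $k$-clique has a single, well-defined completion event, namely the completion of its largest incomplete clique without a source.

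Next I would tie this to the increments performed by the algorithm. When the algorithm processes an edge $(u,v)$ and forms a completed clique $T'$ of size $i+2$ for which $u$ is a newly established source, it adds $I_{i+2}[T']$ to $\kcount$ (or $+1$ when $i+2=k$, handled directly without a table lookup). By \cref{lem:correct-insert-table}, $I_{i+2}[T']$ is exactly the number of $k$-cliques $C$ for which $T'$ is the largest incomplete clique without a source; by the structural claim, completing $T'$ completes precisely these $C$. Summing over all completion events, the total increment equals the number of newly-formed $k$-cliques, provided each completion event fires exactly once. I would obtain the \emph{at-least-once} direction from the fact that any newly-formed $C$ must have its largest incomplete clique without a source completed during the batch, so the responsible $T'$ is enumerated among the out-neighbor subsets of its new source; and the \emph{at-most-once} direction from the marking scheme together with the tie-break requiring $(u,v)$ to be the earliest \emph{marked} edge from $u$ into $T'$ in the order $R$, which selects a unique responsible edge and excludes cliques completed in earlier batches. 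The deletion case is symmetric: each destroyed $C$ is decremented exactly once when its post-deletion largest incomplete clique without a source is broken, relying on the reversed (decreasing-index) table traversal in \cref{alg:clique-delete} that \cref{lem:correct-delete-table} uses.

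The hard part will be the exactly-once accounting, and in particular two subtleties. The first is that a single batch can complete several consecutive levels of the source chain at once; I must verify that the inner loop over $l$, which walks down the chain removing sources and updating the lower tables, increments $\kcount$ only on its first iteration, so that a multi-level completion contributes exactly the value of one table entry rather than several. The second is the clean separation between the direct count (when $i+2=k$, where $T'$ is the full clique and no table entry exists) and the table-based count (when $i+2<k$): I must check that no $C$ is counted by both mechanisms, and that pre-existing cliques, whose largest incomplete clique without a source was completed in a previous batch, are never re-counted because no marked edge triggers their completion event. Once these cases are reconciled, combining the table-correctness lemmas with the source-chain claim yields the stated correctness.
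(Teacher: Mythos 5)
Your proposal is correct and follows essentially the same route as the paper's proof: it black-boxes Lemmas~\ref{lem:correct-insert-table} and~\ref{lem:correct-delete-table}, reduces the theorem to showing that each newly completed (resp.\ destroyed) $k$-clique is counted (resp.\ subtracted) exactly once by exactly one update edge incident to the new source of its largest incomplete clique without a source, establishes the at-least-once direction via \cref{obs:simple-source} and the at-most-once direction via the earliest-marked-edge tie-break together with the uniqueness of the responsible table entry, and handles deletions symmetrically using the decreasing-index traversal. The subtleties you flag (the $l = i+1$ guard against multi-level double counting and the separation of the direct $|T'|=k$ count from the table-based count) are precisely the points the paper's contradiction arguments resolve, so filling them in as you describe reproduces the paper's proof.
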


\begin{proof}
Provided~\cref{lem:correct-insert-table} 
and~\cref{lem:correct-delete-table}, we only need to show the following:
given $\batchinserts$, each $k$-clique 
$C$ completed by $\batchinserts$ (i.e.\ 
contains at least one edge in $\batchinserts$), is counted exactly
once, and by exactly one update edge incident to the source of 
its largest incomplete clique without a source (\emph{prior} 
to the insertions);
given $\batchinserts$, each $k$-clique $C$ destroyed by $\batchdeletes$
(i.e.\ contains at least one edge in $\batchdeletes$), is subtracted
exactly once, and by exactly one update edge incident to the source
of its largest incomplete clique without a source (\emph{after} the 
deletions).

We first prove the above is true for insertions. The if statement in
\cref{clique-insert:check-complete} ensures at most one update 
edge for a set of vertices $C \subset T'$, where $T'$ is a newly 
formed clique, increments the clique count. Now, we prove 
that at most one subset of vertices increments the clique 
count for $T'$. Suppose for contradiction
two sets of vertices $C' \subset C \subset T'$ increments the total clique
count by $1$ for $T'$. Then, in order to pass the if statement in
\cref{clique-insert:check-complete}, the sources of both $C'$ and $C$
must be adjacent to updates in $\batchinserts$ that point to vertices
in $C'$ and $C$. Since $C' \subset C$, $C$ was the previous largest
incomplete clique without a source for $T'$. 
By~\cref{lem:correct-insert-table}, $C'$ does not contain the count for
$T'$ and thus, only $C$ increments the total clique count by $1$ for $T'$,
a contradiction. 

To prove that at least one subset of vertices increments the clique count for $T'$, suppose that $C$ was the previous largest incomplete clique
without a source for $T'$ but $C$ does not increment the clique count.
Since $T'$ is a new clique, it must be incident to at least one edge update
in $\batchinserts$. Since $C$ does not increment the clique count, it 
must not have found a source (and cannot satisfy~\cref{clique-insert:check-if-pairs}). (It must satisfy \cref{clique-insert:check-if-clique} or \cref{clique-insert:check-if-clique-2}
since $T'$ is a clique and we iterate through all possible $i$).
Since $C$ does not have a source, by~\cref{obs:simple-source}, 
it must be missing at least one edge. Then, $T'$ is not a clique, 
a contradiction.

The proof follows symmetrically for~\cref{lem:correct-delete-table}
except that instead of the previous largest incomplete clique without 
a source, we care about the largest incomplete clique without a
source \emph{after} processing $\batchdeletes$. Suppose for 
contradiction two sets of vertices $C' \subset C\subset T'$ decrement
the clique count. Then, their sources must both be incident to 
edge updates. Since, $C' \subset C$, $C$ is processed first by~\cref{clique-delete:inner-for-loop}
using~\cref{clique-delete:deletion-loop,clique-delete:assign-new-potential-clique,clique-delete:reassign-clique}. This means that the count of $T'$
would have been subtracted from $I_{|C'|}[C']$ and it cannot decrement
the clique count by $1$ for $T'$, a contradiction. Suppose
instead, that $C$ is the largest incomplete clique without a source
for $T'$ after processing $\batchdeletes$ and it does not decrement
the clique count. Either one of two scenarios can occur: 
either $I_{|C|}[C]$ no longer has the count for $T'$ or 
the source $s$ of $C$ is not incident to any updates.
No $C''$ where $C \subset C''$ can
decrement $I_{|C|}[C]$ for $T'$ since by our assumption, the source of
$C''$ is not incident to any updates directed into vertices
in $C''$. Thus the first scenario cannot occur and we consider the 
second scenario where the source $s$ 
must not be incident to an edge update directed into the vertices in
$C$. Then, $s$ still has all its directed edges to the vertices in
$C$ and so is the source of $C$. This means that $C$ has a source
and cannot be the largest incomplete clique \emph{without a source}, 
a contradiction. 
\end{proof}

Together with the proof of correctness of our framework,~\cref{sec:framework}, our algorithm correctly provides the 
$k$-clique count provided a batch of updates, $\batch$.

\subsection{Work and Depth Analysis}

We note for the following result that $\alpha$ is defined as $\max(\alpha_{b}, \alpha_{a})$
where $\alpha_{b}$ is the arboricity before the current batch of updates is processed and $\alpha_{a}$ is the 
arboricity after the current batch of updates is processed.

\begin{theorem}
We obtain a batch-dynamic $k$-clique counting algorithm that 
takes $O(\alpha \batchsize \log^2 n)$ amortized work and
$O(\log^2 n \log \log n)$ depth \whp, using
$O(m\alpha^{k-2} + n \log^2 n)$ space. %
\end{theorem}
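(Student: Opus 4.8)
The plan is to establish correctness (already essentially in hand) and then carry out a direct efficiency analysis of the three clique-counting subroutines composed through \cref{alg:framework}, being careful that the edge flips inflate the insertion and deletion batches. Correctness follows from \cref{thm:clique-correct}, which in turn rests on \cref{lem:correct-insert-table} and \cref{lem:correct-delete-table} showing that \cref{alg:clique-insert} and \cref{alg:clique-delete} maintain every table $I_i$ and the running count exactly; together with the correctness of \cref{alg:framework} this handles arbitrary mixed batches. So the substance of this theorem is the work, depth, and space bounds.

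For the work bound I would first invoke \cref{thm:low-outdegree}: the PLDS update and extraction of flipped edges cost $O(\batchsize \log^2 n)$ amortized work and produce $O(\batchsize \log^2 n)$ amortized flips. Here lies the one place where generic \cref{thm:framework} cannot be applied verbatim: that theorem assumes $|\batchinserts|,|\batchdeletes| \le \batchsize$, but the clique instantiation of $\batchedgeflip$ (\cref{alg:clique-flips}) appends every flip to $\batchinserts$ or $\batchdeletes$ in $O(1)$ work each, so after this step these batches have amortized size $O(\batchsize \log^2 n)$. The per-edge cost of \cref{alg:clique-insert} (symmetrically \cref{alg:clique-delete}) is then the dominant factor: for a directed edge $(u,v)$ we loop over $i \in [1,k-2]$ and enumerate every size-$i$ subset of the out-neighbors of $u$. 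Because \cref{cor:arboricity-orientation} guarantees out-degree $O(\alpha)$, the number of enumerated subsets is $\sum_{i=1}^{k-2}\binom{O(\alpha)}{i} = O(\alpha^{k-2})$ for constant $k$, and each subset is processed in $O(\mathrm{poly}(k)) = O(1)$ amortized work (source check, pairwise-adjacency lookups, locating the largest incomplete clique, and the inner loop over $l \in [i+1,j]$ of length at most $k$ with $O(1)$ amortized hash-table updates per step). Multiplying $O(\alpha^{k-2})$ by the $O(\batchsize \log^2 n)$ edge operations gives $O(\alpha^{k-2}\batchsize \log^2 n)$ amortized work, matching \cref{thm:clique}; the per-edge $\alpha^{k-2}$ factor (the same one appearing in the space bound) is precisely the subset-enumeration cost over the $O(\alpha)$ out-neighbors.

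For depth, the PLDS update dominates: \cref{cor:random-depth} gives $O(\log^2 n \log\log n)$ depth \whp. In $\batchedgeflip$, $\batchinsert$, and $\batchdelete$ all edge operations run in parallel, all size-$i$ subsets of a given edge are enumerated in parallel, and the only sequential dependency is the inner loop over $l$ of constant length $O(k)$; since each hash-table operation has $O(\log^* n)$ depth \whp, these subroutines add only $O(\log^* n) = O(\log\log n)$ depth, which is subsumed. Hence the total is $O(\log^2 n \log\log n)$ \whp.

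For space, \cref{cor:random-depth} charges $O(n\log^2 n + m)$ to the PLDS, so it remains to bound the $k-2$ tables $I_2,\dots,I_{k-1}$. The crux is to count the nonzero entries: every key is a vertex set that serves as the largest incomplete clique without a source for some $k$-vertex configuration $C$ that possesses a source $s$ with directed edges to the other $k-1$ vertices of $C$. Summing over candidate sources, the number of such source-rooted configurations is $\sum_{s}\binom{\mathrm{outdeg}(s)}{k-1}$; using $\mathrm{outdeg}(s) = O(\alpha)$ this is $\sum_s O(\mathrm{outdeg}(s)\cdot \alpha^{k-2}) = O(\alpha^{k-2})\sum_s \mathrm{outdeg}(s) = O(m\alpha^{k-2})$ since $\sum_s \mathrm{outdeg}(s) = m$. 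As each key holds $O(k)=O(1)$ vertices plus one count, the tables occupy $O(m\alpha^{k-2})$ space, yielding the claimed total $O(m\alpha^{k-2} + n\log^2 n)$. I expect the main obstacle to be the amortized composition just flagged: one must verify that the $O(\batchsize\log^2 n)$ flip count of \cref{thm:low-outdegree} is genuinely amortized and that re-feeding the flips as insertions and deletions (rather than counting them once through the framework) does not double-charge, so that the product with the fixed $O(\alpha^{k-2})$ per-edge cost is valid.
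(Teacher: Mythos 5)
Your proposal is correct and follows essentially the same route as the paper's proof: correctness via \cref{thm:clique-correct}, work by observing that the flips inflate $|\batchinserts|,|\batchdeletes|$ to $O(\batchsize\log^2 n)$ amortized (exactly the paper's $O(|\batch|+|A|)$ accounting) and multiplying by the $O(\alpha^{k-2})$ per-edge subset-enumeration cost from the $O(\alpha)$ out-degree orientation, and depth dominated by the PLDS bound of \cref{cor:random-depth}. The only divergence is cosmetic: for space the paper charges $O(\alpha^{k-2})$ hash-table entries to each edge insertion, whereas you bound the live keys statically by counting source-rooted configurations $\sum_s \binom{\mathrm{outdeg}(s)}{k-1} = O(m\alpha^{k-2})$ --- both yield the same $O(m\alpha^{k-2}+n\log^2 n)$ bound (and your static count arguably sidesteps the question of how stale entries are reclaimed over long update sequences).
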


\begin{proof}
We first show the work, depth, and space of our algorithms,~\cref{alg:clique-flips,alg:clique-delete,alg:clique-insert},
and then use~\cref{thm:framework} to show the bounds for our algorithm.
Note that the increments and decrements to the global $k$-clique count can be performed in $O(\log n)$ depth in parallel by writing each update to an array, and then using parallel reductions at the end to update the global $k$-clique count. 
We use the same strategy for updating the hash table counts.
Furthermore, our parallel hash table primitives allow
us to concurrently modify elements in parallel in $O(\log n)$
depth \whp

In~\cref{alg:clique-flips},
the batches $\batchinserts$ and $\batchdeletes$ can be obtained in 
$O(|\batch|\log^2n)$ work and $O(\log n)$ depth. Note that by construction, $|\batchinserts|, |\batchdeletes| = O(|\batch| + |A|) = O(\batchsize\log^2 n)$.
All edges can be checked in parallel (\cref{clique-flip:outer-loop}) and inserted into parallel dynamic arrays;
we can also use a simple parallel filter.
For the remainder of this proof,
we discuss the work and depth complexity for a batch $\batchinserts$ of 
edge insertions in~\cref{alg:clique-insert}; the deletion algorithm (\cref{alg:clique-delete}) has the same work, depth, and space complexity.

All edge insertions are processed in parallel using a parallel loop (\cref{clique-insert:outer-loop}). We then run a sequential for loop
of depth $O(k)$ (\cref{clique-insert:inner-for-loop}). Let $i$ be
the current index of the sequential for loop.
In order to process edge insertions $(u, v)$, where $u$ is a source, we 
iterate in parallel over all sets $T$ of $i + 1$ out-neighbors of $u$ 
including $v$. Since there are at most $O(\alpha)$ 
out-neighbors of $u$, and 
since $v$ is necessarily included, we have $\binom{c\alpha}{i} = 
O(\outdeg^{i})$ possible sets $T$ (assuming constant $k$). 
For constant $k$, we
perform a constant number of parallel hash table operations and checks 
for the existence of edges per set $T$ (\cref{clique-insert:check-if-pairs,clique-insert:smallest-index,clique-insert:lic,clique-insert:denote-lic,clique-insert:denote-lic,clique-insert:tsub-clique}). 
We make $O(k)$ iterations of the for loop in~\cref{clique-insert:update-table-loop};
updating the hash tables (\cref{clique-insert:assign-new-potential-clique,clique-insert:reassign-clique}) 
require $O(k)$ total work per edge update. Checking for the source of $T_{sub}$ over
all $T_{sub}$ requires $O(k^2)$ work per edge update (\cref{clique-insert:find-source}). 
Thus, per edge insertion $(u, v)$, for constant $k$, 
we incur $O(\outdeg^{i})$ work and $O(\log n)$ depth \whp 
Over all $i \in [0, \dots, k-2]$, this results in 
$\sum_{i=0}^{k-2} O(\alpha^{i})=O(\alpha^{k-2})$ total work over 
all $i$, \whp The depth is $O(\log^* n)$ \whp due to the hash table 
operations and updating the table values by writing to an array
and using a parallel reduction for each entry results in $O(\log n)$ depth.

Lastly, we update the global $k$-clique count by 
writing each update to an array 
and using a parallel reduction at the end, which maintains the same work 
and depth bound.

Processing the entire batch of insertions in parallel, we have $O(\alpha^{k-2} |\batch|\log^2 n)$ amortized work and $O(\log n)$ depth \whp Thus, in total, our $k$-clique counting algorithm takes $O(\outdeg^{k - 2}|\batch|\log^2 n)$ amortized work and $O(\log^2 n)$ depth \whp by~\cref{thm:framework}.

Our space usage is proportional to the space required to store the contents of the parallel hash tables $I_i$ for $i \in [2, \dots, k-1]$. By construction, for each edge insertion $(u, v)$, we create at most $\sum_{j=0}^{k-2} O(\outdeg^{j-2}) = O(\outdeg^{k-2})$ hash table entries across all $I_i$. This follows directly from our work analysis. Thus, in total, we use space proportional to $O(m\outdeg^{k-2})$.
\end{proof}

\subsection{Comparison with Previous Work}
The best-known batch-dynamic algorithm
for $k$-clique counting for graphs with low arboricity
is given by Dhulipala \etal~\cite{DLSY21}. They give a 
$O(\batchsize m \alpha^{k-4})$ expected work and 
$O(\log^{k-2}n)$ depth \whp algorithm using $O(m + \batchsize)$ space. 
Our algorithm improves upon the work of this previous 
result when $m = \omega(\alpha^2\log^2n)$. 
Note
that $\alpha \leq \sqrt{m}$ \cite{ChNi85}. 
Furthermore, in real-world graphs, often $\alpha \ll \sqrt{m}$,
since real-world graphs tend to have small arboricity. 

Our algorithm achieves better depth for all $k > 4$. For $4$-cliques, 
our depth matches the previous algorithm while for larger cliques, we achieve
a better depth. Finally, we obtain these gains with an increase in space
of $O(\alpha^{k-2} + \log^2 n)$ multiplicative factor, 
but for bounded arboricity graphs, this increase in space
is small.
\section{Coloring Algorithms}\label{sec:coloring}
The vertex coloring problem looks to assign colors to assign colors to vertices 
such that no two adjacent vertices are assigned the same color. A $c$-vertex coloring
uses at most $c$ colors to color all vertices in the graph. In this section, we present 
two batch-dynamic algorithms. Although our algorithms are based heavily on the sequential algorithms by Henzinger et al.~\cite{HNW20}, we present them as an example of using our framework.
One maintains an (\defn{explicit}) coloring over the 
vertices and one maintains an \defn{implicit} coloring. In the explicit setting, a valid
coloring is always maintained in the graph among all vertices. In the implicit setting, the algorithm 
maintains a set of data structures and on \defn{queries} of one or more vertices, returns
a coloring that is valid on the induced subgraph of the queried vertices. Thus, in the implicit 
setting, both updates and queries could take $\Omega(1)$ work to process. Below, we give our vertex
coloring algorithms.

\subsection{Explicit $O(\alpha \log n)$-Coloring}\label{sec:explicit}
In this section, we present 
a parallel batch-dynamic, randomized $O(\alpha \log 
n)$-coloring algorithm that is robust against an oblivious adversary 
and uses $O(\log^2 n)$ amortized work, matching the amortized running
time in the sequential setting. Notably, $\alpha$ is the current 
arboricity of the graph, after processing the current batch of updates.
This algorithm is inspired by the coloring algorithm 
of Henzinger \etal~\cite{HNW20}, and directly uses the PLDS. 

\paragraph{Sequential Explicit Coloring Algorithm of Henzinger \etal~\cite{HNW20}}
The explicit vertex coloring algorithm of Henzinger \etal~\cite{HNW20} uses a separate 
palette of colors for each level in the LDS. 
When a vertex moves to a new level, it chooses a color uniformly at random from among the 
free colors in the palette at its level; specifically, the free colors are colors that are
not occupied by its neighbors. If an edge insertion occurs between two vertices with the same color,
then an arbitrary endpoint chooses a new color uniformly at random from the free colors in its palette.

Since the sequential algorithm processes one vertex at a time, it
does not have to deal with color conflicts when more than one vertex chooses a free color from the same palette. 
However, in the batch-dynamic case, this is an issue since more than one vertex on the same level may need
to choose a free color. We show that allowing such vertices to keep choosing colors is sufficient to ensure
both work-efficiency and low depth, \whp, provided we give a large enough palette.

\paragraph{Batch-Dynamic $O(\alpha \log n)$-Vertex Coloring}
As in the previous sections, we use our framework given in~\cref{sec:framework} for our coloring algorithms.
The pseudocode for our implementations of the methods are given in~\cref{alg:explicit-coloring-flips,alg:explicit-coloring-insert,alg:explicit-coloring-delete}.
Given that~\cref{alg:explicit-coloring-insert} and~\cref{alg:explicit-coloring-delete} are very similar to each other,
we explain all three algorithms here.

\begin{algorithm}[!t]\caption{$\mathsf{ExplicitColoring}\batchedgeflip(A, \batchinserts, \batchdeletes)$}
    \label{alg:explicit-coloring-flips}
    \small
    \begin{algorithmic}[1]
    \Require{A set of edge flips $A$.}
    \Ensure{A list of vertices which changed levels.}
    \State $S \leftarrow \emptyset$.
    \ParFor{each edge $(u, v) \in A \cup \batchinserts \cup \batchdeletes$} \label{explicit-coloring-flip:outer} 
        \If{$u$ (resp.\ $v$) changed levels}
            \State $S \leftarrow S \cup \{u\}$ (resp.\ $\{v\}$).\label{ecf:s}
        \EndIf
    \EndParFor
    \State \Return $S$.
    \end{algorithmic}
\end{algorithm}

\begin{algorithm}[!t]\caption{$\mathsf{ExplicitColoring}\batchdelete(\batchdeletes)$}
    \label{alg:explicit-coloring-delete}
    \small
    \begin{algorithmic}[1]
    \Require{A batch $\batchdeletes$ of unique and valid 
    insertion updates.}
    \Ensure{A valid $O(\alpha \log n)$-coloring.}
    \ParFor{each edge $\{u, v\} \in \batchdeletes$}\label{ec-del:outer}
        \If{$u \in S$ (resp.\ $v \in S$)}\label{ec-del:check-change-level}
            \State $S \leftarrow S \setminus \{u\}$ (resp.\ $\{v\}$).\label{ec-del:delete-s}
            \While{$u$ (resp.\ $v$) has a neighbor with the same color}
                \State Recolor $u$ (resp.\ $v$) with a free color from $P_{\level(u)}$ (resp.\ $P_{\level(v)}$) picked 
                uniformly at random.\label{ec-del:recolor}
            \EndWhile
        \EndIf
    \EndParFor
    \end{algorithmic}
\end{algorithm}

\begin{algorithm}[!t]\caption{$\mathsf{ExplicitColoring}\batchinsert(\batchinserts)$}
    \label{alg:explicit-coloring-insert}
    \small
    \begin{algorithmic}[1]
    \Require{A batch $\batchinserts$ of unique and valid 
    insertion updates.}
    \Ensure{A valid $O(\alpha \log n)$-coloring.}
    \ParFor{each edge $\{u, v\} \in \batchinserts$}\label{ec-ins:outer}
        \If{$c(u) = c(v)$}\label{ec-ins:check-conflict}
            \While{$u$ has a neighbor with the same color}\label{ec-ins:while-conflicting}
                \State Recolor $u$ with a free color from $P_{\level(u)}$ picked 
                uniformly at random.\label{ec-ins:recolor}
            \EndWhile
            \State $S \leftarrow S \setminus \{u\}$.
        \EndIf
    \EndParFor
    \ParFor{each remaining vertex $v \in S$}\label{ec-ins:leftover}
        \While{$v$ has a neighbor with the same color}\label{ec-ins:leftover-conflict}
            \State Recolor $v$ with a free color from $P_{\level(v)}$ picked 
                uniformly at random.\label{ec-ins:left-recolor}
        \EndWhile
    \EndParFor
    \end{algorithmic}
\end{algorithm}

First,~\cref{alg:explicit-coloring-flips} determines the set of vertices which changed levels
after processing the batches of insertions and deletions. We can find these vertices in parallel (\cref{explicit-coloring-flip:outer}).
The vertices which changed edges are added to the set $S$ (\cref{ecf:s} which can be accessed by 
\cref{alg:explicit-coloring-insert} and \cref{alg:explicit-coloring-delete}.

Each level $\ell \in \group_i$ is initialized with a unique palette with
$2 \cdot (2 + 3/\lambda)(1+\eps)^{i}$ colors.
Vertices on level $\ell$ will be colored
only with colors from the palette on level $\ell$. $P_{\ell}$ denotes the palette for level $\ell$.
A free color for a vertex $v$ is a color from $P_{\level(v)}$ that is not occupied by any neighbors of $v$.

 When a vertex moves up or down one or more levels, it recolors itself
    using the palette of the new level $\ell \in g_i$.\footnote{This 
    step is necessary to maintain our bound in terms of the current
    arboricity, $\alpha$, for the number of colors 
    used in the coloring.} In~\cref{alg:explicit-coloring-delete}, because deletions cannot cause
    two neighboring vertices to have the same color, we only need to recolor the vertices which changed levels (\cref{ec-del:check-change-level}).
    To pick a free color (\cref{alg:explicit-coloring-insert} \cref{ec-ins:recolor}, \cref{alg:explicit-coloring-delete} \cref{ec-del:recolor}),
    the vertex $v$ looks at the colors occupied by all
    its up-neighbors and picks a color 
    that does not collide with the colors of
    any of its up-neighbors. We only need to check the up-neighbors because 
    the palettes are distinct across levels. In fact, a vertex can only 
    conflict with the neighbors in its own level, but since we keep all 
    up-neighbors in a single data structure, we check all up-neighbors.
  
   In addition to checking the vertices which changed levels, 
   given a batch of insertions, $\batchinserts$, in~\cref{alg:explicit-coloring-insert}, 
   we iterate over all insertions in parallel (\cref{ec-ins:outer}) and check whether
   any insertions are between two vertices with the same color (\cref{ec-ins:check-conflict}). 
   Then, for each edge insertion $\{u, v\}$ between 
   two vertices with the same color, we arbitrarily select one vertex, $u$, to recolor itself (\cref{ec-ins:recolor}).
    $v$ selects a free color uniformly at random from its palette (\cref{ec-ins:recolor}). If $v$
    still conflicts with any of its up-neighbors (\cref{ec-ins:while-conflicting}), it
    recomputes its palette of free colors again by looking at the colors of 
    its up-neighbors and picks a color uniformly at random. 
    This process repeats until no vertices conflict with their
    up-neighbors in color.
    Finally, the remaining vertices which changed levels choose colors from their respective palettes 
    (\cref{ec-ins:leftover,ec-ins:leftover-conflict,ec-ins:left-recolor}).

\paragraph{Analysis}
Since each level has a unique palette and at most $(2 + 3/\lambda) (1+\eps)^{i}$
neighbors of a vertex $v$ can be on the same level as $v$ if $\ell(v) \in
\group_i$, $v$ has at least $(2+3/\lambda)(1+\eps)$
free colors that it can choose from its palette of
size $2 \cdot (2+3/\lambda)(1+\eps)^i$.
We first show that this strategy only requires $O(\alpha \log n)$ colors.
Part of the proofs of~\cref{lem:num-colors,lem:coloring-work} 
follow the analysis provided in~\cite{HNW20} but we present it for completeness.

\begin{lemma}\label{lem:num-colors}
At most $O(\alpha \log n)$ colors are required in our algorithm.
\end{lemma}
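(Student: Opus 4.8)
The plan is to count colors by summing the (pairwise disjoint) palette sizes over all \emph{occupied} levels, and to control this sum by bounding the highest group that any vertex can reach in terms of $\alpha$. The two ingredients are the geometric growth of palette sizes across groups, which makes the sum over occupied levels dominated by its top group, and a structural bound showing that no vertex sits in a group of index larger than $O(\log_{(1+\delta)}\alpha)$.

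First I would fix the relevant counting parameters: there are $\ceil{\log_{(1+\delta)}n}+1 = O(\log n)$ groups, each consisting of $4\ceil{\log_{(1+\delta)}n} = O(\log n)$ consecutive levels, and each level $\ell \in \group_i$ carries a distinct palette of $2\coeff(1+\delta)^{i}$ colors. Since the palettes at different levels are disjoint, the total number of colors used is exactly $\sum_{\ell\text{ occupied}}|P_\ell|$, so it suffices to bound the index of the top occupied group and then sum.

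The key step is to show that every vertex $v$ lies in a group $\group(v)$ with $(1+\delta)^{\group(v)} = O(\alpha)$. Here I would invoke~\cref{lem:core-num} together with the degeneracy/arboricity relationship $d/2 \le \alpha \le d$ (with $d$ the degeneracy, equal to the maximum coreness). Concretely, by~\cref{def:core-estimate-number} a vertex in group $G$ has $\kest(v)\ge (1+\delta)^{G-1}$; the contrapositive of the second part of~\cref{lem:core-num} (taking $g'=G-1$) then forces $\core(v) \ge (1+\delta)^{G-1}/\big(\coeff(1+\delta)\big)$. Since $\core(v)\le d \le 2\alpha$, rearranging gives $(1+\delta)^{G} = O(\alpha)$, i.e.\ the maximum occupied group index is $G_{\max} = O(\log_{(1+\delta)}\alpha)$. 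This is essentially the same computation already carried out in the proof of~\cref{cor:arboricity-orientation}, and I expect it to be the main obstacle, since it is the only place the arboricity enters and it requires carefully aligning the group index of a vertex with its coreness estimate.

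Finally I would assemble the count as a geometric sum. Summing palette sizes over the occupied groups $i=0,\dots,G_{\max}$, where each group contributes $O(\log n)$ levels of palette size $O\big((1+\delta)^{i}\big)$, yields
\[
\sum_{i=0}^{G_{\max}} O(\log n)\,(1+\delta)^{i} \;=\; O(\log n)\cdot O\!\left((1+\delta)^{G_{\max}}\right) \;=\; O(\alpha\log n),
\]
where the geometric series is dominated by its top term because $\delta>0$ is a constant, and the last equality uses $(1+\delta)^{G_{\max}} = O(\alpha)$ from the previous step. This establishes the claimed $O(\alpha\log n)$ bound on the number of colors.
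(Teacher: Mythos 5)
Your proposal is correct and follows essentially the same route as the paper: bound the largest occupied group index by $O(\log_{(1+\delta)}\alpha)$ using the coreness approximation guarantee together with $d/2 \le \alpha \le d$, then sum the geometrically growing palette sizes over $O(\log n)$ levels per group so that the top group dominates. The only cosmetic difference is that you invoke the contrapositive of \cref{lem:core-num} directly, whereas the paper cites its corollary \cref{lem:decomp-bound}; these are the same underlying fact.
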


\begin{proof}
Each level $\ell \in \group_i$ has $O((1+\eps)^i)$ colors (assuming $\lambda$ is
constant). We showed in~\cref{lem:decomp-bound} that our coreness estimate 
is upper bounded by $(2+\eps)\core(v)$. This means that the largest
group index where a vertex is on a level in the group is 
$\log_{(1+\eps)}\left((2+\eps)\core(v)\right) + 1$. Hence,
$i = \max_{v}\left(\log_{(1+\eps)}(\core(v)) + \log_{(1+\eps)}(2+\eps) + 1\right)$.
Since 
each group contains $O(\log n)$
levels, the number of colors used is

\begin{align*}
&\sum_{i = 0}^{\log_{(1+\eps)}\left(\max_v(\core(v))\right) + \log_{(1+\eps)}(2+\eps) + 1} 2(2+3/\lambda)(1+\eps)^i \log n \\ 
&= O(\max_v(\core(v))\log n) = O(\alpha \log n).
\end{align*}
\end{proof}

We now show that our procedure requires $O(\batch\log^2 m)$ amortized work
using \plds.

\begin{lemma}\label{lem:coloring-work}
For a batch $\batch$, under the oblivious adversary assumption,\footnote{An oblivious adversary cannot see our algorithms outputs (i.e., they cannot see our coloring)
before determine the set of updates.} our coloring algorithm requires $O(|\batch| \log^2 n)$ 
amortized work, in expectation.\footnote{We can show the work to be $O(\batchsize \log^3 n)$ amortized
\whp somewhat tediously using the Chernoff bound. However, the bound requires an additional $O(\log n)$ factor of work
compared to the $O(\batchsize \log^2 n)$ amortized work bound in expectation.}
\end{lemma}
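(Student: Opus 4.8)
The plan is to bound the total expected work by decomposing the cost of \cref{alg:explicit-coloring-flips,alg:explicit-coloring-insert,alg:explicit-coloring-delete} through the framework of \cref{thm:framework}: the \plds update, the flip-bookkeeping routine $\batchedgeflip$, and the two recoloring routines $\batchinsert$ and $\batchdelete$. By \cref{thm:batch-dynamic} the \plds update contributes $O(\batchsize\log^2 n)$ amortized work, and by \cref{thm:low-outdegree} the amortized number of edge flips per batch is $O(\batchsize\log^2 n)$. Since $\batchedgeflip$ (\cref{alg:explicit-coloring-flips}) performs only $O(1)$ work per flipped or updated edge to populate the set $S$ of vertices that changed level, its contribution is also $O(\batchsize\log^2 n)$ amortized. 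It therefore remains only to bound the work of the recolorings, which I would split into two classes.

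The first class consists of recolorings forced by a \emph{level change}: every vertex in $S$ must pick a color from the palette of its new level, and these are exactly the recolorings in $\batchdelete$ and in the second loop of $\batchinsert$. When a vertex $v$ moves onto a level $\level(v)\in\group_i$, choosing a color that avoids its neighbors requires inspecting its up-neighbors, of which there are at most $\coeff(1+\delta)^i$ by \cref{inv:degree-1}. I would charge this $O\!\big(\coeff(1+\delta)^i\big)$ cost to the $\Omega\!\big(\coeff(1+\delta)^i\big)$ edge flips that the move already induced; since the total number of flips is $O(\batchsize\log^2 n)$ amortized, all level-change recolorings together cost $O(\batchsize\log^2 n)$ amortized (this is what folds into $\insworkalpha$ and $\delworkalpha$ via the flip charge in \cref{thm:framework}).

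The second, more delicate class consists of recolorings triggered by a \emph{color conflict on an inserted edge} (the first loop of \cref{alg:explicit-coloring-insert}), and handling it is the crux of the argument. The key observation is that the oblivious-adversary assumption makes such conflicts rare in exactly the proportion needed to cancel their cost. Two endpoints of an inserted edge can collide only when they lie on a common level $\ell\in\group_i$, since palettes on distinct levels are disjoint; and because each endpoint's color was drawn uniformly at random from a set of at least $\coeff(1+\delta)^i$ free colors using randomness the adversary cannot observe, the probability the inserted edge is monochromatic is $O\!\big((\coeff(1+\delta)^i)^{-1}\big)$. Resolving a conflict costs $O\!\big(\coeff(1+\delta)^i\big)$, again by inspecting up-neighbors, so the \emph{expected} work attributable to each inserted edge is $O(1)$, and summing over the $O(\batchsize)$ insertions gives $O(\batchsize)$ expected work for this class. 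The step I expect to require the most care is arguing, under the oblivious adversary, that the relevant colors are genuinely independent of the adversary's choice of which edge to insert, so that the \emph{conditional} collision probability is truly $O\!\big((\coeff(1+\delta)^i)^{-1}\big)$.

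Finally, I would account for the \texttt{while} loops, which re-pick colors whenever two vertices recoloring \emph{simultaneously} on the same level happen to select the same color. Because each level-$\ell$ palette has $2\coeff(1+\delta)^i$ colors while any vertex has at most $\coeff(1+\delta)^i$ same-level neighbors (\cref{inv:degree-1}), a uniformly random pick avoids all conflicting concurrent choices with probability at least $1/2$ per round; hence each recoloring finishes after $O(1)$ rounds in expectation, inflating the per-recoloring cost of both classes by only a constant factor and preserving the bounds above. Combining the contributions — $O(\batchsize\log^2 n)$ from the \plds update and $\batchedgeflip$, $O(\batchsize\log^2 n)$ from level-change recolorings, and $O(\batchsize)$ from insertion-conflict recolorings — yields the claimed $O(\batchsize\log^2 n)$ amortized expected work; the stronger high-probability version would instead apply a Chernoff bound over the $O(\batchsize\log^2 n)$ recoloring events at the cost of an extra $O(\log n)$ factor, as noted.
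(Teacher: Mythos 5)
Your overall plan follows the paper's proof quite closely: the $O(\batchsize\log^2 n)$ amortized cost of the \plds{} update and of $\batchedgeflip$, the oblivious-adversary argument that each inserted edge causes only $O(1)$ expected recoloring work (the paper phrases this as amortization --- the adversary needs $\coeff(1+\delta)^{i}$ insertions in expectation to hit the hidden color, which pays for the $O(\coeff(1+\delta)^{i})$ resolution cost --- but it is the same calculation as your probability-times-cost product), and the claim that the retry loops inflate each recoloring by only a constant expected factor, which is exactly the paper's ``tries at most two times in expectation'' step.

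There is, however, a genuine gap in your treatment of the level-change recolorings. You charge the $O\bigl(\coeff(1+\delta)^{i}\bigr)$ cost of recoloring a moved vertex to ``the $\Omega\bigl(\coeff(1+\delta)^{i}\bigr)$ edge flips that the move already induced,'' but a level change need not induce that many flips --- it can induce none at all. An edge $\{v,w\}$ flips only if the move changes the relative order of $\level(v)$ and $\level(w)$, so a vertex all of whose neighbors lie several levels above both its old and its new level flips no edges: for instance, a low-degree vertex at level $\ell$ with all neighbors at levels $\geq \ell+5$ that drops a few levels after a deletion, or a vertex with $> \coeff(1+\delta)^{\gn(\ell)}$ neighbors at levels $\geq \ell + 2$ that rises one level after an insertion. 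In both cases zero flips occur, yet the recoloring must still inspect all of those up-neighbors, so the flips cannot pay for it. The paper instead charges this cost to the \plds{} update work itself, which is valid by the potential argument of \cref{sec:work}: when a vertex moves up, \cref{alg:insert} already iterates over all of its up-neighbors (and the potential decrease pays for exactly that), and when a vertex moves down to $\dl(v)$ it has fewer than $(1+\delta)^{\gn(\dl(v))}$ neighbors at levels $\geq \dl(v)$ (otherwise its desire-level would be $\dl(v)+1$), which is dominated by the potential decrease of at least $(\level(v)-\dl(v))(1+\delta)^{\gn(\dl(v))}$ for that move. Replacing your flip charge with this charge to the \plds{} work repairs the argument; the remaining steps of your proof then go through essentially as in the paper.
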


\begin{proof}
For a vertex 
that moves to a different level, the work to recolor it can be charged 
to the work of the PLDS update procedure. The bulk of this proof is devoted to proving
this fact. 

First, we show that the oblivious adversary cannot cause recolorings too often via
adversarial edge insertions between two vertices with the same color. The proof for this part is similar to the proof of Lemma 8 in~\cite{HNW20}. To show this,
we crucially rely on the fact that the adversary \emph{cannot} see the colors
of the vertices before they pick the updates.
For an edge insertion 
$(u,v)$ between two vertices, $u$ and $v$, that 
are on the same level $\ell \in g_i$ and causes a conflict, $u$ and $v$ have
at most $(2+3/\lambda)(1+\eps)^{i}$ neighbors (at the same level, using the same palette) but $2 \cdot 
(2+3/\lambda)(1+\eps)^i$ total colors in its palette. The algorithm 
arbitrarily picks one of the two endpoints, without loss of generality $u$, 
to recolor itself. 
$u$ has at least $(2+3/\lambda)(1+\eps)^i$ free colors
and it picks each color uniformly at random from these $(2+3/\lambda)(1+\eps)^i$ free colors.\footnote{$u$ may pick a random color multiple times, and we consider the 
palette that is last used by $u$ to pick its final color.} This means that
$u$ picks any particular color $c$ in its palette with 
probability at most $\frac{1}{(2 + 3/\lambda) (1+\eps)^{i}}$. 
The same argument holds for a vertex that moved to a new level and 
needs to be recolored. Since the adversary is oblivious, they have to guess
which color $u$ picked. Even assuming the much stronger assumption that the
adversary knows the colors of all vertices except 
$u$ (it does not in actuality), 
the adversary still only has at most a $\frac{1}{(2 + 3/\lambda) (1+\eps)^{i}}$
chance of picking $c$ ($u$'s color) and creating an edge insertion between $u$ and 
a vertex with color $c$. Thus in, 
expectation, the adversary must create
$(2 + 3/\lambda) (1+\eps)^{i}$ edge insertions incident to $u$ before
they pick one that conflicts with $u$'s color. The $O((2 + 3/\lambda) (1+\eps)^{i})$ 
cost of finding a color for $u$ can be amortized over these edge insertions.

For the vertices
that were recolored due to level movements, we can charge their cost to the
cost of moving levels. In expectation, the vertex tries at most two times (by what we showed above)
before it is successfully recolored, resulting in $O(1)$ total cost, in expectation.
Thus, the amortized update time is equal to the number of conflicts
and vertices that moved to a different level. 
This takes $O(\batchsize \log^2 n)$ work in expectation since there are $O(\batchsize\log^2n)$ edge flips and updates
and is the same as the \plds update time.

In addition to the cost of recoloring due to adversarial insertions,
recall that our batch-dynamic algorithm also requires multiple vertices
to keep picking colors uniformly at random until they pick unique colors
not occupied by their neighbors (\cref{alg:explicit-coloring-delete}-\cref{ec-del:recolor},
\cref{alg:explicit-coloring-insert}-\cref{ec-ins:recolor}). 
We need to show that this procedure
does not add too much additional cost to the cost of recoloring 
due to adversarial updates.
In fact, next we show that additionally picking random colors until all vertices pick
a non-conflicting color does not add additional work, asymptotically, \whp
It is easy to show that this running time also holds with high probability.
Given a set $X$
of vertices in level $\ell \in g_i$ that randomly picked the same color or
that moved to a different level, 
in expectation, after a round of recoloring, 
the number of vertices in $X$ that again result in conflicts is
$\frac{|X|}{(2 + 3/\lambda) (1+\eps)^{i}} \leq \frac{|X|}{(2 + 3/\lambda)} \leq \frac{|X|}{2}$. %
Since each vertex is independently picking a color, we can show via the 
Chernoff bound that with probability at most $\exp(\eps^2 |X|/6)$, more than $(1+\eps)\cdot \frac{|X|}{2}$ vertices need to pick colors again. Thus, when 
$|X| \geq c \log n$, for large enough constant $c > 0$, with high probability, the number of vertices that need to re-pick their colors
decreases by a factor of $2$. 
Then, each time we recolor, the number of vertices that obtain their final color decreases by a constant fraction \whp, and we can charge the cost 
of all subsequent recolorings to the first time we recolor the vertices.

When $|X| < c \log n$, 
the probability that any $c \log n$ consecutive
trials results in a vertex $v \in X$ picking
the same color is $(\frac{1}{2})^{c \log n} \leq \frac{1}{n^c}$. By the union bound
over the $\batchsize$ updates, the total probability that any two vertices 
conflict is $\frac{\batchsize}{n^c}$. We can pick $c \geq 3$ to obtain 
with high probability that each insertion results in $O(\log n)$ 
conflicts. Thus, the amortized update time is $O(\log n)$, \whp, 
since each edge gets charged $O(\log n)$ for up to two endpoints.
The total amortized work is $O(\batchsize\log^2 n)$ which can be charged
to the time necessary to perform the orientation algorithm.
By~\cref{thm:framework}, we obtain that the total work of our algorithm is $O(\batchsize \log^2 n)$, in expectation, $O(\batchsize \log^3 n)$ \whp
\end{proof}

\begin{lemma}\label{lem:coloring-depth}
Our coloring algorithm requires $O(\log^2 n \log \log n)$ depth per 
batch, with high probability.
\end{lemma}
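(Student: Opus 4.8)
The plan is to invoke the framework bound of~\cref{thm:framework}, which states that the total depth of~\cref{alg:framework} instantiated with our coloring procedures is $O(\log^2 n \log\log n + \flipsdepth + \insdepth + \deldepth)$ \whp, where $\flipsdepth$, $\insdepth$, and $\deldepth$ are the depths of $\mathsf{ExplicitColoring}\batchedgeflip$, $\mathsf{ExplicitColoring}\batchinsert$, and $\mathsf{ExplicitColoring}\batchdelete$, respectively. Since the $O(\log^2 n\log\log n)$ term already accounts for updating the PLDS and extracting the flipped edges, it suffices to show that each of the three coloring-specific procedures runs in $O(\log^2 n)$ depth \whp.

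First I would dispatch $\mathsf{ExplicitColoring}\batchedgeflip$ (\cref{alg:explicit-coloring-flips}), whose only work is a single parallel loop over the $O(\batchsize\log^2 n)$ flipped and updated edges (\cref{explicit-coloring-flip:outer}); each edge checks in $O(1)$ whether an endpoint changed levels and, if so, inserts it into the set $S$. Maintaining $S$ as a parallel hash table, these insertions take $O(\log^* n)$ depth \whp, so $\flipsdepth = O(\log^* n)$.

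The crux is bounding $\insdepth$ and $\deldepth$, which are dominated by the recoloring \textbf{while} loops (\cref{ec-ins:recolor}, \cref{ec-del:recolor}); the outer parallel loops that launch them contribute only $O(1)$ depth, so the depth is governed by (i) the number of rounds of the while loop and (ii) the depth of a single round. For (ii), in one round every vertex examines the colors of its $O(\alpha)$ up-neighbors and draws a uniformly random color from its palette; building a hash table of the neighbor colors and rejection-sampling $O(\log n)$ candidate colors in parallel resolves one recoloring attempt in $O(\log n)$ depth \whp, since a constant fraction of each palette is always free so at least one sampled color is free \whp. For (i), I would reuse the probabilistic argument of~\cref{lem:coloring-work}, but track \emph{rounds} rather than total work: letting $X$ be the set of still-conflicting vertices on a fixed level, a Chernoff bound shows that whenever $|X| \geq c\log n$ a single round shrinks $|X|$ by a constant factor \whp, so $O(\log n)$ rounds bring us to $|X| < c\log n$; once the set is this small, the probability that any vertex fails to obtain a conflict-free color within $c\log n$ consecutive rounds is at most $n^{-c}$, and a union bound over all $\batchsize$ updates shows the loop terminates in $O(\log n)$ rounds \whp. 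Combining (i) and (ii), $\insdepth, \deldepth = O(\log^2 n)$ \whp.

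Finally I would substitute $\flipsdepth + \insdepth + \deldepth = O(\log^2 n)$ into~\cref{thm:framework} to conclude a total depth of $O(\log^2 n \log\log n)$ \whp. The main obstacle is step (i): carefully stitching together the two regimes of the round count---the Chernoff halving for large conflict sets and the consecutive-trials bound for small sets---and ensuring the high-probability events compose across all $O(\log n)$ groups and all levels through a single union bound, so that the $O(\log n)$-round bound holds simultaneously everywhere. A secondary subtlety is that the conflict sets change across rounds, as newly recolored neighbors can create fresh collisions; I must verify that the per-round shrinkage argument remains valid for these dynamically evolving sets, which holds because each vertex recolors independently and the palette at each level is at least twice the number of same-level neighbors.
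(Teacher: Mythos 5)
Your proposal is correct and follows essentially the same route as the paper's own proof: invoke \cref{thm:framework} and show that the recoloring \textbf{while}-loops terminate in $O(\log n)$ rounds \whp{} via the constant per-round conflict probability $\leq \frac{1}{(2+3/\lambda)(1+\eps)^i} \leq \frac{1}{2}$, with everything else absorbed into the framework's $O(\log^2 n \log\log n)$ term. The only differences are cosmetic: the paper charges $O(1)$ depth per recoloring round (versus your more conservative $O(\log n)$, still dominated by the framework term) and counts rounds with a simpler direct argument---each vertex fails $c\log n$ consecutive rounds with probability at most $n^{-c}$, then a union bound over vertices---rather than your two-regime Chernoff-halving argument, which is borrowed from the work bound of \cref{lem:coloring-work} and is not needed for the depth claim.
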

\begin{proof}
All vertices that need to change colors pick their new colors 
independently in parallel, and picking a new color takes $O(1)$ depth. 
All vertices, first, in parallel, record 
the colors of their up-neighbors. Then, vertices, in parallel, pick
colors not occupied by their up-neighbors. 

In the remainder of the proof, 
we prove that we need $O(\log n)$ depth \whp
to resolve the conflicts resulting from multiple neighboring 
vertices picking the same color. 
The probability that a vertex conflicts with a neighbor 
is at most $\frac{1}{(2 + 3/\lambda) (1+\eps)^{i}} \leq \frac{1}{(2 + 
3/\lambda)}\leq \frac{1}{2}$. The probability that we still have conflicts after 
$c \log n$ tries, 
for some constant $c > 0$, is then at
most $(\frac{1}{2})^{c\log n} \leq 
\frac{1}{n^{c}} $. Picking $c \geq 2$ and applying a union bound over all vertices gives a polynomially small
probability that conflicts occur after $c \log n$ tries.
Thus, we need to randomly choose colors $O(\log n)$ 
times, \whp. 

Since the depth of picking a color for each 
vertex is $O(1)$, the total depth
for picking colors is $O(\log n)$ \whp This depth is additive to the depth of our orientation algorithm 
because we first use our \plds to move vertices 
to their final levels 
and then recolor the vertices. By~\cref{thm:framework},
the overall depth is 
$O(\log^2 n \log\log n)$ \whp. 
\end{proof}

\begin{theorem}
Our batch-dynamic $O(\alpha\log n)$-vertex coloring algorithm requires $O(|\batch| \log^2 n)$ amortized work,
$O(\log^2 n\log\log n)$ depth, \whp, using $O(m+n\log^2 n + \alpha\log n)$ space.
\end{theorem}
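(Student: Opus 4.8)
The plan is to assemble the theorem directly from the three preceding lemmas, treating it as a composition step inside the framework of \cref{thm:framework} rather than as a fresh analysis. First I would invoke \cref{lem:num-colors} to fix the palette sizes and conclude that the total number of colors ever used is $O(\alpha\log n)$, where $\alpha$ is the current arboricity; this is exactly the quality guarantee named in the statement. The structural fact I would record here is that because each level $\ell\in\group_i$ carries its own palette of $2(2+3/\lambda)(1+\delta)^i$ colors while a vertex on that level has at most $(2+3/\lambda)(1+\delta)^i$ same-level neighbors, every vertex always retains at least half of its palette as free colors, so a valid recoloring always exists.

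Next I would establish the work and depth by plugging the problem-specific subroutines \cref{alg:explicit-coloring-flips}, \cref{alg:explicit-coloring-delete}, and \cref{alg:explicit-coloring-insert} into \cref{thm:framework}, which composes the per-subroutine bounds additively with those of the underlying \plds. For work, \cref{lem:coloring-work} shows the recoloring cost is $O(\batchsize\log^2 n)$ amortized in expectation, charged against the $O(\batchsize\log^2 n)$ amortized edge flips and updates of the \plds from \cref{thm:batch-dynamic}; since the framework sums these, the total remains $O(\batchsize\log^2 n)$. For depth, \cref{lem:coloring-depth} gives $O(\log^2 n\log\log n)$ \whp{} for the recoloring phase, which is additive to the \plds depth from \cref{cor:random-depth} and hence leaves the asymptotic bound unchanged. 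I would be careful to state explicitly that only the individual subroutine bounds are needed, as the composition is handled once and for all by \cref{thm:framework}.

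Then I would handle the space and correctness. For space, the \plds occupies $O(n\log^2 n + m)$ by \cref{thm:batch-dynamic}, the palettes across all $O(\log^2 n)$ levels collectively store $O(\alpha\log n)$ colors, and the per-vertex current-color array contributes $O(n)$, which is subsumed; summing yields $O(m + n\log^2 n + \alpha\log n)$. For correctness I would argue that after the framework finishes no two adjacent vertices share a color: deletions never introduce a conflict, while level moves and insertions each trigger a recoloring loop that halts only when a vertex conflicts with none of its up-neighbors, and since distinct levels use disjoint palettes a conflict can arise only with a same-level neighbor, all of which appear among the up-neighbors. The hard part is genuinely already discharged inside \cref{lem:coloring-work}: reconciling the \emph{in-expectation} work bound with the \whp{} depth bound, and ensuring the simultaneous-recoloring (Chernoff) argument showing the number of still-conflicted vertices halves each round amortizes correctly against the oblivious-adversary assumption. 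Because those arguments are carried out in the lemmas, the theorem itself would follow by straightforward summation, and I would present it as a short composition proof.
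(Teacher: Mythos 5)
Your proposal is correct and follows essentially the same route as the paper: the paper's proof is exactly this short composition, citing \cref{lem:coloring-work} for the amortized work, \cref{lem:coloring-depth} for the depth, and accounting for the extra $O(\alpha\log n)$ palette space on top of the framework's $O(n\log^2 n + m)$. The additional detail you supply (invoking \cref{lem:num-colors} for the color bound and sketching correctness) is handled in the paper by the preceding lemmas and algorithm description, so nothing in your argument departs from or adds a gap relative to the paper's proof.
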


\begin{proof}
Our work and depth bounds hold by~\cref{lem:coloring-work} and~\cref{lem:coloring-depth}, respectively.
Finally, our coloring algorithm uses an additional space equal to the total number of colors used by the algorithm
to maintain the palettes. Thus, the algorithm requires an additional $O(\alpha\log n)$ space; by~\cref{thm:framework}, this
results in $O(m+n\log^2 n + \alpha\log n)$ space.
\end{proof}

\subsection{Implicit $O(2^{\alpha})$-Coloring}\label{sec:implicit}

In this section, we present an implicit $O(2^{\alpha})$-coloring algorithm,
where $\alpha$ is the current arboricity of the graph, after processing the most recent batch of updates.
As defined previously, an \emph{implicit} coloring is maintained by a set of data structures
whereby on \emph{queries} of one or more vertices, the data structures return a valid coloring
in the induced subgraph of the queried vertices.

To do this, we maintain a batch-dynamic version of the arboricity forest
decomposition structure of Henzinger \etal~\cite{HNW20}. We construct these forests
using the batch-dynamic Euler tour data structure of Tseng 
\etal~\cite{TsengDB19}. As in previous sections, we use the framework provided in~\cref{sec:framework}.
The sequential, dynamic algorithm of Henzinger \etal~\cite{HNW20} turns out to be somewhat 
tricky to parallelize. Specifically, Henzinger \etal~\cite{HNW20} prevents cycles in the
forests they create by sequentially inserting edges into one of two possible trees. In the batch-dynamic 
setting, since we are inserting multiple edges simultaneously, we need to run a cycle-breaking algorithm to 
split the cycles among the trees; such an algorithm is somewhat cumbersome to implement. Instead, 
we present a simpler version of their algorithm below that is much easier to parallelize, provided
an acyclic low out-degree orientation; our simpler algorithm provides the same guarantees.

\paragraph{Our Simplified 
Dynamic Arboricity Forest Decomposition Structure}
We first provide a simplified version of the arboricity forest decomposition
structure of~\cite{HNW20} here. We are able to simplify the structure since we assume an acyclic orientation algorithm, while
the arboricity decomposition structure in~\cite{HNW20} can use any orientation algorithm, not necessarily only
acyclic ones.
We also present some new proofs that our simplified structure still solves
the $O(2^\arb)$-coloring problem in the same work bounds as the structure 
presented in~\cite{HNW20}.
Then, we build on this simplified structure to design
our parallel batch-dynamic algorithm.

Provided an $\sigma$ out-degree orientation,
the key idea behind the arboricity forest decomposition structure 
of~\cite{HNW20} is to create $2\sigma$ \emph{undirected}
forests. However, we show here that $\sigma$ undirected forests is 
sufficient for this problem if the out-degree orientation is also acyclic.
Below, we present a simpler version of the algorithm using only $\sigma$ 
undirected forests via a simple lemma (\cref{lem:arboricity-decomposition-correctness}) we prove.

They use two different types of data structures to maintain the forests: 
the top tree data structures of~\cite{AHLT05} 
and an array for each node maintaining
which trees contain an outgoing edge of that node. We denote the array 
for node $v$ by $A_v$. Furthermore, we denote the $i^{\text{th}}$ forest by $F_i$.
The forests maintain the following invariants:

\begin{enumerate}
    \item There exists a unique root for each tree in each forest.
    \item For each $l \in \{0, \dots, \sigma - 1\}$ and each $v \in V$, no 
    forest $F_l$ contains two or more outgoing edges of $v$.
    \item No forest where $j \geq |\outnbr(v)|$ contains an outgoing edge of 
    $v$.
    \item $A_v[i] = 1$ (for $i \in [\sigma]$) if and only if $F_{i}$ contains 
    an outgoing edge of $v$. Otherwise, $A_v[i] = 0$.
\end{enumerate}

The forests support the following two operations:

\begin{enumerate}
    \item \textbf{Insert oriented edge:} A new 
    directed edge $(u, v)$ is inserted
    into the structure in $O(\log n)$ time. Let $d(u) = |\outnbr(u)|$ (where
    $\outnbr(u)$ is the out-degree of $u$ before the new edge insertion). 
    This is done by inserting the edge into $F_{d(v)}$ and setting $A_u[i] = 
    1$. The top tree allows this operation to be done in $O(\log n)$ time.
    The out-degree of $v$ is now $d(v) + 1$ and all invariants remain 
    satisfied.
    \item \textbf{Delete oriented edge:} A directed edge $(u, v)$ is 
    deleted from the structure in $O(\log n)$ time. We first find the 
    location of the edge in the forests. This can be done by maintaining
    pointers from edges to their respective locations in the forests. Let 
    $F_i$ be the forest that contains $(u, v)$. Delete $(u, v)$ from $F_i$.
    The top tree allows this operation to be done in $O(\log n)$ time. 
    Then, we find the tree with the largest index that contains an outgoing
    edge of $u$. (By our invariants, this should be $F_{d(u)-1}$ where 
    $d(u)$ is the out-degree of $u$ before the edge deletion.)
    Let $e$ be the outgoing edge of $u$ in $F_{d(u) - 1}$. Then, delete
    $e$ from $F_{d(u)-1}$ and insert $e$ into $F_i$. The top tree allows
    us to perform both operations in $O(\log n)$ time.
\end{enumerate}

Any edge flips from $(u,v)$ to $(v, u)$ 
in our orientation algorithm can be handled by first performing
an edge deletion of $(u, v)$ followed by an edge insertion of $(v, u)$ 
using the algorithms above.

We now prove the correctness of the simplified structure for acyclic
orientation algorithms.

\begin{lemma}\label{lem:arboricity-decomposition-correctness}
Let $\sigma$ be the maximum out-degree of our acyclic orientation algorithm.
Then, $F_0, \dots, F_{\sigma-1}$ provides an arboricity decomposition of 
the graph. 
\end{lemma}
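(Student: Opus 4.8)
The plan is to verify the two defining properties of an arboricity decomposition: that the $F_i$ are edge-disjoint and together cover $E$, and that each $F_i$ is acyclic (a genuine forest). The covering and disjointness follow immediately from the maintained invariants. Under the acyclic orientation every undirected edge of $G$ receives a direction, so it is an outgoing edge of exactly one of its endpoints $v$; by the third invariant it is placed in some $F_j$ with $j < |\outnbr(v)| \le \sigma$, and the insertion/deletion operations keep each edge in exactly one forest. Hence $\bigcup_{i=0}^{\sigma-1} F_i = E$ with the forests pairwise edge-disjoint, and since $j < |\outnbr(v)| \le \sigma$ we indeed only use indices $0,\dots,\sigma-1$. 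It therefore remains only to show that each $F_i$ is acyclic.

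For the acyclicity of a fixed $F_i$ I would argue by contradiction: suppose $F_i$ contains a cycle $C = v_1 v_2 \cdots v_k v_1$. Since $F_i \subseteq G$, the edges of $C$ inherit their directions from the global acyclic orientation, and this induced orientation of $C$ is itself acyclic. The key combinatorial fact I would invoke is that any acyclic orientation of a cycle must contain a \emph{local source}, i.e.\ a vertex both of whose incident cycle edges point outward. Indeed, traversing $C$ in a fixed direction, the edges cannot all be aligned with the traversal (nor all anti-aligned), for otherwise $C$ would be a directed cycle, contradicting acyclicity; hence the alignment pattern around $C$ is non-constant, and at some vertex an anti-aligned edge is immediately followed by an aligned edge, producing a vertex with two outgoing cycle edges.

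This local source $v$ then has two outgoing edges both lying in $C \subseteq F_i$, which directly violates the second invariant stating that no forest contains two or more outgoing edges of any vertex. This contradiction shows $F_i$ has no cycle, so it is a forest; combined with the covering and disjointness above, $F_0,\dots,F_{\sigma-1}$ is an arboricity decomposition, as claimed.

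The main obstacle, and the exact place where acyclicity is essential, is the cycle-to-local-source step. With only the ``at most one outgoing edge per forest'' invariant in hand, a forest could still contain a \emph{directed} cycle in which every vertex has out-degree exactly one within it; this is precisely why the original construction of Henzinger \etal uses $2\sigma$ forests for general (possibly cyclic) orientations. Acyclicity is what rules out such directed cycles and lets $\sigma$ forests suffice, so I would state the local-source observation as a single line and immediately chain it to the second invariant to keep the argument short.
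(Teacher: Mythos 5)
Your proposal is correct and follows essentially the same route as the paper's proof: coverage and disjointness are read off the invariants, and acyclicity of each $F_i$ is shown by contradiction using exactly the two ingredients the paper uses---a cycle must either be directed (ruled out by the acyclic orientation) or contain a vertex with two outgoing cycle edges (ruled out by the one-outgoing-edge-per-forest invariant). You merely derive the contradiction at the opposite end (local source violating the invariant, rather than a directed cycle violating acyclicity), which is a logically equivalent rearrangement of the same argument.
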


\begin{proof}
This proof relies on proving two parts of the simplified
algorithm. First, we need to show
that the union of all forests in $F_0, \dots, F_{\sigma -1}$ gives all of the 
edges in the input graph. To show this first property, 
we need only show that no outgoing edge of 
any vertex $v \in V$ is in any forest $F_{j}$ where $j \geq d(v)$ and
$d(v)$ is the out-degree of vertex $v$. This directly follows from the
invariants. Furthermore, each inserted edge is inserted into at least 
one of the forests. 

Then, we need to show that no cycles exists in 
any of the forests (i.e., each $F_i$ is properly a forest). To do this, it 
is sufficient to show that no cycles exist in any $F_i$. We prove this via
contradiction. Suppose that a cycle exists in $F_i$. By our invariant, 
this means that no vertex $v$ incident to two edges, $(v, w)$ and $(v, u)$, 
in the cycle has both $(v, w)$ and $(v, u)$ oriented outwards from $v$. Otherwise, this
would violate the invariant that at most one outgoing edge of $v$ is in 
any forest. Thus, for every vertex in the cycle, one of the incident 
edges must be directed out and one directed in, in the orientation produced
by the orientation algorithm. This is a contradiction to the acyclicity of 
the orientation algorithm.
\end{proof}

\paragraph{Batch-Dynamic Algorithm} We implement the batch-dynamic 
algorithm in the following way. We implement the trees using 
the batch-dynamic Euler tour trees of Tseng \etal~\cite{TsengDB19}. 
These trees allow inserts/deletes and distance-to-root 
operations in $O(|\batch| \log n)$ work and $O(\log n)$ 
depth \whp \footnote{The high probability bound directly follows
from the high probability bounds of parallel skip-lists.} 
Furthermore, we maintain a parallel hash table, $T$, which 
contains the edges as keys and pointers to the tree containing each edge
as the values.

In this algorithm,
let our batch of edges to insert into our forests be the
set of edge insertions, deletions, and edge flips. To implement $\batchedgeflip$,
we create two sets of updates per flipped edge: an edge deletion
and an edge insertion. Then, following the framework in~\cref{sec:framework},
we first process the deletions and 
then the insertions. 

For each oriented deleted edge, we check, in parallel, in 
the hash table $T$ for the location of each edge (which tree 
each edge is in). %
Then, we perform, in parallel,
deletions of the edges in the respective parallel Euler tour
trees. All of this can be done in $O(|\batch| \log^3 n)$
amortized work
and $O(\log n)$ depth \whp 
The work results from performing Euler tree operations on the set of 
edges in $\batchinserts$ and $\batchdeletes$. There are $O(\batchsize\log^2 n)$ updates in these batches and each Euler tree operation requires
$O(\log n)$ work; thus, our total work is $O(\batchsize \log^3 n)$.
We can perform all updates to our Euler trees in parallel, hence, the total depth
is the depth of performing these updates, $O(\log n)$ \whp

For each vertex, we maintain 
the number of edges deleted from its trees as well as the 
trees from which edges are deleted. This can be done in $O(|\batch|
\log^2 n)$ amortized work and $O(\log n)$ depth \whp
Finally, we find, in parallel, the $X_i$ outgoing edges of $v_i$
in the last $X_i$ trees that contain an outgoing edge of $v_i$,
where $X_i$ is the number of edges that were deleted from $v_i$'s
trees. In parallel, we arbitrarily pick a unique slot for each edge and assign
it to to its respective empty slots in the trees. This last step can 
also be done in $O(|\batch| \log^3 n)$ amortized work and $O(\log n)$
depth \whp

For the insertion edges, we first sort the edges by their outgoing
endpoint. Then, we determine how many edges we are inserting in each
$v_i$'s trees by doing a parallel count. Then, finally, in
parallel, we insert each $v_i$'s edge into the next $X_i$ empty
trees where $X_i$ is the number of edge inserts that are oriented
out from $v_i$. All of this requires $O(|\batch| \log^3 n)$ amortized
work and $O(\log n)$ depth. 

The correctness of our procedure follows 
from~\cref{lem:arboricity-decomposition-correctness}. 
Altogether, we obtain the following theorem of our batch-dynamic algorithm
implicit $O\left(2^{\alpha}\right)$-coloring, using~\cref{thm:framework}.

\begin{theorem}\label{thm:implicit-coloring}
For a batch $\batch$, our batch-dynamic implicit coloring algorithm provides a 
$O\left(2^{\alpha}\right)$-coloring in $O(|\batch| \log^3 n)$
amortized work and $O(\log^2 n)$ depth \whp for updates,
and $O(Q\alpha \log n)$ work and $O(\log n)$ depth, \whp, for $Q$ 
queries, using $O(n\log^2 n + m)$ space.%
\end{theorem}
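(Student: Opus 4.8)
The plan is to assemble Theorem~\ref{thm:implicit-coloring} from three ingredients that have already been established in this section: the simplified arboricity forest decomposition structure together with its correctness lemma (\cref{lem:arboricity-decomposition-correctness}), the batch-dynamic Euler tour trees of Tseng \etal~\cite{TsengDB19}, and our general framework of~\cref{sec:framework} (\cref{thm:framework}). The proof is essentially a bookkeeping argument: we must verify the work and depth bounds claimed for $\batchedgeflip$, $\batchdelete$, and $\batchinsert$ on the forest structure, check correctness via the decomposition lemma, and then invoke \cref{thm:framework} to combine these with the PLDS update cost. I would first state explicitly that our orientation algorithm (\cref{cor:arboricity-orientation}) produces an acyclic $O(\alpha)$ out-degree orientation, so $\sigma = O(\alpha)$ and the number of forests is $O(\alpha)$; this is what makes \cref{lem:arboricity-decomposition-correctness} applicable and is the source of the $O(2^{\alpha})$ colors.

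First I would bound the update cost of maintaining the forests. By~\cref{thm:low-outdegree}, a batch $\batch$ induces $O(|\batch|\log^2 n)$ amortized edge flips, so $\batchinserts$ and $\batchdeletes$ passed to the forest structure each have amortized size $O(|\batch|\log^2 n)$. Every insertion or deletion of an oriented edge requires a constant number of Euler-tour tree operations (insert/delete, locate-in-$T$, and when deleting, moving the highest-indexed outgoing edge into the freed slot), each costing $O(\log n)$ work and $O(\log n)$ depth \whp. Multiplying the $O(|\batch|\log^2 n)$ operations by the $O(\log n)$ per-operation work yields the claimed $O(|\batch|\log^3 n)$ amortized work; since all these tree operations on distinct edges can be batched and executed in parallel, the depth stays $O(\log n)$ \whp{} for the forest updates, and by \cref{thm:framework} the total update depth is $O(\log^2 n\log\log n + \log n) = O(\log^2 n)$ \whp{} (absorbing the $\log\log n$ into the $\log^2 n$). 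Correctness of the maintained forests after the batch follows directly from~\cref{lem:arboricity-decomposition-correctness} and the fact that our update routines preserve all four invariants of the structure.

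Next I would handle the query cost and the coloring guarantee. To answer a query on $Q$ vertices, for each queried vertex $v$ we examine its outgoing edges across the $O(\alpha)$ forests and compute, via distance-to-root operations on the Euler tour trees, a consistent color determined by the forest indices of $v$'s outgoing edges; each such operation is $O(\log n)$ work and $O(\log n)$ depth. Since each vertex has $O(\alpha)$ outgoing edges, a single queried vertex costs $O(\alpha\log n)$ work, giving $O(Q\alpha\log n)$ total work over $Q$ queries, with $O(\log n)$ depth \whp{} since the queries are independent and parallelizable. The number of colors is $O(2^{\sigma}) = O(2^{\alpha})$ because the color of a vertex is determined by the subset of forests in which its outgoing edges reside, and the acyclicity guaranteed by \cref{lem:arboricity-decomposition-correctness} ensures no two adjacent queried vertices receive the same color. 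Finally, the space is the $O(n\log^2 n + m)$ of the PLDS plus the $O(m)$ for the forests and the hash table $T$, which is subsumed into $O(n\log^2 n + m)$.

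The main obstacle I expect is \emph{not} the asymptotic accounting but rather justifying that the deletion routine's slot-reassignment step (moving the highest-indexed outgoing edge of a vertex into the slot vacated by a deleted edge) can be executed correctly in the batched setting, where many edges incident to the same vertex may be deleted simultaneously. In the sequential structure this reassignment is performed one edge at a time so the ``highest index containing an outgoing edge'' is unambiguous; in parallel I must show that we can compute, for each vertex $v_i$ with $X_i$ deletions, the correct set of $X_i$ highest-indexed outgoing edges and reassign them to the freed lower-indexed slots in one parallel step without two edges contending for the same slot. I would resolve this by sorting the deleted and surviving outgoing edges of each vertex by forest index (a parallel sort keyed by the outgoing endpoint), then using a prefix-sum / rank computation to match surviving high-index edges to vacated slots injectively, all within $O(\log n)$ depth and within the stated work bound. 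Verifying that this batched reassignment preserves invariants 2--4 is the delicate part of the argument; once that is in place, the theorem follows by combining the above bounds through \cref{thm:framework}.
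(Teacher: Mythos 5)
Your overall route is the same as the paper's: the simplified $\sigma$-forest decomposition maintained by batch-dynamic Euler tour trees, the accounting of $O(\batchsize\log^2 n)$ amortized flips times $O(\log n)$ per tree operation giving $O(\batchsize\log^3 n)$ amortized work and $O(\log n)$ depth for forest maintenance, the invocation of \cref{thm:framework}, and the $O(m)$ extra space for the trees and the hash table $T$. Even the step you single out as delicate is resolved in the paper exactly as you propose: for each vertex $v_i$ with $X_i$ deleted outgoing edges, count the vacated slots, then in parallel match the $X_i$ highest-indexed surviving outgoing edges to those slots injectively.

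The genuine gap is in the query step, which is where the coloring guarantee actually lives. You say the color of $v$ is ``determined by the subset of forests in which its outgoing edges reside,'' and that acyclicity ensures adjacent vertices receive distinct colors. Neither claim is correct. Two adjacent vertices can have identical such subsets: take an edge $(u,v)$ oriented out of $u$, where $u$ and $v$ each have exactly one outgoing edge; both outgoing edges then lie in $F_0$, so both vertices would get the ``color'' $\{F_0\}$ despite being adjacent. The rule the paper uses (following Henzinger \etal~\cite{HNW20}) is different: for each forest $F_i$ containing an outgoing edge of $v$, let $p_v(i)$ be the \emph{parity of the distance} from $v$ to the root of its tree in $F_i$, obtained by a distance-to-root query on the Euler tour tree, and assign $v$ the color $(p_v(1),\dots,p_v(d(v))) \in \{0,1\}^{O(\alpha)}$. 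Properness then follows because every edge $(u,v)$ lies in some forest $F_i$ (the forests cover $E$ by \cref{lem:arboricity-decomposition-correctness}), and in $F_i$ the endpoints are adjacent vertices of the same tree, so their distances to the root differ by exactly one and $p_u(i) \neq p_v(i)$. Acyclicity of the orientation enters only through \cref{lem:arboricity-decomposition-correctness}, to guarantee that $\sigma = O(\alpha)$ forests suffice and that each $F_i$ really is a forest (so parity of distance-to-root is well defined); it is not itself the reason the coloring is proper. Without the parity rule your argument does not establish a proper coloring, so this is a missing idea rather than a bookkeeping slip. A minor further point: your identity $O(\log^2 n\log\log n + \log n) = O(\log^2 n)$ is false; the framework yields $O(\log^2 n\log\log n)$ depth, and the theorem's stated depth should be read as the paper's $\tO(\log^2 n)$, which hides the $\log\log n$ factor.
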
 

\begin{proof}
The work and depth follow from our above arguments and~\cref{thm:framework}.
For queries, we parallelize the algorithm of~\cite{HNW20}. For a set of $Q$ 
vertices, for each vertex, we find the set of forests $[1, \dots, d(v)]$ where 
$d(v) = O(\alpha)$, containing each of the outgoing edges of $v$. 
As in~\cite{HNW20}, we let $p_v(i)$ be the parity of the distance for the
$i$-th Euler tree. Then, in parallel, 
we determine the distance of $v$ from the root of the Euler
tree in each of these forests.
If the distance is odd, we assign $p_v(i) = 1$ and $p_v(i) = 0$ otherwise. 
The color given to $v$ is then $(p_v(1), \dots, p_v(d(v))) \in \{0,1\}^{O(\alpha)}$.
Querying the Euler trees require $O(\log n)$ work per tree query. We have
$O(Q\alpha)$ total queries, resulting in $O(Q\alpha\log n)$ total work. 
Then, processing all queries simultaneously requires $O(\log n)$ depth \whp

Finally, the extra space required is the space to store the extra Euler trees
and the hash table $T$. $T$ uses $O(m)$ space. All of the Euler trees store $O(m)$
edges; thus, the total additional space used is $O(m)$.
\end{proof}

\section{Conclusion}

We design the first shared-memory, multi-core parallel batch-dynamic level data
structure that returns a $(2+\eps)$-approximation
for the \kc decomposition problem, drawing inspiration from the sequential level data structures
of Bhattacharya \etal~\cite{BHNT15} and Henzinger \etal~\cite{HNW20} which were used for dynamic
densest subgraphs and dynamic low out-degree orientation, respectively. 
Our algorithm achieves $O(\log^2 m)$ amortized work
and has $O(\log^2 m \log\log m)$ depth. We also present a proof of the
$(2+\eps)$-factor of approximation for our data structure, a new proof that is
also applicable (with a simple change) to the original sequential level data structures
of Bhattacharya \etal~\cite{BHNT15} and Henzinger \etal~\cite{HNW20}.

In addition to our batch-dynamic \kc decomposition results, we also 
give a batch-dynamic algorithm for maintaining an $O(\alpha)$ out-degree 
orientation, where $\alpha$ is the \emph{current} arboricity 
of the graph. We demonstrate the usefulness of our low out-degree orientation
algorithm by presenting a new framework to formally study 
batch-dynamic  algorithms in bounded-arboricity graphs.
Our framework obtains new provably-efficient
parallel batch-dynamic algorithms for 
maximal matching, clique counting, and vertex coloring.

We perform extensive experimentation of our parallel batch-dynamic 
\kc decomposition algorithm on large real-world data sets that 
show that our PLDS is not only theoretically
but also practically efficient. Our experiments tested error vs.\ runtime,
batch size vs.\ runtime, number of hyper-threads vs.\ runtime, and space vs.\
error. We also tested the sensitivity of our implementation to the various tunable
parameters of our algorithm. Finally, we tested our algorithm against six other
benchmarks on $11$ real-world graphs, including graphs orders of magnitude
larger than previously studied by other dynamic algorithms. We see an
improvement in performance against all other benchmarks in our experiments.
Specifically, we achieve speedups of up to
$114.52\times$ against the best parallel implementation, up to $544.22\times$
against the best approximate sequential algorithm, and up to $723.72\times$
against the best exact sequential algorithm. Such speed-ups exceed the expected
speed-up gained from parallelism alone (since we only use $60$ hyper-threads)
and are also due to the theoretical
improvements of our algorithm as well as our choice of heuristic optimizations.

An interesting open problem is to design a parallel batch-dynamic algorithm that
is space-efficient (uses linear space), without incurring additional costs in depth.

\section*{Acknowledgements}
This research is supported by NSF GRFP
\#1122374,
DOE Early Career Award \#DE-SC0018947,
NSF CAREER Award \#CCF-1845763, Google Faculty Research Award, Google Research Scholar Award, FinTech@CSAIL Initiative, DARPA
SDH Award \#HR0011-18-3-0007, and Applications Driving Architectures
(ADA) Research Center, a JUMP Center co-sponsored by SRC and DARPA.

\bibliographystyle{ACM-Reference-Format}
\bibliography{ref}

\end{document}